\definecolor{darkred}{RGB}{100,0,0}
\definecolor{darkgreen}{RGB}{0,100,0}
\definecolor{darkblue}{RGB}{0,0,150}
\newtheorem{thm}{Theorem}[section]
\newtheorem{prp}[thm]{Proposition}
\newtheorem{lem}[thm]{Lemma}
\newtheorem{cor}[thm]{Corollary}
\newtheorem{rem}{Remark}[section]
\def\beq{\begin{equation}}
\def\eeq{\end{equation}}
\def\beqn{\begin{eqnarray*}}
\def\eeqn{\end{eqnarray*}}
\def\bitem{\begin{itemize}}
\def\eitem{\end{itemize}}
\def\benum{\begin{enumerate}}
\def\eenum{\end{enumerate}}
\def\bmult{\begin{multline*}}
\def\emult{\end{multline*}}
\def\bcenter{\begin{center}}
\def\ecenter{\end{center}}
\DeclareMathOperator*{\argmin}{arg\, min}
\def\cA{\mathcal{A}}
\def\cB{\mathcal{B}}
\def\cC{\mathcal{C}}
\def\cJ{\mathcal{J}}
\def\bbK{\bbK}
\def\cN{\mathcal{N}}
\def\cU{\mathcal{U}}
\def\cV{\mathcal{V}}
\def\bA{\mathbf{A}}
\def\bB{\mathbf{B}}
\def\bD{\mathbf{D}}
\def\bI{\mathbf{I}}
\def\bO{\mathbf{O}}
\def\bU{\mathbf{U}}
\def\bV{\mathbf{V}}
\def\bW{\mathbf{W}}
\def\bX{\mathbf{X}}
\def\bY{\mathbf{Y}}
\def\bZ{\mathbf{Z}}
\def\bi{\mathbf{i}}
\def\bx{\mathbf{x}}
\def\1{{\mathbf 1}}
\newcommand\bSigma{{\boldsymbol\Sigma}}
\newcommand\bOmega{{\boldsymbol\Omega}}
\newcommand\bGamma{{\boldsymbol\Gamma}}
\def\bbB{\mathbb{B}}
\def\bbE{\mathbb{E}}
\def\bbK{\mathbb{K}}
\def\bbR{\mathbb{R}}
\newcommand{\E}{\operatorname{\mathbb{E}}}
\renewcommand{\P}{\operatorname{\mathbb{P}}}
\newcommand{\var}[1]{\operatorname{Var}\left(#1\right)}
\title{Adaptive estimation of High-Dimensional Signal-to-Noise Ratios}
\author{Nicolas Verzelen\footnote{(corresponding author) INRA, UMR 729 MISTEA, F-34060 Montpellier, FRANCE}~\ and Elisabeth Gassiat\footnote{Laboratoire de Math\'ematiques d'Orsay, Univ. Paris-Sud, CNRS, Universit\'e Paris-Saclay, 91405 Orsay, FRANCE.}}
\begin{document}

\date{}
\maketitle

\begin{abstract}
We consider the equivalent problems of estimating the residual variance, the proportion of explained variance $\eta$ and the signal strength in a high-dimensional linear regression model with Gaussian random design. Our aim is to understand the impact  of  not knowing the sparsity of the regression parameter and not knowing the distribution of the design on minimax estimation rates of $\eta$. Depending on the sparsity $k$ of the regression parameter, optimal estimators of $\eta$ either rely on estimating the regression parameter or are based on $U$-type statistics, and have minimax rates depending on $k$. In the important situation where $k$ is unknown, we build an adaptive procedure whose convergence rate simultaneously achieves the minimax risk over all $k$ up to a logarithmic loss which we prove to be non avoidable. Finally, the knowledge of the design distribution is shown to play a critical role. When the  distribution of the design is unknown, consistent estimation of explained variance is indeed possible in much narrower regimes than for known design distribution.
\end{abstract}

\section{Introduction}

\subsection{Motivations}
In this paper, we investigate the estimation of the proportion of explained variation in high-dimensional linear models with random design, that is the ratio of the variance of the signal to the total amount of variance of the observation. Although this question is of great importance in many applications where the aim is to quantify to what extent covariates explain the variation of the response variable, our analysis  is mainly motivated by problems of heritability estimation. In such studies, the response variable is a phenotype measured on $n$ individuals and the predictors are genetic markers on each of these individuals. Then, heritability corresponds to the proportion of phenotypic variance which can be explained by genetic factors. Usually, the   number of predictors $p$ greatly exceeds the number $n$ of individuals. 
When the phenotype under investigation can be explained by a small number of genetic factors, the corresponding regression parameter is sparse, and methods exploiting sparsity are of utmost interest. 
It appeared recently in biological studies that, for some  complex human traits, there was a huge gap (which has been called  the ``dark matter'' of the genome) between the genetic variance explained by populations studies and the one obtained by genome wide associations studies (GWAS), see \cite {maher:2008}, \cite{stein:2012} or \cite{goldstein:2009}. To explain this gap, it has been hypothesized that some traits might be ``highly polygenic'',  meaning that genetic factors explaining the phenotype could be so numerous that the corresponding regression parameter  may no anymore considered to be sparse. This may be the case for instance when psychiatric disorders are associated to neuroanatomical changes as in \cite{amaral:2008} or \cite{steen:2006}, see also \cite{toro:2014}.
 As a consequence,  sparsity-based methods would be questionable in this situation.
When the researcher faces the data, she does not know in general the proportion of relevant predictors, that is the level of sparsity of the parameter.  In this work, our first aim is to understand the impact  of the ignorance of the sparsity level on heritability estimation. 
Another important feature of the model when estimating proportion of explained variation is the covariance matrix of the predictors. There is a long standing gap between estimation procedures that assume the knowledge of this covariance (e.g.\  \cite{bonnet,janson2015eigenprism})
(which mathematically is the same as assuming that the covariance is the identity matrix) 
and practical situations where it is generally unknown.  Our second aim is to evaluate the impact of the ignorance of the covariance matrix on heritability estimation.\\

 \noindent
 To be more specific, 
 consider the random design high-dimensional linear model 
\beq\label{eq:linear_model}
y_i = \bx_i \beta^* + \epsilon_i\ , \quad i=1,\ldots, n
\eeq
where  $y_i,\epsilon_i \in \bbR$, $\beta^*\in \bbR^p$, $i=1,\ldots,n$, and $\bX=\left(\begin{array}{l}\bx_{1}\\ \vdots \\ \bx_{n}\end{array} \right)\in \bbR^{n\times p}$. We assume that the noise $\epsilon=(\epsilon_{1},\ldots,\epsilon_{n})^{T}$ and the  the lines $\bx_i$, $i=1,\ldots,n$, of $\bX$ are independent random variables. We also assume that the $\epsilon_{i}$, $i=1,\ldots,n$, are independent and identically distributed (i.i.d.) with distribution  $\cN(0,\sigma^2)$, and that  the lines $\bx_i$, $i=1,\ldots,n$, of $\bX$ are also i.i.d. with distribution $\cN(0, \bSigma)$.
%
Throughout the paper, the covariance matrix $\bSigma$ is assumed to be invertible and the noise level $\sigma$ is unknown (the case of known noise level is evoked in the discussion section).
 Our general objective is the  optimal estimation of the signal-to-noise ratio
\beq\label{eq:snr}
\theta := \frac{\E\left[\|\bx_1^T\beta^*\|_2^2\right]}{\sigma^2}= \frac{\|\bSigma^{1/2}\beta^*\|_2^2}{\sigma^2}\ , 
\eeq
or equivalently the proportion of explained variation
\beq\label{eq:R^2}
\eta=\eta(\beta^*,\sigma) := \frac{\E[\|\bx_1^T\beta^*\|_2^2]}{\var{y_1}}= \frac{\theta}{1+\theta}\ 
\eeq
when the vector $\beta^*$ is unknown and possibly sparse. In the sequel,  $\beta^*$ is said to be $k$-sparse, when at most $k$ coordinates of $\beta^*$ are non-zero.
\\

Note that estimating $\eta$ amounts to decipher the signal strength from the noise level in $\var{y_1}= \sigma^2+ \|\bSigma^{1/2}\beta^*\|_2^2$. Since $\|Y\|_2^2/\var{y_1}$ follows a $\chi^2$ distribution with $n$ degrees of freedom, it follows that  $\|Y\|_2^2/n = \var{y_1}[1+ O_P(n^{-1/2})]$ and it is therefore almost equivalent (up to a parametric $n^{-1/2}$ loss) to estimate the proportion of explained variation $\eta$, the quadratic function $\beta^{*T}\bSigma\beta^*$ or the noise level $\sigma^2$. For the sake of presentation, we mostly express our results in terms of the estimation of $\eta$, but they can be easily extended to the signal strength or to the noise estimation problems.

\subsection{Main results}

There are two main lines of research for estimating $\sigma$ or $\eta$ in a high-dimensional setting. Under the assumption that $\beta^*$ is $k$-sparse with some small $k$, it has been established that $\beta^*$ can be estimated at a fast rate (roughly $\sqrt{k\log p /n}$) using for instance Lasso-type procedures, so that using an adequate plug-in method one could hope to estimate $\eta$ well. Following this general approach, some authors have obtained $k\log(p)/n$-consistent~\cite{2012_Sun} and $\sqrt{1/n}$-consistent~\cite{bayati2013estimating,fan_variance} estimators of $\sigma$  in some specific regimes. When $\beta^*$ is  dense (that is when many coordinates of $\beta^*$ are nonzero), such approaches fail. In this regime, a $U$-type estimator~\cite{dicker_variance}  has been proved to achieve consistency at the rate $\sqrt{p}/n$. However, its optimality  has never been assessed.

\bigskip 

Our first main contribution is the proof that the adaptation to unknown sparsity is indeed possible when $\bSigma$ is known, but at the price of a $\sqrt{\log(p)}$ loss factor in the convergence rate when $\beta^*$ is dense.  The idea is the following. Let  $\widehat{\eta}^{D}(\bSigma^{-1})$ be a  $U$-type estimator which is  $\sqrt{p}/n$-consistent, the true parameter $\beta^*$ being sparse or not. We shall denote it  the dense estimator.
Let also  $\widehat{\eta}^{SL}$ be a  $k\log(p)/n$-consistent estimator when $\beta^*$ is $k$-sparse for some small $k$.  Then, if the real  $\beta^*$ is sparse, both estimators should be fairly accurate and should give similar answers, and if the real $\beta^*$ is dense, or not sparse enough, then $\widehat{\eta}^{SL}$ will be quite wrong and will give an answer slightly different from the dense estimator. 
Therefore, the idea is to choose the sparse estimator  $\widehat{\eta}^{SL}$ when both estimators are close enough, so that the quickly convergence rate is obtained when the unknown sparsity $k$ is small, and to choose the dense estimator when both estimators are not close, in which case the slower rate is attained which is appropriate in the dense regime. Such a procedure should adapt well to unknown sparsity.
Now, to be able to give a precise definition of the estimator, that is to set what ``close enough'' quantitatively means, one needs a precise understanding of the behavior of the dense and of the sparse estimators. 
Thus as  a first and preliminary step, we obtain a deviation inequalities for the dense estimator, see Theorem \ref{prp:risk_dense}. We also establish the minimax estimation risk of $\eta$ as a function of ($k$, $n$, $p$) when the parameter $\beta^*$ is  $k$-sparse (see Table \ref{tab:minimax} below) and when $\bSigma$ is known, thereby assessing that Dicker's procedure~\cite{dicker_variance} is optimal in the dense regime ($k\geq \sqrt{p}$) and an estimator based on the square-root Lasso~\cite{2012_Sun} is near optimal in the  sparse regime ($k\leq \sqrt{p}$).   
Again for known $\bSigma$, we finally construct a data-driven combination of $\widehat{\eta}^{D}(\bSigma^{-1})$ (the dense estimator) and $\widehat{\eta}^{SL}$ (the sparse estimator) following the idea explained before. We prove that such a procedure is indeed adaptive to unknown sparsity, see Theorem \ref{prp:adaptation}, and that it achieves the minimax adaptive rate with a $\sqrt{\log(p)}$ loss factor compared to the non adaptive minimax rate. This logarithmic term is proved to be unavoidable, see Proposition \ref{prp:lower_adaptation}.

\bigskip 
Our second main contribution is an analysis of the proportion of explained variance estimation problem  under unknown $\bSigma$.
The construction of dense estimators such as $\widehat{\eta}^{D}(\bSigma^{-1})$ 
requires the knowledge of the covariance matrix $\bSigma$. 
But in many practical situations, the covariance structure of the covariates is unknown. For unknown $\bSigma$, there are basically two main situations:
\begin{itemize}
\item Under sufficiently strong structural assumptions on $\bSigma$ so that $\bSigma^{-1}$ can be estimated at the rate $\sqrt{p}/n$ in operator norm, a simple plug-in method allows to build a minimax and an adaptive minimax procedure 
with the same rates as when $\bSigma$ is known, see Corollary \ref{cor:plug_in}. 
\item Our main result is that, for a general covariance matrix $\bSigma$, it is basically impossible to build a consistent estimator of $\eta$ when $k$ is much larger than $n$; see Theorem \ref{thrm:lower_minimax_unknown_variance} and its comments for a precise statement. This is in sharp contrast with the situation where $\bSigma$ is known, for which the problem of estimating $\eta$ can be handled in regimes where $\beta^*$ is impossible to estimate (e.g.\  $k=p$ and $p= n^{1+\kappa}$ with $\kappa\in (0,1)$ as depicted in Table \ref{tab:minimax}). For unknown and arbitrary $\bSigma$, the range of $(k,n,p)$ for which $\eta$ can be consistently estimated seems to be roughly the same as for estimating $\beta^*$, suggesting that signal estimation ($\beta^*$) is nearly as difficult as signal strength estimation ($\beta^{*T}\bSigma\beta^*$). This impossibility result unveils that, in the high-dimensional dense case, the knowledge of the covariance matrix is fundamental and one cannot extend known procedures such as~\cite{dicker_variance,dicker2016maximum} or  $\widehat{\eta}^{D}(\bSigma^{-1})$ to this unknown variance setting.
\end{itemize}

\begin{table}[h]
\caption{Optimal estimation risk $\E[(\widehat{\eta}-\eta)^2]$ when $\beta^*$ is $k$-sparse and $\bSigma$ is known. 
Here,  $a\in(0,1/2)$ is any arbitrarily small constant and  it is assumed below that $n\leq  p\leq n^2$. 
The results remain valid for $p\geq n^2$ if we replace the quantities $\tfrac{k^2\log^2(p)}{n^2}$ and $\tfrac{p}{n^2}$ by   $ \tfrac{k^2\log^2(p)}{n^2}\wedge 1$ and $\tfrac{p}{n^2}\wedge 1$, respectively.
}
\bigskip
\label{tab:minimax}
\centering
\begin{tabular}{|c||c|c|}
\hline
{\sc Sparsity regimes} & {\sc Minimax risk} & {\sc Near-optimal procedure} \\ \hline \hline 
$\displaystyle k\leq \tfrac{\sqrt{n}}{\log(p)}$ &   $\displaystyle  \tfrac{1}{n}$ &  {\sc square-root Lasso estimator $\widehat{\eta}^{SL}$ \eqref{eq:definition_eta_Lasso}} \\ \hline 
$\displaystyle \tfrac{\sqrt{n}}{\log(p)} \leq k\leq p^{1/2-a}$ &   $\displaystyle \tfrac{k^2\log^2(p)}{n^2}$ &  {\sc square-root Lasso estimator $\widehat{\eta}^{SL}$ \eqref{eq:definition_eta_Lasso}} \\ \hline 
$\displaystyle k\geq \sqrt{p}$ &$\displaystyle \tfrac{p}{n^2}$&  {\sc Dense estimator $\widehat{\eta}^{D}(\bSigma^{-1})$~\eqref{eq:etadense}} (see also ~\cite{dicker_variance}) \\ \hline
\end{tabular}
\end{table}

\subsection{Related work}


The literature on minimax estimation of quadratic functionals initiated in~\cite{donoho_nussbaum} is rather extensive~(see e.g.\  \cite{MR2253108,Laurent00}). 
In the Gaussian sequence model, that is $n=p$ and $\bX=\bI_p$, Collier et al~\cite{collier2015minimax} have derived the minimax estimation rate of the functional $\|\beta^*\|_2^2$ for $k$-sparse vector $\beta^*$ when the noise level $\sigma$ is known. However, we are not aware of any minimax result in the high-dimensional linear model even under known noise level.\\

Another problem related to the estimation of the quadratic functional $\beta^{*T}\bSigma \beta^*$ is signal detection, which aims at testing the null hypothesis $H_0$:``$\beta^*=0$'' versus $H_{1,k}[r]$: ``$\|\bSigma^{1/2}\beta^*\|_2^2\geq r\text{ and } |\beta^*|_{0}\leq k$'' (where $|\beta^*|_{0}$ denotes the number of non nul coordinates of $\beta^*$). The minimax separation distance is then the smallest $r$ such that a test of $H_0$ vs $H_{1,r}$ is able to achieve small type I and type II error probabilities. This minimax separation distance is somewhat analogous to a local minimax estimation risk of $\|\bSigma^{1/2}\beta^*\|_2^2$ around $\beta^*=0$. In the Gaussian sequence model, minimax separation distances haven been studied in \cite{baraud_minimax,ingster_suslina}. These results have  been extended to the high-dimensional linear model under both known~\cite{2010_EJS_Ingster,2011_AS_Arias-Castro} and unknown~\cite{2010_EJS_Ingster,2010_AS_Verzelen} noise level. Our first minimax lower bound (Proposition \ref{prp:lower_minimax}) is largely inspired from these earlier contributions, but the minimax lower bounds for adaptation problems require more elaborate argument. In particular, the proof of Theorem \ref{thrm:lower_minimax_unknown_variance} is largely based on new ideas. \\

Recent works have been devoted to the adaptive estimation of sparse parameters $\beta^*$ in \eqref{eq:linear_model} under unknown variance. As a byproduct, one can then obtain estimators of the variance   \cite{2012_Sun,bayati2013estimating}.  See also~\cite{fan_variance} for more direct approaches to variance estimation. 
In Section \ref{sec:known}, we rely on the square-root Lasso estimator to construct the  estimator $\widehat{\eta}^{SL}$ which turns out to be minimax in the sparse regime. \\

In the dense regime, we already mentioned the contribution of Dicker~\cite{dicker_variance} that propose method of moments and maximum likelihood based procedures to estimate $\eta$ when $\bSigma$ is known. It is  shown that the square risk of these estimators goes to $0$ at rate $\sqrt{p}/n$. When $p/n$ converges to a finite non-negative constant, these estimator are asymptotically normally distributed. Dicker also considers the case of 
unknown $\bSigma$ when $\bSigma$ is highly structured (allowing $\bSigma$ to be estimable in operator norm at the parametric rate $n^{-1/2}$).
Janson et al.~\cite{janson2015eigenprism} introduce the procedure EigenPrism for computing confidence intervals of $\eta$ and study its asymptotic behavior when $\bSigma$ is known and $p/n$ converges to a constant $c\in (0,\infty)$. Under similar assumptions, Dicker et al.~\cite{dicker2016maximum} have considered a maximum likelihood based estimator.  Bonnet et al.~\cite{bonnet} consider a mixed effect model, which is equivalent to assuming that the parameter $\beta^*$ follows a prior distribution. In the asymptotic where $p/n\to c$, they also propose a $n^{-1/2}$-rate consistent estimator of $\eta$. To summarize, none of the aforementionned contributions has studied minimax convergence rates, the problem of adaptation to sparsity or the estimation problem for unknown $\bSigma$ (to the exception of~\cite{dicker_variance}). 
\\

Finally, there has been a recent interest in  the adaptive estimation of other functionals in the linear model~\eqref{eq:linear_model}, such as  the coordinates $\beta^*_i$ of $\beta^{*}$ or the sum of coordinates $\sum_{i=1}^n \beta^*_i$~\cite{2014_van_de_geer_confidence,2014javanmard,MR3153940,cai2015confidence,javanmard2015biasing}. However, both the statistical methods and the regimes are qualitatively different for these functionals.

\subsection{Notations and Organization}

The set of integers $\{1,\ldots,p\}$ is denoted $[p]$. For any subset $J$ of $ [p]$,  $\bX_J$ is the $n\times |J|$ corresponding submatrix of $\bX$. Given a symmetric matrix $\bA$, $\lambda_{\max}(\bA)$ and $\lambda_{\min}(\bA)$ respectively stand for the largest and the smallest eigenvalue of $\bA$, $|A|$ denotes the determinant of $A$. 
For a vector $u$, $\|u\|_{p}$ denotes its $l_p$ norm and $|u|_{0}$ stands for its $l_0$ norm (ie number of non-zero components). For any matrix $\bA$, $\|\bA\|_{p}$ denotes the $l_p$ norm of the vectorialized version of $\bA$, that is $(\sum |\bA_{i,j}|^p)^{1/p}$. The Frobenius norm is also denoted $\|\bA\|_F$. Finally, the $l_2$ operator norm of a matrix $\bA$ writes $\|\bA\|_{op}$. In what follows, $C$, $C'$,\ldots denote universal constants whose value may vary from line to line whereas $C_{1}, C_{2}$ and $C_{3}$ denote numerical constants that will be used in several places of our work.

In Section \ref{sec:known_sparsity}, we introduce the two main procedures and characterize the minimax estimation risk of $\eta$ when both the covariance matrix $\bSigma$ and the sparsity are known. Section \ref{sec:adapt} is devoted to the problem of adaptation to the unknown sparsity, whereas the case of unknown covariance $\bSigma$  is studied in Section \ref{sec:unknown_covariance}. Extensions to fixed design regression and other related problems  are discussed in Section \ref{sec:discussion}. All the proofs 
are postponed to the end of the paper.

\section{Minimax rates for known sparsity}\label{sec:known_sparsity}
\label{sec:known}

In this section, we consider two estimators. In the spirit of~\cite{dicker_variance}, the first estimator $\widehat{\eta}^D(\bSigma^{-1})$ is designed for the dense regime ($|\beta^*|_0\geq p^{1/2}$) and it is proved to be consistent with rate $\sqrt{p}/n$ irrespectively of the parameter sparsity. When $\beta^*$ is in fact highly sparse, the estimator  $\widehat{\eta}^{SL}$  based on the square-root Lasso better exploits the structure of $\beta^*$ and achieves the estimation rate $\tfrac{|\beta^*|_0\log(p)}{n}+ n^{-1/2}$. It turns out that these  two  procedures (almost) achieve the minimax estimation rate when $|\beta^*|_0$ is known.\\

\subsection{Dense regime}\label{sec:dense_regime}


In this subsection, we introduce an estimator of $\eta$ which will turn out to be mostly interesting for dense parameters $\beta^*$. Its definition is close to that in~\cite{dicker_variance}. We provide a detailed analysis of this estimator, and our bounds in Theorem \ref{prp:risk_dense} below will turn out to be useful both for the adaptation problem and for the case of unknown $\bSigma$. \\

Since $\var{y_1}$ is easily estimated by $\|Y\|_{2}^{2}/n$, the main challenge is to estimate $\|\bSigma^{1/2}\beta^*\|^{2}$. Thus, the question is how to separate in $Y$  the randomness coming from $\bX \beta^*$ from that coming from the $\epsilon_i$'s, $i=1,\ldots,n$. The  idea is to use the fact that the noise $\epsilon$ is isotropic whereas, conditionally on $\bX$, $\bX\beta^*$ is not isotropic. Respectively denote $(\lambda_i,u_i)$, $i=1,\ldots, n$ the eigenvalues and eigenvectors of $({\bf X}{\bf X}^T)/p$. 
We will prove, that in a high-dimensional setting where $p>n$, $\bX\beta^*$ is slightly more aligned with left eigenvectors 
of $\bX$ associated to large eigenvalues than with those associated to small eigenvalues. This subtle phenomenon suggests that the distribution of the random variable $T$
\[T :=  \frac{1}{n^2}\sum_{i=1}^n (\lambda_i-\bar{\lambda}) (Y^T u_i)^2\ ,\quad\quad \text{ where }\quad \bar{\lambda}:= \sum_{i=1}^n \lambda_i/n\ ,\]
(almost) does not depend on the noise level $\sigma$ and, at the same time, captures some functional of the signal $\beta^*$. This functional turns out to be $\beta^{*T}\bSigma^2\beta^*$. One can rewrite the random variable as a quadratic form of $\bY$
\beq\label{eq:defintion_T}
T= \frac{Y^T\left(\bX \bX^T -tr(\bX \bX^T)\bI_n /n\right)Y}{n^2}.
\eeq
Working with a normalized estimator $ \widehat{V}:= \tfrac{Tn^2}{\|Y\|_2^2(n+1)}$, we state in the following theorem that $\widehat{V}$ concentrates exponentially fast around $\beta^{*T} \bSigma^2 \beta^*/\var{y_1}$.

\begin{thm}\label{prp:risk_dense}
Assume that $p\geq n$. \\
There exist numerical constants $C_1$ and $C_2$  such that for  all $t\leq n^{1/3}$,
\beq\label{eq:concentration_eta_dense}
\P\left[\Big|\widehat{V} - \frac{\beta^{*T} \bSigma^2 \beta^*}{\var{y_1}}\Big|\leq C_1\|\bSigma\|_{op}\frac{\sqrt{pt}}{n}\right]\geq 1- C_2 e^{-t}.
\eeq
There exists a numerical constant $C$ such that 
 \beq\label{eq:risk_dense}
 \E\left[\Big(\widehat{V} - \frac{\beta^{*T} \bSigma^2 \beta^*}{\var{y_1}}\Big)^2\right]\leq C\|\bSigma\|_{op}^2 \frac{p}{n^2}\ . 
\eeq
 \end{thm}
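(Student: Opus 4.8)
The plan is to use the joint Gaussianity of $(\bX,Y)$ to decouple $\widehat V$ into a dominant term plus two conditionally mean‑zero fluctuations. Since each pair $(\bx_i,y_i)$ is jointly Gaussian with $\Cov(\bx_i^T,y_i)=\bSigma\beta^*$ and $\var{y_i}=\var{y_1}$, the conditional law of $\bX$ given $Y$ is that of $Yb^T+\tilde\bX$, where $b:=\bSigma\beta^*/\var{y_1}$, the rows of $\tilde\bX$ are i.i.d.\ $\cN(0,\bGamma)$ with $\bGamma:=\bSigma-\bSigma\beta^*\beta^{*T}\bSigma/\var{y_1}$, and $\tilde\bX$ is independent of $Y$; moreover $Y\sim\cN(0,\var{y_1}\bI_n)$ marginally. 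Being a conditional covariance, $\bGamma$ satisfies $0\preceq\bGamma\preceq\bSigma$, hence $\|\bGamma\|_{op}\le\|\bSigma\|_{op}$ and $\|\bGamma\|_F^2\le\|\bSigma\|_{op}\tr(\bSigma)\le p\,\|\bSigma\|_{op}^2$; also $\|b\|_2^2\var{y_1}=\beta^{*T}\bSigma^2\beta^*/\var{y_1}=:\tau$, which is exactly the target, and $\tau\le\|\bSigma\|_{op}$.

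Substituting $\bX=Yb^T+\tilde\bX$ into $\bX\bX^T$ and $\tr(\bX\bX^T)$ and simplifying (a short but bookkeeping‑heavy step) gives the exact identity
\[
\widehat V=\underbrace{\frac{(n-1)\|b\|_2^2\|Y\|_2^2}{n(n+1)}}_{V_0}+\underbrace{\frac{2(n-1)\,Y^Tz}{n(n+1)}}_{V_1}+\underbrace{\frac{u^T\tilde\bM u-n^{-1}\tr(\tilde\bM)}{n+1}}_{V_2},\qquad z:=\tilde\bX b,\quad \tilde\bM:=\tilde\bX\tilde\bX^T,\quad u:=\tfrac{Y}{\|Y\|_2}.
\]
The point of the centering $\bX\bX^T-n^{-1}\tr(\bX\bX^T)\bI_n$ is precisely that it kills the pure‑noise contribution; here this shows up as $\E[V_1\mid Y]=\E[V_2\mid Y]=0$, so $\E[\widehat V\mid Y]=V_0$ and $\E[\widehat V]=\tfrac{n-1}{n+1}\tau$, i.e.\ the bias is of order $\tau/n$ and negligible.

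It then remains to bound the three pieces against $\tau$, all for $t\le n^{1/3}$. First, $\|Y\|_2^2/\var{y_1}\sim\chi^2_n$, so by the Laurent--Massart inequality $\big|\,\|Y\|_2^2/\var{y_1}-n\,\big|\le 2\sqrt{nt}+2t$ with probability at least $1-2e^{-t}$; on this event $V_0-\tau=\tau\big(\tfrac{(n-1)\|Y\|_2^2}{n(n+1)\var{y_1}}-1\big)=O\big(\tau(\sqrt{t/n}+t/n)\big)=O(\|\bSigma\|_{op}\sqrt{pt}/n)$ using $\tau\le\|\bSigma\|_{op}$ and $n\le p$. Second, conditionally on $Y$ we have $z\sim\cN(0,\rho^2\bI_n)$ with $\rho^2=b^T\bGamma b\le\|\bSigma\|_{op}\|b\|_2^2=\|\bSigma\|_{op}\tau/\var{y_1}$, so $Y^Tz\mid Y\sim\cN(0,\rho^2\|Y\|_2^2)$ and, on the same event, $|V_1|\lesssim \rho\|Y\|_2\sqrt t/n\lesssim \|\bSigma\|_{op}\sqrt{t/n}$. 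Third, conditionally on $Y$ the vector $\tilde\bX^Tu$ is $\cN(0,\bGamma)$, so $u^T\tilde\bM u=\|\tilde\bX^Tu\|_2^2$ is a weighted $\chi^2$ with weights the eigenvalues of $\bGamma$ and mean $\tr(\bGamma)$, while $\tr(\tilde\bM)=\sum_i\|\tilde\bx_i\|_2^2$ has mean $n\tr(\bGamma)$; two more applications of Laurent--Massart give $|u^T\tilde\bM u-\tr(\bGamma)|\lesssim \|\bGamma\|_F\sqrt t+\|\bGamma\|_{op}t$ and $|n^{-1}\tr(\tilde\bM)-\tr(\bGamma)|\lesssim \|\bGamma\|_F\sqrt{t/n}+\|\bGamma\|_{op}t/n$, whence $|V_2|\lesssim (\|\bGamma\|_F\sqrt t+\|\bGamma\|_{op}t)/n\lesssim \|\bSigma\|_{op}\sqrt{pt}/n$ using $\|\bGamma\|_F\le\sqrt p\,\|\bSigma\|_{op}$, $\|\bGamma\|_{op}\le\|\bSigma\|_{op}$ and $t\le p$. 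A union bound over these $O(1)$ events, folding the $\sqrt{t/n}$ terms into $\sqrt{pt}/n$ via $n\le p$, yields \eqref{eq:concentration_eta_dense}. The risk bound \eqref{eq:risk_dense} follows from the same decomposition by a direct second‑moment computation: $\E[(V_0-\tau)^2]=O(\tau^2/n)$, $\E[V_1^2]=\tfrac{4(n-1)^2}{n^2(n+1)^2}\E[\rho^2\|Y\|_2^2]=O(\|\bSigma\|_{op}^2/n)$, and (conditioning on $Y$) $\E[V_2^2]\le \tfrac{C\|\bGamma\|_F^2}{(n+1)^2}\le \tfrac{Cp\|\bSigma\|_{op}^2}{n^2}$, the last being the dominant term.

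The main obstacle is the third bound: getting $\sqrt p$ rather than $p$ in $V_2$ (and in $\E[V_2^2]$) hinges on recognising $u^T\tilde\bM u$ as a quadratic form whose effective complexity is $\|\bGamma\|_F\le\sqrt p\,\|\bGamma\|_{op}$, together with the inequality $\|\bGamma\|_{op}\le\|\bSigma\|_{op}$ expressing that conditioning on the response cannot inflate the conditional covariance of the design. The decoupling $\bX\mid Y\stackrel{d}{=}Yb^T+\tilde\bX$ with $\tilde\bX\indep Y$ is what makes every conditional computation elementary; conditioning on $\bX$ instead would force one to control degree‑four Gaussian chaos directly, which is presumably why the statement restricts to $t\le n^{1/3}$.
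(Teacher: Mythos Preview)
Your proof is correct, and the decomposition $\widehat V=V_0+V_1+V_2$ checks out line by line. It is, however, a genuinely different argument from the one in the paper.

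The paper expands $Y=\bX\beta^*+\epsilon$ and splits $T$ into four pieces $T_a,\ldots,T_d$. The delicate term is $T_a=\beta^{*T}\bX^T\bA\bX\beta^*/n^2$, which is a degree-four polynomial in the Gaussian entries of $\bX$; controlling it (Lemma~\ref{lem:concentration_T1}) requires the general Gaussian-chaos deviation bound of Adamczak--Wolff (Proposition~\ref{prop:deviation_chaos}) together with a separate operator-norm concentration for $\bA=\bX\bX^T-\tr(\bSigma)\bI_n$ (Lemma~\ref{lemma_concentration_spectre_A}). By contrast, you condition on $Y$ and use the regression representation $\bX\mid Y\stackrel{d}{=}Yb^T+\tilde\bX$ with $\tilde\bX\indep Y$; this reduces $\widehat V$ to a constant, a scalar Gaussian, and a \emph{quadratic} functional of $\tilde\bX$, so that Laurent--Massart alone suffices. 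The trade-off: your route is markedly more elementary and self-contained, and in fact your bounds on $V_0,V_1,V_2$ hold for all $t\le n$ (indeed $t\le p$), so you actually obtain \eqref{eq:concentration_eta_dense} in a wider range than the stated $t\le n^{1/3}$---the latter restriction in the paper is an artifact of the fourth-order chaos tail, exactly as you surmise in your last paragraph. The paper's decomposition, on the other hand, keeps the roles of $\bX$ and $\epsilon$ separate and would be the natural starting point if one wanted to relax the Gaussian-design assumption, whereas your argument leans entirely on the joint Gaussianity of $(\bX,Y)$.
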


 \begin{rem}\emph{ The proof  relies on recent exponential concentration inequalities for  Gaussian chaos~\cite{2015_adamszak} and a new concentration inequality of the spectrum of $\bX\bX^T/n$ around $tr(\bSigma)/n$ (Lemma \ref{lem:concentration_vp_wishart}). 
 The concentration inequality \eqref{eq:concentration_eta_dense} will be the key tool in the construction of adaptive estimators in the next section.}
 \end{rem}

 \begin{rem}\emph{ When $\bSigma$ is the identity matrix, the above theorem enforces that $\widehat{V}$ estimates the proportion of explained variation $\eta$ at the rate $\sqrt{p}/n$, uniformly over all $\beta^*$ and $\sigma>0$. Note that 
 $\widehat{V}$ is only consistent in the regime where $n^2$ is large compared to $p$.}
 \end{rem}


%


\noindent For arbitrary $\bSigma$ (with bounded eigenvalues), the above theorem only implies that $\widehat{V}$ is  {\it of the same order as} $\eta$, that is, there exists positive constant $c$ and $C$ such that $c \lambda_{\min}(\bSigma)\leq \widehat{V}/\eta\leq  C\lambda_{\max}(\bSigma)$.\\
Nevertheless, when the covariance $\bSigma$ is known, it is possible to get a consistent estimator of $\eta$.  Replace the design matrix $\bX$ in the linear regression model by $\tilde{\bX}:= \bX \bSigma^{-1/2}$ in such a way that its rows $\tilde{\bx}_i$ follow i.i.d. standard normal distributions and 
\beq
Y=\tilde{\bX} \bSigma^{1/2}\beta^*+ \epsilon\ .
\eeq
Then, we define the estimator $\widehat{\eta}^D$ as $\widehat{V}$ where $\bX$ is replaced by $\tilde{\bX}$, so that  $\widehat{\eta}^D$ is a quadratic form of $\bY$ with a matrix involving the precision matrix, that is the inverse covariance matrix $\bSigma^{-1}$. Let us  denote $\bOmega:= \bSigma^{-1}$, and define
\beq
\label{eq:etadense}
\widehat{\eta}^D (\bOmega):= \frac{Y^T\left(\bX\bOmega \bX^T -tr(\bX \bOmega \bX^T)\bI_n /n\right)Y}{(n+1)\|Y\|^2}
\eeq
(we could replace $tr(\bX \bOmega \bX^T)$ by $p$ in the above definition without changing the rate in the corollary below).
We straightforwardly derive from Theorem \ref{prp:risk_dense} that $\widehat{\eta}^D (\bOmega)$ estimates $\eta$ at the rate $\sqrt{p}/n$.

\begin{cor}\label{cor:risk_dense}
Assume that $p\geq n$.
There exists a  numerical constant $C$ such that the estimator $\widehat{\eta}^D (\bOmega)$ satisfies 
 \beq\label{eq:risk_dense_1}
 \E\left[\big(\widehat{\eta}^D  (\bOmega)- \eta\big)^2\right]\leq C\frac{p}{n^2}\ . 
\eeq
\end{cor}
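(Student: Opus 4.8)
The plan is to deduce Corollary \ref{cor:risk_dense} directly from Theorem \ref{prp:risk_dense} by exploiting the reduction to isotropic design. First I would observe that, by construction, $\widehat{\eta}^D(\bOmega)$ is exactly the statistic $\widehat{V}$ of Theorem \ref{prp:risk_dense} applied to the transformed model $Y = \tilde{\bX}\,\gamma^* + \epsilon$, where $\tilde{\bX} = \bX\bSigma^{-1/2}$ has i.i.d.\ $\cN(0,\bI_p)$ rows and $\gamma^* := \bSigma^{1/2}\beta^*$. Indeed, substituting $\bX = \tilde{\bX}\bSigma^{1/2}$ into \eqref{eq:etadense} gives $\bX\bOmega\bX^T = \tilde{\bX}\bSigma^{1/2}\bSigma^{-1}\bSigma^{1/2}\tilde{\bX}^T = \tilde{\bX}\tilde{\bX}^T$, and likewise $\tr(\bX\bOmega\bX^T) = \tr(\tilde{\bX}\tilde{\bX}^T)$, so that $\widehat{\eta}^D(\bOmega)$ coincides with the quadratic form $T n^2/[(n+1)\|Y\|^2]$ built from $\tilde{\bX}$.

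Next I would apply Theorem \ref{prp:risk_dense} to this transformed model, which is legitimate since the hypotheses of that theorem only require $p \geq n$ and a Gaussian design with rows distributed as $\cN(0,\widetilde{\bSigma})$ — here with covariance $\widetilde{\bSigma} = \bI_p$, so $\|\widetilde{\bSigma}\|_{op} = 1$. The conclusion \eqref{eq:risk_dense} then reads
\beq
\E\left[\Big(\widehat{\eta}^D(\bOmega) - \frac{\gamma^{*T}\bI_p^2\,\gamma^*}{\var{y_1}}\Big)^2\right] \leq C\,\frac{p}{n^2}.
\eeq
It remains only to identify the centering constant: $\gamma^{*T}\gamma^* = \|\bSigma^{1/2}\beta^*\|_2^2$, and since $\var{y_1} = \sigma^2 + \|\bSigma^{1/2}\beta^*\|_2^2$ is unchanged by the reparametrization, $\gamma^{*T}\gamma^*/\var{y_1} = \theta/(1+\theta) = \eta$ by \eqref{eq:R^2}. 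This yields \eqref{eq:risk_dense_1} with the same universal constant $C$.

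There is essentially no obstacle here — the corollary is a one-line consequence once the change of variables is spelled out; the only point requiring a word of care is that the noise vector $\epsilon$ and its distribution $\cN(0,\sigma^2\bI_n)$, as well as $\var{y_1}$, are genuinely invariant under replacing $\bX$ by $\tilde{\bX}$ (only the design and regression vector are reparametrized, not the noise), so that Theorem \ref{prp:risk_dense} applies verbatim. All the analytic work — the Gaussian chaos concentration and the Wishart spectrum control alluded to in the remark following Theorem \ref{prp:risk_dense} — has already been carried out in establishing \eqref{eq:risk_dense}, and the corollary merely specializes it to $\bSigma = \bI_p$ after pre-whitening.
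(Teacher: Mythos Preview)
Your proposal is correct and follows essentially the same approach as the paper: the text preceding the corollary already sets up the whitened model $Y=\tilde{\bX}\bSigma^{1/2}\beta^*+\epsilon$ with $\tilde{\bX}=\bX\bSigma^{-1/2}$ having identity row covariance, and the corollary is stated as a ``straightforward'' consequence of Theorem~\ref{prp:risk_dense} applied to this model. Your identification $\widehat{\eta}^D(\bOmega)=\widehat{V}(\tilde{\bX})$, the specialization $\|\widetilde{\bSigma}\|_{op}=1$, and the recognition that $\gamma^{*T}\gamma^*/\var{y_1}=\eta$ are exactly the steps the paper has in mind.
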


\begin{rem}\emph{ It turns out that $\widehat{\eta}^D(\bOmega)$ is consistent for $p$  small compared to $n^2$ even though  consistent 
 estimation of $\beta^*$ is impossible in this regime. Although developed independently, the estimator $\widehat{\eta}^D(\bOmega)$ shares some similarities with the method of moment based estimator of Dicker~\cite{dicker_variance}, which also achieves the  $\sqrt{p}/n$ convergence rate. }
\end{rem}

 \subsection{Sparse regime: square-root Lasso estimator}\label{sec:sparse}

When $\beta^*$ is highly sparse, the signal to noise ratio estimator is based on a Lasso-type estimator of $\beta^*$  proposed in~\cite{squarerootlasso,2012_Sun}. As customary for  Lasso-type methods, we shall work with a standardized  version  $\bW$  of the matrix $\bX$, whose columns $\bW_{\bullet j}$ satisfy $\|\bW_{\bullet j}\|_2=1$. Since the noise-level $\sigma$ is unknown, we cannot readily use the classical Lasso estimator whose optimal value of the tuning parameter depends on $\sigma$. Instead, we rely on  the square-root Lasso~\cite{squarerootlasso} defined by 
\beq\label{eq:definition_square_root_Lassos}
\widetilde{\beta}_{SL} := \argmin_{\beta \in \mathbb{R}^p}\sqrt{\|Y-\bW \beta\|^2_2} + \frac{\lambda_0}{\sqrt{n}} \|\beta\|_1\ , \quad \quad (\widehat{\beta}_{SL})_j:= (\widetilde{\beta}_{SL})_{j}/\|\bx_j\|_2\ .
\eeq
In the sequel, the tuning parameter $\lambda_0$ is set to $\lambda_0:= 13\sqrt{\log(p)}$ (there is nothing specific with this particular choice).
In the proof, we will also use an equivalent definition of the square-root estimator introduced in \cite{2012_Sun}
\beq\label{eq:definition_scaled_lass}
(\widetilde{\beta}_{SL} , \widetilde{\sigma}_{SL})=  \argmin_{\beta \in \mathbb{R}^p,\ \sigma'>0}\left[\frac{n\sigma'}{2}+ \frac{\|Y-\bW\beta\|_2^2}{2\sigma'}\right]+ \lambda_0\|\beta\|_1\ .
\eeq 
(To prove the equivalence between the two definitions, minimize \eqref{eq:definition_scaled_lass} with respect to $\sigma'$.) Notice that $\widetilde{\sigma}_{SL}= \|Y-\bW\widehat{\beta}_{SL}\|_2/\sqrt{n}$. 
Then, we define the estimator 
\beq \label{eq:definition_eta_Lasso}
\widehat{\eta}^{SL}:= 1- \frac{n\widetilde{\sigma}^2_{SL}}{\|Y\|_2^2}= 1 - \frac{\|Y-\bW \widehat{\beta}_{SL}\|_2^2}{\|Y\|_2^2} \ , 
\eeq
The following proposition is a consequence of Theorem 2 in \cite{2012_Sun}.

\begin{prp}\label{thm:eta_gauss_Lasso}
 There exist two numerical constants $C$ and $C'$ such that the following holds. Assume that $\beta^*$ is $k$-sparse, that $p\geq n$ and 
 \beq\label{eq:hypo_max_dimension}
 k\log(p) \frac{\lambda_{\max}(\bSigma)}{\lambda_{\min}(\bSigma)} \leq  C n\ . 
 \eeq
 Then the square-root Lasso based estimator $\widehat{\eta}^{SL}$ satisfies
 \beq\label{eq:risk_lasso_eta_gl}
\E\left[( \widehat{\eta}^{SL} - \eta)^{2} \right] \leq C' \left[\frac{1}{n}+ \frac{k^2\log^2(p)}{n^2}\frac{\lambda^2_{\max}(\bSigma)}{\lambda^2_{\min}(\bSigma)} \right]\ .
 \eeq
 
\end{prp}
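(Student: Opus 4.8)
The plan is to deduce the result from Theorem 2 of Sun and Zhang~\cite{2012_Sun}, which controls the estimation error of the scaled Lasso $\widetilde\sigma_{SL}$ and $\widehat\beta_{SL}$, together with the elementary observation that $\|Y\|_2^2/n$ concentrates around $\var{y_1}$ at the parametric rate. Recall from~\eqref{eq:definition_eta_Lasso} that $\widehat\eta^{SL} = 1 - n\widetilde\sigma_{SL}^2/\|Y\|_2^2$ and that $\eta = 1 - \sigma^2/\var{y_1}$, so
\[
\widehat\eta^{SL} - \eta = \frac{\sigma^2}{\var{y_1}} - \frac{n\widetilde\sigma_{SL}^2}{\|Y\|_2^2}
= \frac{\sigma^2 - n\widetilde\sigma_{SL}^2}{\var{y_1}} + n\widetilde\sigma_{SL}^2\Big(\frac{1}{\var{y_1}} - \frac{1}{\|Y\|_2^2}\Big).
\]
The second term is handled first and is the ``easy'' part: since $\|Y\|_2^2/\var{y_1} \sim \chi^2_n$, we have $\|Y\|_2^2 = n\var{y_1}(1 + O_P(n^{-1/2}))$ with Gaussian-type tails, and $n\widetilde\sigma_{SL}^2 = \|Y - \bW\widehat\beta_{SL}\|_2^2 \le \|Y\|_2^2$ by optimality (taking $\beta = 0$ in~\eqref{eq:definition_square_root_Lassos}), so this term contributes $O_P(n^{-1/2})$ in probability and $O(1/n)$ in mean square after controlling moments.

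The main work is bounding $\sigma^2 - n\widetilde\sigma_{SL}^2$. Here I would invoke Theorem 2 of~\cite{2012_Sun}: under a compatibility/restricted-eigenvalue condition on $\bW$ and the scaling $k\log p \le c\, n$, with probability at least $1 - p^{-c'}$ one has both $|\widetilde\sigma_{SL}/\sigma - 1| \lesssim \sqrt{k\log p}/\sqrt n$ and the prediction/estimation bound $\|\bW(\widehat\beta_{SL} - \beta^*)\|_2^2/n \lesssim \sigma^2 k\log p/n$. Squaring the first of these gives $|n\widetilde\sigma_{SL}^2 - n\sigma^2|/(n\sigma^2)$, but that is too crude by a factor; instead one uses the sharper decomposition $\widetilde\sigma_{SL}^2 = \|Y - \bW\widehat\beta_{SL}\|_2^2/n$ and expands $Y - \bW\widehat\beta_{SL} = \epsilon - \bW(\widehat\beta_{SL}-\beta^*) + (\bX\beta^* - \bW\widetilde\beta^*)$-type terms (noting $\bW\widetilde\beta^* = \bX\beta^*$ for the appropriately rescaled $\widetilde\beta^*$), so that $n\widetilde\sigma_{SL}^2 - \|\epsilon\|_2^2 = \|\bW(\widehat\beta_{SL}-\beta^*)\|_2^2 - 2\langle \epsilon, \bW(\widehat\beta_{SL}-\beta^*)\rangle$. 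The quadratic term is $\lesssim \sigma^2 k\log p$ by the estimation bound; the cross term is controlled via $|\langle \epsilon, \bW(\widehat\beta_{SL}-\beta^*)\rangle| \le \|\bW^T\epsilon\|_\infty \|\widehat\beta_{SL}-\beta^*\|_1 \lesssim \sigma\sqrt{\log p}\cdot \sigma\sqrt{k\log p}/\sqrt n \cdot \sqrt n$, i.e. $\lesssim \sigma^2 k\log p$ as well, using the standard bound $\|\bW^T\epsilon\|_\infty \lesssim \sigma\sqrt{\log p}$ and the $\ell_1$-bound $\|\widehat\beta_{SL}-\beta^*\|_1 \lesssim \sigma k\sqrt{\log p}/\sqrt n$ from~\cite{2012_Sun}. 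Finally $\|\epsilon\|_2^2 - n\sigma^2 = O_P(\sigma^2\sqrt n)$. Combining, $|n\widetilde\sigma_{SL}^2 - n\sigma^2| \lesssim \sigma^2(\sqrt n + k\log p)$ on the good event, hence $|\sigma^2 - n\widetilde\sigma_{SL}^2|/\var{y_1} \lesssim (\sqrt n + k\log p)/n$ since $\sigma^2 \le \var{y_1}$; squaring gives the announced $1/n + k^2\log^2 p/n^2$ rate.

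Two points require care and constitute the main obstacle. First, the compatibility constant of the standardized random matrix $\bW$ must be shown to be bounded below with high probability; this is where the condition~\eqref{eq:hypo_max_dimension} involving $\lambda_{\max}(\bSigma)/\lambda_{\min}(\bSigma)$ enters, via a standard argument that the Gram matrix $\bW^T\bW$ is close to a normalized version of $\bSigma$ uniformly over sparse supports when $k\log p \lesssim n$ — this accounts for the $\lambda^2_{\max}(\bSigma)/\lambda^2_{\min}(\bSigma)$ factor in~\eqref{eq:risk_lasso_eta_gl}. Second, Theorem 2 of~\cite{2012_Sun} gives a high-probability statement, so to pass to the $L^2$ bound~\eqref{eq:risk_lasso_eta_gl} one must control the contribution of the complement event: on it we still have the deterministic bound $|\widehat\eta^{SL} - \eta| \le 2$ (both lie in $[0,1]$ up to the $\|Y\|_2^2$ normalization, which is positive a.s.), and the event has probability $\le C p^{-c'} \le C/n^2$ for $p\ge n$ (adjusting the constant in~\eqref{eq:hypo_max_dimension} so that $c' \ge 2$, which is harmless), so its contribution to the mean square error is $O(1/n^2)$ and absorbed into the stated bound. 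Assembling these pieces yields~\eqref{eq:risk_lasso_eta_gl}.
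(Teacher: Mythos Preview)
Your proposal is correct and follows the same architecture as the paper: essentially the same decomposition (modulo a misplaced factor of $n$ in your first display --- it should read $(\sigma^2-\widetilde\sigma_{SL}^2)/\var{y_1}$ and $\widetilde\sigma_{SL}^2\big(1/\var{y_1}-n/\|Y\|_2^2\big)$, as your subsequent computations make clear), the same good-event argument with the compatibility constant of $\bW$ controlled via restricted-eigenvalue results for Gaussian designs, and the same treatment of the complement event via the deterministic bound $|\widehat\eta^{SL}-\eta|\le 1$.

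The one substantive difference is your detour through the expansion $n\widetilde\sigma_{SL}^2 - \|\epsilon\|_2^2 = \|\bW(\widetilde\beta_{SL}-\widetilde\beta^*)\|_2^2 - 2\langle\epsilon,\bW(\widetilde\beta_{SL}-\widetilde\beta^*)\rangle$, which you motivate by saying that Sun--Zhang's Theorem~2 only gives $|\widetilde\sigma_{SL}/\sigma-1|\lesssim\sqrt{k\log p/n}$ and is therefore ``too crude by a factor''. In fact that theorem delivers the sharper multiplicative bound
\[
\max\Big\{1-\frac{\sqrt n\,\widetilde\sigma_{SL}}{\|\epsilon\|_2},\ 1-\frac{\|\epsilon\|_2}{\sqrt n\,\widetilde\sigma_{SL}}\Big\}\ \lesssim\ \frac{k\log p}{n\,\kappa^2[5,J_*;\bW]}\ ,
\]
comparing $\widetilde\sigma_{SL}$ to the \emph{empirical} noise level $\|\epsilon\|_2/\sqrt n$ rather than to $\sigma$. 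This is already of order $k\log p/n$ and immediately yields $(\widetilde\sigma_{SL}^2-\|\epsilon\|_2^2/n)^2\lesssim\sigma^4 k^2\log^2 p/(n^2\kappa^4)$ on the good event, which is exactly the rate needed; this is the paper's route. Your alternative via prediction and $\ell_1$ bounds also reaches the goal, but it invokes more pieces of the Sun--Zhang machinery than necessary.
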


\begin{rem}
\emph{Condition \eqref{eq:hypo_max_dimension} is unavoidable, as the minimax risk of proportion of explained variation estimation is bounded away from zero when $k\log(p)$ is large compared to $n$ (see Proposition \ref{prp:lower_minimax} later). To ease the presentation, we have expressed Condition \eqref{eq:hypo_max_dimension} in terms of largest and smallest eigenvalues of $\bSigma$. One could in fact replace these quantities by local ones such as  compatibility constants (see the proof for more details). }
\end{rem}

\subsection{Minimax lower bound}\label{sec:minimax_lower}

We shall prove in the sequel that a combination of the estimators $\widehat{\eta}^D(\bOmega)$ and $\widehat{\eta}^{SL}$ essentially achieves the minimax estimation risk.  In the following minimax lower bound we assume that the covariance $\bSigma$ is the identity matrix $\bI_p$.

Define $\mathbb{B}_0[k]$ the collection of $k$-sparse vectors of size $p$. 
Given any estimator $\widehat{\eta}$, define the maximal risk $R(\widehat{\eta},k)$ over $k$-sparse parameters  by 
\[R(\widehat{\eta},k):= \sup_{\beta\in \bbB_0[k],\ \sigma>0 }\bbE_{\beta,\sigma}\left[\{\hat{\eta}-\eta(\beta,\sigma)\}^2\right]\ ,\]
where $\bbE_{\beta,\sigma}[.]$ is the expectation with respect to $(Y,\bX)$ where $Y=\bX\beta+\epsilon$, with $\epsilon\sim \cN(0,\sigma^2\bI_n)$ and the covariance matrix of the rows of $\bX$ is $\bI_p$. Then,  the minimax risk is denoted  $R^*(k):=\inf_{\hat{\eta}}R(\widehat{\eta},k)$. 

\begin{prp}[Minimax lower bound]\label{prp:lower_minimax}
There exists a numerical constant $C>0$ such that for any $1\leq k\leq p$, 
\beq\label{eq:lower_bound_minimax}
R^*(k)\geq C
 \left(\left\{\left[\frac{k}{n}\log\left(1+ \frac{p}{k^2}\vee \sqrt{\frac{p}{k^2}}\right)\right]^{2}\wedge 1 \right\}+ \frac{1}{n}\right)\ .
\eeq
\end{prp}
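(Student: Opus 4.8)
The lower bound splits into two parts: the additive $1/n$ term, and the sparsity-dependent term $\left[\frac{k}{n}\log\left(1+\frac{p}{k^2}\vee\sqrt{\frac{p}{k^2}}\right)\right]^2\wedge 1$. Since the maximal risk is a supremum, it suffices to lower bound each term separately and then combine.

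The $1/n$ term is the ``parametric'' contribution coming from not knowing $\sigma^2$. The plan is to fix $\beta^*=0$ (which is $k$-sparse for every $k\geq 0$) and reduce to a one-dimensional subproblem. When $\beta^*=0$ we have $Y\sim\cN(0,\sigma^2\bI_n)$ and $\eta(0,\sigma)=0$ for all $\sigma$, so naively this gives nothing; the point is instead to perturb. Take a two-point (or small-parameter) family where $\beta^*$ has a single nonzero coordinate of magnitude $\delta\sim n^{-1/2}$, so that $\eta\asymp n^{-1/2}$ changes by $\asymp n^{-1}$ across the family, while the two joint laws of $(Y,\bX)$ remain mutually contiguous — the $\chi^2_n$ fluctuation of $\|Y\|_2^2$ around $n\var{y_1}$ is of order $n^{-1/2}$ relative, which exactly matches the signal. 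A standard two-point Le Cam argument (bounding the total variation or $\chi^2$-divergence between the two mixtures by a constant) then yields $R^*(k)\gtrsim 1/n$. I expect this part to be routine.

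The sparsity term is the heart of the proof and follows the classical recipe for minimax lower bounds for quadratic functionals / sparse detection (as in \cite{baraud_minimax,ingster_suslina,collier2015minimax}, and the linear-model extensions \cite{2010_EJS_Ingster,2010_AS_Verzelen}). I would fix $\sigma=1$ and construct a prior $\mu$ on $k$-sparse vectors: draw a uniformly random support $S\subset[p]$ of size $k$ (or use independent coordinates that are nonzero with probability $k/p$, which is technically cleaner), and on $S$ set the coordinates to $\pm\rho$ for a carefully chosen amplitude $\rho$. The explained variation under this prior is concentrated around $k\rho^2/(1+k\rho^2)$, so one wants $k\rho^2\asymp$ the target rate $\frac{k}{n}\log(1+\frac{p}{k^2}\vee\sqrt{p/k^2})$; solving gives the per-coordinate signal size $\rho^2\asymp\frac{1}{n}\log(1+\frac{p}{k^2}\vee\sqrt{p/k^2})$. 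The strategy is then to show that this mixture is statistically indistinguishable from the null $\beta^*=0$: bound $\chi^2(\P_\mu\|\P_0)$ (the chi-square divergence between the Bayes mixture over $(Y,\bX)$ and the null law) by a constant. Conditioning on $\bX$, the likelihood ratio is Gaussian and the standard computation gives, after integrating over the random design, a bound of the form $\E\big[\exp(c\,\rho^4\,|S\cap S'|\cdot\text{(something like }\|\bX_{?}\|^2)\big)$ for two independent support draws $S,S'$; the overlap $|S\cap S'|$ is hypergeometric with mean $k^2/p$, and controlling the moment generating function of the overlap against the chosen $\rho$ produces exactly the logarithmic factor $\log(1+p/k^2)$ in the regime $k\le\sqrt p$. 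When $k\le\sqrt p$ the ``$p/k^2$'' branch dominates; when one instead must be careful about the design randomness (the $\bX\bX^T$ fluctuations), the ``$\sqrt{p/k^2}$'' branch appears — this is the subtlety that distinguishes the random-design linear model from the Gaussian sequence model. Finally, once the two mixtures are $o(1)$-indistinguishable while the functional $\eta$ differs by $\gtrsim k\rho^2$ between them (after truncating the rare event that the random $\eta$ under the prior deviates from its mean), Le Cam's two-point lemma applied to the pair (null, mixture) yields the claimed bound, with the ``$\wedge 1$'' simply reflecting that $\eta$ is bounded by $1$.

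\textbf{Main obstacle.} The delicate step is the $\chi^2$-divergence computation \emph{after} accounting for the random Gaussian design: one must integrate the conditional Gaussian likelihood-ratio moment over $\bX$ and over the random supports simultaneously, and show the resulting expectation stays bounded precisely when $\rho^2$ is at the stated threshold. Getting the exact form $\log(1+\frac{p}{k^2}\vee\sqrt{p/k^2})$ — in particular identifying when the $\sqrt{p/k^2}$ term (an artifact of design fluctuations, absent in the fixed/identity-design sequence model) takes over from $p/k^2$ — requires carefully tracking the interaction between the hypergeometric support-overlap and the Wishart-type concentration of $\bX_S\bX_S^T$, and is where the argument departs from the existing literature.
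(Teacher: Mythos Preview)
Your overall plan---a two-point argument for the $1/n$ term and a null-versus-sparse-mixture $\chi^2$ bound for the sparsity term---is exactly the paper's approach. Two points deserve correction.

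First, the paper does \emph{not} fix $\sigma=1$ under the alternative. Instead it sets the nonzero entries of $\beta$ to $\pm\rho/\sqrt{(1+\rho^2)k}$ and takes $\sigma_\rho^2=(1+\rho^2)^{-1}$, so that $\var{y_1}=\|\beta\|_2^2+\sigma_\rho^2=1$ under \emph{both} the null and every point in the prior support. This makes $\eta(\beta,\sigma_\rho)=\rho^2/(1+\rho^2)$ deterministic over the prior (no ``concentrated around'' needed) and, more importantly, ensures the marginal law of $Y$ is exactly $\cN(0,\bI_n)$ under both hypotheses, which greatly simplifies the $\chi^2$ computation. Your plan to fix $\sigma=1$ would work in principle but forfeits this simplification and forces you to handle an additional total-variance mismatch.

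Second, your diagnosis of the ``main obstacle'' is off. The $\sqrt{p/k^2}$ branch is \emph{not} an artifact of random-design/Wishart fluctuations and does not require tracking any interaction between support overlap and $\bX_S\bX_S^T$. The $\chi^2$ bound the paper uses (quoted from \cite{2010_AS_Verzelen}) is
\[
\chi^2(\mathbf{P}_\mu,\P_0)\leq \exp\Big[k\log\Big(1+\tfrac{k}{p}\big(\cosh(n\rho^2/k)-1\big)\Big)\Big]-\tfrac12,
\]
and keeping this $O(1)$ forces $n\rho^2/k\asymp\operatorname{arccosh}(1+c\,p/k^2)$. The two regimes $p/k^2$ versus $\sqrt{p/k^2}$ are simply the large- and small-argument asymptotics of $\operatorname{arccosh}$: $\operatorname{arccosh}(1+x)\asymp\log x$ for large $x$ and $\asymp\sqrt{x}$ for small $x$. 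No design-specific argument is needed here; the same dichotomy appears in the Gaussian sequence model. So the step you flag as the delicate departure from existing literature is in fact a direct citation.
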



The proof of this proposition follows the lines developed to derive  minimax lower bounds for the signal detection problem (see e.g.\ Theorem 4.3 in \cite{2010_AS_Verzelen}). Nevertheless, as this proposition is a first step towards more complex settings, we provide a self-contained proof in Section \ref{sec:lower}.\\

In \eqref{eq:lower_bound_minimax}, we recognize three regimes: 
\begin{itemize}
 \item If $k\geq p^{1/2}$, the minimax rate is larger than $(\sqrt{p}/n)\wedge 1$. This  optimal  risk is achieved by the dense estimator $\widehat{\eta}^D(\bOmega)$ up to a constant number. 
 \item If $k \leq p^{1/2-\gamma}$ for some arbitrary small $\gamma>0$, the minimax rate is of order 
 \[\frac{1}{\sqrt{n}}+\left(\frac{k\log(p)}{n}\right) \wedge 1 \ .\]
 More precisely for $k\leq [\sqrt{n}/\log(p)]
 $, it is of order $n^{-1/2}$, whereas for larger $k$ it is of order $k\log(p)/n\wedge 1$. This bound is achieved by the square-root Lasso estimator $\widehat{\eta}^{SL}$, which does not require the knowledge of $\bSigma$ and $k$.
 \item For $k$ close to $p^{1/2}$ (e.g.\  $k=(p/\log(p))^{1/2}$), the minimax lower bound \eqref{eq:lower_bound_minimax} and the upper bound \eqref{eq:risk_lasso_eta_gl} only match up to some $\log(p)$ factors. Such a logarithmic mismatch has also been obtained in the related work~\cite{baraud_minimax} on minimax detection rates for testing the null hypothesis $\beta^*=0$ when the design matrix is fixed and orthonormal, that is $p=n$ and $\bX=\bI_p$. In this orthonormal setting, Collier et al.~\cite{collier2015minimax} have very recently closed this gap. Transposed in our setting, their results would suggest that the optimal risk is of order $k\log(p/k^2)/n$, suggesting that  Proposition \ref{prp:lower_minimax} is sharp. In the specific case where $\bSigma=\bI_p$, it seems possible to extend the estimator of $\|\beta^*\|_2^2$ introduced by~\cite{collier2015minimax} to our setting by considering the pairwise correlations $Y^T\bW_{\bullet j}$ for $j=1,\ldots,p$. Such estimator would then presumably be  $k\log(p/k^2)/n$ consistent. As this approach does not seem extend easily 
 to arbitrary $\bSigma$, we did not go further in this direction.
\end{itemize}

\section{Adaptation to unknown sparsity}\label{sec:adapt}

In practice, the number $|\beta^*|_0$  of non-zero components of $\beta^*$ is unknown. In this section, our purpose is to build an estimator $\widehat{\eta}$ that adapts to the unknown sparsity $|\beta^*|_0$. Although the computation of the estimators $\widehat{\eta}^{D}(\bOmega)$ and $\widehat{\eta}^{SL}$ does not require the knowledge of $|\beta^*|_0$, the choice of one estimator over the other depends on this quantity. Observe that, when $p\geq n^2$, 
the dense estimator $\widehat{\eta}^{D}(\bOmega)$ is not consistent. Therefore, only the estimator $\widehat{\eta}^{SL}$ is useful and $\widehat{\eta}^{SL}$ alone is minimax adaptive to the sparsity $k$ (up to a possible $\log$ factor when $k$ is of the order of $p^{1/2}$). This is why we focus on the regime where $p$ is large compared to $n$ and where $p\log p \leq n^{2}$.


It turns out  that no estimator  $\widehat{\eta}$ can  simultaneously achieve the minimax  risk $R^*(k)$ over all $k=1,\ldots, p$, and that there is an unavoidable loss for adaptation. This may be seen
in the following proposition.

\begin{prp}\label{prp:lower_adaptation} 
Assume that $p\log p \leq n^{2}$, and that for some  $a\in ]0,1/2[$,  $p^{1-a}(\log p)^{2} \geq 16n$.
Then for any estimator $\widehat{\eta}$, for all $k$ such that $\sqrt{p\log p} \leq k \leq p$, one has
$$
\frac{R(\widehat{\eta},1)}{\frac{1}{n}\sqrt{\frac{p}{n}}} + \frac{R(\widehat{\eta},k)}{\frac{p\log p}{n^{2}}}  \geq \frac{a^{2}}{4^{5}}.
$$
\end{prp}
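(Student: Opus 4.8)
The plan is to establish this adaptation lower bound via a two-point (or few-point) reduction argument combined with a hypothesis-testing lemma that couples the two sparsity regimes $k=1$ and $k$ large. The key idea is that an estimator which does extremely well at $k=1$ (better than the "honest" adaptive rate $\frac{1}{n}\sqrt{p/n}$) can be turned into a test that distinguishes a carefully chosen sparse alternative from a dense mixture, and such a test would violate the minimax detection impossibility in the dense regime, forcing $R(\widehat\eta,k)$ to be large. Concretely, I would fix $\sigma=1$ and construct: (i) a single distribution $\mathbb{P}_0$ corresponding to a $1$-sparse (or zero) signal with a specific small signal strength, whose associated $\eta$ is $\eta_0$, and (ii) a prior $\mu$ over $k$-sparse vectors $\beta$ (each coordinate $\pm \rho$ on a random support of size $k$, with $\rho$ tuned so that $\|\beta\|_2^2 = k\rho^2$ is of the appropriate order), all calibrated so that $\eta(\beta,1)$ is separated from $\eta_0$ by a gap $\Delta$ of order $\sqrt{\frac{p\log p}{n^2}}$ — i.e.\ the square of $\Delta$ is the target dense rate.

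The main steps are as follows. \textbf{Step 1: Calibration.} Choose the $k$-sparse prior so that $\|\bSigma^{1/2}\beta\|_2^2 = \|\beta\|_2^2$ (since $\bSigma=\bI_p$) equals roughly $\sqrt{p\log p}/n$, so that $\eta \approx \sqrt{p\log p}/n$ up to the normalization $1+\theta$, and pick the $1$-sparse competitor to have the same (or a comparably small but distinguishable) $\eta$ value, with gap $\Delta \asymp \sqrt{p\log p}/n$ — note $\Delta^2 \asymp p\log p/n^2$, matching the denominator in the claim, while the condition $p^{1-a}(\log p)^2 \ge 16n$ guarantees the signal-per-coordinate $\rho^2 \asymp \sqrt{p\log p}/(nk) \lesssim (\log p)/n$ is in the detectable-but-not-too-large range exploited in detection lower bounds. \textbf{Step 2: Indistinguishability.} Show that the mixture $\bar{\mathbb{P}}_\mu = \int \mathbb{P}_{\beta,1}\, d\mu(\beta)$ is close in total variation (or $\chi^2$) to $\mathbb{P}_0$ — this is exactly the kind of bound proved for signal detection in the high-dimensional regression model (cf.\ the computation underlying Proposition~\ref{prp:lower_minimax} and \cite{2010_AS_Verzelen}), where the $\chi^2$-divergence between the null and the sparse mixture is controlled by $\exp$ of a function of $k^2\rho^4 n^2/p$-type quantities, which under our calibration is $O(1)$. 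The factor $\sqrt{p/n}$ appearing in the first denominator of the claim reflects that at $k=1$ the relevant "super-efficiency" threshold is $\frac{1}{n}\sqrt{p/n}$ rather than $\frac{1}{n}$; the constraint $p\log p \le n^2$ keeps this below $1$.

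\textbf{Step 3: Reduction.} Given any $\widehat\eta$, consider the test that rejects $\mathbb{P}_0$ when $|\widehat\eta - \eta_0| > \Delta/2$. If $R(\widehat\eta,1) \le \frac{a^2}{4^5}\cdot \frac{1}{n}\sqrt{p/n}$ is small, then by Markov the type-I error under $\mathbb{P}_0$ is small; if simultaneously $R(\widehat\eta,k) \le \frac{a^2}{4^5}\cdot \frac{p\log p}{n^2}$ is small, then $|\widehat\eta-\eta|$ is small with high probability under $\bar{\mathbb{P}}_\mu$ (here I must be careful: I want the gap $\Delta$ between $\eta_0$ and the typical $\eta$ under $\mu$ to dominate $\sqrt{R(\widehat\eta,k)}$, which is why $\Delta^2 \asymp p\log p/n^2$ is the right scaling — so the dense-side risk must be compared to $\Delta^2$, not to $\frac{1}{n}\sqrt{p/n}$). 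Combining, the test has small total error, contradicting Step 2's indistinguishability bound once the constants are tracked; this yields the claimed lower bound on the sum $\frac{R(\widehat\eta,1)}{\frac1n\sqrt{p/n}} + \frac{R(\widehat\eta,k)}{p\log p/n^2}$. The precise constant $a^2/4^5$ comes from bookkeeping in the $\chi^2$ bound (the $a$ enters through $p^{1-a}$, which bounds $\rho^2 n/\log p$ away from the detection boundary by a factor involving $a$) and from the $1/4$ or $1/2$ losses in the Markov and triangle-inequality steps.

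\textbf{Main obstacle.} The delicate point is the \emph{simultaneous} calibration: I need one fixed sparse instance (or a concentrated prior) at $k=1$ whose $\eta$ differs from the $k$-sparse prior's typical $\eta$ by exactly $\Delta \asymp \sqrt{p\log p}/n$, \emph{while} the two induced data distributions are statistically indistinguishable at the scale $\frac{1}{n}\sqrt{p/n} \ll \Delta^2$ — i.e.\ the KL/$\chi^2$ budget must be spent so that detection fails even though the $\eta$-gap is macroscopic relative to the dense minimax rate. Making the $1$-sparse side genuinely $1$-sparse (not dense) forces its signal strength into a single coordinate of size $\rho_1^2 \asymp \sqrt{p\log p}/n$, which is large; I would instead likely take the "$k=1$" hypothesis to be a *family* indexed by which single coordinate carries the signal, or even compare $R(\widehat\eta,1)$ via a point-versus-point argument against $\beta=0$ with a tiny signal, and let the real separation happen on the dense side. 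Getting these two reductions to share the same pair of distributions (so that a single estimator is squeezed from both ends) — rather than two disjoint arguments — is the crux, and is presumably where the "new ideas" alluded to for Theorem~\ref{thrm:lower_minimax_unknown_variance} and this proposition enter. I expect the $\chi^2$-divergence computation for the sparse mixture against the correct null, uniformly over the support randomization, to be the most technical piece, closely following but refining the detection-lower-bound machinery already invoked for Proposition~\ref{prp:lower_minimax}.
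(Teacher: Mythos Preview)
Your overall structure---compare the null $\beta^*=0$ (which lies in $\bbB_0[1]$, so $R(\widehat\eta,1)\geq \E_0[\widehat\eta^2]$) against a uniform mixture $\mu$ over $k$-sparse vectors calibrated so that $\eta\approx\rho^2\asymp \sqrt{p\log p}/n$, then reduce to a test---is exactly the skeleton of the paper's proof. The null is simply $\beta^*=0$; there is no need for a genuine nonzero $1$-sparse signal, so the ``main obstacle'' you describe about simultaneous calibration of a $1$-sparse instance is a red herring.

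However, your Step~2 contains the real gap. You assert that the $\chi^2$-divergence between $\mathbf{P}_\mu$ and $\P_0$ is $O(1)$ at this calibration. It is not: with $\rho^2=\tfrac{a}{4}\sqrt{p\log p}/n$ and $k\geq\sqrt{p\log p}$, the bound from Lemma~\ref{lem:chi2_upper} gives $\chi^2(\mathbf{P}_\mu,\P_0)\lesssim p^{a/2}$, which diverges polynomially. A standard two-point argument (TV or $\chi^2$ close to zero) therefore yields nothing. The mechanism that makes the proof work is an \emph{asymmetric} weighting you have not identified: because the $k=1$ risk in the claim is normalized by $\tfrac{1}{n}\sqrt{p/n}$, not by $\tfrac{p\log p}{n^2}$, one gets
\[
R \;\geq\; \frac{a^2}{4^4}\,\inf_{\cA}\Big\{\sqrt{\tfrac{p}{n}}\,\log p\cdot \P_0[\cA] \;+\; \mathbf{P}_\mu[\cA^c]\Big\}\,,
\]
so the infimum is effectively restricted to events with $\P_0[\cA]\leq \sqrt{n}/(\sqrt{p}\log p)$. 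On that restricted class one uses the Cauchy--Schwarz bound
\[
\mathbf{P}_\mu[\cA^c]\;\geq\; 1-\P_0[\cA]-\big(\P_0[\cA]\cdot\chi^2(\mathbf{P}_\mu,\P_0)\big)^{1/2},
\]
and the product $\P_0[\cA]\cdot\chi^2$ is then at most $\sqrt{n/p^{1-a}}/\log p$, which is small by the hypothesis $p^{1-a}(\log p)^2\geq 16n$. This truncated $\chi^2$ device---allowing the divergence itself to blow up while the \emph{weighted} error stays controlled---is precisely what distinguishes an adaptation lower bound from the non-adaptive Proposition~\ref{prp:lower_minimax}, and it is missing from your plan.
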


Recall that $R^*(1)$ is of order $1/n$ and $R^*(k)$ is of order $p/n^2$. Proposition \ref{prp:lower_adaptation} implies that any estimator $\widehat{\eta}$ whose maximal risk over $\bB_0[k]$ is smaller than $p\log(p)/n^2$ exhibits 
a huge maximal risk over $\bB_0[1]$. As a consequence, any estimator admitting a reasonable risk bound over $\bB_0[1]$ should have a maximal risk at least of order $p\log(p)/n^2$ for all $k\in [\sqrt{p\log(p)},p]$. Next, we define an estimator $\widehat{\eta}^{A}$ simultaneously achieving the risk $R^*(k)$ for $k$ small compared to $\sqrt{p}$ and achieving the risk $R^* (k)\log p$ in the dense regime where $k \geq \sqrt{p\log p}$.


Define the numerical constant $c_0$ as two times the constant $C_{1}$ arising in the deviation bound \eqref{eq:concentration_eta_dense}  of Theorem \ref{prp:risk_dense}. 
We build an adaptive estimator by combining the estimator $\widehat{\eta}^{SL}$ and  $\widehat{\eta}^{D}$ as follows 
\beq\label{eq:definition_adaptive}
\widehat{\eta}^A := \left\{\begin{array}{ccc}
                            \widehat{\eta}^{SL} & \text{ if } & |\widehat{\eta}_T^D(\bOmega) - \widehat{\eta}^{SL}|\leq c_0\sqrt{p\log(p)}/n \\
                            \widehat{\eta}_T^{D}(\bOmega)  & \text{ else }
                           \end{array}
\right.
\eeq
where, for technical reasons, we consider
$\widehat{\eta}_T^D(\bOmega):= \min(1,\max (0, \widehat{\eta}^D(\bOmega)))$  a truncated version of $\widehat{\eta}^D(\bOmega)$ which  lies in $[0,1]$.

The rationale behind $\widehat{\eta}^A$ is the following. Suppose that $\beta^*$ is $k$-sparse, with $k\leq \sqrt{p}$, in which case, $\widehat{\eta}^{SL}$ achieves the optimal rate. With large probability,  $|\widehat{\eta}_T^D(\bOmega) -\eta|$ is smaller than $c_0\sqrt{p\log(p)}/(2n)$ (this is true for arbitrary $\beta^*$) and $|\widehat{\eta}^{SL} -\eta|$ is smaller than $(1/\sqrt{n}+k\log(p)/n)$ which is smaller than $ c_0\sqrt{p\log(p)}/(2n)$. Hence, $\widehat{\eta}^A$ equals $\widehat{\eta}^{SL}$ with large probability. Now assume that $k\geq \sqrt{p}$, in which case the optimal rate is of order $\sqrt{p}/n$ and is achieved by $\widehat{\eta}^{D}_T(\bOmega)$. Observe that $\widehat{\eta}^A=\widehat{\eta}_T^D(\bOmega)$ except if $\widehat{\eta}^{SL}$ is at distance less than $c_0\sqrt{p\log(p)}/n$ from $\widehat{\eta}_T^D(\bOmega)$. Consequently, $|\widehat{\eta}^A-\eta|\leq c_0\sqrt{p\log(p)}/n+|\widehat{\eta}_T^D(\bOmega)-\eta|$. Formalizing the above argument, we arrive at the following.

\begin{thm}\label{prp:adaptation}
There exists a numerical constant $C$ such that the following holds. Assume that $p\geq n$.
For any integer $k\in [p]$, any $k$-sparse vector $\beta^*$ and any $\sigma>0$, the estimator  $\widehat{\eta}^A$ satisfies
\[
 \E\left[\left(\widehat{\eta}^A - \eta\right)^2\right]\leq C \left[\frac{1}{n} + \left(\frac{k^2\log^2(p)}{n^2}\frac{\lambda^2_{\max}(\bSigma)}{\lambda^2_{\min}(\bSigma)}\right)   \wedge \left(\frac{p\log(p)}{n^2}\right)\right]\ .
\]
\end{thm}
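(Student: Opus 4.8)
The plan is to make rigorous the heuristic sketched just before the statement, splitting according to whether $k \le \sqrt{p}$ (sparse regime) or $k > \sqrt{p}$ (dense regime); in both cases the desired bound follows from controlling $\widehat\eta^A$ on a single high-probability event and crudely on its complement, using that $\widehat\eta^A \in [0,1]$ (since both $\widehat\eta^D_T(\bOmega)$ and $\widehat\eta^{SL}$ lie in $[0,1]$ — the former by truncation, the latter by \eqref{eq:definition_eta_Lasso}), so $(\widehat\eta^A - \eta)^2 \le 1$ always and the complement contributes at most its probability.

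First I would record two deterministic/probabilistic facts. From Corollary \ref{cor:risk_dense} applied to the transformed design $\tilde\bX = \bX\bSigma^{-1/2}$ (whose rows are standard normal), together with the deviation inequality \eqref{eq:concentration_eta_dense} of Theorem \ref{prp:risk_dense} with the choice $t \asymp \log p$, there is an event $\cA_D$ of probability at least $1 - C_2 p^{-c}$ on which
\[
\big|\widehat\eta^D_T(\bOmega) - \eta\big| \;\le\; \big|\widehat\eta^D(\bOmega) - \eta\big| \;\le\; \frac{c_0}{2}\,\frac{\sqrt{p\log p}}{n},
\]
valid for \emph{every} $\beta^*$ and $\sigma$ (truncation only helps since $\eta \in [0,1]$); recall $c_0 = 2C_1$ by definition. (The choice $t \asymp \log p$ is legitimate because we have restricted to $p\log p \le n^2$, hence $\log p \le n^{2/3}$ for $n$ large, so $t \le n^{1/3}$ as required; for small $n$ the bound $\le 1$ handles everything.) Second, from Proposition \ref{thm:eta_gauss_Lasso}, whenever $k\log(p)\,\lambda_{\max}(\bSigma)/\lambda_{\min}(\bSigma) \le Cn$ there is an event $\cA_{SL}$ of probability $\ge 1 - C'p^{-c}$ on which $|\widehat\eta^{SL} - \eta| \le C''(n^{-1/2} + \tfrac{k\log p}{n}\tfrac{\lambda_{\max}(\bSigma)}{\lambda_{\min}(\bSigma)})$; I will actually re-derive the in-probability statement behind \eqref{eq:risk_lasso_eta_gl} via Theorem 2 of \cite{2012_Sun} so as to have an explicit exponential tail.

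Now the sparse case $k \le \sqrt{p}$. On $\cA_D \cap \cA_{SL}$, if additionally $k\log p \lesssim n$ (so the Lasso bound is nontrivial) and $k \le \sqrt p$, then $|\widehat\eta^{SL} - \eta| \le c_0\sqrt{p\log p}/(2n)$ — because $n^{-1/2} \le \sqrt{p\log p}/n$ and $k\log p/n \le \sqrt{p}\,\log p/n \le \sqrt{p\log p}/n$, up to the constant $c_0/2$ after possibly enlarging constants. Combining with the $\cA_D$ bound gives $|\widehat\eta^D_T(\bOmega) - \widehat\eta^{SL}| \le c_0\sqrt{p\log p}/n$, so by definition $\widehat\eta^A = \widehat\eta^{SL}$ on this event, and then $\E[(\widehat\eta^A-\eta)^2] \le \E[(\widehat\eta^{SL}-\eta)^2]\mathbf{1}_{\cA_{SL}} + \P(\cA_{SL}^c) \le C(n^{-1} + \tfrac{k^2\log^2 p}{n^2}\tfrac{\lambda^2_{\max}}{\lambda^2_{\min}}) + C'p^{-c}$, and the last term is absorbed since $p^{-c} \le n^{-1}$ for $p \ge n$. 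If instead $k\log p \gtrsim n$, then $\tfrac{k^2\log^2 p}{n^2}\tfrac{\lambda^2_{\max}}{\lambda^2_{\min}} \gtrsim 1 \ge \E[(\widehat\eta^A-\eta)^2]$ trivially, so the bound holds vacuously.

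Finally the dense case $k > \sqrt p$. Here the target is $p\log p/n^2$ (note $\tfrac{k^2\log^2p}{n^2}\tfrac{\lambda^2_{\max}}{\lambda^2_{\min}} \ge \tfrac{p}{n^2}$, so we only need to beat $p\log p/n^2$). By the definition of $\widehat\eta^A$, one always has the deterministic inequality
\[
\big|\widehat\eta^A - \widehat\eta^D_T(\bOmega)\big| \;\le\; c_0\,\frac{\sqrt{p\log p}}{n}
\]
(on the ``$\widehat\eta^{SL}$'' branch this is exactly the triggering condition; on the other branch it is $0$). Hence on $\cA_D$, $|\widehat\eta^A - \eta| \le c_0\sqrt{p\log p}/n + |\widehat\eta^D_T(\bOmega) - \eta| \le (3c_0/2)\sqrt{p\log p}/n$, giving $(\widehat\eta^A-\eta)^2 \lesssim p\log p/n^2$ on $\cA_D$, while on $\cA_D^c$ we use $(\widehat\eta^A-\eta)^2 \le 1$ and $\P(\cA_D^c) \le C_2 p^{-c} \le p\log p/n^2$ (using $p\log p \le n^2$ and $p\ge n$, so $p^{-c}$ is far smaller — one checks $p^{-c} n^2 \le p^{2-c} \le p\log p$). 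Summing the two contributions yields $\E[(\widehat\eta^A-\eta)^2] \le C p\log p/n^2$, completing the proof.

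\textbf{Main obstacle.} The routine parts are the case analysis and the $\le 1$ bookkeeping on low-probability events; the one genuinely delicate point is getting an \emph{exponential} deviation inequality for $\widehat\eta^{SL}$ (Proposition \ref{thm:eta_gauss_Lasso} as stated is only an $L^2$ bound), so that $\P(\cA_{SL}^c)$ is polynomially small in $p$ and can be swept into the $1/n$ term; this requires tracking the high-probability oracle inequality underlying Theorem 2 of \cite{2012_Sun} rather than just its expectation form, and checking that the event controlling the design (compatibility/restricted-eigenvalue conditions for Gaussian $\bX$) also holds with probability $1 - p^{-c}$ under the stated condition on $k$ and the conditioning of $\bSigma$.
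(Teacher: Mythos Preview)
Your dense case (and the observation $|\widehat\eta^A - \widehat\eta_T^D(\bOmega)| \le c_0\sqrt{p\log p}/n$ deterministically) is fine and matches the paper's first step.

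There is, however, a genuine arithmetic error in your sparse case: you write ``$k\log p/n \le \sqrt{p}\,\log p/n \le \sqrt{p\log p}/n$'', but $\sqrt{p}\,\log p \le \sqrt{p\log p}$ is equivalent to $\log p \le 1$. So for $k$ close to $\sqrt{p}$ (say $k = \sqrt{p}/2$) your claim that $|\widehat\eta^{SL}-\eta| \le c_0\sqrt{p\log p}/(2n)$ on $\cA_{SL}$ fails, and you cannot conclude $\widehat\eta^A = \widehat\eta^{SL}$ there. The fixable version of your split is at $k \lesssim \sqrt{p/\log p}\cdot \lambda_{\min}(\bSigma)/\lambda_{\max}(\bSigma)$; above that threshold the target minimum equals $p\log p/n^2$ anyway, which your dense argument already delivers for every $k$.

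More to the point, the paper's sparse case is organized differently and makes your ``main obstacle'' disappear entirely. Instead of trying to force $\widehat\eta^A = \widehat\eta^{SL}$ on a high-probability event (which requires a tail bound for $\widehat\eta^{SL}$), the paper argues directly on the event $\{\widehat\eta^A = \widehat\eta_T^D(\bOmega)\}$: if this happens then $|\widehat\eta_T^D(\bOmega) - \widehat\eta^{SL}| > c_0\sqrt{p\log p}/n$, so by the triangle inequality at least one of $|\widehat\eta_T^D(\bOmega)-\eta|$, $|\widehat\eta^{SL}-\eta|$ exceeds $c_0\sqrt{p\log p}/(2n)$. Either $|\widehat\eta_T^D(\bOmega)-\eta|$ is the large one (an event of probability $\le C_2/p$ by Theorem~\ref{prp:risk_dense}, bounded crudely by $1$), or it is not, in which case $|\widehat\eta_T^D(\bOmega)-\eta| \le |\widehat\eta^{SL}-\eta|$. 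Either way $(\widehat\eta_T^D(\bOmega)-\eta)^2\mathbf{1}_{\widehat\eta^A=\widehat\eta_T^D} \le (\widehat\eta^{SL}-\eta)^2 + \mathbf{1}_{|\widehat\eta_T^D-\eta|\ge c_0\sqrt{p\log p}/(2n)}$, giving
\[
\E\big[(\widehat\eta^A-\eta)^2\big] \le 2\,\E\big[(\widehat\eta^{SL}-\eta)^2\big] + \P\big[|\widehat\eta_T^D(\bOmega)-\eta|\ge c_0\sqrt{p\log p}/(2n)\big]\ .
\]
This uses only the $L^2$ bound of Proposition~\ref{thm:eta_gauss_Lasso} and the deviation bound for $\widehat\eta^D$; no exponential tail for $\widehat\eta^{SL}$ is needed. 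Your route can be repaired (correct threshold plus tracking the event $\cA$ inside the proof of Proposition~\ref{thm:eta_gauss_Lasso}), but the paper's trick is both shorter and sidesteps the issue you flagged as the main obstacle.
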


As a consequence of Propositions \ref{prp:lower_minimax}, \ref{prp:lower_adaptation} and \ref{prp:adaptation}, and, in the asymptotic regime where $p\log p \leq n^{2}$ and $p^{1-a}$ is large compared to $n$ for some positive $a$, $\widehat{\eta}^A$ is achieves the optimal adaptive risk for all 
 $k\in\{1,\ldots, p^{1/2-\gamma}\}\cup\{(p\log(p))^{1/2},\ldots, p\}$ where $\gamma>0$ is arbitrary small. For $k$ close to $\sqrt{p}$, there is still a logarithmic gap between the upper and lower bounds as in the non-adaptive section. 

\begin{rem}\emph{
Theorem \ref{prp:risk_dense} is the basic stone for the construction of   $\widehat{\eta}^A$ by the use of the deviation inequality.
The constant $c_{0}$ may be quite large, as the constant $C_{1}$ in the deviation inequality, making the estimator difficult to use in practice if $n$ and $p$ are not large enough. 
Theorem \ref{prp:adaptation} however allows to understand how adaptation to sparsity is possible. 
}
\end{rem}

\section{Minimax estimation when $\bSigma$ is unknown}\label{sec:unknown_covariance}
\label{sec:Sunknown}

In this section, we investigate the case where  the covariance matrix $\bSigma$ is unknown. As the computation of the sparse estimator $\widehat{\eta}_{SL}$ does not require the knowledge of $\bSigma$, the optimal estimation rate is therefore unchanged when $|\beta^*|_0$ is much smaller than $\sqrt{p}$. In what follows we therefore focus on the regime where $|\beta^*|_0\geq \sqrt{p}$.


\subsection{Positive results under restrictions on $\bSigma$}\label{sec:plug-in}

Here, we prove that a simple plug-in method allows to achieve the minimax rate as long as one can estimate the inverse covariance matrix $\bOmega$ 
sufficiently well. This approach has already been considered by Dicker~\cite{dicker_variance} who has proved a result analogous  to  Proposition~\ref{prp:upper_risk_unknown_Sigma}. For the sake of completeness, we provide detailed arguments and also consider the problem of adaptation to the sparsity. Without loss of generality, we may assume that we have at our disposal an independent copy of $\bX$, denoted $\bX^{(2)}$ (if it is not the case, simply divide the data set into two subsamples of the same size).

Given an estimator $\widehat{\bOmega}$ of $\bOmega:=\bSigma^{-1}$ based on the matrix $\bX^{(2)}$, the proportion of explained variation  $\eta$ is estimated 
as in Section \ref{sec:dense_regime}, 
using (\ref{eq:etadense}), 
except that the true inverse covariance matrix is replaced by its estimator:
\beq \label{eq:definition_widehateta}
 \widehat{\eta}^D(\widehat{\bOmega})  := \frac{Y^T\left(\bX\widehat{\bOmega} \bX^T -tr(\bX \widehat{\bOmega}\bX^T)\bI_n /n\right)Y}{(n+1)\|Y\|^2}.
\eeq

\begin{prp}\label{prp:upper_risk_unknown_Sigma}
Assume that $p\geq n$. 
For any non-singular estimator $\widehat{\bOmega}$ based on the sample $\bX^{(2)}$, 
\beq\label{eq:upper_risk_unknown_Sigma}
  \P\left[\big|\widehat{\eta}^D(\widehat{\bOmega})- \eta \big|\geq C_1 \|\bSigma\|_{op}\|\widehat{\bOmega}\|_{op}\frac{\sqrt{pt}}{n}+  \|\bSigma\|_{op}\|\widehat{\bOmega}-\bOmega\|_{op}\Big| \bX^{(2)}  \right]\leq C_2e^{-t} \ ,
  \eeq
  for all $t<n^{1/3}$. Here, $C_{1}$ and $C_{2}$ are the numerical constants that appear in Theorem \ref{prp:risk_dense}.
\end{prp}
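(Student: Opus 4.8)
\textbf{Proof plan for Proposition~\ref{prp:upper_risk_unknown_Sigma}.}

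The plan is to reduce the statement to the known-design case treated in Theorem~\ref{prp:risk_dense}, applied conditionally on $\bX^{(2)}$. First, fix a realization of $\bX^{(2)}$, so that $\widehat{\bOmega}$ becomes a deterministic non-singular positive definite matrix independent of $(\bX,\epsilon)$. I would then introduce the whitened design $\check{\bX} := \bX \widehat{\bOmega}^{1/2}$, whose rows are i.i.d.\ $\cN(0, \widehat{\bOmega}^{1/2}\bSigma\widehat{\bOmega}^{1/2})$. Writing $\check{\bSigma} := \widehat{\bOmega}^{1/2}\bSigma\widehat{\bOmega}^{1/2}$ and $\check{\beta}^* := \widehat{\bOmega}^{-1/2}\beta^*$, the model $Y = \bX\beta^* + \epsilon = \check{\bX}\check{\beta}^* + \epsilon$ is again a Gaussian-design linear model, now with covariance $\check{\bSigma}$, and one checks that $\widehat{\eta}^D(\widehat{\bOmega})$ is exactly the statistic $\widehat{V}$ of \eqref{eq:defintion_T}--Theorem~\ref{prp:risk_dense} built from $\check{\bX}$ (since $\bX\widehat{\bOmega}\bX^T = \check{\bX}\check{\bX}^T$). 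Hence the deviation bound \eqref{eq:concentration_eta_dense} applies conditionally and gives, with probability at least $1 - C_2 e^{-t}$,
\[
\Big| \widehat{\eta}^D(\widehat{\bOmega}) - \frac{\check{\beta}^{*T}\check{\bSigma}^2\check{\beta}^*}{\var{y_1}} \Big| \leq C_1 \|\check{\bSigma}\|_{op} \frac{\sqrt{pt}}{n}.
\]

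Next I would control the two error terms. For the variance term, $\check{\beta}^{*T}\check{\bSigma}^2\check{\beta}^* = \beta^{*T}\bSigma\widehat{\bOmega}\bSigma\beta^*$, so the "bias'' of the conditional target relative to $\beta^{*T}\bSigma\beta^* = \var{y_1}\,\eta - \sigma^2\cdot 0$ (more precisely, relative to $\E[\|\bx_1^T\beta^*\|_2^2] = \beta^{*T}\bSigma\beta^*$) is
\[
\big| \beta^{*T}\bSigma\widehat{\bOmega}\bSigma\beta^* - \beta^{*T}\bSigma\beta^* \big| = \big| \beta^{*T}\bSigma(\widehat{\bOmega} - \bOmega)\bSigma\beta^* \big| \leq \|\widehat{\bOmega} - \bOmega\|_{op}\, \|\bSigma^{1/2}\beta^*\|_2^2\, \|\bSigma\|_{op},
\]
using $\bSigma\bOmega\bSigma = \bSigma$. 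Dividing by $\var{y_1} \geq \|\bSigma^{1/2}\beta^*\|_2^2$ bounds this contribution to $\eta$ by $\|\bSigma\|_{op}\|\widehat{\bOmega} - \bOmega\|_{op}$, which is the second term in \eqref{eq:upper_risk_unknown_Sigma}. For the fluctuation term, $\|\check{\bSigma}\|_{op} = \|\widehat{\bOmega}^{1/2}\bSigma\widehat{\bOmega}^{1/2}\|_{op} \leq \|\widehat{\bOmega}\|_{op}\|\bSigma\|_{op}$, which turns $C_1\|\check{\bSigma}\|_{op}\sqrt{pt}/n$ into $C_1\|\bSigma\|_{op}\|\widehat{\bOmega}\|_{op}\sqrt{pt}/n$, the first term. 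Combining the two bounds by the triangle inequality and noting that the conditional target $\check{\beta}^{*T}\check{\bSigma}^2\check{\beta}^*/\var{y_1}$ sits within $\|\bSigma\|_{op}\|\widehat{\bOmega}-\bOmega\|_{op}$ of $\eta$ yields the claimed conditional probability bound; since the right-hand side holds for \emph{every} realization of $\bX^{(2)}$, this is exactly the stated inequality.

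The only genuinely delicate point is the bookkeeping in the reduction: one must verify that $\widehat{\eta}^D(\widehat{\bOmega})$ literally coincides with the estimator $\widehat{V}$ of Theorem~\ref{prp:risk_dense} applied to the whitened data $(\check{\bX}, Y)$ — i.e.\ that replacing $\bX$ by $\check{\bX}$ in \eqref{eq:defintion_T} reproduces \eqref{eq:definition_widehateta} — and that the hypothesis $t \leq n^{1/3}$ and $p \geq n$ transfer unchanged (they involve only $n$ and $p$, which are unaffected by whitening). I expect this to be the main obstacle only in the sense of being the place where care is needed; once the identification is made, the probabilistic content is entirely inherited from Theorem~\ref{prp:risk_dense} and the rest is the deterministic operator-norm estimate above. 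No independence issue arises because conditioning on $\bX^{(2)}$ makes $\widehat{\bOmega}$ a constant while $(\bX,\epsilon)$ retains its original joint distribution.
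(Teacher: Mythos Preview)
Your proposal is correct and follows essentially the same approach as the paper's proof: both condition on $\bX^{(2)}$, rewrite the model as $Y = [\bX\widehat{\bOmega}^{1/2}][\widehat{\bOmega}^{-1/2}\beta^*] + \epsilon$ so that $\widehat{\eta}^D(\widehat{\bOmega})$ becomes the statistic $\widehat{V}$ of Theorem~\ref{prp:risk_dense} for the whitened design, apply the deviation bound~\eqref{eq:concentration_eta_dense} with effective covariance $\widehat{\bOmega}^{1/2}\bSigma\widehat{\bOmega}^{1/2}$, and then bound the bias $\beta^{*T}\bSigma\widehat{\bOmega}\bSigma\beta^*/\var{y_1} - \eta$ via the same operator-norm estimate. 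The only cosmetic difference is that the paper factors the bias as $\beta^{*T}\bSigma^{1/2}(\bSigma^{1/2}\widehat{\bOmega}\bSigma^{1/2} - \bI_p)\bSigma^{1/2}\beta^*$ before bounding, whereas you write it as $(\bSigma\beta^*)^T(\widehat{\bOmega}-\bOmega)(\bSigma\beta^*)$ and then use $\|\bSigma\beta^*\|_2^2 \leq \|\bSigma\|_{op}\|\bSigma^{1/2}\beta^*\|_2^2$; both yield the same final bound.
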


Thus, if one is able to estimate $\bOmega$ at the rate $\sqrt{p}/n$, then $\widehat{\eta}^D(\widehat{\bOmega})$ achieves the same estimation rate as if $\bSigma$ was known. To illustrate this qualitative situation, we describe an example of a class $\cU$ of precision matrices and an estimator $\widehat{\bOmega}$ satisfying this property.

\bigskip

For any square matrix $\bA$, define its matrix $l_1$ operator norm by $\|\bA\|_{1\to 1}=\max_{1\leq j\leq p}\sum_{1\leq i \leq p}|\bA_{i,j}|$. Given any $M>0$ and $M_1>0$, consider the following collection $\cU$ of sparse inverse covariance matrices
\beq\label{definition_U}
\cU:=\cU(M,M_1):=\left\{
\bOmega:\ \bOmega \succ 0:\quad \begin{array}{l}
\frac{1}{M_1}\leq \lambda_{\min}(\bOmega)\leq \lambda_{\max}(\bOmega)\leq M_1,\quad \|\bOmega\|_{1\to 1}\leq M, \\                               
 \max_{1\leq j\leq p}\sum_{i=1}^p \mathbf{1}_{\bOmega_{i,j}\neq 0}\leq \sqrt{\frac{p}{n\log(p)}}
\end{array}
\right\}\ .
\eeq
Cai et al~\cite{clime} introduced the CLIME estimator to estimate sparse precision matrices. Let $\lambda_n>0$ and $\rho>0$ be two tuning parameters, whose value will be fixed in Lemma \ref{lem:estimation_omega} below. Denote $\widehat{\bSigma}^{(2)}:= \bX^{(2)T}\bX^{(2)}/n$ the empirical covariance matrix based on the observations $\bX^{(2)}$.

\medskip 

Let $\widehat{\bOmega}_1$ be the solution of the following optimization problem
\beq \label{eq:definition_clime}
\min \|\bOmega'\|_1 \ ,\quad  \text{ subject to }\quad \|\widehat{\bSigma}^{(2)} \bOmega'-\bI_p\|_{\infty}\leq \lambda_n, \ \bOmega'\in \mathbb{R}^{p\times p}\ .
\eeq
Then, the CLIME estimator $\widehat{\bOmega}_{CL}$ is obtained by symmetrizing $\widehat{\bOmega}_{1}$: for all $i,j$,
we take $(\widehat{\bOmega}_{CL})_{i,j}=(\widehat{\bOmega}_{1})_{i,j}$ if $|(\widehat{\bOmega}_{1})_{i,j}|\leq |(\widehat{\bOmega}_{1})_{j,i}|$ and $(\widehat{\bOmega}_{CL})_{i,j}=(\widehat{\bOmega}_{1})_{j,i}$ in the opposite case. We may now apply Theorem 1.a in~\cite{clime}  to our setting with $\eta= 1/5\wedge 1/\sqrt{M_1}$, $K=e^{1/2}$ and $\tau=1$.
This way we obtain the following.

\begin{lem}\label{lem:estimation_omega} There exists a numerical constant $C_{3}>0$ such that the following holds. Fix $\lambda_n= 2[25 \vee M_1](3+e^{3}(5\vee \sqrt{M_1} )^2M\sqrt{\log(p)/n}$. Assume that $\log(p)\leq n/8$ and that $\bOmega$ belongs to $\cU$. Then, the CLIME estimator satisfies 
\beq\label{eq:control_tilde_omega}
\|\widehat{\bOmega}_{CL}- \bOmega\|_{op}\leq C_{3} M^2 M_1^{2} \frac{\sqrt{p}}{n} \ ,
\eeq 
with probability larger than $1-4/p$. 
\end{lem}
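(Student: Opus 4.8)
\textbf{Proof plan for Lemma \ref{lem:estimation_omega}.}

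The plan is to invoke Theorem 1.a of Cai et al.~\cite{clime} essentially as a black box, after checking that all of its hypotheses are met in our Gaussian setting with $\bOmega \in \cU$. First I would recall the quantities that Theorem 1.a depends on: the bound $M$ on $\|\bOmega\|_{1\to 1}$, the sub-Gaussian tail parameters of the rows of $\bX^{(2)}$ (in the notation of \cite{clime}, numbers $K$ and $\tau$ controlling $\E[e^{t x_{ij}^2/\sigma_{jj}}]$, together with $\eta$ controlling how far $t$ may range), and the sparsity level $s_0 := \max_j \sum_i \1_{\bOmega_{ij}\neq 0}$. Since the rows $\bx_i$ are $\cN(0,\bSigma)$ and $\lambda_{\max}(\bSigma) = 1/\lambda_{\min}(\bOmega) \leq M_1$, each coordinate $x_{ij}$ is centered Gaussian with variance at most $M_1$, so $\E[\exp(t x_{ij}^2)] \leq (1-2tM_1)^{-1/2}$ for $t < 1/(2M_1)$; this verifies the exponential-moment condition with constants $K = e^{1/2}$ and $\tau = 1$, valid for $t$ in a range controlled by $\eta = \tfrac15 \wedge \tfrac{1}{\sqrt{M_1}}$, which is exactly the choice announced before the statement.

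Next I would verify that the prescribed $\lambda_n = 2[25\vee M_1]\bigl(3 + e^3(5\vee\sqrt{M_1})^2\bigr) M\sqrt{\log p / n}$ dominates the threshold required in \cite{clime}, which is of the form $C_0 M \sqrt{\log p / n}$ for a constant $C_0$ depending only on $(K,\tau,\eta)$ and $\max_j \Sigma_{jj}$; the explicit numerical prefactor above is chosen precisely to absorb that $C_0$ under the assumption $\log p \le n/8$ (which guarantees $\sqrt{\log p/n}$ is small enough for the sub-Gaussian concentration of $\widehat{\bSigma}^{(2)}$ to kick in, so that $\bOmega$ is feasible for \eqref{eq:definition_clime} with high probability). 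Theorem 1.a of \cite{clime} then yields, with probability at least $1 - 4 p^{-\tau} = 1 - 4/p$,
\[
\|\widehat{\bOmega}_{CL} - \bOmega\|_{op} \;\leq\; C\, s_0\, M\, \lambda_n
\]
for a universal constant $C$. Substituting the definition of $\lambda_n$ and the sparsity bound $s_0 \leq \sqrt{p/(n\log p)}$ from the definition \eqref{definition_U} of $\cU$ gives
\[
\|\widehat{\bOmega}_{CL} - \bOmega\|_{op} \;\leq\; C\, \sqrt{\tfrac{p}{n\log p}}\; M\cdot [25\vee M_1]\bigl(3+e^3(5\vee\sqrt{M_1})^2\bigr) M\sqrt{\tfrac{\log p}{n}} \;\leq\; C_3\, M^2 M_1^2\, \frac{\sqrt p}{n},
\]
where the $\sqrt{\log p}$ factors cancel and the remaining $M_1$-dependent numerical factors are crudely bounded by a constant times $M_1^2$. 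This is the claimed bound \eqref{eq:control_tilde_omega}.

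The only genuinely delicate point is bookkeeping: making sure the operator-norm (as opposed to $\ell_\infty$ or $\ell_1$) version of the CLIME bound is the one being quoted, and that the spectral-norm error indeed scales like $s_0 M \lambda_n$ rather than $s_0 \lambda_n$ or $M^{1/2}\lambda_n$ — Theorem 1.a of \cite{clime} is stated for several norms with slightly different sparsity exponents, so one must pick the right line. I would also double-check the innocuous hypothesis $\log p \le n/8$ is what \cite{clime} needs for the event $\{\bOmega \text{ feasible}\}$ to have probability $\ge 1 - 4/p$; if their stated requirement is phrased differently (e.g.\ $\lambda_n \ge$ some multiple of $\sqrt{\log p/n}$ with an implicit smallness assumption), I would reconcile the two. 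None of this is mathematically hard — it is entirely a matter of matching our notation and constants to theirs — so the proof is essentially a one-paragraph citation once these identifications are made.
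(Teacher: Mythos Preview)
Your proposal is correct and matches the paper's approach exactly: the paper states the lemma as a direct consequence of Theorem~1.a in~\cite{clime} applied with $\eta = 1/5 \wedge 1/\sqrt{M_1}$, $K = e^{1/2}$, $\tau = 1$, and does not give any further argument. Your bookkeeping checks (sub-Gaussian tails, the right operator-norm bound scaling as $s_0 M \lambda_n$, and the cancellation of the $\sqrt{\log p}$ factors via $s_0 \le \sqrt{p/(n\log p)}$) are precisely the verifications one needs to carry out.
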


\noindent
Let us modify  the estimator of $\eta$ so that it effectively lies in $[0,1]$. Let
$\widehat{\eta}_T^D(\widehat{\bOmega}):= \min(1,\max (0, \widehat{\eta}^D(\widehat{\bOmega})))$.
\begin{cor}\label{cor:dense-plug_in}
Assume that $p\geq n$ and that $\bOmega$ belongs to the collection  $\cU$ defined above.
Then, tre exists a universal constant $C>0$   such that the following holds.  For any $\beta^*$ and $\sigma>0$,
\[
\E\left[\left\{\widehat{\eta}_T^D(\widehat{\bOmega}_{CL})- \eta\right\}^2\right]\leq \Big[C M^4 M_1^{ 6} \frac{p}{n^2}\Big] \wedge 1.
\]
\end{cor}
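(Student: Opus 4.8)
The plan is to combine the conditional deviation bound of Proposition~\ref{prp:upper_risk_unknown_Sigma} with the operator-norm control of the CLIME estimator from Lemma~\ref{lem:estimation_omega}, and then to integrate the resulting tail bound. First I would work on the event $\mathcal{E}$ on which $\|\widehat{\bOmega}_{CL}-\bOmega\|_{op}\leq C_3 M^2 M_1^2 \sqrt{p}/n$, which by Lemma~\ref{lem:estimation_omega} has probability at least $1-4/p$. On $\mathcal{E}$ we also have $\|\widehat{\bOmega}_{CL}\|_{op}\leq \|\bOmega\|_{op} + C_3 M^2 M_1^2\sqrt{p}/n \leq M_1 + C_3 M^2 M_1^2$ (using $\sqrt{p}/n\leq 1$, which holds since $p\geq n$ implies $\sqrt p \le \sqrt p \le p \le n \cdot (p/n)$... more simply, when $p > n^2$ the statement is trivial because $\eta,\widehat\eta_T^D\in[0,1]$ forces the risk below $1$, so we may assume $p\leq n^2$, hence $\sqrt p/n\le 1$). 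Since $\|\bSigma\|_{op}\leq M_1$ as well, plugging these bounds into \eqref{eq:upper_risk_unknown_Sigma} gives, conditionally on $\bX^{(2)}$ restricted to $\mathcal{E}$,
\[
\P\left[\big|\widehat{\eta}^D(\widehat{\bOmega}_{CL})-\eta\big|\geq C' M^2 M_1^{3}\frac{\sqrt{pt}}{n} + C_3 M^3 M_1^{4}\frac{\sqrt p}{n}\ \Big|\ \bX^{(2)}\right]\leq C_2 e^{-t}
\]
for all $t< n^{1/3}$, where $C'$ absorbs $C_1$ and the constant from $\|\widehat{\bOmega}_{CL}\|_{op}$.

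Next I would pass from $\widehat{\eta}^D(\widehat{\bOmega}_{CL})$ to its truncation $\widehat{\eta}_T^D(\widehat{\bOmega}_{CL})$: since $\eta\in[0,1]$, truncation to $[0,1]$ only decreases the error, so $|\widehat{\eta}_T^D(\widehat{\bOmega}_{CL})-\eta|\leq |\widehat{\eta}^D(\widehat{\bOmega}_{CL})-\eta|$ and also $|\widehat{\eta}_T^D(\widehat{\bOmega}_{CL})-\eta|\leq 1$ deterministically. Writing $\Delta:=\widehat{\eta}_T^D(\widehat{\bOmega}_{CL})-\eta$ and $A:= C'' M^3 M_1^{4}$, the above gives $\P[|\Delta|\geq A\sqrt{pt}/n\mid \bX^{(2)}]\leq C_2 e^{-t}$ on $\mathcal{E}$ for $t\leq n^{1/3}$ (increasing constants to merge the two terms). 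To bound $\E[\Delta^2]$ I would split according to $\mathcal{E}$: on $\mathcal{E}^c$ (probability $\leq 4/p \leq 4p/n^2$ since $p\leq n^2$ gives $1/p \le p/n^2$... actually $1/p\le p/n^2 \iff n^2\le p^2 \iff n\le p$, which holds) we bound $\Delta^2\leq 1$, contributing at most $4/p$; on $\mathcal{E}$, integrate the tail via $\E[\Delta^2\mathbf{1}_{\mathcal{E}}\mid\bX^{(2)}]=\int_0^1 \P[\Delta^2\geq u,\mathcal{E}\mid\bX^{(2)}]\,du$, substituting $u=(A\sqrt p/n)^2 t$, so that the integral over $t\in[0,n^{1/3}]$ contributes a term of order $(A\sqrt p/n)^2 \int_0^\infty C_2 e^{-t}\,dt = O(A^2 p/n^2)$, while the leftover mass $t> n^{1/3}$ contributes at most $\P[\mathcal{E}, t>n^{1/3}\text{-part}]\cdot 1 \lesssim e^{-n^{1/3}}$ which is itself $O(p/n^2)$ in the relevant range (or absorbed since the final bound is capped at $1$).

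Collecting the pieces yields $\E[\Delta^2]\leq C(M^6 M_1^{8}) p/n^2 + 4/p \leq C' M^6 M_1^{8}\, p/n^2$, and then capping at $1$ (using again that $|\Delta|\le 1$) gives the stated bound with the constant absorbing the precise polynomial in $M,M_1$ — note that the exponents $M^4 M_1^6$ claimed in the corollary are slightly different from the naive $M^6 M_1^8$ I obtain, so in the write-up I would be more careful tracking which powers of $M,M_1$ actually enter Lemma~\ref{lem:estimation_omega}'s operator-norm rate versus the $\|\bSigma\|_{op},\|\widehat\bOmega\|_{op}$ prefactors in Proposition~\ref{prp:upper_risk_unknown_Sigma}, or simply state the bound with an unspecified polynomial $\mathrm{poly}(M,M_1)$. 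The main obstacle is bookkeeping rather than conceptual: one must (i) handle the $p\leq n^2$ reduction cleanly so that $\sqrt p/n\le 1$ and $1/p\le p/n^2$, (ii) correctly convert a conditional-on-$\bX^{(2)}$ tail bound valid only on a high-probability event into an unconditional second-moment bound, which requires care that the $e^{-t}$ tail is only asserted for $t\leq n^{1/3}$, and (iii) track the dependence on $M,M_1$ so the final constant is indeed polynomial in them.
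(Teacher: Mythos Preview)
Your approach is essentially the paper's: split on the Lemma~\ref{lem:estimation_omega} event, invoke Proposition~\ref{prp:upper_risk_unknown_Sigma} on that event, integrate the tail, and use $|\Delta|\le 1$ elsewhere. One point deserves attention, and it is exactly what makes the paper's case split cleaner than your $p\le n^2$ reduction.

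Proposition~\ref{prp:upper_risk_unknown_Sigma} is stated for a \emph{non-singular} $\widehat{\bOmega}$ (its proof writes $Y=[\bX\widehat{\bOmega}^{1/2}][\widehat{\bOmega}^{-1/2}\beta^*]+\epsilon$). Your reduction to $p\le n^2$ gives $\sqrt{p}/n\le 1$, hence $\|\widehat{\bOmega}_{CL}-\bOmega\|_{op}\le C_3 M^2M_1^2$ on $\mathcal{E}$, but this bound can exceed $\lambda_{\min}(\bOmega)\ge 1/M_1$, so $\widehat{\bOmega}_{CL}$ need not be positive definite on $\mathcal{E}$ and the proposition cannot be invoked as written. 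The paper instead splits on the condition $C_3 M^2M_1^{2}\sqrt{p}/n\le (2M_1)^{-1}$: when it holds, on $\cB$ one gets $\lambda_{\min}(\widehat{\bOmega}_{CL})\ge \lambda_{\min}(\bOmega)-\|\widehat{\bOmega}_{CL}-\bOmega\|_{op}\ge (2M_1)^{-1}$, so non-singularity is automatic and moreover $\|\widehat{\bOmega}_{CL}\|_{op}\le 2M_1$; when it fails, $2C_3 M^2M_1^{3}\sqrt{p}/n>1$, so the trivial bound $|\Delta|\le 1$ already gives $\E[\Delta^2]\le 1\le (2C_3)^2 M^4M_1^{6}p/n^2$. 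This case split is also what produces the exact exponents $M^4M_1^{6}$: under the condition, $\|\bSigma\|_{op}\|\widehat{\bOmega}_{CL}\|_{op}\le 2M_1^2$ and $\|\bSigma\|_{op}\|\widehat{\bOmega}_{CL}-\bOmega\|_{op}\le C_3 M^2M_1^{3}\sqrt{p}/n$, whose squares integrate to $O(M_1^4 p/n^2)$ and $O(M^4M_1^{6}p/n^2)$ respectively. Replace your $p\le n^2$ reduction by this threshold and the bookkeeping you flag disappears.
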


 We shall now define  an adaptive estimator  $\widehat{\eta}^A_{CL}$ in the same spirit as $\widehat{\eta}^A$ in the previous subsection. Define $c_0(M,M_1)$ by
 $$
 c_0(M,M_1):=4 C_{1}M_{1}^{2}+2 C_{3} M^{2}M^{3}_{1}.
 $$
 Here, $C_{1}$ is the numerical constant that appears in Theorem \ref{prp:risk_dense} and $C_{3}$ the numerical constant that appears in Lemma \ref{lem:estimation_omega}.
 Define the estimator as:
\beq\label{eq:definition_adaptive_CL}
\widehat{\eta}^A_{CL} := \left\{\begin{array}{ccc}
                            \widehat{\eta}^{SL} & \text{ if } & |\widehat{\eta}_T^D(\widehat{\bOmega}_{CL}) - \widehat{\eta}^{SL}|\leq c_0(M,M_1)\sqrt{p\log(p)}/n \\
                            \widehat{\eta}^{D}_T(\widehat{\bOmega}_{CL})  & \text{ else } \ .
                           \end{array}
\right.
\eeq
We then obtain that $\widehat{\eta}^A_{CL}$ is asymptotically minimax adaptive to $\bOmega$ (if it is known that $\bOmega\in \cU$) and to sparsity, in the same regimes as those in which $\widehat{\eta}^A$  is asymptotically minimax adaptive to sparsity.

\begin{cor}\label{cor:plug_in}
Assume that $\bOmega$ belongs to the collection  $\cU$ defined above.
Then, there exists a  constant  $C(M,M_1)>0$ only depending on $M$ and $M_1$ such that the following holds.  For any integer $k\in [p]$, any $k$-sparse vector $\beta^*$ and any $\sigma>0$,
\beq \label{eq:risk_A_CL}
 \E\left[\left(\widehat{\eta}^A_{CL} - \eta\right)^2\right]\leq C(M,M_1) \left[\frac{1}{n} + \left(\frac{k^2\log^2(p)}{n^2}\right)   \wedge \left(\frac{p\log(p)}{n^2}\right)\right]\ . 
 \eeq 
\end{cor}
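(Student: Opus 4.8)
The plan is to reproduce the argument behind Theorem~\ref{prp:adaptation}, substituting for the exact‑design deviation bound \eqref{eq:concentration_eta_dense} the plug‑in deviation bound \eqref{eq:upper_risk_unknown_Sigma} together with the CLIME error control \eqref{eq:control_tilde_omega}. Write $c_0:=c_0(M,M_1)=4C_1M_1^2+2C_3M^2M_1^3$, and note that $\bOmega\in\cU$ forces $\|\bSigma\|_{op}=1/\lambda_{\min}(\bOmega)\le M_1$, $\lambda_{\max}(\bSigma)/\lambda_{\min}(\bSigma)\le M_1^2$, and $M_1\ge 1$. First I would fix a good event: let $\cA_1$ be the $\bX^{(2)}$‑measurable event on which Lemma~\ref{lem:estimation_omega} holds (so $\P(\cA_1^c)\le 4/p$, and on $\cA_1$ one has $\|\widehat{\bOmega}_{CL}-\bOmega\|_{op}\le C_3M^2M_1^2\sqrt p/n$, hence $\|\widehat{\bOmega}_{CL}\|_{op}\le M_1+C_3M^2M_1^2\sqrt p/n$); conditionally on $\bX^{(2)}$, invoke Proposition~\ref{prp:upper_risk_unknown_Sigma} at level $t=\log p$ and let $\cA_2$ be the event where its deviation bound holds, so that $\P(\cA_2^c\mid\bX^{(2)})\le C_2/p$. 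With $\cA:=\cA_1\cap\cA_2$ one gets $\P(\cA^c)\le (4+C_2)/p\le (4+C_2)/n$. On $\cA$, plugging these bounds into \eqref{eq:upper_risk_unknown_Sigma}, using $\sqrt p\le\sqrt{p\log p}$, and assuming $p\le n^2/(C_3M^2M_1)^2$ so that $C_3M^2M_1\sqrt p/n\le 1$,
\[
\bigl|\widehat{\eta}_T^D(\widehat{\bOmega}_{CL})-\eta\bigr|\ \le\ \bigl|\widehat{\eta}^D(\widehat{\bOmega}_{CL})-\eta\bigr|\ \le\ \bigl(2C_1M_1^2+C_3M^2M_1^3\bigr)\,\frac{\sqrt{p\log p}}{n}\ =\ \frac{c_0}{2}\cdot\frac{\sqrt{p\log p}}{n},
\]
the two summands of $c_0$ absorbing respectively the term $C_1\|\bSigma\|_{op}\|\widehat{\bOmega}_{CL}\|_{op}\sqrt{p\log p}/n$ and the term $\|\bSigma\|_{op}\|\widehat{\bOmega}_{CL}-\bOmega\|_{op}$; truncation to $[0,1]$ only improves the bound since $\eta\in[0,1]$. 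The complementary range $p> n^2/(C_3M^2M_1)^2$ is dealt with in the last paragraph.

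Next I would prove the purely deterministic inequality that, on $\cA$,
\[
\bigl|\widehat{\eta}^A_{CL}-\eta\bigr|\ \le\ \min\!\Bigl(\bigl|\widehat{\eta}^{SL}-\eta\bigr|,\ \tfrac{3c_0}{2}\cdot\tfrac{\sqrt{p\log p}}{n}\Bigr).
\]
If the selection rule \eqref{eq:definition_adaptive_CL} outputs $\widehat{\eta}^{SL}$, then $|\widehat{\eta}^A_{CL}-\eta|=|\widehat{\eta}^{SL}-\eta|$, and by the triangle inequality and the rule's threshold this is also at most $c_0\sqrt{p\log p}/n+|\widehat{\eta}_T^D(\widehat{\bOmega}_{CL})-\eta|\le\tfrac{3c_0}{2}\sqrt{p\log p}/n$. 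If it outputs $\widehat{\eta}_T^D(\widehat{\bOmega}_{CL})$, then $|\widehat{\eta}^A_{CL}-\eta|\le\tfrac{c_0}{2}\sqrt{p\log p}/n$ by the first display, while simultaneously $|\widehat{\eta}^{SL}-\eta|\ge|\widehat{\eta}^{SL}-\widehat{\eta}_T^D(\widehat{\bOmega}_{CL})|-|\widehat{\eta}_T^D(\widehat{\bOmega}_{CL})-\eta|>c_0\sqrt{p\log p}/n-\tfrac{c_0}{2}\sqrt{p\log p}/n\ge|\widehat{\eta}^A_{CL}-\eta|$. This is the same mechanism as in Theorem~\ref{prp:adaptation}: since $\widehat{\eta}_T^D(\widehat{\bOmega}_{CL})$ is accurate for every $\beta^*$, the combined estimator is never worse than $\widehat{\eta}^{SL}$ on $\cA$, and is always within order $\sqrt{p\log p}/n$ of $\eta$.

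Then I would square the last display, take expectations, and bound the $\cA^c$ contribution by $\P(\cA^c)$ using $|\widehat{\eta}^A_{CL}-\eta|\le 1$, to get
\[
\E\bigl[(\widehat{\eta}^A_{CL}-\eta)^2\bigr]\ \le\ \min\!\Bigl(\E\bigl[(\widehat{\eta}^{SL}-\eta)^2\bigr],\ \tfrac{9c_0^2}{4}\cdot\tfrac{p\log p}{n^2}\Bigr)+\P(\cA^c).
\]
Proposition~\ref{thm:eta_gauss_Lasso}, with $\lambda_{\max}(\bSigma)/\lambda_{\min}(\bSigma)\le M_1^2$, gives $\E[(\widehat{\eta}^{SL}-\eta)^2]\le C'(\tfrac1n+M_1^4\tfrac{k^2\log^2p}{n^2})$ whenever its condition \eqref{eq:hypo_max_dimension} holds; when it fails one has $M_1^4k^2\log^2p/n^2>C^2$, so the crude bound $\E[(\widehat{\eta}^{SL}-\eta)^2]\le 1$ gives the same estimate after enlarging the constant. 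Since $\min\bigl(C'(\tfrac1n+M_1^4x),\,\tfrac{9c_0^2}{4}y\bigr)\le\max(C'M_1^4,\tfrac{9c_0^2}{4})\,(\tfrac1n+x\wedge y)$ for all $x,y\ge 0$ (here $x=\tfrac{k^2\log^2p}{n^2}$, $y=\tfrac{p\log p}{n^2}$, using $M_1\ge 1$), and $\P(\cA^c)\le (4+C_2)/n$, this yields \eqref{eq:risk_A_CL}. It remains to treat $p> n^2/(C_3M^2M_1)^2$: there $\sqrt{p\log p}/n\ge\sqrt p/n>1/(C_3M^2M_1)$, so (using $c_0\ge 2C_3M^2M_1^3\ge 2C_3M^2M_1$) the threshold in \eqref{eq:definition_adaptive_CL} exceeds $1\ge|\widehat{\eta}_T^D(\widehat{\bOmega}_{CL})-\widehat{\eta}^{SL}|$, whence $\widehat{\eta}^A_{CL}\equiv\widehat{\eta}^{SL}$; moreover $y=p\log p/n^2$ is then bounded below by a constant depending only on $M,M_1$, so both target rates exceed such a constant and \eqref{eq:risk_A_CL} follows from Proposition~\ref{thm:eta_gauss_Lasso} alone.

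The conceptual content is entirely the deterministic inequality of the second paragraph, which is the same trick as in Theorem~\ref{prp:adaptation}; the rest is bookkeeping, and that is where I expect the real work. The main obstacles are (i) checking that the explicitly prescribed constant $c_0(M,M_1)$ is exactly large enough to close the estimate on $\cA$, including the boundary regime $p\gtrsim n^2$ where the dense estimator is uninformative; and (ii) handling the conditioning on $\bX^{(2)}$ cleanly, so that the $\bX^{(2)}$‑measurable CLIME bound of Lemma~\ref{lem:estimation_omega} and the conditional deviation bound of Proposition~\ref{prp:upper_risk_unknown_Sigma} combine into a single event of probability at least $1-(4+C_2)/p$. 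As in Section~\ref{sec:adapt}, the regime $\log p\gtrsim n^{1/3}$ — where Proposition~\ref{prp:upper_risk_unknown_Sigma} cannot be applied at level $t=\log p$ — needs a separate, routine argument, since there $\widehat{\eta}^A_{CL}$ again collapses to $\widehat{\eta}^{SL}$ and the asserted bound is trivially of order at least one.
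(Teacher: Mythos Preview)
Your proposal is correct and follows essentially the same route as the paper: reduce to the proof of Theorem~\ref{prp:adaptation}, replacing the deviation bound \eqref{eq:concentration_eta_dense} by the combination of Proposition~\ref{prp:upper_risk_unknown_Sigma} at $t=\log p$ and the CLIME bound of Lemma~\ref{lem:estimation_omega}, and split into the two cases according to whether $\sqrt p/n$ is small (so that the plug-in dense estimator satisfies $|\widehat{\eta}_T^D(\widehat{\bOmega}_{CL})-\eta|\le \tfrac{c_0}{2}\sqrt{p\log p}/n$ with high probability) or large (so that the threshold in \eqref{eq:definition_adaptive_CL} exceeds~$1$ and the estimator collapses to $\widehat{\eta}^{SL}$). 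Your deterministic inequality $|\widehat{\eta}^A_{CL}-\eta|\le \min(|\widehat{\eta}^{SL}-\eta|,\tfrac{3c_0}{2}\sqrt{p\log p}/n)$ on $\cA$ is a slightly cleaner packaging of the two-case argument in Section~\ref{sec:proof_prpe_adaptation}, and your explicit handling of the regimes $p\gtrsim n^2$ and $\log p\gtrsim n^{1/3}$ fills in details the paper leaves implicit.
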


\begin{rem}\emph{
 When $\bOmega$ belongs to $\cU$, the estimator $\widehat{\eta}_T^D(\widehat{\bOmega}_{CL})$ achieves a similar risk bound to that of  $\widehat{\eta}_T^D(\bOmega)$. Also, $\widehat{\eta}^A_{CL}$ performs as well as  estimator $\widehat{\eta}^A$ which requires the knowledge of $\bOmega$. As a consequence, there does not seem to be a price to pay for the adaptation to $\bOmega$ under the restriction $\bOmega\in \cU$.}
\end{rem}

\begin{rem}\emph{
If the quantity $\sqrt{p/(n\log(p))}$ in  the sparsity condition  $ \max_{1\leq j\leq p}\sum_{i=1}^p \mathbf{1}_{\bOmega_{i,j}\neq 0}\leq \sqrt{p/(n\log(p))}$ in the definition \eqref{definition_U} of $\cU$ is replaced by some $s\geq  \sqrt{p/(n\log(p))}$, the CLIME-based estimator  $\widetilde{\eta}_T^D(\widehat{\bOmega}_{CL})$ will only  be consistent at the rate $s\sqrt{\log(p)/n}$ which is slower than the desired $\sqrt{p}/n$. This is not completely unexpected as we prove in the next subsection that a reliable estimation of $\eta$ becomes almost impossible when the collection of precision matrices is too large.
}
\end{rem}

\subsection{Impossibility results}
\label{subsec:impossible}

We now turn to the general problem where $\bSigma$ is only assumed to have  bounded eigenvalues. As explained in the beginning of Section \ref{sec:adapt}, the estimator $\widehat{\eta}^{SL}$, which does not require the knowledge of $\bSigma$, is minimax adaptive to $\mathbb{B}_0[k]$ when $p\geq n^2$. Hence, we focus in the remainder of this section on the regime $n\leq p \leq n^{2}$.

\medskip

In this subsection and the corresponding proofs, we denote $\P_{\beta,\sigma,\bSigma}$ the distribution of $(Y,\bX)$, in order to emphasize the dependency of the data distributions with respect to the covariance matrix of $\bX$. For any $M>1$, let us introduce $\Xi[M]$ the set of positive symmetric matrices of size $p$
whose eigenvalues lie in the compact $[1/M, M]$. The purpose of these bounded eigenvalues in $(1/M,M)$ is to prove that the difficulty in the estimation problem does not simply arise because of poorly invertible covariance matrices.

Denote $\overline{R}^*[p,M]$ the minimax estimation risk of the the proportion of explained variation $\eta$ when the covariance matrix is unknown
\beq\label{eq:definition_minimax_risk}
\overline{R}^*[p,M]:= \inf_{\widehat{\eta}}\sup_{\beta \in \mathbb{B}_0[p],\ \sigma>0}\sup_{\bSigma\in \Xi[M]}\E_{\beta,\sigma ,\bSigma}\Big[\left(\widehat{\eta}-\eta(\beta,\sigma)\right)^2\Big]\ . 
\eeq

When the covariance matrix $\bSigma$ is known, the minimax rate has been shown to be of order $\sqrt{p}/n$ and therefore goes to $0$ as soon as $p$ is small compared to $n^2$. The following proposition shows that, for unknown $\bSigma$, there is no consistent estimators of $\eta$ when $p$ is large compared to $n$.

\begin{thm}\label{thrm:lower_minimax_unknown_variance}
  Consider an asymptotic setting where both $n$ and $p$ go to infinity. Then, there exists a positive numerical constant $C$ and a function $M: x\mapsto M(x)$ mapping $(0,\infty)$ to $(1,\infty)$ such that the following holds.
 If for some  $\varsigma>0$,
 \beq \label{eq:assumption_minimax_lower_unknown_variance}
 \frac{n^{1+\varsigma}}{p}\to 0
 \ , 
 \eeq
 then  the minimax risk $\overline{R}^*[p,M(\varsigma)]$ is bounded away from zero, that is $\underline{\lim}\overline{R}^*[p,M(\varsigma)]\geq C$.
\end{thm}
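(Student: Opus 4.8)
The plan is to prove the impossibility result by a classical two-point (or, more precisely, two-prior) argument: exhibit two families of parameters that are statistically nearly indistinguishable from the observed data $(Y,\bX)$ yet induce values of $\eta$ that differ by a constant. The key insight is that when $\bSigma$ is unknown and $p$ is large compared to $n$, the map $\beta^*\mapsto\bSigma^{1/2}\beta^*$ has enormous flexibility: from the marginal law of the $n$ rows of $\bX$ one learns $\bSigma$ only through the $n\times p$ Gaussian matrix, which for $p\gg n$ leaves $\bSigma$ very poorly determined in the directions relevant to $\beta^*$. Concretely, I would work with $\beta^*$ supported on all of $[p]$ (i.e. $k=p$, matching $\mathbb{B}_0[p]$ in \eqref{eq:definition_minimax_risk}) and construct two priors $\pi_0,\pi_1$ on the triple $(\beta^*,\sigma,\bSigma)$ such that under $\pi_0$ one has $\eta$ close to some value $\eta_0$ and under $\pi_1$ one has $\eta$ close to $\eta_1$ with $|\eta_0-\eta_1|\geq c$, while the marginal laws of $(Y,\bX)$ under the two priors have total variation distance bounded away from $1$ (say, $\leq 1/2$). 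The standard reduction (Le Cam's method, see e.g. the arguments behind Theorem 4.3 in \cite{2010_AS_Verzelen}) then gives $\overline{R}^*[p,M]\gtrsim c^2$.

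The heart of the construction is choosing the priors so that the induced law of $(Y,\bX)$ is (almost) the same, while $\eta$ differs. A natural device: draw $\bSigma = \bI_p + \gamma\, \bu\bu^T$ for a random unit vector $\bu$ (or more generally a low-rank perturbation with a random orientation), and couple $\beta^*$ to be aligned or not aligned with $\bu$. Since $\bX\bu$ is, conditionally, an $n$-dimensional Gaussian vector whose covariance structure is only a rank-one modification, and $p\gg n$, the data $\bX$ carries almost no information about the orientation $\bu$ within the $p$-dimensional sphere — this is where the hypothesis $n^{1+\varsigma}/p\to 0$ enters, controlling the Kullback–Leibler or chi-square divergence between the mixture over $\bu$ and a reference. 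One then arranges the two priors to differ in whether the ``signal energy'' $\|\bSigma^{1/2}\beta^*\|_2^2$ is large or small, while keeping $\var{y_1}=\sigma^2 + \|\bSigma^{1/2}\beta^*\|_2^2$ and the conditional law of $Y\mid\bX$ matched to leading order by compensating with $\sigma$. Because $Y$ given $\bX$ is $\cN(\bX\beta^*,\sigma^2\bI_n)$ and $\bX\beta^*$ is (marginally over $\bX$) a mean-zero Gaussian vector with covariance $\beta^{*T}\bSigma\beta^*\cdot\bI_n/(\text{something})$ plus fluctuations, the pair $(\bX,Y)$ is, to the order we can detect, governed by $\bSigma$ (through $\bX\bX^T$) and by the scalar $\beta^{*T}\bSigma\beta^*$ and $\sigma^2$ only in the combination that determines $\|Y\|_2^2$. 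The delicate point is that $\bX\beta^*$ is \emph{not} isotropic given $\bX$ — this anisotropy is exactly what the estimator $\widehat{\eta}^D(\bSigma^{-1})$ exploits when $\bSigma$ is known — so the construction must hide the anisotropy by randomizing $\bSigma$ (equivalently the direction of $\beta^*$ relative to the eigenbasis of $\bX\bX^T$) so that after mixing, the conditional law of $Y\mid\bX$ looks spherical up to a negligible error.

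In more detail, the steps I would carry out are: (i) fix $k=p$ and reduce, via Le Cam, to bounding the total variation between two mixtures $\bbP_{\pi_0}$ and $\bbP_{\pi_1}$ of the data law; (ii) specify the priors --- for definiteness, under $\pi_b$ take $\bSigma=\bSigma_b(\bv)$ depending on a random direction $\bv$ uniform on $\mathbb{S}^{p-1}$ and $\beta^*=\beta^*_b(\bv)$ collinear with $\bv$, with scalar magnitudes chosen so that $\eta$ is deterministically $\eta_b$ and $\sigma_b$ chosen so that $\var{y_1}$ (hence the marginal scale of $Y$) is the same under both; (iii) integrate out the noise $\epsilon$ and write the conditional density of $Y\mid\bX$ under each mixture as a Gaussian mixture over the one-dimensional statistic $\bX\bv$, then bound the chi-square distance between the two mixtures by a moment computation, using that $\bX\bv\mid\bv$ is $\cN(0,\|\bSigma_b^{1/2}\bv\|_2^2\,\bI_n)$ — no, more carefully, that $\bX\bv$ has the same law as a standard Gaussian scaled by $\|\bSigma_b^{1/2}\bv\|$, plus that $\bX$ restricted to the orthocomplement of $\bv$ is nearly independent and nearly $\bSigma$-free; (iv) show the resulting divergence is $O(n^{1+\varsigma}/p)\to 0$ by choosing $M=M(\varsigma)$ large enough (the perturbation strength $\gamma$ allowed grows with $M$, and the number of ``effective parameters'' that must be hidden is $O(n)$ while the ambient dimension provides $p$ directions to hide them in, giving a ratio $\sim n/p$ up to the $\varsigma$-slack needed to absorb polynomial-in-$n$ factors from higher moments and from the $\|Y\|_2^2$ normalization in the definition of $\eta$); (v) conclude $|\eta_0-\eta_1|=c>0$ with $\mathrm{TV}(\bbP_{\pi_0},\bbP_{\pi_1})\leq 1/2$, hence $\overline{R}^*[p,M(\varsigma)]\geq c^2/8$. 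The main obstacle — and the place where genuinely new ideas are needed, as the authors themselves flag — is step (iii)–(iv): controlling the divergence between the two Gaussian \emph{mixtures} obtained after randomizing the covariance/signal direction. One must show that the anisotropy in $Y\mid\bX$ caused by $\bX\beta^*$ being concentrated in a random direction is genuinely averaged out, which requires sharp control of the joint law of $\bX\bX^T$ and $\bX\bv$ and a careful choice of how the scalar parameters $(\|\beta^*\|,\gamma,\sigma)$ trade off; the normalization by $\|Y\|_2^2$ in $\eta$ (rather than by the population variance) also forces one to track an extra $n^{-1/2}$-order fluctuation, which is why the hypothesis is $n^{1+\varsigma}/p\to 0$ with a strictly positive $\varsigma$ rather than merely $n/p\to 0$.
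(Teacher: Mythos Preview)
Your high-level plan (Le Cam with two priors on $(\beta^*,\sigma,\bSigma)$, randomising the orientation of a low-rank perturbation of $\bI_p$, then splitting the total variation into a marginal-$\bX$ part and a conditional $Y\mid\bX$ part) is exactly the paper's scaffolding. However, the construction you propose has a genuine gap that prevents it from reaching the stated conclusion for arbitrary $\varsigma>0$.

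The issue is the \emph{rank} of the perturbation. With a rank-one $\bSigma=\bI_p+\gamma\,\bu\bu^T$ and $\beta^*$ collinear with $\bu$, once you integrate out the random direction, the conditional law of $Y\mid\bX$ is a centred Gaussian whose precision has eigenvalues $h_j(\lambda_i)$ depending on the eigenvalues $\lambda_i$ of $\bX\bX^T/p$ through a rational function with only two or three free scalar parameters. You can force the constant terms $\sigma_j^2+\beta^{*T}\bSigma\beta^*$ to agree (so $\var{y_1}$ matches), but you cannot simultaneously make the higher-order Taylor coefficients of $1/h_0(\lambda)$ and $1/h_1(\lambda)$ at $\lambda=1$ agree beyond the first one or two. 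Since $\|\bX\bX^T/p-\bI_n\|_{op}\asymp\sqrt{n/p}$, the Kullback divergence between the two conditionals then scales like $n\cdot(n/p)^m$ where $m$ is the number of matched Taylor coefficients; with a rank-one prior $m$ is bounded by a fixed small integer, so the bound only vanishes when, say, $n^{3/2}/p\to 0$ or $n^2/p\to 0$, not for every $\varsigma>0$. Increasing the perturbation strength $\gamma$ (i.e.\ enlarging $M$) does not help here: it changes the size of the coefficients, not the number you can match.

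The missing idea, which the paper supplies, is to use a rank-$r$ perturbation with $r=r(\varsigma)$ chosen so that $\varsigma\geq 1/(2r)$. One writes $\bSigma^{-1}=\bI_p+\sum_{i}\alpha_i v_iv_i^T$ and $\beta^*=\sum_i\gamma_i v_i$ with $r$ (resp.\ $r+1$) random nearly-orthonormal directions under the two priors, and then chooses the scalars $(\alpha_i,\gamma_i)$ by solving a \emph{truncated moment problem}: two discrete positive measures on $(0,1)$ with different total mass (hence different $\eta$) but identical moments of orders $1$ through $2r-1$. This forces $\beta^{*T}\bSigma^q\beta^*$ to coincide under both priors for all $q=2,\ldots,2r$, which in turn makes the Taylor expansions of $1/h_0(\lambda)$ and $1/h_1(\lambda)$ agree up to order $2r-1$. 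The conditional KL is then $O\big(n(n/p)^{2r}\big)$ and the total-variation bound becomes $O\big((n^{1+1/(2r)}/p)^r\big)\to 0$ under \eqref{eq:assumption_minimax_lower_unknown_variance}. This is also the correct explanation for the role of $\varsigma$: it is \emph{not} an artefact of the $\|Y\|_2^2$ normalisation, but reflects that for any finite $r$ the method needs $n^{1+1/(2r)}/p\to 0$, and one lets $r\to\infty$ as $\varsigma\to 0$ (which is why $M(\varsigma)$ must be allowed to grow).
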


\begin{rem}
\emph{
Theorem \ref{thrm:lower_minimax_unknown_variance} tells us that it is impossible to consistently estimate  the proportion of explained variation in a high-dimensional setting where $p$ is much larger than $n$.  
This lower bound straightforwardly extends to $\overline{R}^*[k,M(\varsigma)]$ when $k$ is much larger than $n$ in the sense $n^{1+\varsigma}/k\to 0$ for some $\varsigma>0$.
}

\end{rem}

\begin{rem}[Dependency of constants with $\varsigma$]\emph{
In the proof of Theorem \ref{thrm:lower_minimax_unknown_variance}, the bound $M[\varsigma]$ is increasing when $\varsigma$ gets closer to zero, thereby allowing the spectrum of $\bSigma$ to be broader. If we want to consider the minimax risk $\overline{R}^*[p,M]$ with a fixed $M>0$ (independent of $\xi$), then one can prove that, whenever $n^{1+\varsigma}/p\to 0$, then $\underline{\lim}\overline{R}^*[p,M]\geq C(\varsigma)$ for some $C(\varsigma)>0$ only depending on $\varsigma$. Some details are provided in the proof.
}
\end{rem}

Let us get a glimpse of the  proof by trying to build an estimator of $\eta(\beta^*,\sigma)$ in the high-dimensional regime $p\geq n$. As $\bOmega$ is unknown and cannot be consistently estimated in this regime, a natural candidate would be to consider $\widehat{\eta}^D(\bI_p)= \widehat{V}$ as defined below \eqref{eq:defintion_T}. By Theorem \ref{prp:risk_dense}, one has
\[\widehat{\eta}^D(\bI_p)= \frac{\beta^{*T}\bSigma^2\beta^*}{\var{y_i}} + O_P(\frac{\sqrt{p}}{n})\ .\]
Although the signal strength $\beta^{*T}\bSigma \beta^*$ cannot be consistently estimated for unknown $\bSigma$ (Theorem \ref{thrm:lower_minimax_unknown_variance}), it is interesting to note that some regularized version of the signal strength $\beta^{*T}\bSigma^2\beta^*$ is estimable at the rate $\sqrt{p}/n$ (this phenomenon was already observed in~\cite{dicker_variance}).

Going one step further, one can consistently estimate  $\beta^{*T}\bSigma^3\beta^*$ for $p\leq n^{3/2}$ by considering a quadratic form of $Y$ as in $T$ \eqref{eq:defintion_T} but with higher-order polynomials of $\bX$. For $p$ of order $n^{1+\varsigma}$ for some small $\varsigma>0$, it will be possible to consistently estimate all $a_q:= \beta^{*T}\bSigma^q\beta^*$ for $q=2,3,\ldots, r(\varsigma)$ where $r(\varsigma)$ is a positive integer only depending on $\varsigma$.

Then, one may wonder whether it is possible to reconstruct $a_1= \beta^{*T}\bSigma \beta^*$ from $(a_q)$, $q=2,\ldots, r(\varsigma)$.  Observe that $a_q$ is the $q$-th moment of a positive discrete measure $\mu$ supported by the spectrum of $\bSigma$ and whose corresponding weights are the square norms of the projections of $\beta^*$ on the eigenvectors of $\bSigma$. As a consequence, estimating $\beta^{*T}\bSigma \beta^*$ from $(a_q)$, $q=2,\ldots, r(\varsigma)$ is a partial moment problem where one aims at recovering the first moment of the measure $\mu$ given its higher order moments up to $r(\varsigma)$.
Following these informal arguments, we build, in the proof of Theorem \ref{thrm:lower_minimax_unknown_variance}, two discrete measures $\mu_1$ and $\mu_2$ supported on  $(1/M(\varsigma),M(\varsigma))$ whose $q$-th moments coincide for $q=2,\ldots, r(\varsigma)$ and whose first moments are far from each other. Define $\cB_1$ (resp. $\cB_2$) the collection of parameter $(\beta^*,\bSigma)$ whose corresponding measure is $\mu_1$ (resp. $\mu_2$). Then, we show that no test can consistently  distinguish the hypothesis $H_0:(\beta^*,\bSigma)\in \cB_1$ from $H_1:(\beta^*,\bSigma)\in \cB_2$. 
As the signal strengths $\beta^{*T}\bSigma \beta^*$  of parameters in $\cB_1$ are far from those in $\cB_2$, this implies that consistent estimation is impossible in this setting.

\bigskip

\begin{rem}\emph{
Let us summarize our findings on the minimax estimation risk when $\bSigma$ is unknown and $n\leq p\leq n^{2}$:
\begin{itemize}
 \item if $k$ is small compared to $\sqrt{p}$, the minimax risk is of order $[k\log(p)/n\wedge 1] + n^{-1/2}$ and is achieved by the square-root Lasso estimator $\widehat{\eta}^{SL}$.
 \item if $k$ is large compared to $n$ (in the sense  $n^{1+\varsigma}/k\to 0$ for some $\varsigma>0$), then consistent estimation is impossible.
 \item if $k$ lies between $\sqrt{p}$ and $n/\log(p)$, the square-root Lasso estimator $\widehat{\eta}^{SL}$ is consistent at the rate $k\log(p)/n$. We conjecture that this rate is optimal.
 \item if $k$ lies between $n/\log(p)$ and $n$, we are not aware of any consistent estimator $\eta$ and we conjecture that consistent estimation is impossible.
\end{itemize}
} 

\end{rem}

\section{Discussion and extensions}\label{sec:discussion}

We focused in this work on the estimation risk of $\eta$ in high-dimensional linear models under two major assumptions: the  design is random (with possibly unknown covariance matrix) and the  level of noise $\sigma$ is unknown. 
We first discuss  how the difficulty of the problem is modified when the two assumptions are not satisfied: when the design is not random, then consistent estimation of $\eta$ is impossible in the dense regime, and when the level of noise is known, then the estimation of $\eta$ becomes much easier in the dense regime. Finally, we mention the problem of constructing optimal confidence intervals.

\subsection{Fixed design}

If the regression design $\bX$ is considered as fixed, then the counterpart of the proportion of explained variation would be 
\[\eta[\beta^*,\sigma, \bX]:= \frac{\|\bX\beta^*\|_2^2/n}{\|\bX \beta^*\|_2^2/n+\sigma^2}\ .\]
In this new setting, the square-root Lasso estimator still estimates $\eta[\beta^*,\sigma, \bX]$ at the rate 
$n^{-1/2}+ k\log(p)/n$ up to multiplicative constants only depending on the sparse eigenvalues and compatibility constants of $\bX$. In contrast, the construction of $\widehat{V}$  relies on the fact that $\bX$ is random and is independent of the isotropic noise $\epsilon$. When $\bX$ is considered as fixed,  $\widehat{V}$ does not consistently estimate $\eta[\beta^*,\sigma, \bX]$ for $p$ small compared to $n^2$. As a simple example, take $\sigma=1$ and define $\beta^*$ by $\beta^{*T}v_i= \lambda^{-1/2}_{i}$
 for $i=1,\ldots,n$ where 
$(v_i)_i$ denote the right eigenvectors of $\bX$ and $(\lambda^{1/2}_i)_i$ its singular values. Then, the random variables $T$ and $\hat{V}$ (defined in Section \ref{sec:dense_regime}) are concentrated around $0$, whereas $\eta[\beta^*,\sigma, \bX]$ equals $1/2$.

More generally, the next proposition states that it is impossible to consistently estimate $\eta[\beta^*,\sigma, \bX]$ in a high-dimensional setting $p\geq n+1$. The randomness of $\bX$ therefore plays a fundamental role in the problem.

\begin{prp}\label{prp:lower_fixed_design}
 Assume that $p> n$ and consider any fixed design $\bX$ such that $\mathrm{Rank}(\bX)=n$. Given $\beta^*$ and $\sigma$, denote $\underline{\P}_{\beta^*,\sigma}$ and $\underline{\E}_{\beta^*,\sigma}$ the probability and expectation with respect to the distribution $Y=\bX\beta^*+\epsilon$ with $\epsilon\sim \cN(0,\sigma^2\bI_n)$. Then, the minimax estimation risk satisfies
 \beq\label{eq:lower_fixed_design}
 \inf_{\widehat{\eta}}\sup_{\beta^* \in \mathbb{R}^p,\, \sigma\geq 0 }\underline{\E}_{\beta^*,\sigma}[(\widehat{\eta}-\eta[\beta^*,\sigma, \bX])^2]\geq \frac{1}{4}\ .
 \eeq
\end{prp}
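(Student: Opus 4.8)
The plan is to prove this impossibility result by a Bayesian two‑prior argument exploiting the fact that, when $p>n$, the quantity $\eta[\beta^*,\sigma,\bX]$ is not identifiable from the law of $Y$. Observe that under $\underline{\P}_{\beta^*,\sigma}$ one has $Y\sim\cN(\bX\beta^*,\sigma^2\bI_n)$, so the data distribution depends on the parameter only through $(\bX\beta^*,\sigma^2)$; and since $\mathrm{Rank}(\bX)=n$, the linear map $\beta^*\mapsto\bX\beta^*$ is onto $\bbR^n$, which is what will allow us to realise the \emph{same} law of $Y$ with $\eta$ either equal to $0$ or arbitrarily close to $1$. As a preliminary reduction I would note that replacing any estimator $\widehat\eta$ by $(\widehat\eta\vee0)\wedge1$ can only decrease the risk, since $\eta[\beta^*,\sigma,\bX]\in[0,1]$, so it suffices to bound the risk of estimators valued in $[0,1]$.

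Fix a small $\delta\in(0,1)$ and a right inverse $\bX^{\dagger}:=\bX^{T}(\bX\bX^{T})^{-1}$ of $\bX$. I would introduce two priors on $(\beta^*,\sigma)$. Under $\pi_0$: $\beta^*=0$ and $\sigma=1$, so $Y\sim\cN(0,\bI_n)$ and $\eta=0$. Under $\pi_1$: $\sigma=\sqrt{\delta}$ and $\beta^*=\bX^{\dagger}g$ with $g\sim\cN\big(0,(1-\delta)\bI_n\big)$ drawn independently of $\epsilon$, so that $\bX\beta^*=g$, hence again $Y=g+\epsilon\sim\cN(0,\bI_n)$, while $\eta=\eta_\delta:=\tfrac{\|g\|_2^2/n}{\|g\|_2^2/n+\delta}$. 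The two key features are that the marginal law of $Y$ is $\cN(0,\bI_n)$ under \emph{both} priors, and that $\eta_\delta\to1$ in probability as $\delta\downarrow0$ (because $\|g\|_2^2/n$ is, up to the factor $1-\delta$, a fixed positive random variable with a $\chi^2_n/n$‑type law); quantitatively, given $\epsilon'>0$ one can fix $\delta=\delta(\epsilon',n)$ small enough that $\P_{\pi_1}[\eta_\delta\geq1-\epsilon']\geq1-\epsilon'$ using an elementary lower tail bound for $\chi^2_n$.

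Finally I would bound the minimax risk from below by the Bayes risk for the mixture prior $\tfrac12\pi_0+\tfrac12\pi_1$. Writing $W:=\widehat\eta(Y)\in[0,1]$ for $Y\sim\cN(0,\bI_n)$ — the same random variable under both priors — this Bayes risk equals $\tfrac12\E_{\pi_0}[W^2]+\tfrac12\E_{\pi_1}[(W-\eta_\delta)^2]$. Using $|W-1|\leq1$ on the event $\{\eta_\delta\geq1-\epsilon'\}$, and $(W-1)^2\leq1$ together with $\P_{\pi_1}[\eta_\delta<1-\epsilon']\leq\epsilon'$ on its complement, one obtains $\E_{\pi_1}[(W-\eta_\delta)^2]\geq\E[(W-1)^2]-3\epsilon'$, the last expectation being taken under $Y\sim\cN(0,\bI_n)$; then the elementary identity $\E[W^2]+\E[(W-1)^2]=2\E\big[(W-\tfrac12)^2\big]+\tfrac12\geq\tfrac12$ shows the minimax risk is at least $\tfrac14-\tfrac32\epsilon'$, and letting $\epsilon'\downarrow0$ yields \eqref{eq:lower_fixed_design}. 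The argument is essentially routine; the one step I would single out as requiring genuine (though minor) care is the quantitative control of the fluctuation of $\eta_\delta$ under $\pi_1$, i.e.\ ruling out that $\|g\|_2^2/n$ is atypically small, which is exactly what dictates the choice of $\delta$ in terms of $(\epsilon',n)$.
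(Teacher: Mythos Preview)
Your proof is correct and follows essentially the same two-prior argument as the paper: both priors produce the identical marginal law $Y\sim\cN(0,\bI_n)$, one with $\eta=0$ and one with $\eta$ (close to) $1$, and the final algebraic step $\E[W^2]+\E[(W-1)^2]\geq\tfrac12$ is the same. The paper streamlines your argument by taking $\sigma=0$ directly under the second prior (permitted since the supremum runs over $\sigma\geq0$), so that $\eta=1$ exactly and the limiting argument in $\delta$ --- together with the control of the fluctuations of $\eta_\delta$ that you flag as the one delicate point --- becomes unnecessary.
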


\subsection{Knowledge of the noise level}

Throughout this manuscript, we assumed that the noise level $\sigma$ was unknown. As explained in the introduction, the situation is qualitatively different when  $\sigma$ is known. Let us briefly sketch the optimal convergence rates in this setting, still restricting ourselves to $p\geq n$. For any $k=1,\ldots, p$ define the maximal risk and the minimax risks
\[R(\widehat{\eta},k,\sigma):= \sup_{\beta\in \bbB_0[k]}\bbE_{\beta,\sigma}\left[\{\hat{\eta}-\eta(\beta,\sigma)\}^2\right]\ , \quad \quad R^*(k,\sigma):=\inf_{\hat{\eta}}R(\widehat{\eta},k,\sigma)\ ,\]
It follows from the minimax lower bounds for signal detection~\cite{2011_AS_Arias-Castro,2010_EJS_Ingster}, that for some $C>0$ (lower bounds in~\cite{2011_AS_Arias-Castro,2010_EJS_Ingster} are asymptotic but it is not difficult to adapt the arguments to obtain non-asymptotic bounds to the price of worse multiplicative constants),
\beq\label{eq:lower_known_sigma}
R^*(k,\sigma)\geq C \left(\frac{k}{n}\log(1+ \frac{p}{k^2})\right)^2\wedge \frac{1}{n}\ , 
\eeq
which is of order $[k\log(p)/n]^2\wedge n^{-1}$ except in the regime where $n$ is of order $p$ and where $k$ is of order $p^{1/2}$ in which case the logarithmic factors do not match. As for the upper bounds, since $\|Y\|_2^2/[\sigma^2+ \beta^{*T}\bSigma\beta^*]$ follows a $\chi^2$ distribution with $n$ degrees of freedom,  the estimator $\widehat{\eta}^{D,\sigma}:= 1- \frac{n\sigma^2}{\|Y\|_2^2}$ admits a quadratic risk (up to constants)  smaller  than $1/n$. This implies that the proportion of explained variation $\eta$ can be efficiently estimated for arbitrarily large $p$. 
For small $k$, one can use the Gauss-Lasso estimator based on  $\tilde{\beta}^{SL}$. 
Let $\hat{J}$ be the set of integers $j$ such that $\tilde{\beta}^{SL}\neq 0$ and define:
\[\widehat{\eta}^{GL,\sigma}:=  \frac{\|\boldsymbol{\Pi}_{\hat{J}}Y\|_2^2/n}{\sigma^2+ \|\boldsymbol{\Pi}_{\hat{J}}Y\|_2^2/n}\  \] 
where $\boldsymbol{\Pi}_{\hat{J}}=\boldsymbol{X}_{\hat{J}}(\boldsymbol{X}_{\hat{J}}^{T}\boldsymbol{X}_{\hat{J}})^{-1}\boldsymbol{X}_{\hat{J}}^{T}$
 is the orthogonal projector of $\mathbb{R}^{n}$ onto the space spanned by the columns of $\boldsymbol{X}_{\hat{J}}$. The Gauss-Lasso estimator was introduced to get an estimator of heritability in the sparse situation in a first version of this work
\cite{vg2016}.
Following the proof of Theorem 2.3 in  \cite{vg2016} we may obtain that, under Assumption \eqref{eq:hypo_max_dimension} and when $|\beta^*|_0=k$, 
\[
\E\left[( \widehat{\eta}^{GL,\sigma}- \eta)^{2} \right] \leq C' \frac{k^2\log^2(p)}{n^2}\frac{\lambda^2_{\max}(\bSigma)}{\lambda^2_{\min}(\bSigma)} \ .\]
In conclusion, the rate $[k\log(p)/n] \wedge n^{-1/2}$ is (up to a possible logarithmic multiplicative term)  optimal. These results contrast with the case of unknown $\sigma$ in two ways: (i) The optimal rate is order-wise faster when  $\sigma$ is known especially when $k$ is small ($n^{-1/2}$ versus $k\log(p)/n$) and when $k,p$ are larger ($p^{1/2}/n$ versus $n^{-1/2}$).  (ii) Since $\widehat{\eta}^{D,\sigma}$ and $\widehat{\eta}^{GL,\sigma}$ do not use the knowledge of $\bSigma$, adaptation to unknown covariance of the covariates is possible.

\subsection{Minimax confidence intervals}

In practice, one may not only be interested in the estimation of $\eta(\beta^*,\sigma)$, but also on building confidence intervals~\cite{janson2015eigenprism}. In the proof of Theorem \ref{prp:risk_dense} and in Proposition \ref{thm:eta_gauss_Lasso}, we obtain exponential concentration inequalities of $\widehat{\eta}^D(\bOmega)$ and $\widehat{\eta}^{SL}$ around $\beta^*$. 
This allows to get, for any $\alpha>0$ and any $k=1,\ldots, p$,  confidence intervals
\beqn
IC^{D}_{\alpha}&:=&\Big[\widehat{\eta}^D(\bOmega) \pm C(\alpha)\frac{\sqrt{p}}{n}\Big]\ ,\\ IC^{SL}_{\alpha,k}&:=&\Big[\widehat{\eta}^{SL} \pm C'(\alpha)\Big(\frac{1}{n^{1/2}}+ \frac{k\log(p)}{n}\frac{\lambda^2_{\max}(\bSigma)}{\lambda^2_{\min}(\bSigma)}\Big)\Big]\ ,
\eeqn
where $C(\alpha)$ and $C'(\alpha)$ are universal constants only depending on $\alpha$. When $p\geq n$, $IC^{D}_{\alpha}$ is honest over $\mathbb{R}^p$ in the sense that
\[
\inf_{\beta\in \mathbb{B}_0[p],\  \sigma>0}
\P_{\beta,\sigma}\left[ \eta \in \mathrm{IC}^D_{\alpha} \right]\geq 1-\alpha\ .
\]
For $p\geq n$ and if Assumption \eqref{eq:hypo_max_dimension} is satisfied, then the confidence interval $IC^{SL}_{\alpha,k}$ is honest  over $\mathbb{B}_0[k]$ in the sense that
\[
\inf_{\beta\in \mathbb{B}_0[k],\  \sigma>0}
\P_{\beta,\sigma}\left[ \eta \in \mathrm{IC}^{SL}_{\alpha,k} \right]\geq 1-\alpha\ .
\]

\medskip 

In high-dimensional linear regressions, there have been recent advances towards the construction of optimal confidence regions both for the unknown vector $\beta^*$~\cite{nickl_vandegeer} or low-dimensional functional of the parameters such as components $\beta^*_{i}$~\cite{2014_van_de_geer_confidence,2014javanmard,MR3153940,cai2015confidence} or $\sum_{i}\beta^*_{i}$~\cite{cai2015confidence}. Building on this line of work, it seems at hand to prove the minimax optimality of $\mathrm{IC}^D_{\alpha}$ and $\mathrm{IC}^{SL}_{\alpha,k}$, proving the existence of such honest confidence intervals. Of course, as already noticed when constructing our adaptive estimator, the choice of the constants $C(\alpha)$ and $C'(\alpha)$ are probably  far to be optimal in applications. 

 A further step would be to study  the problem of the construction (if possible) of adaptive confidence intervals. We leave those important questions for future research.

\section{Proof of the upper bounds}\label{sec:proof_minimax_upper}

\subsection{Proof of Theorem \ref{prp:risk_dense}}

\subsubsection{Some preliminary notation and deviation bounds}

Consider the spectral decomposition $\bSigma= \bO \bGamma\bO^T$ where $\bGamma$ is a diagonal matrix and $\bO$ is an orthogonal matrix. Define the matrix $\bZ=\bX \bO \bGamma^{-1/2}$ whose entries are  independent  standard normal variables. We denote  
\beq\label{eq:definition_rho0}
\rho:= \frac{\bGamma^{1/2} \bO^T \beta^*}{\|\bSigma^{1/2}\beta^*\|_2^2}\ .
\eeq

\bigskip 
In the following, we need to control the eigenvalues of $\bX\bX^T$. Define $\bA:= \bX\bX^T-tr[\bSigma]\bI_n$ and  note that  
\[
\bA=
 \bZ\bGamma\bZ^T-tr(\bGamma)\bI_n
= \sum_{j=1}^p \bGamma_{jj} \left(\bZ_{\bullet j}\bZ_{\bullet j}^T-\bI_n\right)\ , 
\]
where $\bZ_{\bullet j}$ stands for the $j$-th column of $\bZ$, so that $\bA$
is a weighted sum of centered Wishart matrices with parameters $(1,n)$. Extending the deviation inequalities of Davidson and Szarek~\cite{Davidson2001} for Wishart matrices to weighted sums of Wishart matrices, we obtain the following, which is proved in Appendix \ref{appendixB}.

\begin{lem}\label{lemma_concentration_spectre_A}
For any $t>0$,
\beq\label{eq:concentration_spectre_A}
\P\left[\|\bA\|_{op}\leq   2\sqrt{tr(\bSigma)} \|\bSigma\|_{op}^{1/2}\left(\sqrt{n}  + 10+\sqrt{2t}\right)+ 3 \|\bSigma\|_{op}\left[n + 100+ 2t\right]\right]\geq 1 - 2e^{-t}\ .
\eeq
As a consequence, for all $n\geq 20$, we get that 
\beq\label{eq:concentration_spectre_A2}
\P\left[\|\bA\|_{op}\leq 
 25 \|\bSigma\|_{op}(\sqrt{np}+n) \right]\geq 1 -2e^{-n} \ .
\eeq
\end{lem}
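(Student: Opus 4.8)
The plan is to prove the deviation inequality \eqref{eq:concentration_spectre_A} by adapting the Davidson--Szarek concentration bound for a single (unweighted) Wishart matrix to the weighted sum $\bA = \sum_{j=1}^p \bGamma_{jj}(\bZ_{\bullet j}\bZ_{\bullet j}^T - \bI_n)$, and then derive the simplified bound \eqref{eq:concentration_spectre_A2} by plugging in $t=n$ and crudely bounding constants. The key observation is that $\bA$ can be written as $\bZ\bGamma\bZ^T - \tr(\bGamma)\bI_n$, where $\bZ \in \bbR^{n\times p}$ has i.i.d.\ $\cN(0,1)$ entries, so $\bA = \bS\bS^T - \tr(\bGamma)\bI_n$ with $\bS := \bZ\bGamma^{1/2} \in \bbR^{n\times p}$ having independent columns $\bS_{\bullet j} \sim \cN(0,\bGamma_{jj}\bI_n)$. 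First I would record the Davidson--Szarek bound in the form I need: if $\bG$ is $n\times p$ with i.i.d.\ standard Gaussian entries, then with probability at least $1-2e^{-s^2/2}$ the singular values of $\bG$ satisfy $\sqrt{p}-\sqrt{n}-s \le \sigma_{\min}(\bG) \le \sigma_{\max}(\bG)\le \sqrt{p}+\sqrt{n}+s$. However, because the columns of $\bS$ have unequal variances $\bGamma_{jj}$, $\bS$ is \emph{not} of this form, so a direct application is not available; this is the main obstacle (see below).

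The route I would take to handle the weighting is a Lipschitz-concentration argument rather than a reduction to the unweighted case. View $\bA$ as a function $F(\bZ) = \|\bZ\bGamma\bZ^T - \tr(\bGamma)\bI_n\|_{op}$ of the Gaussian matrix $\bZ$. One shows $F$ is Lipschitz on the event $\{\|\bZ\|_{op}\le R\}$ with Lipschitz constant of order $R\,\|\bGamma\|_{op}^{1/2}\lesssim R\,\|\bSigma\|_{op}^{1/2}$ (since perturbing $\bZ$ by $\bH$ changes $\bZ\bGamma\bZ^T$ by at most $2\|\bGamma\|_{op}\|\bZ\|_{op}\|\bH\|_F$ up to lower order), and then applies Gaussian concentration of Lipschitz functions together with the bound on $\E[F(\bZ)]$. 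To estimate the mean $\E\|\bA\|_{op}$ I would use $\|\bA\|_{op} = \|\sum_j \bGamma_{jj}(\bZ_{\bullet j}\bZ_{\bullet j}^T - \bI_n)\|_{op}$ and a matrix-moment / Bandeira--van Handel-type bound for sums of independent centered random matrices, giving $\E\|\bA\|_{op} \lesssim \sqrt{\tr(\bSigma)}\,\|\bSigma\|_{op}^{1/2}\sqrt{n} + \|\bSigma\|_{op}\, n$; combining the concentration tail with this mean bound, collecting the $\sqrt{t}$ and $t$ contributions, and absorbing numerical constants yields \eqref{eq:concentration_spectre_A}. An alternative and perhaps cleaner route: write $\bZ\bGamma\bZ^T = \sum_j \bGamma_{jj}\bZ_{\bullet j}\bZ_{\bullet j}^T$, split the index set into dyadic blocks where $\bGamma_{jj}$ is comparable, apply the unweighted Davidson--Szarek bound on each block to the submatrix $\bZ_{B}$, and sum; the overhead is only a $\log p$ factor, which however would be worse than what is claimed, so I would favour the Lipschitz approach.

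The main obstacle is exactly this unequal-variance issue: the clean Davidson--Szarek statement applies only to isotropic Gaussian matrices, and one must control $\|\bZ\bGamma\bZ^T - \tr(\bGamma)\bI_n\|_{op}$ where the operator norm is driven partly by $\|\bGamma\|_{op}$ (the largest weight) and partly by $\tr(\bGamma)$ (the effective dimension). Getting both the $\sqrt{\tr(\bSigma)\,\|\bSigma\|_{op}\, n}$ term and the $\|\bSigma\|_{op}\, n$ term with the right interplay — rather than a crude $\|\bSigma\|_{op}\sqrt{np}$ term that would follow from bounding $\tr(\bSigma)\le p\|\bSigma\|_{op}$ too early — requires keeping $\tr(\bSigma)$ separate throughout the mean and variance estimates. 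Once \eqref{eq:concentration_spectre_A} is established in this form, \eqref{eq:concentration_spectre_A2} is immediate: set $t=n$, use $\tr(\bSigma)\le p\,\|\bSigma\|_{op}$ so that $\sqrt{\tr(\bSigma)}\,\|\bSigma\|_{op}^{1/2}\le \|\bSigma\|_{op}\sqrt{p}$, bound $\sqrt{n}+10+\sqrt{2n}\le C\sqrt{n}$ and $n+100+2n\le Cn$ for $n\ge 20$, and check that the resulting numerical constant is at most $25$; the probability becomes $1-2e^{-n}$ as claimed.
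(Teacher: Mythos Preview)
Your proposal has a genuine gap in the concentration step. The function $F(\bZ)=\|\bZ\bGamma\bZ^T-\tr(\bGamma)\bI_n\|_{op}$ is quadratic in $\bZ$ and is therefore \emph{not} globally Lipschitz; the Gaussian isoperimetric inequality requires a global Lipschitz bound, and being Lipschitz only on the event $\{\|\bZ\|_{op}\le R\}$ does not suffice without an additional extension/truncation argument that you do not supply. If you try to push this through you will end up with a Lipschitz constant that itself depends on the random quantity $\|\bZ\|_{op}$, and the resulting tail will carry an extra factor you cannot absorb into the stated constants.

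The paper's proof sidesteps this obstacle entirely by passing to the singular values of $\bU:=\bGamma^{1/2}\bZ^T$. The point is that $s_1(\bU)$ and $s_n(\bU)$ are $1$-Lipschitz functions of the entries of $\bU$, hence globally $\|\bSigma\|_{op}^{1/2}$-Lipschitz functions of the entries of $\bZ$; Gaussian concentration then gives sub-Gaussian tails of the right scale with no conditioning. Since $\lambda_{\max}(\bA)=s_1^2(\bU)-\tr(\bSigma)$ and $\lambda_{\min}(\bA)=s_n^2(\bU)-\tr(\bSigma)$, squaring the singular-value bounds produces exactly the mixed $\sqrt{\tr(\bSigma)\|\bSigma\|_{op}}$ and $\|\bSigma\|_{op}$ terms you were aiming for. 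For the means $\E[s_1(\bU)]$ and $\E[s_n(\bU)]$ the paper does not use a matrix-moment or Bandeira--van Handel bound but rather the Gordon--Slepian comparison: the Gaussian process $P_{(u,v)}=\langle u,\bU v\rangle$ is dominated by $Q_{(u,v)}=(\bGamma^{1/2}u)^T Z_1+\|\bSigma\|_{op}^{1/2}v^T Z_2$, which gives $\E[s_1(\bU)]\le\sqrt{\tr(\bSigma)}+\|\bSigma\|_{op}^{1/2}\sqrt n$ and, via the min--max version of Gordon's lemma, the matching lower bound on $\E[s_n(\bU)]$ up to an additive $O(\|\bSigma\|_{op}^{1/2})$. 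This is both more elementary and sharper than the route you sketch. Your derivation of \eqref{eq:concentration_spectre_A2} from \eqref{eq:concentration_spectre_A} by taking $t=n$ and using $\tr(\bSigma)\le p\|\bSigma\|_{op}$ is correct.
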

To control $\|\bA\|_{op}$, we could have applied  non-commutative Bernstein inequalities (Theorem 6.1.1 in~\cite{2015arXiv150101571T}). However, this approach would have produced additional logarithmic terms.

\subsubsection{\bf Analysis of $T$}

We decompose $T$ into four terms, whose deviations will be controlled independently.
\beqn
T&:=& T_a + T_b+ T_c+T_d\\
T_a& :=&  \frac{\beta^{*T}\bX^T\left(\bX \bX^T -tr(\bSigma)\bI_n \right)\bX \beta^*}{n^2}\\
T_b & := & \frac{Y^T\left[\{tr(\bSigma) -tr(\bX \bX^T)/n\}\bI_n \right]Y}{n^2}\\
T_c& :=&  \frac{\epsilon^T \bA\epsilon }{n^2} \ , \quad \quad  T_d :=  2\frac{ \epsilon^T\bA\bX \beta^* }{n^2} \ .
\eeqn 

\paragraph{Control of $T_a$.} The main term in the above decomposition is $T_a$. Since its control is quite technical, we only state a deviation bound for the time being. Subsections \ref{sec:dev_gaussian} and \ref{sec:concentration_T1} below are devoted to the proof of this lemma.

\begin{lem}\label{lem:concentration_T1}
For all $t\leq n^{1/3}$, we have
\beq\label{eq:concentration_T1}
\P\left[|T_{a}- (1+n^{-1})\|\bSigma\beta^*\|_2^2|\geq C\|\bSigma^{1/2}\beta^*\|_2^2  \|\bSigma\|_{op}\frac{\sqrt{pt}}{n}\right]\leq 2e^{-t}\ .
\eeq 
\end{lem}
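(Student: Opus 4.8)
The plan is to isolate the quantity $T_a = \beta^{*T}\bX^T(\bX\bX^T - \tr(\bSigma)\bI_n)\bX\beta^*/n^2$ as a polynomial of degree four in the Gaussian entries of $\bZ$, compute its expectation explicitly, and then apply a concentration inequality for Gaussian chaos. First I would rewrite everything in terms of the standard Gaussian matrix $\bZ = \bX\bO\bGamma^{-1/2}$ and the unit vector $\rho$ defined in \eqref{eq:definition_rho0}; writing $\bw := \bGamma^{1/2}\bZ^T$ so that $\bX^T\beta^* \cdot (\text{something})$, one gets $\bX\beta^* = \bZ\bGamma^{1/2}\bO^T\beta^* = \|\bSigma^{1/2}\beta^*\|_2^2 \, \bZ\rho$ (up to the normalization in \eqref{eq:definition_rho0}). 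Hence $T_a = \|\bSigma^{1/2}\beta^*\|_2^4 \cdot \rho^T\bZ^T(\bX\bX^T - \tr(\bSigma)\bI_n)\bZ\rho / n^2$; actually it is cleaner to keep one factor of $\|\bSigma^{1/2}\beta^*\|_2^2$ so that the target of concentration reads $\|\bSigma\beta^*\|_2^2 = \|\bSigma^{1/2}\beta^*\|_2^4 \cdot \rho^T\bGamma\rho$. After this reduction, $n^2 T_a / \|\bSigma^{1/2}\beta^*\|_2^4$ is a homogeneous degree-four polynomial in the i.i.d.\ standard normals $\bZ_{ij}$.

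Second, I would compute $\E[T_a]$. Conditioning appropriately and using $\E[\bZ_{\bullet j}\bZ_{\bullet j}^T] = \bI_n$ together with the independence of the columns of $\bZ$, the cross terms organize so that $\E[\beta^{*T}\bX^T\bX\bX^T\bX\beta^*] = \E[\|\bX^T\bX\beta^*\|? ]$ — more precisely one expands $\bX\bX^T = \sum_j \bZ_{\bullet j}\bZ_{\bullet j}^T \bGamma_{jj}$ (with the $\bGamma$ weights) and uses the standard Wishart moment identities $\E[\bZ_{\bullet j}\bZ_{\bullet j}^T \bZ_{\bullet l}\bZ_{\bullet l}^T]$ for $j=l$ versus $j\neq l$. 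The $j\neq l$ terms contribute $\tr(\bSigma)\|\bSigma^{1/2}\beta^*\|_2^2$-type pieces that cancel against the $-\tr(\bSigma)\bI_n$ term, while the $j=l$ diagonal terms produce the $(1+n^{-1})\|\bSigma\beta^*\|_2^2$ main term (the $n^{-1}$ coming from the $\E[\bZ_{ij}^4]=3$ versus $\E[\bZ_{ij}^2]^2=1$ discrepancy summed over coordinates, divided by $n$). This is a routine but somewhat lengthy moment computation, so I would only sketch which terms survive.

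Third, for the deviation bound I would invoke the exponential concentration inequality for Gaussian chaos of order four from \cite{2015_adamszak} (cited in the remark after Theorem \ref{prp:risk_dense}). The relevant output is a bound of the form $\P[|P(\bZ) - \E P(\bZ)| \geq u] \leq 2\exp(-c \min_d (u/A_d)^{2/d})$ where $A_d$ are norms of the $d$-th derivative tensors of $P$; here $P(\bZ) = \rho^T\bZ^T(\bX\bX^T - \tr(\bSigma)\bI_n)\bZ\rho$. I would bound the $d=1,2,3,4$ tensor norms. The dominant one should be the operator-norm-type term involving $\|\bA\|_{op}$ which, by Lemma \ref{lemma_concentration_spectre_A} (equation \eqref{eq:concentration_spectre_A2}), is $\lesssim \|\bSigma\|_{op}(\sqrt{np}+n)$ on an event of probability $\geq 1 - 2e^{-n}$; combined with the $1/n^2$ normalization and the factor $\|\bSigma^{1/2}\beta^*\|_2^2$, this yields the claimed scale $\|\bSigma^{1/2}\beta^*\|_2^2\|\bSigma\|_{op}\sqrt{pt}/n$ for the linear-in-$\bZ$ fluctuation, which dominates once $t \leq n^{1/3}$ (so the higher-order chaos terms, which scale like $t, t^{3/2}/\sqrt{n}, t^2/n$, remain subdominant — this is exactly why the hypothesis $t \leq n^{1/3}$ appears). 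I would need to restrict to the high-probability event from Lemma \ref{lemma_concentration_spectre_A} to control the spectral norm appearing in the tensor bounds, and absorb its $2e^{-n}$ probability into the final $2e^{-t}$ (valid since $t \leq n^{1/3} \leq n$).

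The main obstacle I anticipate is the bookkeeping in step three: correctly identifying the four derivative tensors of the quartic form $P(\bZ)$, bounding each of them by the right combination of $\|\bSigma\|_{op}$, $\sqrt{p}$, $\sqrt{n}$, and $\|\rho\|_2 = 1$, and verifying that all of them except the $d=1$ piece contribute lower-order terms in the regime $t\leq n^{1/3}$. A secondary nuisance is making the $\bGamma$-weights (i.e.\ arbitrary $\bSigma$ rather than $\bSigma = \bI_p$) appear correctly throughout, since they prevent one from directly quoting a clean chaos bound for an unweighted Wishart quadratic form; the weights are what turn plain $\sqrt{p}$ counts into $\|\bSigma\|_{op}\sqrt{p}$ and $\tr(\bSigma)$ factors.
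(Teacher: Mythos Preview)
Your overall strategy---rewrite $T_a$ in terms of the standard Gaussian matrix $\bZ$ and the unit vector $\rho$, compute the expectation, then apply the Adamczak--Wolff chaos inequality from \cite{2015_adamszak}---is exactly what the paper does (Proposition~\ref{prop:deviation_chaos} and Section~\ref{sec:concentration_T1}). The expectation step is also fine in spirit, though your normalization is off by a square: with $\rho$ a unit vector one has $\bX\beta^*=\|\bSigma^{1/2}\beta^*\|_2\,\bZ\rho$, so $V:=n^2T_a/\|\bSigma^{1/2}\beta^*\|_2^2$ (not $/\|\bSigma^{1/2}\beta^*\|_2^4$) is the relevant polynomial, with $\E[V]=(n^2+n)\rho^T\bGamma\rho$.

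The genuine gap is in your step three. The quantities $\|\E[\bGamma^d f(\bZ)]\|_{\cJ}$ in Proposition~\ref{prop:deviation_chaos} are norms of \emph{deterministic} tensors (expectations of derivatives); they cannot contain the random matrix $\bA=\bX\bX^T-\tr(\bSigma)\bI_n$, and there is no ``restrict to a high-probability event'' step available here. Lemma~\ref{lemma_concentration_spectre_A} is not used in the proof of Lemma~\ref{lem:concentration_T1} at all---it appears only in the control of $T_c$ and $T_d$, where one conditions on $\bX$ first and $\|\bA\|_{op}$ becomes a legitimate (conditionally deterministic) coefficient. A second slip: since $V$ is the sum of a homogeneous degree-$4$ and a homogeneous degree-$2$ polynomial in $\bZ$, the expected odd-order derivatives vanish, so only $d=2$ and $d=4$ contribute; there is no ``$d=1$ piece''. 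The $\sqrt{t}$ scaling you want comes from the $\#\cJ=1$ (Frobenius-norm) partitions, not from a first derivative. What the paper actually does is write out the coefficient $4$-tensor $\bB$ explicitly (its entries are products of $\rho_k$'s, $\bGamma_{mm}$'s and Kronecker deltas) and bound each partition norm directly: $\|\bB\|_F^2\le n^2\tr(\bSigma^2)$, $\|\bB\|_{\cJ}^2\le n\lambda_{\max}^2(\bSigma)(p\vee n)$ for every $\#\cJ\ge 2$, and analogous bounds for $\E[\bGamma^2 f]$. Once you plug in $u\asymp n\|\bSigma\|_{op}\sqrt{pt}$, the binding constraint is the $d=4$, $\#\cJ=4$ term, and that is where $t\le n^{1/3}$ comes from.
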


\paragraph{Control of $T_b$, $T_c$, and $T_d$.} Since $tr(\bX \bX^T)\bI_n /n$ is a Gaussian quadratic form, we have by Lemma \ref{lem:chi_2} that 
\beq\label{eq:concentration_trace_A}
\P\left[\left|tr(\bSigma) -tr(\bX \bX^T) /n\right|\geq 8\|\bSigma\|_{op}\sqrt{\frac{p t}{n}} \right]\leq 2e^{-t}\  ,\quad \quad \forall t<np,
\eeq
where we used that $\|\bSigma\|_{F} \leq \sqrt{p} \|\bSigma\|_{op}$.
Also, $\|\bY\|^2_2/\var{y_{1}}$ follows a $\chi^2$ distribution with $n$ degrees of freedom, which implies 
$\P[\|\bY\|^2_2\geq \var{y_{1}}(n + 4\sqrt{nt})]\leq e^{-t}$ for all $t<n$. We conclude that 
for all $t<n$, 
\beq \label{eq:concentration_T2}
\P\left[|T_b|\geq 
40 \var{y_{1}}\|\bSigma\|_{op} \frac{\sqrt{p t}}{n^{3/2}} \right]\leq 3e^{-t}\ .
\eeq
The term $T_c$ is a Gaussian chaos of order $4$. We could apply the general deviation bounds from \cite{2015_adamszak}, but it is easier to work here conditionally to $\bX$. Conditionally to $\bX$, $T_c$ is a quadratic form with respect to $\epsilon$. By Lemma \ref{lem:chi_2}, 
\beqn
\P\left[\frac{n^2|T_c|}{\sigma^2} \geq  |tr(\bA)| + 8\|\bA\|_{op}\sqrt{nt}\right]\leq 2 e^{-t},\quad \quad \forall t<n\ ,
\eeqn
where we used $\|\bA\|_F\leq \sqrt{n}\|\bA\|_{op}$. Gathering this bound with the deviation inequality \eqref{eq:concentration_trace_A} for $tr(\bA)$,  the deviation inequality \eqref{eq:concentration_spectre_A2} for $\|\bA\|_{op}$, and using the fact that $p\geq n$, we conclude that, if $n\geq 20$, 
for all $t<n$, 
\beq \label{eq:concentration_T3}
\P\left[|T_c|\geq 208 \sigma^2 \|\bSigma\|_{op} \frac{\sqrt{p t}}{n^{3/2}} \right]\leq 6e^{-t}\ .
\eeq
Conditionally to $\bX$, $n^2T_d/(2\sigma)$ follows a centered normal distribution with variance $\|\bA\bX\beta^*\|_2^2\leq \|\bA\|_{op}^2\|\bX\beta^*\|_2^2$. Hence, 
\[
\P\left[\frac{n^2|T_d|}{2\sigma}\geq  \|\bA\|_{op}\|\bX\beta^*\|_2\sqrt{2t} \right]\leq 2e^{-t}\ ,\quad \quad \forall t>0\ .
\]
Then, $\|\bA\|_{op}$ is  controlled by \eqref{eq:concentration_spectre_A2} and $\|\bX\beta^*\|_2^2/\|\bSigma^{1/2}\beta^*\|_2^2$  follows a $\chi^2(n)$ distribution so that it can be controlled using Lemma \ref{lem:chi_2}. If $n\geq 20$, for all $t<n$, we arrive at 
\beq \label{eq:concentration_T4}
\P\left[T_d\geq 125 \sigma \|\bSigma^{1/2}\beta^*\|_2 \|\bSigma\|_{op} \frac{\sqrt{pt}}{n}\right]\leq 5e^{-t}.
\eeq

 \bigskip
 \noindent
 Gathering all the deviation inequalities (\ref{eq:concentration_T1}--\ref{eq:concentration_T4}), we obtain that for some constants  $C,C'>0$, if $n\geq 20$,
 \beq\label{eq:concentration_T}
 \P\left[\Big|T- \beta^{*T}\bSigma^2\beta^*\left(1+n^{-1}\right)\Big|\geq C \var{y}\|\bSigma\|_{op}\frac{\sqrt{pt}}{n}\right]\leq C'e^{-t}
 \eeq
 for all $t \leq n^{1/3}$.

\subsubsection{Analysis of $\widehat{V}$}

Since $\|Y\|_2^2/\var{y_1}$ follows a $\chi^2$ distribution with $n$ degrees of freedom, we obtain by Lemma \ref{lem:chi_2} that  $\P\left[|\|Y\|_2^2/n-\var{y_{1}}|\geq 4\var{y_{1}}\sqrt{t/n}\right]\leq 2e^{-t}$ for all $t<n$, so that using (\ref{eq:concentration_T}) and the fact that $p\geq n$, we conclude that, for all $t\leq n^{1/3}$, with probability larger than $1-(2+C')e^{-t}$, for some constant $C>0$,
\[
\Big|\widehat{V} - \frac{\beta^{*T} \bSigma^2 \beta^*}{\var{y}}\Big|\leq C\|\bSigma\|_{op}\frac{\sqrt{pt}}{n}\ , 
\]
and the first part of Theorem   \ref{prp:risk_dense} is proved.\\
Let us now turn to the second moment of $U:= \widehat{V} - \frac{\beta^{*T} \bSigma^2 \beta^*}{\var{y_i}}$. Define $\cA$ the event such that $|U|\leq C\|\bSigma\|_{op}\frac{\sqrt{pn^{1/3}}}{n}$, where $C$ is the same constant as in the above bound. The probability of $\cA$ is larger than $1-C'e^{-n^{1/3}}$ for some $C'>0$.
Then, the square risk decomposes as
\beqn 
\E\left[U^2\right]&\leq &\E\left[\mathbf{1}_{\cA}U^2\right] + 2 \E\Big[\mathbf{1}_{\cA^c}\Big(\frac{\beta^{*T} \bSigma^2 \beta^*}{\var{y}}\Big)^2\Big]+2 \E\left[\mathbf{1}_{\cA^c}(\widehat{V})^2\right]\\
& \leq & C\|\bSigma\|_{op}^2 \frac{p}{n^2}+ 2\P(\cA^c)\|\bSigma\|_{op}^2 + 2\left[\P(\cA^c)\right]^{1/2}\left[\E\left\{(\widehat{V})^4\right\}\right]^{1/2}\ ,
\eeqn 
where we have integrated the above deviation inequality in the last line. It remains to control the fourth moment of $\hat{V}$. We have
\[
 \widehat{V}\leq \frac{Y^T\left(\bX \bX^T -tr(\bX \bX^T)\bI_n /n\right)Y}{n\|Y\|_2^2}\leq \frac{\|\bA\|_{op}}{n}+ \frac{|tr(\bSigma)-tr(\bX\bX^T)/n|}{n}.
\]
Gathering the deviation inequalities \eqref{eq:concentration_spectre_A} and \eqref{eq:concentration_trace_A}, we derive that for some constants $\tilde{C}$ and $\tilde{C}'$, for any $t>0$, 
\[
 \P\left[\widehat{V}\geq \tilde{C} \|\bSigma\|_{op}\left(\sqrt{\frac{p}{n}}+ \frac{\sqrt{pt}}{n}+t \right)\right]\leq \tilde{C}'e^{-t}\ . 
\]
Integrating this deviation inequality, we obtain that $\left[\E\left\{(\widehat{V})^4\right\}\right]^{1/2}$ is upper bounded by a constant times $\|\bSigma\|_{op}^2p/n$. In conclusion, for some numerical constant numbers $C$ and $C'$,
\[
 \E\left[\Big|\widehat{V} - \frac{\beta^{*T} \bSigma^2 \beta^*}{\var{y_i}}\Big|^2\right]\leq C\|\bSigma\|_{op}^2 \frac{p}{n^2}\left[1+ ne^{-\frac{n^{1/3}}{2}} \right]\leq C'\|\bSigma\|_{op}^2 \frac{p}{n^2}\ . 
\]
and the second part of Theorem   \ref{prp:risk_dense} is proved.

\subsubsection{Deviation inequalities for Gaussian chaos}\label{sec:dev_gaussian}

We shall use  deviation inequalities for (non necessarily homogeneous) Gaussian chaos. Let us recall a recent result from Adamczak and Wolff~\cite{2015_adamszak}. In order to state this result, we need to introduce some new notation. 
 \\
Let $d$ and $q$ denote positive integers. Consider a $d$-indexed matrix $\bB=(b_{i_1,\ldots,i_d})^q_{i_1,\ldots,i_d=1}$. For $\bi=(i_1,\ldots,i_d)\in[q]^d$ and $I \subset [d]$ we write $\bi_I= (i_k)_{k\in I}$. Let $P_d$ be the set of partitions of $[d]$ into non empty disjoint subsets. Given a partition $\cJ=\{J_1,\ldots,J_k\}$, define the norm
\beq\label{eq:definition_norm_J}
\|\bB\|_{\cJ}= \sup\left\{\sum_{\bi\in[q]^d}b_{\bi}\prod_{l=1}^kx^{(l)}_{\bi_{J_l}}:\quad \|x^{(l)}\|_F\leq 1,\quad 1\leq l\leq k\right\} \ , 
\eeq
where $x^{(l)}$ is a $|J_l|$-indexed matrix and $\|x^{(l)}\|_F$ is its Frobenius norm. 
\\
Note that taking union of subsets in the partition increases the norm: given  $\cJ=\{J_1,\ldots,J_k\}$,  the partition $\cJ'=\{\{J_1\cup J_2\}, J_3 ,\ldots,J_{k}\}$ satisfies $\|\bB\|_{\cJ}\leq \|\bB\|_{\cJ'}$. Indeed, the 
$|J_1|+|J_2|$-dimensional matrix  $x^{(1)}\otimes x^{(2)}$ in the definition \eqref{eq:definition_norm_J} of $\|\bB\|_{\cJ}$ satisfies $\|x^{(1)}\otimes x^{(2)}\|_F\leq 1$.

\begin{prp}[Theorem 2 in \cite{2015_adamszak}]\label{prop:deviation_chaos}
Let $f:\mathbb{R}^q\mapsto \bbR$ be a polynomial of $q$ variables of total degree smaller or equal to $D$. For any integer  $d\geq 1$,  let $\bGamma^df$ denote the $d$-th derivative of $f$. Let $Z=(Z_1,\ldots,Z_q)$ denote a $q$-dimensional standard Gaussian vector. Then, for any $t>0$, 
\beq\label{eq:deviation_chaos}
\P\left[|f(Z) -\E\left(f(Z)\right)|\geq t \right]\leq 2\exp\left[- \min_{1\leq d\leq D}\min_{\cJ\in \P_d}\left(\frac{C^d t}{\|\E[\bGamma^d f(Z)]\|_{\cJ}}\right)^{\frac{2}{\#\cJ}} \right]\ ,
\eeq
where $C$ is a numerical constant.
\end{prp}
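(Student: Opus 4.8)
The plan is to run the moment method: establish, for every real $p\ge 2$, the polynomial-in-$p$ moment bound
\[
\big\| f(Z)-\E f(Z)\big\|_{L^p}\ \le\ C_D\sum_{d=1}^{D}\ \sum_{\cJ\in P_d} p^{|\cJ|/2}\,\big\|\E[\bGamma^d f(Z)]\big\|_{\cJ},
\]
and then deduce \eqref{eq:deviation_chaos} from it. For the last step one uses that any $W$ with $\|W\|_{L^p}\le B(p)$ satisfies $\P(|W|\ge e\,B(p))\le e^{-p}$ by Markov's inequality; choosing $p$ so as to balance $t$ against the various monomials $p^{|\cJ|/2}\|\E[\bGamma^d f(Z)]\|_{\cJ}$ and optimizing turns the sum into the $\min_{1\le d\le D}\min_{\cJ\in P_d}$ expression of \eqref{eq:deviation_chaos}, at the cost of the degree-dependent constant $C^d$. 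So the entire content is in the moment estimate.

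First I would decompose $f(Z)-\E f(Z)$ along the Wiener--It\^o (Hermite) chaos: $f(Z)-\E f(Z)=\sum_{d=1}^{D} I_d$, where $I_d$ is the projection onto the $d$-th Wiener chaos, a linear combination of degree-$d$ multivariate Hermite polynomials whose symmetric coefficient tensor is, by Gaussian integration by parts (Stein's identity), exactly $\tfrac1{d!}\,\E[\bGamma^d f(Z)]$. Since $f$ has degree at most $D$ this sum has at most $D$ terms, so it suffices to bound $\|I_d\|_{L^p}$ for each fixed $d\le D$ separately and sum, absorbing the number of terms into $C_D$.

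The heart of the matter is then a two-sided moment bound for a single homogeneous Gaussian chaos of fixed order $d$, which is Lata\l a's theorem: for the tetrahedral chaos $\sum_{\bi}a_{\bi}Z_{i_1}\cdots Z_{i_d}$ (distinct indices) one has $\|\cdot\|_{L^p}\asymp_d\sum_{\cJ\in P_d}p^{|\cJ|/2}\|a\|_{\cJ}$. To apply this to $I_d$ one passes from the Hermite form to the tetrahedral form via a decoupling inequality combined with hypercontractivity of the Ornstein--Uhlenbeck semigroup, which renders $L^p$ and $L^2$ norms comparable within a fixed chaos; the diagonal (repeated-index) corrections this produces are themselves lower-order chaoses and hence are already controlled by the components of smaller degree, so nothing is lost up to $d$-dependent constants. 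One also checks that the partition norms $\|\cdot\|_{\cJ}$ of the decoupled tetrahedral coefficients are dominated by those of $\E[\bGamma^d f(Z)]$, so the final bound is indeed expressed through the derivative tensors as stated.

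The main obstacle is precisely this chaos moment comparison --- Lata\l a's estimate together with the decoupling/hypercontractivity bookkeeping and the translation between the three natural coefficient tensors (Hermite coefficients, the derivative tensors $\E[\bGamma^d f]$, and the decoupled tetrahedral coefficients) while keeping track of every $\|\cdot\|_{\cJ}$. The non-homogeneity of $f$ adds only routine combinatorics: a monomial of degree $\ell$ feeds every chaos of order $d\le \ell$, so $\E[\bGamma^d f(Z)]$ mixes contributions from all higher-degree monomials; but because we only ever use the \emph{constant} tensor $\E[\bGamma^d f(Z)]$ rather than the random tensor $\bGamma^d f(Z)$, no iteration beyond the single chaos decomposition is needed, and the final union over $d$ and optimization in $p$ are elementary. (This is, in essence, the route of Adamczak and Wolff~\cite{2015_adamszak}, to which we refer for the full argument.)
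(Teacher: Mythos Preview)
The paper does not prove this proposition at all: it is quoted verbatim as Theorem~2 of Adamczak--Wolff~\cite{2015_adamszak} and used as a black box in the proof of Lemma~\ref{lem:concentration_T1}. Your sketch is a faithful outline of the Adamczak--Wolff argument (chaos decomposition, identification of the $d$-th chaos coefficients with $\tfrac{1}{d!}\E[\bGamma^d f(Z)]$, Lata\l a's two-sided moment bound via decoupling and hypercontractivity, then the standard $L^p$-to-tail conversion), and you correctly acknowledge this at the end. Nothing further is needed here beyond the citation.
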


\subsubsection{Proof of Lemma \ref{lem:concentration_T1}}\label{sec:concentration_T1}

Define the variable
\beqn 
V:= \frac{n^2 T_a}{\|\bSigma^{1/2}\beta^*\|_2^2} &=& \rho^T \bZ^T\Big[(\bZ \bGamma \bZ^T) - tr(\bSigma)\bI_n\Big] \bZ \rho \ ,
\eeqn 
where we recall that $\rho$ is introduced in \eqref{eq:definition_rho0}.
First, we compute the expectation of $V$:
\beq\label{eq:exprectation_V}
\E[V]= (n^2+n) \rho^T \bGamma\rho = \|\bSigma\beta^*\|_2^2/ \|\bSigma^{1/2}\beta^*\|_2^2\ .
\eeq
$V$ is a polynom $f(\bZ)$ of degree 4 of the  $q=np$ independent standard Gaussian variables $\bZ=(Z_{i,j})_{1\leq i \leq n, 1\leq j \leq p})$ so that we can apply Proposition \ref{prop:deviation_chaos}. Since $V$ is the sum of an homogeneous polynom of degree $4$ and an homogeneous polynom of degree $2$, we only have to consider the derivatives of order $2$ and of order $4$ in \eqref{eq:deviation_chaos}, all the other terms (of order $1$ and $3$) being null. Write $V$ as 
\beq\label{eq:definition_V}
V = \sum_{i,j=1}^n\sum_{k,l,m=1}^{p} \bZ_{ik} \bZ_{jl}(\bZ_{im}\bZ_{jm}-\delta_{i,j}) \rho_{k}\rho_{l}\bGamma_{mm}\ ,
\eeq
where $\delta_{i,j}=1$ is the indicator function of $i=j$. We may express $V$ using
the four-indexed matrix $\bB$:
\[\bB_{(j_1,k_1),(j_2,k_2),(j_3,k_3),(j_4,k_4)}= \rho_{k_1}\rho_{k_4}\delta_{j_1,j_2}\delta_{j_3,j_4}\delta_{k_2,k_3} \bGamma_{k_2k_2}\ \]
as follows:
$$
V=f(\bZ):=\sum_{j_{1},\ldots,j_{4}=1}^{n}\sum_{k_{1},\ldots,k_{4}=1}^{p}\bB_{(j_1,k_1),(j_2,k_2),(j_3,k_3),(j_4,k_4)}\bZ_{j_{1}k_{1}}\left(\bZ_{j_{2}k_{2}}\bZ_{j_{3}k_{3}}-\delta_{j_{2},j_{3}}\right)\bZ_{j_{4}k_{4}}
$$
so that the expectation $\E[\bGamma^4f(\bZ)]$ of the  fourth derivative of $f(\bZ)$ is obtained by a symmetrization of $\bB$. More precisely, for any index $(i_1,\ldots, i_4)$ in $([n]\times [p])^{4}$, $\E[\bGamma^4f(\bZ)]_{i_1,i_2,i_3,i_4}= \sum_{\sigma }\bB_{i_{\sigma(1)},\ldots, i_{\sigma(4)}}$ where the sum runs over all permutations of $\{1,\ldots, 4\}$. Using the triangular inequality,  we shall obtain a bound on $\|\E[\bGamma^4f(\bZ)]\|_{\cJ}$ from a bound on $\|\bB\|_{\cJ}$.
Thus it suffices to bound $\|\bB\|^2_{\cJ}$ for all partitions $\cJ$. We start with $\cJ=\{1,2,3,4\}$. 
\[
\|\bB\|^2_{\{1,2,3,4\}} = \|\bB\|_F^2 \leq n^2 \|\rho\|_2^4 tr(\bGamma^2)= n^2 tr(\bSigma^2) 
\]
Let us now consider any partition $\cJ=\{J_1,J_2\}$ of size $2$. Without loss of generality, there exists $t \in \{1,2,3\}$ such that $t\in J_1$ and $t+1\in J_2$. Since each entry of $\bB$ contains a Dirac $\delta_{j_{t},j_{t+1}}$ or $\delta_{k_{t},k_{t+1}}$, there is a $n$ or $p$ factor  less in $\|\bB\|^2_{\cJ}$ in comparison to $\|\bB\|^2_{\{1,2,3,4\}}$, and we get 
$\|\bB\|^2_{J_1,J_2}\leq n\lambda^2_{\max}(\bSigma) (p\vee n)$. Let us illustrate this with $J_1=\{1\}$, $J_2=\{2,3,4\}$. By symmetry, $\|\bB\|_{\cJ}$ is achieved for $x^{(1)}_{j,k}= \rho_j n^{-1/2}$, and by Cauchy-Schwarz inequality, we obtain  $\|\bB\|^2_{\cJ}= n tr(\bSigma^2)$. \\
If now the partition $\cJ=\{J_1,\ldots, J_r\}$ has cardinality larger than $2$,  it was observed in the previous subsection that $\|\bB\|^2_{\cJ}\leq \|\bB\|^2_{J_1,\cup_{s>2}J_s}$. We have thus proved that, for all $t>0$,
\beq\label{eq:control_derivative_4}
\min_{\cJ\in P_4}\left(\frac{ t}{\|\E[\bGamma^4 f(\bZ)]\|_{\cJ}}\right)^{\frac{2}{\#\cJ}}\geq C \frac{t^2}{n^2 tr(\bSigma^2) }\wedge\left[ \bigwedge_{k=2,3,4}\left(\frac{t^2}{n\lambda^2_{\max}(\bSigma) (p\vee n)}\right)^{1/k}\right]\  .
\eeq

Let us now turn to the second derivative of $f(\bZ)$. Denote $\bB'= \E[\bGamma^2f(\bZ)]$. Coming back to the definition \eqref{eq:definition_V} of $V$, 
observe that $\bB'_{(j_1,k_1),(j_2,k_2)}$ is zero when $(j_1\neq j_2)$ because any term involving  $j_1$ and $j_2$ in $V$ contains exactly two terms with index $j_1$ and two terms with index $j_2$. Now, if $j_1=j_2$, the entries of $\bB'$ are bounded in absolute values by 
\[|\bB'_{(j,k_1),(j,k_2)}|\leq C   n  |\rho_{k_1}||\rho_{k_2}|(\bGamma_{k_1k_1}+ \bGamma_{k_2k_2})+  \bGamma_{k_1k_1} \ .  \]
As a consequence, 
\beqn 
\|\bB'\|^2_{\{1,2\}}&= &\|\bB'\|_F^2 \leq C n \sum_{k_1,k_2}n^2|\rho|_{k_1}^2|\rho|_{k_2}^2 \lambda^2_{\max}(\bSigma) + n\sum_{k_1} \bGamma^2_{k_1k_1}\\
&\leq &\left[n^3 \lambda^2_{\max}(\bSigma) + n tr(\bSigma^2)\right]\ ,
\eeqn 
since $\|\rho\|_2^2=1$ and $\sum \bGamma_{k,k}^2= tr(\bSigma^2)$.	
For $\cJ=\{\{1\},\{2\}\}$, $\|\bB^{'}\|_{\cJ}$ is the spectral norm of $\bB'$ when considered as $2$-dimensional $np\times np$ matrix. Since $\bB'$ can be seen as a block diagonal matrix, we obtain
\[\|\bB'\|_{\{1\},\{2\}}\leq C n   \lambda_{\max}(\bSigma)\ . \]
We arrive at 
\beq\label{eq:control_derivative_2}
\min_{\cJ\in P_2}\left(\frac{ t}{\|\E[\bGamma^2 f(\bZ)]\|_{\cJ}}\right)^{\frac{2}{\#\cJ}}\geq C \frac{t^2}{n^3 \lambda^{2}_{\max}(\bSigma)+ n tr(\bSigma^2) }\wedge \frac{t}{n\lambda_{\max}(\bSigma)}\  .
\eeq
Proposition \ref{prop:deviation_chaos} together with \eqref{eq:exprectation_V}, \eqref{eq:control_derivative_4}, and \eqref{eq:control_derivative_2} allows us to conclude.

\subsection{Proof of Proposition \ref{thm:eta_gauss_Lasso}}

This proposition is a consequence of the analysis of the square-root Lasso in \cite{2012_Sun}. We start with the decomposition
\beq\label{eq:decomposition_general}
\widehat{\eta}^{SL}-\eta_*= \widetilde{\sigma}^2_{SL}\left(\frac{n}{\| Y\|_2^2} - \frac{1}{\var{y_{1}}}\right)+  \frac{\widetilde{\sigma}^2_{SL}- \|\epsilon\|_2^2/n}{\var{y_{1}}}+\frac{\|\epsilon\|_2^2/n - \sigma^2}{\var{y_{1}}}  \ .
\eeq
By definition of the Lasso estimator, we have $n \widetilde{\sigma}^2_{SL}= \|Y-\bW \widehat{\beta}_{SL}\|_2^2\leq \|Y\|_2^2$. As a consequence, the first term in the above equation is smaller in absolute value than  $|1- \|Y\|_2^2/(n\var{y_{1}}) | $. Since $\|Y\|_2^2/\var{y_{1}}$ and $\|\epsilon\|_2^2/\sigma^2$ each follow a $\chi^2$ distribution with $n$ degrees of freedom, we have 
\[
\mathbb{E}\Big[\Big|\frac{\|Y\|_2^2}{n\var{y_{1}}} - 1\Big|^2\Big] = \frac{2}{n}\ \text{ and }\mathbb{E}\Big[\Big(\frac{\|\epsilon\|_2^2/n - \sigma^2}{\var{y_{1}}}\Big)^2\Big]= \frac{2\sigma^4}{n
(\var{y_{1}})^2}
 \leq \frac{2}{n}\ ,
\]
where we used  $\var{y_{1}} = \|\bSigma^{1/2}\beta^*\|_2^2 + \sigma^{2}$.
Let $\mathcal{A}$ be an event of large probability to be defined below. Since $|\widehat{\eta}^{SL}-\eta_*|\leq 1$, we deduce from \eqref{eq:decomposition_general} that 
\begin{equation}\label{eq:upper_bound_general}
 \mathbb{E}\Big[|\widehat{\eta}^{SL}-\eta_*|^2\Big]\leq  \P(\cA^c)+\frac{12}{n} + \frac{3\E\left[\big(\widetilde{\sigma}^2_{SL}- \|\epsilon\|_2^2/n\big)^2\mathbf{1}_{\cA}\right]}{\mathrm{Var}^2(y_1)}\ , 
\end{equation}
so that we only have to focus on $\P(\cA^c)$ and the difference $\widetilde{\sigma}^2_{SL}- \|\epsilon\|_2^2/n$. We need a few more notation. In the sequel,  $J_*$ denotes the support of $\beta^*$, that is the set of indices $i$ such that $\beta^{*}_{i}\neq 0$. 
For $T\subset [p]$ and $\xi>0$, the compatibility constant $\kappa[\xi,T;\bW]$ is defined by
\begin{equation*}
 \kappa[\xi,T;\bW]= \min_{u\in \mathcal{C}(\xi,T)} \left\{\frac{|T|^{1/2}\|\bW u\|_2}{\|u_T\|_1}\right\},\ \ \textrm{where}\ 
  \mathcal{C}(\xi,T)=\{u:\ \|u_{T^c}\|_1< \xi \|u_T\|_1\}\ . 
\end{equation*}
The compatibility constant, which quantifies how the design acts on the cone $\cC(\xi,T)$, arises in state  of the art results for the Lasso estimator~\cite{bickeltsy08, Kolt11,geer_condition}. 
We now define $\mathcal{A}$ as the event on which the following conditions are satisfied:
\begin{eqnarray}\label{eq:cond1}
 \|\bW^T\epsilon\|_{\infty}&\leq & 2 \sigma \sqrt{\log(p)}\ ,  \\ \label{eq:cond2}
 \frac{8}{9}n\sigma^2\leq \|\epsilon\|_2^2 &\leq & \frac{6}{5}n\sigma^2\ ,  \\ 
\label{eq:cond4} \kappa[5,J_*,\bW] &\geq & 16^{-1}\lambda^{1/2}_{\min}(\bSigma)/\lambda^{1/2}_{\max}(\bSigma) \ .
\end{eqnarray}

The first lemma provides a deterministic prediction error for the square-root estimator. It is a simplified version of Theorem 2 in \cite{2012_Sun} (the notation and normalizations are slightly different).

\begin{lem}[\cite{2012_Sun}]\label{lem:square_root_Lasso}
On the event $\mathcal{A}$, 
the design $\bW$ and the noise $\epsilon$ are such that 
\beq\label{eq:condition_1}
12 \lambda_0^2 |\beta^*|_0 \log(p) \leq   n \kappa^2[5,J_*;\bW]\quad \text{ and }\quad \|\bW^T\epsilon\|_{\infty}\leq \frac{\lambda_0\|\epsilon\|_2}{4\sqrt{n}}\ ,
\eeq
and the square root Lasso estimator satisfies
 \beq\label{eq:control:tilde_sigma}
 \max\Big[1 - \frac{\sqrt{n}\widetilde{\sigma}_{SL}}{\|\epsilon\|_2}, 1 - \frac{\|\epsilon\|_2}{\sqrt{n}\widetilde{\sigma}_{SL}}\Big]  \leq  3\lambda_0^2  \frac{  |\beta^*|_{0}}{n\kappa^2[5,J_*;\bW] }\leq 1/2\ .
 \eeq
 
\end{lem}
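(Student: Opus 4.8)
\emph{Proof plan.} Since this lemma is a transcription of Theorem~2 of \cite{2012_Sun} into the present notation, the plan has two steps: first check that the two deterministic hypotheses gathered in \eqref{eq:condition_1} hold on the event $\cA$ (together with the dimension restriction \eqref{eq:hypo_max_dimension} of the ambient proposition), and then read off the conclusion \eqref{eq:control:tilde_sigma} from that reference, verifying that the numerical constants and the final bound $\le 1/2$ come out as stated.

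\emph{Step 1: the conditions \eqref{eq:condition_1}.} The second inequality is the noise-gradient condition. From \eqref{eq:cond2} one has $\sigma \le \sqrt{9/8}\,\|\epsilon\|_2/\sqrt n$ on $\cA$, so combining this with \eqref{eq:cond1} gives $\|\bW^T\epsilon\|_\infty \le 2\sqrt{9/8}\,\sqrt{\log p}\,\|\epsilon\|_2/\sqrt n = (3/\sqrt2)\sqrt{\log p}\,\|\epsilon\|_2/\sqrt n$, which is at most $\lambda_0\|\epsilon\|_2/(4\sqrt n)$ because $\lambda_0 = 13\sqrt{\log p}$ and $3/\sqrt2 \le 13/4$. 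The first inequality is the sparsity/sample-size condition: plugging the compatibility lower bound \eqref{eq:cond4}, namely $\kappa^2[5,J_*;\bW]\ge \tfrac{1}{256}\lambda_{\min}(\bSigma)/\lambda_{\max}(\bSigma)$, into the requirement and recalling that $\lambda_0^2$ is a fixed multiple of $\log p$, the inequality reduces to a bound of the shape $|\beta^*|_0\log p\,\lambda_{\max}(\bSigma)/\lambda_{\min}(\bSigma)\le c\,n$ with $c$ an absolute constant, which is exactly what Assumption \eqref{eq:hypo_max_dimension} provides once its constant $C$ is taken small enough (and, as noted in the remark after Proposition~\ref{thm:eta_gauss_Lasso}, the global eigenvalue ratio here may be replaced by the local compatibility ratio actually controlled on $\cA$).

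\emph{Step 2: invoking \cite{2012_Sun}.} Working with the equivalent scaled-Lasso formulation \eqref{eq:definition_scaled_lass}, whose columns $\bW_{\bullet j}$ are unit-normalized as in that paper, Theorem~2 of \cite{2012_Sun} asserts that under \eqref{eq:condition_1} the estimate $\widetilde{\sigma}_{SL}=\|Y-\bW\widehat{\beta}_{SL}\|_2/\sqrt n$ obeys a two-sided relative-error bound with respect to the oracle noise level $\|\epsilon\|_2/\sqrt n$; tracking the constants through their argument (the prediction-error bound combined with the basic inequality relating $\widetilde{\sigma}_{SL}$ and $\|\epsilon\|_2/\sqrt n$) yields the right-hand side $3\lambda_0^2|\beta^*|_0/(n\kappa^2[5,J_*;\bW])$ of \eqref{eq:control:tilde_sigma}. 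The last inequality $3\lambda_0^2|\beta^*|_0/(n\kappa^2[5,J_*;\bW])\le 1/2$ is then immediate from the first inequality in \eqref{eq:condition_1}, which gives $3\lambda_0^2|\beta^*|_0/(n\kappa^2[5,J_*;\bW])\le 1/(4\log p)\le 1/4$ (using $\log p\ge 1$).

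\emph{Main obstacle.} The only real work is bookkeeping: Theorem~2 of \cite{2012_Sun} is phrased with that paper's own penalty-level convention, cone constant, and conversion from prediction error to noise-estimate error, so one must check that the present choices ($\lambda_0=13\sqrt{\log p}$, the factor $4\sqrt n$ in the gradient bound, the cone parameter $5$, and the eigenvalue constant $16^{-1}$ in \eqref{eq:cond4}) are mutually compatible with their hypotheses, so that the numerical constants $12$, $4$ and $3$ — and hence the bound $\le 1/2$ — come out exactly while everything stays deterministic on $\cA$.
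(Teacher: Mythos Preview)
Your proposal is correct and follows essentially the same approach as the paper: verify the two deterministic conditions in \eqref{eq:condition_1} from the definition of $\cA$ (via \eqref{eq:cond1}, \eqref{eq:cond2}, \eqref{eq:cond4}) together with Assumption \eqref{eq:hypo_max_dimension}, and then read off \eqref{eq:control:tilde_sigma} from Theorem~2 of \cite{2012_Sun}. The paper's proof adds one piece of bookkeeping you flag but do not carry out: it specifies the exact dictionary with \cite{2012_Sun}, taking $\xi=2$ and $\nu=1/2$ there so that the first part of \eqref{eq:condition_1} forces $\tau_*^2\le 1/4$ and the second part yields the bound on $z^*$ required by their Theorem~2, after accounting for the $\sqrt n$ difference in normalization of the design.
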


\begin{proof}[Proof.]
First, the second part of \eqref{eq:condition_1} is  enforced by Conditions \eqref{eq:cond1} and \eqref{eq:cond2} together with the definition of $\lambda_0$. The first part in \eqref{eq:condition_1} is a consequence of \eqref{eq:cond4} and hypothesis \eqref{eq:hypo_max_dimension}.
Then,  we apply  Theorem 2 of \cite{2012_Sun} to the estimator $\widetilde{\sigma}_{SL}$. 
Notice that the choice of $\lambda_0$ and \eqref{eq:condition_1} in the above lemma differs by a factor $\sqrt{n}$ from Theorem 2 in \cite{2012_Sun} because the design is normalized differently.
Using the notation of \cite{2012_Sun}, we fix $\xi=2$ so that Condition \eqref{eq:condition_1} implies that $\tau_*^2 \leq 1/4$ (we fix $\nu=1/2$ in \cite[][Eq.(16)]{2012_Sun}). Then,  the condition on $z^*$ in \cite[][Th.2]{2012_Sun} is a consequence of  the second part of \eqref{eq:condition_1}. The result follows. 
\end{proof}
It follows from  \eqref{eq:control:tilde_sigma} that, under $\cA$, 
\beqn 
\big(\widetilde{\sigma}^2_{SL}- \|\epsilon\|_2^2/n\big)^2&\leq& 9\frac{\|\epsilon\|_2^2}{n}\big(\widetilde{\sigma}_{SL}- \|\epsilon\|_2/\sqrt{n}\big)^2\\
&\leq &  C \frac{\|\epsilon\|_2^4}{n^2}\lambda_0^4  \frac{  |\beta^*|^2_{0}}{n^2\kappa^4[5,J_*;\bW] }\\
&\leq & C' \sigma^4  \frac{|\beta^*|^2_0\log^2(p)}{n^2}\frac{\lambda^2_{\max}(\bSigma)}{\lambda^2_{\min}(\bSigma)}\ ,
\eeqn 
where we used the conditions \eqref{eq:cond2} and \eqref{eq:cond4} in the last line. In view of \eqref{eq:upper_bound_general}, Proposition \ref{thm:eta_gauss_Lasso}  follows finally from  the following lemma. 

\begin{lem}\label{lem:control_probability}
Under Assumption \eqref{eq:hypo_max_dimension}, we have for some positive constants $C$, $C'$, and $C"$ that 
\[
\P[\cA^c]\leq  C [pe^{-C'(n\wedge p)} + p^{-1}]\leq \frac{C"}{n} \ .
\]
\end{lem}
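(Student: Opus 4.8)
The plan is to bound separately the probability that each of the three conditions \eqref{eq:cond1}, \eqref{eq:cond2}, \eqref{eq:cond4} defining $\mathcal A$ fails, and then apply a union bound. The first two are standard Gaussian tail estimates. For \eqref{eq:cond2}, since $\|\epsilon\|_2^2/\sigma^2 \sim \chi^2_n$, Lemma~\ref{lem:chi_2} (the $\chi^2$ deviation inequality used repeatedly above) gives $\P[\|\epsilon\|_2^2 \notin (\tfrac89 n\sigma^2, \tfrac65 n\sigma^2)] \le 2e^{-cn}$ for a universal $c>0$. For \eqref{eq:cond1}, conditionally on $\bW$ each coordinate $\bW_{\bullet j}^T\epsilon$ is Gaussian with variance $\sigma^2\|\bW_{\bullet j}\|_2^2 = \sigma^2$ (the columns of $\bW$ are normalized to unit $\ell_2$ norm), so a standard Gaussian tail bound plus a union bound over $j\in[p]$ yields $\P[\|\bW^T\epsilon\|_\infty > 2\sigma\sqrt{\log p}] \le 2p\cdot e^{-2\log p} = 2/p$ (after integrating out $\bW$, since the bound is uniform in $\bW$).

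The main work is controlling the compatibility constant event \eqref{eq:cond4}, i.e.\ showing $\kappa[5,J_*;\bW] \ge \tfrac{1}{16}\lambda_{\min}^{1/2}(\bSigma)/\lambda_{\max}^{1/2}(\bSigma)$ with high probability. I would first relate $\bW$ to the Gaussian design $\bX$ with covariance $\bSigma$: each column satisfies $\bW_{\bullet j} = \bX_{\bullet j}/\|\bX_{\bullet j}\|_2$, and $\|\bX_{\bullet j}\|_2^2/\bSigma_{jj}\sim\chi^2_n$, so with probability $\ge 1 - 2pe^{-cn}$ all normalization factors $\|\bX_{\bullet j}\|_2^2/n$ lie in a fixed interval around $\bSigma_{jj}\in[\lambda_{\min}(\bSigma),\lambda_{\max}(\bSigma)]$. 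The compatibility constant of $\bW$ is then comparable (up to these bounded factors) to that of $\bX/\sqrt n$. For the latter, I would invoke a standard restricted-eigenvalue / compatibility result for Gaussian designs — e.g.\ of the type in \cite{geer_condition} or the Gaussian version used in \cite{2012_Sun} — which states that when $k\log p = |\beta^*|_0\log p \lesssim n$ (which is exactly Assumption~\eqref{eq:hypo_max_dimension}, with the eigenvalue ratio absorbed), the empirical compatibility constant over the cone $\mathcal C(5,J_*)$ is, with probability $\ge 1 - e^{-c'n}$, at least a constant multiple of the population compatibility constant, which in turn is bounded below by $\lambda_{\min}^{1/2}(\bSigma)$. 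Tracking constants through the normalization step produces the claimed lower bound $\tfrac{1}{16}\lambda_{\min}^{1/2}(\bSigma)/\lambda_{\max}^{1/2}(\bSigma)$; the factor $1/16$ is the slack one builds in.

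Combining the three bounds by a union bound gives $\P[\mathcal A^c] \le 2e^{-cn} + 2/p + Cpe^{-c''n} \le C'[pe^{-C'''(n\wedge p)} + p^{-1}]$, which is the first claimed inequality. For the second inequality $\P[\mathcal A^c]\le C''/n$, note that under Assumption~\eqref{eq:hypo_max_dimension} we have $k\log p\lesssim n$ and $p\ge n$, so $p e^{-C'''n}\le p e^{-C'''\log p}\cdot e^{-C'''(n-\log p)}$ decays polynomially in $p$ hence is $O(1/n)$ since $p\ge n$; likewise $1/p\le 1/n$. The only genuine obstacle is citing or stating the Gaussian compatibility-constant bound with the correct dependence on $\lambda_{\min}/\lambda_{\max}$ and on $k\log p/n$; everything else is routine $\chi^2$ and Gaussian-maximum tail bounds. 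I would most likely reference the relevant lemma already used implicitly in the analysis of \cite{2012_Sun}, or include a short self-contained argument via Gaussian concentration of $\|\bX u\|_2$ uniformly over the (low-complexity) cone intersected with a sphere.
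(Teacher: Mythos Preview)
Your proposal is correct and follows essentially the same route as the paper: a union bound over the three defining events, with \eqref{eq:cond1} handled by a Gaussian-maximum tail bound, \eqref{eq:cond2} by the $\chi^2$ inequality of Lemma~\ref{lem:chi_2}, and \eqref{eq:cond4} by relating $\bW$ to $\bX/\sqrt{n}$ through the column normalizations and then invoking a restricted-eigenvalue result for Gaussian designs. The only cosmetic difference is that the paper cites Corollary~1 of \cite{10:RWG_restricted} for the compatibility-constant bound (rather than \cite{geer_condition} or \cite{2012_Sun}), and for the final step simply observes that Assumption~\eqref{eq:hypo_max_dimension} forces $\log p\le Cn$ with $C$ small enough that $pe^{-C'n}$ is $O(1/n)$.
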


\begin{proof}[Proof of Lemma \ref{lem:control_probability}]
We control the probability of each event defined by  (\ref{eq:cond1}), (\ref{eq:cond2}), and (\ref{eq:cond4}).
Conditionally to $\bW$, $\|\bW^T\epsilon\|_{\infty}/\sigma$ is distributed as a supremum of $p$ independent standard Gaussian variables. Applying an union bound over all variables $(\bW^T\epsilon)_i$, we derive that  
\[\P\left[\|\bW^T\epsilon\|_{\infty}\leq  \sigma \sqrt{4\log(p)}\right]\geq 1 - p^{-1}\ .\] 

Turning to \eqref{eq:cond2}, we see that $\|\epsilon\|_2^2/\sigma^2$ follows a $\chi^2$ distribution with $n$ degrees of freedom. By Lemma \ref{lem:chi_2}, we obtain 
 \[\P\left[\frac{8}{9}n\sigma^2\leq \|\epsilon\|_2^2 \leq  \frac{6}{5}n\sigma^2\right]\geq 1- 2 e^{-Cn}\ , \]
for some positive constant $C>0$.

Finally, we need to control the compatibility constant $\kappa\big[5,J_*;\bW \big]$. As the compatibility constant is larger than restricted eigenvalues, we can readily apply the results of \cite{10:RWG_restricted}. In particular, their Corollary 1 entails that, with probability larger than $1-c_1\exp[-c_2n]$ for some $c_{1}>0$ and $c_{2}>0$, 
\begin{eqnarray*}
\kappa\big[5,J_*;\bX/(\sqrt{n} \lambda_{\min}^{1/2}(\bSigma) )\big]\geq 1/8 \ ,
\end{eqnarray*}
as long as $|J_*| \log(p)< c_3   n$. The latter condition is satisfied by hypothesis \eqref{eq:hypo_max_dimension}. 
Coming back to the definition of $\bW$ and of the compatibility constant, we have
\[
\kappa\big[5,J_*;\bW\big] \geq \kappa\big[5,J_*;\bX/(\sqrt{n} \lambda_{\min}^{1/2}(\bSigma) )\big] \frac{\sqrt{n}\lambda_{\min}^{1/2}(\bSigma)}{\max\|\bX_{\bullet i}\|_2} \ . 
\]
Since, for all $i$, $\bSigma_{ii}$ is larger than 
$\lambda_{\min}(\bSigma)$, we can apply Lemma \ref{lem:chi_2} to get
\beq
\P\left[\min_{i=1,\ldots,p}\|\bX_{\bullet i}\|_2^2\geq n\lambda_{\min}(\bSigma)/2\right]\geq 1 - pe^{-Cn}
\eeq
for some positive constant $C>0$.  Finally, Assumption \eqref{eq:hypo_max_dimension} enforces that $\log(p)$ is small compared to $n$ so that $pe^{-Cn}$ is smaller than $C'/n$ for some positive constant $C'$.
\end{proof}

\subsection{Proof of Theorem \ref{prp:adaptation}}\label{sec:proof_prpe_adaptation}

Notice first that we always have
$
\left| \widehat{\eta}_T^D (\bOmega) - \eta \right| \leq \left| \widehat{\eta}^D (\bOmega) - \eta \right|
$
and $\left| \widehat{\eta}_T^D (\bOmega) - \eta \right| \leq 1$.
\\

\noindent
We first consider the case where $(\beta^*, \sigma)$ is arbitrary. The difference $\widehat{\eta}^A - \eta$ decomposes as
$$
\widehat{\eta}^A - \eta=\left(\widehat{\eta}^{SL} - \widehat{\eta}^D (\bOmega)\right)\mathbf{1}_{\widehat{\eta}^A= \widehat{\eta}^{SL}}+\left(\widehat{\eta}^D (\bOmega) - \eta\right)\mathbf{1}_{\widehat{\eta}^A= \widehat{\eta}^{SL}}+\left(\widehat{\eta}_T^D (\bOmega) - \eta\right)\mathbf{1}_{\widehat{\eta}^A= \widehat{\eta}_T^D (\bOmega)}\ .
$$
The difference $\widehat{\eta}^D -\eta$ is controlled thanks to Corollary \ref{cor:risk_dense}, whereas the difference $\widehat{\eta}^{SL} - \widehat{\eta}^D (\bOmega)$ is small when $\widehat{\eta}^A= \widehat{\eta}^{SL}$ by definition of $\widehat{\eta}^A$.
\beqn 
\E\left[\left(\widehat{\eta}^A - \eta\right)^2\right]&\leq& 3\E\left[\left(\widehat{\eta}^D (\bOmega) - \eta\right)^2\right]+3\E\left[\left(\widehat{\eta}_T^D (\bOmega) - \eta\right)^2\right]  +  3\E\left[\left(\widehat{\eta}^{SL} - \widehat{\eta}^D (\bOmega)\right)^2\mathbf{1}_{\widehat{\eta}^A= \widehat{\eta}^{SL}}\right]\\
&\leq &6\E\left[\left(\widehat{\eta}^D (\bOmega) - \eta\right)^2\right] + 3\E\left[\left(\widehat{\eta}^{SL} - \widehat{\eta}^D (\bOmega)\right)^2\mathbf{1}_{\widehat{\eta}^A= \widehat{\eta}^{SL}}\right]\\
& \leq &  C \frac{p}{n^2}+2c_{0}^{2} \frac{p\log(p)}{n^2}\ ,
\eeqn 
where we used the definition of $\widehat{\eta}^A= \widehat{\eta}^{SL}$ in the last line. 
Thus, considering that the risk of the estimator is bounded by $1$, it is possible to choose the numerical constant $C$ such that Theorem \ref{prp:adaptation} holds true if
$\beta^*$ is $k$-sparse with $k$ such that (\ref{eq:hypo_max_dimension}) does not hold.

Assume now that $\beta^*$ is $k$-sparse with $k$ such that (\ref{eq:hypo_max_dimension}) holds.
We start from the decomposition 
\[\E\left[\left(\widehat{\eta}^A - \eta\right)^2\right]= \E\left[\left(\widehat{\eta}^{SL} - \eta\right)^2\mathbf{1}_{\widehat{\eta}^A= \widehat{\eta}^{SL}}\right] + \E\left[\left(\widehat{\eta}_T^{D} (\bOmega)- \eta\right)^2\mathbf{1}_{\widehat{\eta}^A= \widehat{\eta}_T^{D} (\bOmega)}\right] \ . 
\]
In this sparse setting, the risk of $\widehat{\eta}^{SL}$ is minimax optimal but the risk of $\widehat{\eta}_T^{D} (\bOmega)$ is possibly quite large. We have to work around the event $\widehat{\eta}^A= \widehat{\eta}_T^{D} (\bOmega)$. This event can only be achieved if either we have $|\widehat{\eta}_T^{D} (\bOmega) - \eta|\geq c_0\sqrt{p\log(p)}/(2n)$ or if we have simultaneously $|\widehat{\eta}^{SL} - \eta|\geq c_0\sqrt{p\log(p)}/(2n)$ and 
$|\widehat{\eta}_T^{D} (\bOmega)- \eta|\leq c_0\sqrt{p\log(p)}/(2n)$. Under this last possibily, observe that 
$|\widehat{\eta}_T^D (\bOmega)-\eta|\leq |\widehat{\eta}^{SL}-\eta|$. Thus, we obtain
\beqn 
\E\left[\left(\widehat{\eta}^A - \eta\right)^2\right]
& \leq  & \E\left[\left(\widehat{\eta}^{SL} - \eta\right)^2\mathbf{1}_{\widehat{\eta}^A= \widehat{\eta}^{SL}}\right] 
+ \E\left[\left(\widehat{\eta}_T^{D} (\bOmega)- \eta\right)^2\mathbf{1}_{|\widehat{\eta}_T^{D} (\bOmega)- \eta|\geq c_0\sqrt{p\log(p)}/(2n)}\right] \\
&&+ \E\left[\left(\widehat{\eta}_T^{D} (\bOmega)- \eta\right)^2\mathbf{1}_{|\widehat{\eta}^{SL} - \eta|\geq c_0\sqrt{p\log(p)}/(2n)}
\mathbf{1}_{|\widehat{\eta}_T^{D} (\bOmega)- \eta|\leq c_0\sqrt{p\log(p)}/(2n)}\right]\\
&\leq& 2 \E\left[\left(\widehat{\eta}^{SL} - \eta\right)^2\right] + \P\left[|\widehat{\eta}_T^{D} (\bOmega)- \eta|\geq c_0\sqrt{p\log(p)}/(2n)\right]\ .
\eeqn 
The risk $\E[\left(\widehat{\eta}^{SL} - \eta\right)^2]$ is bounded thanks to Proposition \ref{thm:eta_gauss_Lasso} whereas the deviation inequality\[\P\left[|\widehat{\eta}^D (\bOmega)- \eta|\geq c_0\sqrt{p\log(p)}/(2n)\right]\] is smaller than $C_{2}/p$ by Theorem \ref{prp:risk_dense}. Together with the fact that $p\geq n$,  we have proved that when  $\beta^*$ is $k$-sparse with $k$ such that \ref{eq:hypo_max_dimension} holds,
\[
\E\left[\left(\widehat{\eta}^A - \eta\right)^2\right]
\leq  C \left[\frac{1}{n}+ \frac{k^2\log^2(p)}{n^2} \frac{\lambda^2_{\max}(\bSigma)}{\lambda^2_{\min}(\bSigma)}\right]\  .\]
Theorem \ref{prp:adaptation} follows.

\subsection{Analysis of the plug-in method}

\begin{proof}[Proof of Proposition \ref{prp:upper_risk_unknown_Sigma}]

We first note that the estimator $\widehat{\eta}^D(\widehat{\bOmega})$ is built using the following linear regression model
\[
 Y= \Big[\bX\widehat{\bOmega}^{1/2}\Big] \Big[ \widehat{\bOmega}^{-1/2}\beta^* \Big]+ \epsilon \ .
\]
It then follows from Theorem \ref{prp:risk_dense}  in Section \ref{sec:dense_regime} that 
 $\widehat{\eta}^D ( \widehat{\bOmega})$ is an 
 estimator of $\|\widehat{\bOmega}^{1/2}\bSigma \beta^*\|_2^2/\var{y_1}$. More precisely, we have 
 \[
\P\left[\big|\widehat{\eta}^D(\widehat{\bOmega})- \frac{\|\widehat{\bOmega}^{1/2}\bSigma\beta^*\|_2^2}{\var{y_1}}\big|\geq C_1 \|\bSigma\|_{op} \|\widehat{\bOmega}\|_{op}\frac{\sqrt{pt}}{n} \Big|\bX^{(2)}\right]\leq C_2 e^{-t}
\]
for all $t<n^{1/3}$. Decomposing the difference $\widehat{\eta}^D (\widehat{\bOmega})-\eta$ into
\beqn 
\widehat{\eta}^D(\widehat{\bOmega})- \eta = \widehat{\eta}^D(\widehat{\bOmega}) - \frac{\|\widehat{\bOmega}^{1/2}\bSigma\beta^*\|_2^2 }{\var{y_1}}+ \frac{\|\widehat{\bOmega}^{1/2}\bSigma\beta^*\|_2^2  }{\var{y_1}} - \eta \ ,
\eeqn
we only have to consider the second term
\beqn 
\Big|\frac{\|\widehat{\bOmega}^{1/2}\bSigma\beta^*\|_2^2 }{\var{y_1}} - \eta\Big|&=& 
\Big| \frac{\beta^{*T} \bSigma^{1/2}(\bSigma^{1/2}\widehat{\bOmega}\bSigma^{1/2} -\bI_p )\bSigma^{1/2}\beta^*}{\var{y_1}}\Big|\\
&\leq & \|\bSigma^{1/2}\widehat{\bOmega}\bSigma^{1/2} -\bI_p\|_{op}\\&\leq& \|\bSigma\|_{op} \|\widehat{\bOmega}-\bOmega\|_{op}
\eeqn
where we used in the second line that $\beta^{*T} \bSigma \beta^* / \var{y_1} \leq 1$.
\end{proof}

\begin{proof}[Proof of Corollary \ref{cor:dense-plug_in}]
 Define the event $\cB$ such that inequality \eqref{eq:control_tilde_omega} is true. Assume first that 
 \beq\label{eq:condition_consistensy}
 CM^2M_{1}^{ 2}\frac{\sqrt{p}}{n}\leq (2M_1)^{-1} \ ,
 \eeq
 where $C$ is the numerical constant in \eqref{eq:control_tilde_omega}.
 \beqn 
 \E\left[\left\{\widehat{\eta}_T^D(\widehat{\bOmega}_{CL})-\eta\right\}^2\right]& =&  \E\left[\left\{\widehat{\eta}_T^D(\widehat{\bOmega}_{CL})-\eta\right\}^2\mathbf{1}_{\cB}\right]+ \E\left[\left\{\widehat{\eta}_T^D(\widehat{\bOmega}_{CL})-\eta\right\}^2\mathbf{1}_{\cB^c}\right]\\
 &\leq &\E\left[\left\{\widehat{\eta}_T(\widehat{\bOmega}_{CL})-\eta\right\}^2\mathbf{1}_{\cB}\right] +\P[\cB^c]\ , 
 \eeqn 
 where we used that $\widehat{\eta}_T^D(\widehat{\bOmega}_{CL})$ belongs to $[0,1]$.
 By Lemma \ref{lem:estimation_omega}, $\P[\cB^c]\leq 4/p$. Under event $\cB$, $\lambda_{\min}(\widehat{\bOmega}_{CL})\geq \lambda_{\min}(\bOmega)- \|\widehat{\bOmega}_{CL}- \bOmega\|_{op}\geq (2M_1)^{-1}$ and $\widehat{\bOmega}_{CL}$ is therefore non-singular. Plugging \eqref{eq:control_tilde_omega} in Proposition \ref{prp:upper_risk_unknown_Sigma} and integrating the deviation bound with respect to $t>0$, we get that for some numerical constant $C$,
 \[
 \E\left[\left\{\widehat{\eta}_T^D(\widehat{\bOmega}_{CL})-\eta\right\}^2\right]\leq  C M^{4}M_{1}^{ 6}\frac{p}{n^2}\ .
 \]
If now,  \eqref{eq:condition_consistensy} is not satisfied, we just use that 
since the thresholded estimator is $\widehat{\eta}_T^D(\widehat{\bOmega}_{CL})$ belongs to $[0,1]$, the risk is always smaller than $1$, which is smaller than $CM^2M_{1}^{3}\frac{\sqrt{p}}{n}$ . 

 \end{proof}
 \begin{proof}[Proof of Corollary \ref{cor:plug_in}]
 
 In order to show \eqref{eq:risk_A_CL}, we follow the same steps as for proving Proposition \ref{prp:adaptation}, the only difference being that we need to prove that $\P[\big|\widetilde{\eta}_T^D(\widehat{\bOmega}_{CL}) -\eta\big|\geq c_0(M,M_1)\sqrt{p\log p}/(2n)]$ is larger than $C/p$ for some $C>0$. As above, we consider two cases whether \eqref{eq:condition_consistensy} is satisfied or not. If Condition \eqref{eq:condition_consistensy} is satisfied, we use Proposition \ref{prp:upper_risk_unknown_Sigma} with $t=\log(p)$ and the event $\cB$ to prove that 
 \[
 \P\left[\big|\widehat{\eta}_T^D(\widehat{\bOmega})- \eta \big|\geq  2C_1M_1^2 \frac{\sqrt{p\log(p)}}{n}+ C_{3}  M^2 M_1^{3} \frac{\sqrt{p}}{n} \right]\leq \frac{C_2 +4 }{p} \ .
 \]
 If Condition \eqref{eq:condition_consistensy} is not satisfied, we again use that
 \[|\widehat{\eta}_T^D(\widehat{\bOmega}) -\eta|\leq 1 \leq 2CM^2M_1^{ 3}\frac{\sqrt{p}}{n} . \]

\end{proof}

\section{Proofs of the minimax lower bounds}\label{sec:proof_minimax_lower}

\subsection{Proof of Proposition \ref{prp:lower_minimax}}
\label{sec:lower}

\subsubsection{Proof of the parametric rate $R^*(k)\geq R^*(1)\geq C n^{-1}$}

First, we prove that $\eta$ cannot be estimated faster than the parametric rate $n^{-1/2}$. 
Fix $\sigma=1$, $\beta^*_1= (1,0,\ldots, 0)^T$ and $\beta^*_2= (1+ n^{-1/2},0,\ldots, 0)^T$. Then $\eta_1=\eta(\beta_1^*,\sigma)= 1/2$ and 
$\eta_2=\eta(\beta_2^*,\sigma)\geq 1/2+n^{-1/2}/4$. Denoting $\bbK(\P_{\beta_1^*,\sigma}; \P_{\beta_2^*,\sigma})$ the Kullback-Leibler divergence between $\P_{\beta_1^*,\sigma}$ and $\P_{\beta_2^*,\sigma}$, we have 
\[\bbK(\P_{\beta_1^*,\sigma}; \P_{\beta_2^*,\sigma})= \E\left[\frac{\|\bX(\beta_1^*-\beta_2^*)\|_2^2}{2}\right]= \frac{1}{2}\ . \]
Using Pinsker's inequality, we provide a lower bound of $R^*(1)$ in terms of $\bbK(\P_{\beta_1^*,\sigma}; \P_{\beta_2^*,\sigma})$ and $(\eta_1-\eta_2)^2$ as follows:
\beqn 
R^*(1)&\geq& \inf_{\widehat{\eta}}\E_{\beta_1^*,\sigma}\left[(\widehat{\eta} -\eta_1)^2\right]\bigvee \E_{\beta_2^*,\sigma}\left[(\widehat{\eta} -\eta_2)^2\right]\\
&\geq & \frac{(\eta_2-\eta_1)^2}{4}  \inf_{\widehat{\eta}} \P_{\beta_1^*,\sigma}\left[\widehat{\eta}\geq (\eta_1+\eta_2)/2   \right]\bigvee \P_{\beta_2^*,\sigma}\left[\widehat{\eta}\leq (\eta_1+\eta_2)/2\right]\\
&\geq & \frac{(\eta_2-\eta_1)^2}{8}  \inf_{\cA} \P_{\beta_1^*,\sigma}(\cA) + \P_{\beta_2^*,\sigma}(\cA^c)\ ,\quad \quad \text{where $\cA$ is any measurable event}\\
&\geq & \frac{(\eta_2-\eta_1)^2}{8} \left[ 1- \|\P_{\beta_1^*,\sigma} - \P_{\beta_2^*,\sigma}\|_{TV}\right]\\
&\geq & \frac{(\eta_2-\eta_1)^2}{8} \left[ 1- 2^{-1/2}\bbK^{1/2}(\P_{\beta_1^*,\sigma}; \P_{\beta_2^*,\sigma})\right]\ , \quad \quad \text{ by Pinsker's inequality}\\
&\geq & \frac{(\eta_2-\eta_1)^2}{16}\geq \frac{1}{16^2n}\ ,
\eeqn
which concludes the proof.

\subsubsection{Proof of $R^*(k)\geq C \left\{\left[\frac{k}{n}\log\left(1+ \frac{p}{k^2}\vee \sqrt{\frac{p}{k^2}}\right)\right]^{2}\wedge 1 \right\}$}

In this proof, we follow the standard strategy of reducing the heritability estimation problem to a detection problem, thereby taking advantage on available bounds of \cite{2010_AS_Verzelen}. We could simply derive Proposition \ref{prp:lower_minimax} from Theorem 4.3 in \cite{2010_AS_Verzelen}, but we prefer to detail the arguments as a first step towards the minimax lower bounds for adaptation problems. 

\medskip 
\noindent
Denote $\P_0$ the distribution of ($Y,\bX$) when $\beta^*=0$ and $\sigma=1$. Let $\rho>0$ be a positive quantity that will be fixed later. Also, denote $\cB$ the collection of all vectors $\beta\in \mathbb{R}^p$  with exactly $k$ non-zero components that are either equal to $\tfrac{\rho}{[(1+\rho^2)k]^{1/2}}$ or $-\tfrac{\rho}{[(1+\rho^2)k]^{1/2}}$. Defining $\sigma_{\rho}^{2}:= (1+\rho^2)^{-1}$, we obtain, for all $\beta\in \cB$, $\eta(\beta,\sigma_{\rho})= \rho^2/(1+\rho^2)$. Following the beaten path of Le Cam's approach, we consider  $\mu$ the uniform measure on $\cB$ and denote $\mathbf{P}_{\mu}$ the mixture probability measure 
\beq \label{eq:definition_pmu}
\mathbf{P}_{\mu}= \int_{\cB}\mathbb{P}_{\beta,\sigma_{\rho}}\, \mu(d\beta)
\eeq

\bigskip 

Let $\widehat{\eta}$ be any estimator of $\eta$. The minimax risk $R^*(k)$ is obviously lower bounded as follows:
\beqn 
R^*(k)&\geq&  \E_0\left[\widehat{\eta}^2\right]\bigvee \vee_{\beta\in \cB}\E_{\beta,\sigma_\rho}\left[\left(\widehat{\eta}- \frac{\rho^2}{1+\rho^2}\right)^2\right]\\
&\geq &\frac{1}{2}\left[\E_0\left[\widehat{\eta}^2\right]+ \mathbf{E}_{\mu}\left[\left(\widehat{\eta}- \frac{\rho^2}{1+\rho^2}\right)^2\right]\right]\\
&\geq & \frac{\rho^4}{8(1+\rho^2)^2}\left[\P_0\left[\widehat{\eta}>\frac{\rho^2}{2(1+\rho^2)}\right]+ \mathbf{P}_\mu\left[\widehat{\eta}\leq \frac{\rho^2}{2(1+\rho^2)}\right]\right]\ .
\eeqn 
Defining the test statistic $\widehat{T}:=\mathbf{1}\{\widehat{\eta}>\rho^2/[2(1+\rho^2)]\}$, one recognizes in the bound above the sum of type I and type II errors of the test $\P_0$ versus $\mathbf{P}_{\mu}$. We arrive at 
\begin{eqnarray}
  \nonumber 
R^*(k)&\geq& \frac{\rho^4}{8(1+\rho^2)^2}\left[\P_0[\widehat{T}=1] + \mathbf{P}_{\mu}[\widehat{T}=0]\right]\\ \nonumber 
&\geq & \frac{\rho^4}{8(1+\rho^2)^2}\left[1 - |\P_0(\widehat{T}=0) - \mathbf{P}_{\mu}(\widehat{T}=0)|\right]\\ \nonumber 
&\geq & \frac{\rho^4}{8(1+\rho^2)^2}\left[1 - \E_0|L_{\mu}-1|\right]\quad\quad \text{where $\mathbb{L}_{\mu}=\frac{d\mathbf{P}_{\mu}}{d\P_0}$ } \\ \label{eq:lower_proba_minimax}
&\geq & \frac{\rho^4}{8(1+\rho^2)^2}\left[1 -  \left(\chi^2(\mathbf{P}_{\mu},\P_0)\right)^{1/2} \right]\ , \quad \quad \text{(by Cauchy-Schwarz inequality)}
\end{eqnarray}
where $\chi^2(\mathbf{P}_{\mu},\P_0)=\E_0[(L_{\mu}-1)^2]$ stands for the $\chi^2$ distance between probability distributions. As a consequence, we only need to bound the $\chi^2$ distance between $\mathbf{P}_{\mu}$ and $\P_0$. Fortunately, this distance has been controlled in \cite{2010_AS_Verzelen} (take  $v=1$, $\var{y}=1$ in \cite[p.741, line 14]{2010_AS_Verzelen} and note that $k\lambda^2=\rho^2/(1+\rho^2)$).

\begin{lem}[\cite{2010_AS_Verzelen}] \label{lem:chi2_upper}
We have
\beq\label{eq:uper_bound_chi2}
 \chi^2(\mathbf{P}_{\mu},\P_0)\leq  \exp\left[k\log\left(1+ \frac{k}{p}\left(\cosh\left(\frac{n\rho^2}{k}\right) - 1\right)\right)	\right] - \frac{1}{2}\ . 
 \eeq
 \end{lem}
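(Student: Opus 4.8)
The plan is to control $\chi^2(\mathbf{P}_\mu,\P_0)$ through the second moment of the likelihood ratio $\mathbb{L}_\mu=d\mathbf{P}_\mu/d\P_0$, using the identity $\chi^2(\mathbf{P}_\mu,\P_0)=\E_0[\mathbb{L}_\mu^2]-1$. Expanding the mixture and squaring, one gets $\E_0[\mathbb{L}_\mu^2]=\int\!\!\int \E_0\big[(d\P_{\beta,\sigma_\rho}/d\P_0)\,(d\P_{\beta',\sigma_\rho}/d\P_0)\big]\,\mu(d\beta)\,\mu(d\beta')$, where $\beta$ and $\beta'$ are two independent draws from the uniform measure $\mu$ on $\cB$. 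It thus suffices to evaluate the inner cross-term for fixed $\beta,\beta'$ and then average over the combinatorics of $\cB$.

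The heart of the argument is this cross-term. Since the design $\bX$ has the same law under $\P_0$ and under every $\P_{\beta,\sigma_\rho}$ (its rows are i.i.d.\ $\cN(0,\bI_p)$), the two Radon--Nikodym derivatives only involve the conditional law of $Y$ given $\bX$. Conditioning on $\bX$, I would first integrate the product of the Gaussian densities of $\cN(\bX\beta,\sigma_\rho^2\bI_n)$ and $\cN(\bX\beta',\sigma_\rho^2\bI_n)$ against the reference density of $\cN(0,\bI_n)$; completing the square in $Y$ turns this into an explicit constant times the exponential of a quadratic form in the Gaussian vectors $\bX\beta$ and $\bX\beta'$. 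Taking then the expectation over $\bX$ is the moment generating function of a quadratic form in i.i.d.\ standard normals, i.e.\ a $2\times2$ determinant (equivalently a $\chi^2$ m.g.f.). Using that every vector of $\cB$ has the same squared norm $\rho^2/(1+\rho^2)$ and that $\var{y_1}=1$ under all the hypotheses, the cross-term collapses to a closed form depending on $(\beta,\beta')$ only through $w:=\langle\xi,\xi'\rangle$, the inner product of the corresponding $\{-1,0,1\}$-valued sign patterns (so $\beta=\rho\,(1+\rho^2)^{-1/2}k^{-1/2}\xi$); it works out to $(1-c_\rho w)^{-n}$ with $c_\rho:=\rho^2/(k(1+\rho^2))$, which is well defined since $c_\rho|w|\le c_\rho k<1$.

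It then remains to bound $\E_{\xi,\xi'}\big[(1-c_\rho\langle\xi,\xi'\rangle)^{-n}\big]$. To factorize the $n$-th power over coordinates, I would use $(1-x)^{-n}=\E_G[e^{xG}]$ with $G\sim\mathrm{Gamma}(n,1)$, turning the quantity into $\E_G\,\E_{\xi,\xi'}\big[\prod_j e^{c_\rho G\,\xi_j\xi_j'}\big]$. Conditioning on the two supports and averaging over the independent uniform signs, the overlapping coordinates contribute $\cosh(c_\rho G)^{m}$ with $m$ the overlap size --- this is the origin of the hyperbolic cosine. The overlap $m$ is hypergeometric, and Hoeffding's comparison between sampling without and with replacement, applied to the convex map $z\mapsto z^{m}$, gives $\E_m[\cosh(c_\rho G)^{m}]\le\big(1+\tfrac kp(\cosh(c_\rho G)-1)\big)^{k}$, producing the announced $\exp\!\big[k\log(1+\tfrac kp(\cosh(\cdot)-1))\big]$ shape. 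Finally, to move from $c_\rho G$ to $n\rho^2/k$ inside the cosine, I would split on $\{G\le (1+\rho^2)n\}$: on this event monotonicity and the identity $c_\rho(1+\rho^2)n=n\rho^2/k$ give exactly the stated main term, while on the complementary (exponentially unlikely) event a crude Gamma tail bound together with Cauchy--Schwarz absorbs the remainder into the additive $1/2$ slack present in the statement.

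The main obstacle is the Gaussian/Wishart bookkeeping that delivers the closed form $(1-c_\rho w)^{-n}$, and then matching the last $G$-expectation to the precise right-hand side with its $\cosh$ and its $-1/2$. All of this is exactly the computation carried out in~\cite{2010_AS_Verzelen} --- specialized here by setting, in the notation there, $v=1$, $\var{y_1}=1$ and $k\lambda^2=\rho^2/(1+\rho^2)$ --- so the proof may alternatively just invoke that reference.
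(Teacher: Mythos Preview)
The paper does not prove this lemma at all: it simply invokes \cite{2010_AS_Verzelen} with the parameter identification $v=1$, $\var{y_1}=1$, $k\lambda^2=\rho^2/(1+\rho^2)$, which is exactly the citation you give in your last paragraph. At the level of what the paper actually contains, your proposal is therefore identical.

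Your additional sketch is correct up to and including the Hoeffding step (in particular the closed form $\E_0[L_\beta L_{\beta'}]=(1-\langle\beta,\beta'\rangle)^{-n}$ is right, and the Gamma linearization followed by averaging over signs and then over supports yields $\E_G\big[(1+\tfrac{k}{p}(\cosh(c_\rho G)-1))^k\big]$ as claimed). The only soft spot is the very last step, where you propose to split on $\{G\le (1+\rho^2)n\}$ and absorb the complementary contribution into the $1/2$ slack ``via a crude Gamma tail bound together with Cauchy--Schwarz''. This is not entirely innocuous: when $\rho$ is small, $\P(G>(1+\rho^2)n)$ is close to $1/2$ rather than exponentially small, and on that event the integrand exceeds the target $T$, so a naive Cauchy--Schwarz argument does not directly give a remainder bounded by $1/2$. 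A sharper splitting (or a different elementary inequality replacing $(1-x)^{-n}$) is needed to make this uniform in the parameters; since you correctly flag this as a technical obstacle and defer to \cite{2010_AS_Verzelen} for the details, the proposal is acceptable as written.
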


Let us fix $\rho^2$ in such a way that
\beq\label{eq:definition_rho}
\frac{n\rho^2}{k} = \log\left[1+ \frac{p}{k^2}\log(5/4)+ \sqrt{(1+\frac{p}{k^2}\log(5/4))^2-1}\right]\ .
\eeq
Using the classical equality $\cosh(\log(1+x+\sqrt{x^2+2x}))= 1+ x$ for $x\geq0$, we arrive at 
\[\chi^2(\mathbf{P}_{\mu},\P_0)\leq \exp\left[k\log(1+ \log(5/4)/k)\right]-1/2\leq 3/4\ ,\]
which, together with \eqref{eq:lower_proba_minimax}, implies
\[ R^*(k)\geq \frac{\rho^4}{8(1+\rho^2)^2}(1-(3/4)^{1/2})\ . \]
Since $\log(1+ux)\geq u\log(1+x)$ for any $u\in (0,1)$ and $x>0$, we derive from \eqref{eq:definition_rho} that 
\[\rho^2\geq \log\left(\frac54\right) \left[\frac{k}{n}\log\left(1+ \frac{p}{k^2}\vee \sqrt{\frac{p}{k^2}}\right)\right]^{2} \ .\]
But 
$$
\frac{\rho^2}{1+\rho^2} \geq \frac{\rho^2}{2} \wedge 1\;,
$$
which concludes the proof.

\subsection{Proof of Proposition \ref{prp:lower_adaptation}}

Define the quantity  $\rho>0$ by 
\beq\label{eq:def_rho}
\rho^2:=  \frac{a\sqrt{p\log p}}{4n}\ .
\eeq
We consider $\mu$, $\mathbf{P}_{\mu}$, $\mathbf{E}_{\mu}$ as introduced in the proof of Proposition \ref{prp:lower_minimax}.  
\\
Let  $\widehat{\eta}$ be a given estimator. Define
$$
R:=n\sqrt{\frac{n}{p}}\mathbf{E}_{0}[\widehat{\eta}^{2}] + \frac{n^{2}}{p\log p}\mathbf{E}_{\mu}\left[\left(\widehat{\eta}-\frac{\rho^{2}}{1+\rho^{2}}\right)^{2}\right].
$$
Then,
$$
\frac{R(\widehat{\eta},1)}{\frac{1}{n}\sqrt{\frac{p}{n}}} + \frac{R(\widehat{\eta},k)}{\frac{p\log p}{n^{2}}}\geq R.
$$
Now define the event $\cA(\widehat{\eta}):= \{\widehat{\eta}\geq \rho^2/[2(1+\rho^2)]\}$. Then, one has
\[
n\sqrt{\frac{n}{p}}\mathbf{E}_{0}[\widehat{\eta}^{2}] \geq n\sqrt{\frac{n}{p}}\P_0[\cA(\widehat{\eta})]\frac{\rho^4}{4(1+\rho^2)^2}   \geq \frac{ a^{2}}{4^{4}}  \sqrt{\frac{p}{n}}\log p \P_0[\cA(\widehat{\eta})]\ .                                                                                                                                                                                                                                                      \]
 Similarly, $\mathbf{E}_{\mu}(\widehat{\eta}- \rho^2/(1+\rho^2))^2\geq \mathbf{P}_{\mu}(\cA^c(\widehat{\eta})) \rho^4/4(1+\rho^2)^2 \geq \frac{ a^{2}}{4^{4}}  \frac{p\log p}{n^{2}}\mathbf{P}_{\mu}(\cA^c(\widehat{\eta}))$ so that 
 \[
  R\geq \frac{ a^{2}}{4^{4}}  \inf_{\cA} \left\{\P_0\left[\mathcal{A}\right]\sqrt{\frac{p}{n}}\log p  + \mathbf{P}_{\mu}\left[\mathcal{A}^c\right]\right\} \ \ , 
 \]
were the infimum is taken over all measurable events $\mathcal{A}$. Restricting the events $\cA$ to have  small probability, we arrive at 
\beq \label{eq:lower_R}
  R\geq  \frac{ a^{2}}{4^{4}} \left[1 \bigwedge \inf_{\cA,\ \P_0[\cA]\leq  \sqrt{n}/(\sqrt{p}\log p) }\mathbf{P}_{\mu}\left[\mathcal{A}^c\right]\right\}\ , 
\eeq 
so that it suffices to obtain a uniform lower bound $\mathbf{P}_{\mu}\left[\mathcal{A}^c\right]$ over events $\cA$ of small $\P_0$-probability. 
\begin{eqnarray}\nonumber
 \mathbf{P}_{\mu}(A^c)&\geq& 1- \P_0(\mathcal{A}) - |\mathbf{P}_{\mu}(\mathcal{A})-\P_0(\mathcal{A})|\\ \nonumber
 &\geq & 1- \P_0(\mathcal{A}) - |\mathbf{E}_{0} \left[(\mathbb{L}_{\mu}-1)\mathbf{1}_{\mathcal{A}}\right]|\quad\quad \text{where $\mathbb{L}_{\mu}=\frac{d\mathbf{P}_{\mu}}{d\P_0}$ } \\ \label{eq:lower_proba_ac}
 &\geq & 1- \P_0(\mathcal{A}) - \left(\P_0[\mathcal{A}]\chi^2(\mathbf{P}_{\mu},\P_0)\right)^{1/2}\ . \quad \quad \text{(by Cauchy-Schwarz inequality)}
\end{eqnarray}
Define $x= \frac{ap}{2k^2}\log(p)$. 
Since $\sqrt{p\log p}\leq k$, 
and since $\log(1+ux)\geq u\log(1+x)$ for any $u\in (0,1)$ and $x>0$, 
we have 
 \[\frac{n\rho^2}{k}\leq \log\left[1+ x \vee \sqrt{x} \right] \leq  \log\left[ 1+x+ \sqrt{2x+x^2}\right]\ . \]
 Together with Lemma \ref{lem:chi2_upper} and the classical identity $\cosh[\log(1+u+ \sqrt{2u+u^2})]= 1+ u$ for all $u>0$, we arrive at
\beq
\frac{\sqrt{n}}{\sqrt{p}\log p} \chi^2(\mathbf{P}_{\mu},\P_0)\leq \frac{\sqrt{n}}{\sqrt{p}\log p }\exp\left[\frac{k^2}{p} x\right] \leq \frac{\sqrt{n}}{\sqrt{p}\log p }p^{a/2}= \sqrt{\frac{n}{p^{1-a}}}\frac{1}{\log p}\ .
\eeq
Coming back to the lower bound \eqref{eq:lower_proba_ac}, we conclude that, for any event $\cA$ satisfying $\P_0(\mathcal{A})\leq   \sqrt{n}/(\sqrt{p}\log p)$,  we have 
$$\mathbf{P}_{\mu}(A^c)\geq 1-\sqrt{\frac{n}{p}}\frac{1}{\log p}-\left(\sqrt{\frac{n}{p^{1-a}}}\frac{1}{\log p}\right)^{1/2}.$$ 
Plugging this result in \eqref{eq:lower_R} and using the fact that $p^{1-a}(\log p)^{2}\geq 16 n$ leads to the desired result.

\subsection{Proof of Theorem \ref{thrm:lower_minimax_unknown_variance}}

\subsubsection{General arguments}
Suppose that Condition \eqref{eq:assumption_minimax_lower_unknown_variance} is satisfied for some $\varsigma>0$. Define $r$ be the smallest integer such that $\varsigma \geq 1/(2r)$ so that we can assume henceforth that $n^{1+1/(2r)}/p\to 0$.

In this proof, we follow the same general approach as in the other minimax lower bounds, that is we define two mixture distributions $\mathbf{P}_0$ and $\mathbf{P}_1$
\[
\mathbf{P}_0:= \int \P_{\beta,\sigma_0,\bSigma}\mu_0(d\beta, d\bSigma)\ ,\quad 
\mathbf{P}_1:= \int \P_{\beta,\sigma_1,\bSigma}\mu_1(d\beta, d\bSigma)\ , 
\]
in such a way that $\mathbf{P}_0$ and $\mathbf{P}_1$ are almost indistinguishable and at the same time the function $\eta(\beta,\sigma)$ takes different values for parameters in the support of the prior distribution $\mu_0$ and parameters in the support of the prior distribution $\mu_1$. The main difference with previous proofs lies in the fact that $\mu_0$ and $\mu_1$ are now prior probabilities on both the regression coefficient $\beta$ and the covariance matrix $\bSigma$.

\medskip 

Let $\alpha_{0}=(\alpha_{i,0})$, $\gamma_0=(\gamma_{i,0})$, $i=1,\ldots r$  and $\alpha_1= (\alpha_{i,1})$, $\gamma_1=(\gamma_{i,1})$, $i=0,\ldots r$ be positive parameters whose exact values will be fixed later. We emphasize that the values of these parameters will only depend on $r$ and not on $n$ and $p$. 
Given a positive integer $q$ and $\alpha=(\alpha_1,\ldots,\alpha_q)$ whose coordinates $\alpha_j$ are positive, define the probability distribution $\pi_{\alpha}$ on vectors of $\mathbb{R}^{ q \times p}$ whose density is proportional to $(|\bI_p+ \sum_{i=1}^q \alpha_i x_ix_i^T|)^{-n/2}e^{-\sum_{i=1}^q p\|x_i\|_2^2/2}$ for $x_i\in \mathbb{R}^p$, $i=1,\ldots, q$.

\medskip 
The distribution $\mu_0$ is defined as follows. Let $(v_{i,0})$, $i=1,\ldots, r$ be independently sampled  according to the distribution $\pi_{\alpha_{0}}$. Then, conditionally  to $(v_{1,0},\ldots, v_{r,0})$, $\beta$ and $\bSigma$ are fixed to the following values 
\beq\label{eq:definition_sigma_0}
\beta=\sum_{i=1}^r \gamma_{i,0} v_{i,0}\, ;\quad  \quad \bSigma^{-1}= \bI_p + \sum_{i=1}^{r} \alpha_{i,0}v_{i,0}v_{i,0}^T.
\eeq
Similarly, under $\mu_1$, 
\beq
\beta=\sum_{i=0}^r \gamma_{i,1} v_{i,1}\, ;\quad  \quad \bSigma^{-1}= \bI_p + \sum_{i=0}^{r} \alpha_{i,1}v_{i,1}v_{i,1}^T\label{eq:definition_sigma_1}
\eeq
where the vectors  $(v_{i,1})$, $i=1,\ldots, r$ are independently sampled  according to the distribution $\pi_{\alpha_{1}}$. 
%
Finally, the noise variances are fixed to the following values.  
\begin{equation}\label{eq:choice_sigma}
\sigma^2_0=3/2\ , \quad \quad \sigma_1^2=1/2.
\end{equation}
\noindent

To prove that $\mathbf{P}_0$ and $\mathbf{P}_1$ are almost indistinguishable we will consider separately the marginal distribution of $\mathbf{X}$ and the conditional distribution of $Y$ given $\mathbf{X}$. We will see that the centered Gaussian distribution of $\bX$ under both $\mathbf{P}_0$ and $\mathbf{P}_1$ are indistinguishable from the standard normal distribution when $n=o(p)$, see Lemma \ref{lem:upper_total_variation_sigma} below.

Let us now choose the parameters  $\gamma_{i,j}$ and $\alpha_{i,j}$ in such a way that the conditional distribution of $Y$ given $\mathbf{X}$ under  $\mathbf{P}_0$ is  indistinguishable from that under $\mathbf{P}_1$ when
$n^{1+1/(2r)}=o(p)$.
We first consider a truncated moment problem. 

\begin{lem}\label{lem:truncated_moment}
There exist two discrete positive measures   $\rho_0= \sum_{i=1}^r \xi_{i,0} \delta_{\tau_{i,0}}$ on $\rho_1= \sum_{i=0}^r \xi_{i,1}\delta_{\tau_{i,1}}$ supported on $(0,1)$ such that 
\begin{enumerate}
 \item The atoms $\tau_{i,j}$ for $j=0,1$ and $i=1,\ldots, r$ lie in $[1/5,4/5]$, whereas the first atom $\tau_{0,1}$ of $\rho_1$ is allowed to be smaller.
 \item The total mass of $\rho_0$ equals $1/2$, whereas the total mass of $\rho_1$ is $3/2$.
 \item For all $q=1,\ldots, 2r-1$, the $q$-th moment of $\rho_0$ and $\rho_1$ coincide
 \[\int x^q d\rho_0 = \int x^q d\rho_1= \int_{1/4}^{3/4} x^{q}dx:= m_q\ .\]
\end{enumerate} 
\end{lem}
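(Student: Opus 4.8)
The plan is to build $\rho_0$ and $\rho_1$ as perturbations of the reference measure $\lambda$ with density $\mathbf 1_{[1/4,3/4]}$, exploiting the classical theory of Gaussian quadrature / the truncated Hausdorff moment problem. Concretely, let $\tau_{1,0},\dots,\tau_{r,0}$ be the $r$ nodes and $\xi_{1,0},\dots,\xi_{r,0}$ the weights of the $r$-point Gauss quadrature rule for $\lambda$ (built from the orthogonal polynomials of $\lambda$, a shifted Legendre family). By the defining property of Gauss quadrature, this rule integrates all polynomials of degree $\le 2r-1$ exactly, so $\int x^q\,d\rho_0=\int_{1/4}^{3/4}x^q\,dx=m_q$ for $q=0,1,\dots,2r-1$; in particular the total mass is $m_0=1/2$, giving item~3 and the first half of item~2. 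The Gauss nodes lie strictly inside $(1/4,3/4)\subset[1/5,4/5]$, and the Gauss weights are strictly positive, so $\rho_0$ is a positive measure meeting item~1. I would state these facts with a reference (e.g.\ a standard text on orthogonal polynomials / quadrature) rather than re-derive them.

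For $\rho_1$ I want one extra atom and total mass $3/2$. The idea is to keep the same high-order moments but add an atom $\tau_{0,1}$ near $0$ carrying the surplus mass. Set $\xi_{0,1}$ to be the excess mass $1$ — wait, more carefully: I want $\rho_1=\xi_{0,1}\delta_{\tau_{0,1}}+\sum_{i=1}^r\xi_{i,1}\delta_{\tau_{i,1}}$ with $\int x^q\,d\rho_1=m_q$ for $q=1,\dots,2r-1$ but $\int d\rho_1 = 3/2$, i.e.\ the $0$-th moment is deliberately off. The clean way: the measure $\rho_1-\xi_{0,1}\delta_{\tau_{0,1}}$ should have the same moments of order $1,\dots,2r-1$ as $\lambda$, while its $0$-th moment is $3/2-\xi_{0,1}$. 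Equivalently, $\nu:=\rho_1-\xi_{0,1}\delta_{\tau_{0,1}}$ and $\lambda$ agree on moments $1,\dots,2r-1$; I can realize $\nu$ as the $(r)$-point Gauss-type quadrature for $\lambda$ with one node pre-assigned at $\tau_{0,1}$ (a Gauss–Radau rule with the fixed node placed at a point I choose near $0$), then rescale. Let me instead argue directly: fix $\tau_{0,1}\in(0,1/5)$ small. Consider the signed measure $m_q$-matching problem for $q=1,\dots,2r-1$ among measures supported on $\{\tau_{0,1}\}\cup[1/5,4/5]$; a Radau quadrature rule for $\lambda$ with prescribed node $\tau_{0,1}$ has $r+1$ nodes (the prescribed one plus $r$ free ones in $(1/4,3/4)$), positive weights, and integrates degree $\le 2r$ polynomials exactly — in particular moments $0,\dots,2r-1$. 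That gives total mass $1/2$, not $3/2$. To inflate the mass to $3/2$ without disturbing moments $1,\dots,2r-1$, I replace the weight at $\tau_{0,1}$: if I take the Radau rule and change only $\xi_{0,1}$, I change the $q$-th moment by $(\Delta\xi_{0,1})\tau_{0,1}^q$; to keep moments $1,\dots,2r-1$ fixed I instead simultaneously adjust the interior nodes/weights. The cleanest formulation: the set of positive measures on $\{\tau_{0,1}\}\cup[1/5,4/5]$ matching prescribed moments $m_1,\dots,m_{2r-1}$ and with free total mass forms a convex set, and one checks (Markov–Krein / the structure of the truncated moment problem, via the positive-definiteness of the relevant Hankel matrices, which holds because $\lambda$ has infinitely many points of support) that both total mass $1/2$ and total mass $3/2$ are attainable — indeed the attainable total masses form an interval containing $1/2$, and by sending the prescribed node $\tau_{0,1}\to 0$ the upper end of that interval $\to\infty$, so for $\tau_{0,1}$ small enough $3/2$ is attainable with an $(r+1)$-atom positive representing measure whose interior atoms stay in $[1/5,4/5]$. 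I would fix $r$ first, then choose $\tau_{0,1}$ accordingly; since $r=r(\varsigma)$ is a constant, all of this is a finite, $n,p$-independent construction, as required.

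The main obstacle is the second bullet of the $\rho_1$ construction: making the extra mass ($3/2-1/2=1$, or whatever is needed) fit while preserving all moments $1,\dots,2r-1$ and keeping the representing measure positive with interior atoms confined to $[1/5,4/5]$. I expect to handle it via the following explicit device rather than abstract convexity: start from the $\lambda$-Gauss rule $\sum_{i=1}^r\xi_{i,0}\delta_{\tau_{i,0}}$ (mass $1/2$, matching moments $0,\dots,2r-1$), then add $\delta_{\tau_{0,1}}\cdot c$ for a large constant $c$ and a tiny $\tau_{0,1}$, and correct the moments $1,\dots,2r-1$ by a further perturbation supported in $[1/5,4/5]$ of total variation $O(c\,\tau_{0,1})$ (possible because $\tau_{0,1}^q$ is small for $q\ge1$ while $\tau_{0,1}^0=1$); positivity is preserved once $c\,\tau_{0,1}$ is small relative to the Gauss weights, which we arrange by shrinking $\tau_{0,1}$ after fixing $c$. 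Finally, once $\rho_0,\rho_1$ are in hand, item~3's stated value $m_q=\int_{1/4}^{3/4}x^q\,dx$ is immediate from the Gauss/Radau exactness. The remaining bookkeeping — writing out the shifted Legendre nodes, the weight formulas, and the perturbation estimate — is routine and I would relegate it to a short computation.
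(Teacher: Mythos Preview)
Your construction of $\rho_0$ via the $r$-point Gauss rule for the reference measure on $[1/4,3/4]$ is correct and is exactly what the paper does (phrased there via the truncated Hausdorff moment theorem of Curto--Fialkow rather than in quadrature language, but these are equivalent).

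For $\rho_1$ you have the right core idea---place one extra atom near $0$ to inflate the zeroth moment while barely disturbing the higher ones---but the execution is tangled and contains an error. Two concrete issues. First, the weight $c$ at $\tau_{0,1}$ must be exactly $1$, not ``large'': the target total mass is $3/2=1/2+1$, and your correction step, having total variation $O(c\tau_{0,1})$, cannot carry any non-negligible share of the mass increase. Second, your ``further perturbation supported in $[1/5,4/5]$'' is under-specified: to end up with an $(r{+}1)$-atomic positive measure you cannot simply add a small signed measure to the Gauss rule; you must move the $r$ interior nodes and weights simultaneously, and you have not argued why this keeps the weights positive and the nodes inside $[1/5,4/5]$. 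Your Radau and convexity detours do not resolve this either.

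The paper's execution avoids all of this. Fix a small $\varepsilon_0>0$, set $\tau_{0,1}=\varepsilon_0$, $\xi_{0,1}=1$, and take $\rho_1=\delta_{\varepsilon_0}+\rho^{\varepsilon_0}$, where $\rho^{\varepsilon_0}$ is the $r$-atomic measure on $[1/5,4/5]$ with prescribed moments $m_0,\,m_1-\varepsilon_0,\,\ldots,\,m_{2r-1}-\varepsilon_0^{2r-1}$. The point is that the existence of $\rho^{\varepsilon_0}$ comes \emph{not} from perturbing the Gauss rule but from re-applying the same truncated-moment theorem used for $\rho_0$: the Hankel-type positivity conditions characterizing moment sequences representable on $[1/5,4/5]$ are strict at $\varepsilon_0=0$ (because the uniform measure is non-degenerate) and hence, by openness of the positive-definite cone, persist for $\varepsilon_0$ small enough. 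This delivers the interior atoms, their location in $[1/5,4/5]$, and the positivity of the weights in one stroke, with no implicit-function bookkeeping. Once you fix $c=1$ and replace your ``perturb the Gauss rule'' step by ``solve the shifted moment problem directly'', your plan and the paper's coincide.
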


For $j=0,1$, we set the values  $\gamma_{i,j}= [\xi_{i,j}/\tau_{i,j}]^{1/2}$ and $\alpha_{i,j}=\tau_{i,j}^{-1}-1$. 

Let us give a hint why such a choice leads to what we need.
As a consequence of our parameter choices, the following identities are satisfied
\begin{eqnarray}
\sum_{i=1}^r \frac{\gamma_{i,0}^2}{1+\alpha_{i,0}} +\sigma_0^2&=& \sum_{i=0}^r \frac{\gamma_{i,1}^2} {1+\alpha_{i,1}}+\sigma_1^2= 2 \label{eq:relation_1}\\
 \sum_{i=1}^r \frac{\gamma_{i,0}^2}{(1+\alpha_{i,0})^q}&=& \sum_{i=0}^r \frac{\gamma_{i,1}^2}{(1+\alpha_{i,1})^q}=\frac{3^q-1}{q4^q}= m_{q-1},\quad \quad \forall q=2,\ldots, 2r  \label{eq:relation_2}
\end{eqnarray}
\medskip 
Had the random vectors $(v_{i,j})$ introduced in $\mu_j$ formed an orthonormal family, then we would have had $\beta^T\bSigma^q\beta= \sum_i \frac{\gamma_{i,j}^2}{(1+\alpha_{i,j})^q}$ for any positive integer $q$. 
We shall prove later that, under the distribution $\mu_j$, the vectors $v_{i,j}$ have a norm close to one and are almost  orthogonal with large probability.
Hence, identities  \eqref{eq:relation_2} imply that the moments $\beta^T\bSigma^q\beta$ concentrate around the same value under $\mu_0$ and $\mu_1$, this for all $q=2,\ldots, 2r$. This will lead to the fact that the conditional distribution of $Y$ given $\mathbf{X}$ under  $\mathbf{P}_0$ is  indistinguishable from that under $\mathbf{P}_1$ when
$n^{1+1/(2r)}=o(p)$ as proved in Lemma \ref{lem:upper_total_variation_sigma_Y} below.
In the same way, 
\eqref{eq:relation_1} will imply that $\beta^T\bSigma\beta +\sigma_j^2$ concentrate around $2$ under $\mu_j$ for $j=0,1$ so that $\eta$ will concentrate around different values under $\mathbf{P}_0$ and $\mathbf{P}_1$ since $\sigma_0^2 \neq \sigma_1^2$. This is stated in Lemma \ref{lem:concentration_eta} below.

\medskip
\noindent 
{\bf Remark}. As, with large probability,  the random vectors $v_{i,j}$ will be proved to be almost orthonormal, the spectrum of $\bSigma$ almost lies in $(1/5,1)$ with high probability under $\mu_0$. Under $\mu_1$ all the eigenvalues of $\bSigma$, except the smallest one, almost lie in $(1/5,1)$ with high probability, whereas the smallest eigenvalue of $\bSigma$, which is of order $1/(1+\alpha_{0,1})$, will be closer to zero. 
If we had wanted to restrict ourselves to covariance matrices with uniformly bounded eigenvalues (in say $[M,1/M]$) as suggested in the discussion below Theorem \ref{thrm:lower_minimax_unknown_variance}, we would have defined the parameters thanks to discrete measures $\rho_0$ and $\rho_1$ with support in $[1/M,1]$. However, to constrain the $q$-th moment of $\rho_0$ and $\rho_1$ to coincide for $q=1,\ldots, 2r-1$, the difference in total mass between $\rho_0$ and $\rho_1$ would now depend on $r$. The remainder of the proof would be unchanged except that the quantities $\eta_0$ and $\eta_1$ in \eqref{eq:definition_eta} would depend on $r$ and the ultimate conclusion would be that $\underline{\lim}\overline{R}^*[p,M] \geq C(r)$.

\medskip

 Let us now define the quantities
\begin{equation}\label{eq:definition_eta}
 \eta_0:=1 - \frac{\sigma_0^2}{\sum_{i=1}^r \frac{\gamma_{i,0}^2}{1+\alpha_{i,0}} +\sigma_0^2}= 1/4\ ,\quad \quad \eta_1:=1 - \frac{\sigma_0^2}{\sum_{i=0}^r \frac{\gamma_{i,1}^2}{1+\alpha_{i,1}} +\sigma_1^2}= 3/4\ .
\end{equation}

The next lemma states that, for $j=0,1$, $\eta(\beta, \sigma)$ is close to $\eta_j$ under $\mu_j$.

\begin{lem}\label{lem:concentration_eta}
There exists three positive constants $C_1(r)$, $C_2(r)$ and $C_3(r)$ such that the following holds. If $p$ is larger than $n C_1(r)$, then
\beq\label{eq:concentration_eta1}
 \mu_j\left[|\eta(\beta, \sigma) - \eta_j|\geq C_2(r)\big(p^{-1/4}+ (\frac{n}{p})^{1/2}\big)\right]\leq e^{-C_3(r)p^{1/2}}\ , 
 \eeq
for $j=0,1$. Also, the spectrum of $\bSigma$ is bounded away from zero with large probability, that is for $j=0,1$,
\beq\label{eq:concentration_eta_lambda_Sigma}
 \mu_j\left[\lambda_{\min}(\bSigma)\geq \frac{1}{2}\min_{i}\frac{1}{1+\alpha_{i,j}} \right]\leq e^{-C_3(r)p^{1/2}}\ .
 \eeq 

\end{lem}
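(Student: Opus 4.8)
The plan is to reduce both statements to a single concentration fact: under $\mu_j$, the family $(v_{i,j})$ is, with overwhelming probability, almost orthonormal, in the sense that the Gram matrix $G_j := \big(\langle v_{i,j}, v_{\ell,j}\rangle\big)_{i,\ell}$ satisfies $\|G_j - \bI\|_{op} \leq C(r)\big(p^{-1/4} + \sqrt{n/p}\big)$. The first step is therefore to establish this Gram-matrix estimate. Since the density of $\pi_\alpha$ is proportional to $(|\bI_p + \sum_i \alpha_i x_i x_i^T|)^{-n/2} e^{-p\sum_i \|x_i\|_2^2/2}$, each $v_{i,j}$ is a tilted version of a $\cN(0, \bI_p/p)$ vector, the tilting factor being the bounded-range function $(|\bI_p + \sum \alpha_i x_i x_i^T|)^{-n/2}$; I would first bound the normalizing constant of $\pi_\alpha$ from above and below (using that $\log|\bI_p + \sum \alpha_i x_i x_i^T| \leq \sum_i \alpha_i \|x_i\|_2^2$ and is nonnegative, so the tilt lies between $e^{-n\sum\alpha_i\|x_i\|^2/2}$ and $1$), which shows $\pi_\alpha$ is absolutely continuous with respect to the product of $\cN(0,\bI_p/p)$ laws with a density bounded by $e^{O(n)}$ times a constant. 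Then standard Gaussian concentration for $\|v_{i,j}\|_2^2$ (a $\chi^2_p/p$ variable) and for off-diagonal inner products $\langle v_{i,j}, v_{\ell,j}\rangle$ (order $p^{-1/2}$ fluctuations), combined with a union bound over the $O(r^2)$ pairs and the change of measure, gives the Gram estimate on an event of probability $1 - e^{-C_3(r)\sqrt{p}}$, provided $p \geq n C_1(r)$ so the $e^{O(n)}$ cost is absorbed. This is the main obstacle: controlling the deviation probability after paying the $e^{O(n)}$ change-of-measure price forces the threshold $\sqrt{p}$ in the exponent and the hypothesis $p \geq nC_1(r)$, and one must check that the tilt does not distort the relevant quadratic functionals — here it helps that the tilt depends on the $v_i$ only through $\sum \alpha_i \|v_i\|^2$ up to the determinant, which is itself well-concentrated.

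The second step converts the Gram estimate into control of $\beta^T \bSigma^q \beta$. With $\beta = \sum_i \gamma_{i,j} v_{i,j}$ and $\bSigma^{-1} = \bI_p + \sum_i \alpha_{i,j} v_{i,j} v_{i,j}^T$, one writes $\bSigma^{-1}$ restricted to $\mathrm{span}(v_{i,j})$ in the (near-orthonormal) basis $(v_{i,j})$ as $\bI + D^{1/2} G_j D^{1/2}$ where $D = \diag(\alpha_{i,j})$, up to an error controlled by $\|G_j - \bI\|_{op}$; inverting and taking powers, $\bSigma^q$ acts on $\beta$ close to the way the diagonal operator $\diag\big((1+\alpha_{i,j})^{-q}\big)$ would, so that $\beta^T\bSigma^q\beta = \sum_i \gamma_{i,j}^2 (1+\alpha_{i,j})^{-q} + O_r(\|G_j-\bI\|_{op})$. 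Applying this with $q=1$ and invoking the identity \eqref{eq:relation_1}, together with $\sigma_j^2$ from \eqref{eq:choice_sigma}, gives $\mathrm{Var}(y_1) = \beta^T\bSigma\beta + \sigma_j^2 = 2 + O_r(\|G_j-\bI\|_{op})$ and $\beta^T\bSigma\beta = 2 - \sigma_j^2 + O_r(\|G_j-\bI\|_{op})$; hence $\eta(\beta,\sigma) = \beta^T\bSigma\beta / \mathrm{Var}(y_1) = \eta_j + O_r\big(p^{-1/4} + \sqrt{n/p}\big)$ on the good event, which is \eqref{eq:concentration_eta1}.

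For \eqref{eq:concentration_eta_lambda_Sigma}, the same Gram estimate shows that on the good event all eigenvalues of $\bI + \sum_i \alpha_{i,j} v_{i,j} v_{i,j}^T$ lie within $C(r)\|G_j-\bI\|_{op}$ of the eigenvalues of $\bI + D$ (the nonzero ones being the $1+\alpha_{i,j}$, the rest being $1$), so $\lambda_{\max}(\bSigma^{-1}) \leq \max_i(1+\alpha_{i,j}) + o(1) \leq 2\max_i(1+\alpha_{i,j})$ for $p$ large enough depending on $r$, and therefore $\lambda_{\min}(\bSigma) = 1/\lambda_{\max}(\bSigma^{-1}) \geq \tfrac12 \min_i (1+\alpha_{i,j})^{-1}$ on the same event of probability $\geq 1 - e^{-C_3(r)\sqrt{p}}$. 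Adjusting $C_1(r), C_2(r), C_3(r)$ to cover both claims simultaneously finishes the proof; the only quantitative care needed is that all implicit constants depend on $r$ (equivalently on $\varsigma$) but not on $n$ or $p$, which holds because the $\alpha_{i,j}, \gamma_{i,j}$ were fixed in Lemma \ref{lem:truncated_moment} depending on $r$ alone.
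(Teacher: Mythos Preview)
Your approach is correct and rests on the same core idea as the paper's proof: bound the Radon--Nikodym derivative of $\pi_\alpha$ with respect to a reference measure by a factor $e^{C(r)n}$, transfer concentration from the reference, and then do a deterministic perturbation argument to pass from near-orthonormality of the $v_{i,j}$ to control of $\beta^T\bSigma\beta$ and of $\lambda_{\min}(\bSigma)$. The paper implements this slightly differently: it changes to polar coordinates $(t_i,w_i)=(\|v_i\|_2^2,\,v_i/\|v_i\|_2)$, bounds the joint density directly (Lemma~\ref{lem:concentration_v}), and uses the uniform measure on the sphere as the reference for the angular part; the perturbation step goes through a Gram--Schmidt basis $(\nu_1,\ldots,\nu_r)$ and an explicit three-term decomposition $\beta^T\bSigma\beta-s_0=(I)+(II)+(III)$ controlled via $\max_i|t_i-1|$ and $W_i=\|\Pi_{\mathrm{span}(w_l:\,l\neq i)}w_i\|_2$. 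Your route stays in Cartesian coordinates, uses $\mathcal N(0,\bI_p/p)^{\otimes r}$ as the global reference, and handles the perturbation via the Gram matrix $G_j$ and (implicitly) a Woodbury-type identity; this is arguably more streamlined since it avoids the coordinate change and the Gram--Schmidt step.

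Two small corrections. First, your remark that ``the tilt depends on the $v_i$ only through $\sum\alpha_i\|v_i\|^2$ up to the determinant'' is off: the tilt \emph{is} the determinant factor and depends on the full Gram structure. This does not matter for your argument, since all you actually use is the uniform bound $|\bI_p+\sum_i\alpha_i v_iv_i^T|^{-n/2}\leq 1$ and the lower bound $Z\geq e^{-C(r)n}$ on the normalizing constant, which together give the density bound $\leq e^{C(r)n}$. Second, the matrix of $\bSigma^{-1}$ restricted to $\mathrm{span}(v_{i,j})$ in the (non-orthonormal) basis $(v_{i,j})$ is $\bI+DG_j$ as a linear map (equivalently $G_j+G_jDG_j$ as a bilinear form), not $\bI+D^{1/2}G_jD^{1/2}$; your formula has the right eigenvalues but is not the correct representation. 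The cleanest way to finish is to write $\beta^T\bSigma\beta=\gamma^T G_j(\bI+DG_j)^{-1}\gamma$ directly and observe this is a smooth function of $G_j$ near $G_j=\bI$, where it equals $\sum_i\gamma_{i,j}^2/(1+\alpha_{i,j})$.
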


Define $M(r):= [2\max_{i,j}(1+\alpha_{i,j})^{-1}]$. By definition of $\mu_0$ and $\mu_1$, the largest eigenvalue of $\bSigma$ is always equal to one. By Lemma \ref{lem:concentration_eta}, with $\mu_0$ and $\mu_1$ probability going to one, the spectrum of $\bSigma$ lies in $[1/M(r),M(r)]$.

Let us now bound the minimax risk $\overline{R}^*[p,M(r)]$.
Contrary to the prior distributions chosen in the proof of Proposition \ref{prp:lower_adaptation}, the proportion of explained variation $\eta(\beta,\sigma)$ is not constant either on $\mu_0$ or on $\mu_1$, so that we cannot directly relate the minimax estimation rate to the total variation distance as done before. Nevertheless, these proportions of explained variation concentrate around  $\eta_0$ and $\eta_1$  so that it will be possible to work around this difficulty. This slight refinement of Le Cam's method has already been applied for other functional estimation problems (see e.g.\ \cite{MR2816346}).  Also to circumvent the issue that some eigenvalues of $\bSigma$ are smaller than $M[r]$ with positive (but very small) probability, we consider a thresholded version of the risk  $\mathbf{E}^*_1\left[.\right]:= \mathbf{E}_1\left[.\mathbf{1}_{\lambda_{\min}(\bSigma)\geq M^{-1}(r)}\right]$ and $\mathbf{E}^*_0\left[.\right]:= \mathbf{E}_0\left[.\mathbf{1}_{\lambda_{\min}(\bSigma)\geq M^{-1}(r)}\right]$.

 Without loss of generality, we may assume that all the estimators $\widehat{\eta}$ below only take values in $[0,1]$.
\beqn 
\overline{R}^*[p,M(r)]&\geq& \inf_{\widehat{\eta}}\mathbf{E}^*_0\left[\left\{\widehat{\eta}-\eta(\beta,\sigma)\right\}^2\right] \bigvee  \mathbf{E}^*_1\left[\left\{\widehat{\eta}-\eta(\beta,\sigma)\right\}^2\right] \\
&\geq &\inf_{\widehat{\eta}}\mathbf{E}_0\left[\left\{\widehat{\eta}-\eta(\beta,\sigma)\right\}^2\right] \bigvee  \mathbf{E}_1\left[\left\{\widehat{\eta}-\eta(\beta,\sigma)\right\}^2\right] -\bigvee_{i=0,1}\mu_i\left[ \lambda_{\min}(\bSigma) \leq  M^{-1}(r) \right] \\ 
&\geq & \inf_{\widehat{\eta}}\frac{1}{2}\bigvee_{i=1,2}\mathbf{E}_i\left[\left\{\widehat{\eta}-\eta_i\right\}^2\right]  - \bigvee_{i=1,2}\mathbf{E}_i\left[\left\{\eta(\beta,\sigma)-\eta_i\right\}^2\right] -\bigvee_{i=0,1}\mu_i\left[ \lambda_{\min}(\bSigma) \leq  M^{-1}(r) \right] \ , 
\eeqn 
where we used $(x-y)^2 \geq (x-z)^2/2 - (y-z)^2$. From  \eqref{eq:concentration_eta1} and the fact that $\eta(\beta,\sigma)$ belongs to $[0,1]$, we derive that 
\[\bigvee_{i=1,2}\mathbf{E}_i[\{\eta(\beta,\sigma)-\eta_i\}^2] \leq C(r)(p^{-1/2}+n/p)\ ,\] 
when $p$ is large enough. Besides, the probabilities $\mu_i[ \lambda_{\min}(\bSigma) \leq  M^{-1}(r) ]$ are smaller than $e^{-Cp^{1/2}}$ by \eqref{eq:concentration_eta_lambda_Sigma}.
Then, we control the maximum $\bigvee_{i=1,2}\mathbf{E}_i\left[\left\{(\widehat{\eta}-\eta_i\right\}^2\right]$ using the total variation distance between $\mathbf{P}_0$ and $\mathbf{P}_1$ as we did in the proof of Proposition \ref{prp:lower_minimax}. More precisely,
\beqn 
\overline{R}^*[p,M(r)]+  C(r) [p^{-1/2}+ (n/p)]&\geq & \frac{(\eta_1-\eta_0)^2}{8}\inf_{\widehat{\eta}}\mathbf{P}_0\big(\widehat{\eta}\geq \tfrac{\eta_1+\eta_0}{2}\big)\bigvee \mathbf{P}_1\big(\widehat{\eta}\leq \tfrac{\eta_1+\eta_0}{2}\big)  \\
 &\geq &\frac{(\eta_1-\eta_0)^2}{16} \inf_{\cA} \mathbf{P}_0(\cA)+ \mathbf{P}_1(\cA^c) \\
 &\geq & \frac{(\eta_1-\eta_0)^2}{16} \left[1 - \|\mathbf{P}_1-\mathbf{P}_0\|_{TV}\right] \ , 
\eeqn 
so that we only have to focus on $\|\mathbf{P}_1-\mathbf{P}_0\|_{TV}$. Let us decompose the total variation distance between $\mathbf{P}_{0}$ and $\mathbf{P}_{1}$ in a way enabling to consider separately the marginal distribution of $\bX$ and the conditional distributions of $Y$ given $\bX$.
Since the total variation distance is, up to a multiplicative constant, the $l_1$ distance between the density functions, we obtain
\begin{eqnarray}
2\|\mathbf{P}_1 - \mathbf{P}_0\|_{TV} &= &\int |f_0(y,\bx) - f_1(y,\bx)|dyd\bx   \nonumber\\
& = & \int |f_0(y|\bx)f_0(\bx) - f_1(y|\bx)f_1(\bx)|dyd\bx  \nonumber \\
&\leq & \int  f_1(y|\bx) |f_0(\bx) -f_1(\bx)|dyd\bx + \int f_0(x)|f_0(y|\bx) - f_1(y|\bx)|dyd\bx  \nonumber \\
&\leq & \int   |f_0(\bx) -f_1(\bx)|d\bx+\int f_0(x)|f_0(y|\bx) - f_1(y|\bx)|dyd\bx \nonumber \\
&\leq & 2 \|\mathbf{P}^{\bX}_{0} - \mathbf{P}^{\bX}_{1}\|_{TV}+ 2 \mathbf{E}^{\bX}_0\left[ \|\mathbf{P}^{Y|\bX}_{0} - \mathbf{P}^{Y|\bX}_{1}\|_{TV}\right]\ , \label{eq:upper_tv}
\end{eqnarray}
where, for $i=0,1$,  $\mathbf{P}^{\bX}_i$ (resp. $f_i$) denotes the marginal probability distribution (resp. density) of $\bX$ under $\mathbf{P}_i$, $\mathbf{P}^{Y|\bX}_{i}$ (resp. $f_{i}(\cdot|\bx)$) is the conditional distribution (resp. density) of $Y$ given $\bX$ and $\mathbf{E}_0^{\bX}$ stands for the expectation with respect to $\mathbf{P}^{\bX}_0$. The main difficulty in the proof lies in controlling  these two total deviation distances $\|\mathbf{P}^{\bX}_{0} - \mathbf{P}^{\bX}_{1}\|_{TV}$ and $\mathbf{E}^{\bX}_0[ \|\mathbf{P}^{Y|\bX}_{0} - \mathbf{P}^{Y|\bX}_{1}\|_{TV}]$.

\medskip 

The marginal distribution of $\bX$ under $\mathbf{P}_{0}$ and $\mathbf{P}_{1}$ is that of a $n$ sample of $p$-dimensional normal distribution whose precision matrix is a rank $r$ perturbation of the identity matrix and whose $r$ principal directions are sampled nearly uniformly. In a high-dimensional setting, such perturbations are indistinguishable from the standard normal distribution as shown in the next lemma.

\begin{lem}\label{lem:upper_total_variation_sigma}
There exist two positive constants $C(r)$ and $C'(r)$ only depending on $r$ such that the following holds. 
If $p\geq C(r) n$, then 
\[
 \|\mathbf{P}^{\bX}_{0} - \mathbf{P}^{\bX}_{1}\|_{TV}\leq C'(r)\sqrt{\frac{n}{p}}\ .
\]
\end{lem}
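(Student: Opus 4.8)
The plan is to reduce the claim to a comparison with the standard Gaussian law and to exploit the fact that the determinant-twisted definition of $\pi_{\alpha_j}$ makes the marginal law of $\bX$ under each mixture explicit. First I would write, conditionally on the vectors $(v_{i,j})_i$, the Gaussian density of $\bX$ as proportional to $|\bSigma^{-1}|^{n/2}\exp\bigl(-\tfrac12\|\bX\|_F^2-\tfrac12\sum_i\alpha_{i,j}\|\bX v_{i,j}\|_2^2\bigr)$, where $\bSigma^{-1}=\bI_p+\sum_i\alpha_{i,j}v_{i,j}v_{i,j}^T$ and the index set is $i\in\{1,\dots,r\}$ for $j=0$ and $i\in\{0,\dots,r\}$ for $j=1$. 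Multiplying by the density of $\pi_{\alpha_j}$, the factors $|\bSigma^{-1}|^{\pm n/2}$ cancel; integrating out the (now plainly Gaussian) vectors $v_{i,j}$ and using $|p\bI_p+\alpha\bX^T\bX|=p^{\,p-n}\,|p\bI_n+\alpha\bX\bX^T|$ leaves
\[
\frac{d\mathbf{P}^{\bX}_j}{d\mathbf{P}_\phi}=L_j:=\frac{g_j}{\E_\phi[g_j]},\qquad g_j(\bX):=\prod_i\bigl|p\bI_n+\alpha_{i,j}\bX\bX^T\bigr|^{-1/2},
\]
with $\mathbf{P}_\phi$ the standard normal law on $\bbR^{n\times p}$ and $r_j\in\{r,r+1\}$ the number of factors of $g_j$. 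Since $\E_\phi[L_j]=1$, Cauchy--Schwarz gives $\|\mathbf{P}^{\bX}_j-\mathbf{P}_\phi\|_{TV}=\tfrac12\E_\phi|L_j-1|\le\tfrac12\bigl(\chi^2(\mathbf{P}^{\bX}_j,\mathbf{P}_\phi)\bigr)^{1/2}$ with $\chi^2(\mathbf{P}^{\bX}_j,\mathbf{P}_\phi)=\Var_\phi(g_j)/\E_\phi[g_j]^2$, and by the triangle inequality through $\mathbf{P}_\phi$ it is enough to prove $\chi^2(\mathbf{P}^{\bX}_j,\mathbf{P}_\phi)\le C(r)\,n/p$ for $j=0,1$.

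The core of the argument is that $\log g_j$ is an essentially Lipschitz function of the Gaussian matrix $\bX$: from $\nabla_{\bX}\log g_j=-\sum_i\alpha_{i,j}\,(p\bI_n+\alpha_{i,j}\bX\bX^T)^{-1}\bX$ and $(p\bI_n+\alpha_{i,j}\bX\bX^T)^{-1}\preceq p^{-1}\bI_n$ one gets $\|\nabla_{\bX}\log g_j\|_F\le(A_j/p)\|\bX\|_F$ with $A_j:=\sum_i\alpha_{i,j}$, so $\|\nabla_{\bX}\log g_j\|_F\le\lambda_r\sqrt{n/p}$ on the event $\mathcal{G}:=\{\|\bX\|_F^2\le 2np\}$, where $\lambda_r:=\sqrt2\max_j A_j$. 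I would then replace $g_j$ by a function $\bar g_j$ that coincides with $g_j$ on $\mathcal{G}$, satisfies $0<\bar g_j\le p^{-nr_j/2}$ everywhere, and whose logarithm is globally $\lambda_r\sqrt{n/p}$-Lipschitz, and apply Gaussian concentration for Lipschitz functions to obtain $\E_\phi\bigl[e^{s(\log\bar g_j-\E_\phi\log\bar g_j)}\bigr]\le\exp\bigl(\lambda_r^2 s^2 n/(2p)\bigr)$ for every $s\in\bbR$. Taking $s=2$ in the numerator and $s=1$ in the denominator (bounded below by $1$ via Jensen) yields $\E_\phi[\bar g_j^2]/\E_\phi[\bar g_j]^2\le e^{2\lambda_r^2 n/p}$, hence the $\chi^2$ divergence attached to $\bar g_j$ is at most $e^{2\lambda_r^2 n/p}-1\le C'(r)\,n/p$ as soon as $p\ge2\lambda_r^2 n=:C(r)\,n$. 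Finally I would transfer the estimate from $\bar g_j$ to $g_j$: since $g_j=\bar g_j$ on $\mathcal{G}$, both are $\le p^{-nr_j/2}$, and $\E_\phi[\bar g_j]\ge e^{\E_\phi\log\bar g_j}\ge c_r^{\,n}p^{-nr_j/2}$ for some $c_r>0$, both $|\E_\phi[g_j]-\E_\phi[\bar g_j]|$ and $\E_\phi|g_j-\bar g_j|$ are bounded by $p^{-nr_j/2}e^{-cnp}$, which is negligible relative to $\E_\phi[\bar g_j]$; this gives $\E_\phi|L_j-1|\le\bigl(\chi^2\text{ of }\bar g_j\bigr)^{1/2}+o(\sqrt{n/p})\le C''(r)\sqrt{n/p}$, and summing the $j=0$ and $j=1$ bounds proves the lemma.

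The step I expect to be the main obstacle is the failure of $\log g_j$ to be globally Lipschitz (its gradient grows with $\|\bX\|_F$), which forces the truncation to $\mathcal{G}$; making that rigorous relies on $\mathbf{P}_\phi(\mathcal{G}^c)$ being exponentially small in $np$, and likewise on $\mathbf{P}^{\bX}_j(\mathcal{G}^c)$ being exponentially small, which holds because every eigenvalue of $\bSigma$ is at most $1$ (as $\bSigma^{-1}\succeq\bI_p$), so that $\|\bX\|_F^2$ is stochastically dominated by a $\chi^2$ variable with $np$ degrees of freedom under $\mathbf{P}^{\bX}_j$ as well. A secondary nuisance is that $L_j$ itself can be as large as $e^{O(n)}$, which is why the comparison between $g_j$ and $\bar g_j$ must be carried out at the level of the unnormalized quantities rather than of $L_j$ directly; and the hypothesis $p\ge C(r)n$ is used only to guarantee that the spike $\tfrac{\alpha_{i,j}}{p}\bX\bX^T$ has operator norm $O(n/p)$ on $\mathcal{G}$, which is exactly what turns $e^{2\lambda_r^2 n/p}-1$ into $O_r(n/p)$ and hence delivers the rate $\sqrt{n/p}$.
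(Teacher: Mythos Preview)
Your argument is correct and takes a genuinely different route from the paper's. Both approaches triangulate through the standard Gaussian law $\overline{\mathbf{P}}^{\bX}$ and bound total variation by a $\chi^2$ divergence, but the paper never integrates out the $(v_{i,j})$: it conditions on their norms, uses rotational invariance to rewrite $h_0(v,\bX)$ as a Haar-measure average, and then computes the second moment of the likelihood via a fairly delicate analysis of a determinant ratio (Lemma~\ref{lem:control_second_moment_marginal_X}), which in turn requires controlling projections of Haar-rotated vectors. You instead exploit the very reason the prior $\pi_{\alpha_j}$ was tilted by $|\bSigma^{-1}|^{-n/2}$: this factor cancels exactly against the Gaussian normalizer, so the $(v_{i,j})$ become conditionally Gaussian given $\bX$ and integrate out in closed form, leaving the explicit likelihood ratio $L_j\propto\prod_i|p\bI_n+\alpha_{i,j}\bX\bX^T|^{-1/2}$. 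From there your Lipschitz-concentration argument is clean: the gradient bound $\|\nabla\log g_j\|_F\le (A_j/p)\|\bX\|_F$ is correct, and Gaussian log-Sobolev gives the variance control you need.

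The only place to be slightly more careful is the truncation step, which you have identified correctly. Two remarks that make it painless: (i) because $\log g_j\le -\tfrac{nr_j}{2}\log p$ on $\mathcal G$, the McShane sup-extension $\bar\psi(x)=\sup_{y\in\mathcal G}[\log g_j(y)-L\|x-y\|]$ automatically satisfies $\bar\psi\le -\tfrac{nr_j}{2}\log p$ everywhere, so no separate capping is needed; (ii) on $\mathcal G$ one has $\log g_j\ge -\tfrac{nr_j}{2}\log p-\tfrac{A_j}{2p}\|\bX\|_F^2\ge -\tfrac{nr_j}{2}\log p-A_j n$ (using $\log(1+u)\le u$), which immediately yields $\E_\phi[\log\bar g_j]\ge -\tfrac{nr_j}{2}\log p-A_j n-L\,\E_\phi[(\|\bX\|_F-\sqrt{2np})_+]$ and hence $\E_\phi[\bar g_j]\ge c_r^{\,n}p^{-nr_j/2}$ with $c_r=e^{-A_j}$, exactly the lower bound you asserted. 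The correction terms are then of order $e^{-cnp}/c_r^{\,n}$, negligible against $\sqrt{n/p}$. Your approach is shorter and more transparent than the paper's Haar-measure computation; what the paper's route buys is perhaps a more explicit view of why the spike directions are undetectable (through the projections $W_i$), but for the purpose of this lemma your closed-form argument is preferable.
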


The intricate construction of $\mu_0$ and $\mu_1$ (and especially the choices of the parameters $\alpha_{i,j}$ and $\gamma_{i,j}$) has been made to 
force the conditional $\mathbf{P}^{Y|\bX}_{0}$ and $\mathbf{P}^{Y|\bX}_{1}$ to be close to each other. Informally, the fact that the quantities  $\beta^T\bSigma^q\beta$ almost coincide under $\mu_0$ and $\mu_1$, this for all $q=2,\ldots, 2r$, will translate into the  total distance $\|\mathbf{P}^{Y|\bX}_{0} - \mathbf{P}^{Y|\bX}_{1}\|_{TV}$ as illustrated by the next lemma.

\begin{lem}\label{lem:upper_total_variation_sigma_Y}
There exist two positive constants $C(r)$ and $C'(r)$ only depending on $r$ such that the following holds. 
If $p\geq C(r) n$, then 
\[
 \mathbf{E}^{\bX}_0\left[ \|\mathbf{P}^{Y|\bX}_{0} - \mathbf{P}^{Y|\bX}_{1}\|_{TV}\right]\leq C'(r)\left(\frac{n^{1+ 1/(2r)}}{p} \right)^{r}\ .
\]
\end{lem}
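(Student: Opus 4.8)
The plan is to first compute the two conditional laws in closed form. Fix $j\in\{0,1\}$ and work under $\mu_j$. Writing out the joint density of $\big((v_{i,j})_i,\bX,Y\big)$, the key observation is that the normalising factor $|\bSigma|^{n/2}$ built into the prior $\pi_{\alpha_j}$ cancels \emph{exactly} the factor $|\bSigma|^{-n/2}$ in the Gaussian density of $\bX$ given $\bSigma$; what is left is a product of Gaussian factors in $\big((v_{i,j})_i,\ \epsilon:=Y-\bX\beta\big)$. Hence, conditionally on $\bX$, the $v_{i,j}$ are independent with $v_{i,j}\mid\bX\sim\cN\big(0,(p\bI_p+\alpha_{i,j}\bX^T\bX)^{-1}\big)$ and $\epsilon\mid\bX\sim\cN(0,\sigma_j^2\bI_n)$ is independent of them, so that, since $Y=\sum_i\gamma_{i,j}\bX v_{i,j}+\epsilon$, the conditional law of $Y$ given $\bX$ is the \emph{centered} Gaussian $\mathbf{P}^{Y|\bX}_j=\cN\big(0,\Gamma_j(\bX)\big)$ with $\Gamma_j(\bX)=\sigma_j^2\bI_n+\sum_i\gamma_{i,j}^2\,\bX(p\bI_p+\alpha_{i,j}\bX^T\bX)^{-1}\bX^T$. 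In the eigenbasis $(u_l)$ of $\bX\bX^T$ with eigenvalues $s_l^2$ this reads $\Gamma_j(\bX)=\sum_l g_j(s_l^2)u_lu_l^T$ with the scalar function $g_j(x)=\sigma_j^2+\sum_i\gamma_{i,j}^2x/(p+\alpha_{i,j}x)$, so that $\mathbf{P}^{Y|\bX}_0$ and $\mathbf{P}^{Y|\bX}_1$ are centered Gaussians diagonalised in the same basis.

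For two such Gaussians, $\mathrm{KL}$ and Pinsker's inequality give $\|\mathbf{P}^{Y|\bX}_0-\mathbf{P}^{Y|\bX}_1\|_{TV}\le\tfrac12\big(\sum_l(g_1(s_l^2)/g_0(s_l^2)-1)^2\big)^{1/2}$ as soon as $\max_l|g_1(s_l^2)/g_0(s_l^2)-1|\le1/2$; since $g_0\ge\sigma_0^2=3/2$, it is enough to bound $\sum_l(g_1(s_l^2)-g_0(s_l^2))^2$, and this is where the moment matching of Lemma~\ref{lem:truncated_moment} enters. Writing $x=p(1+u)$ and expanding $\gamma_{i,j}^2x/(p+\alpha_{i,j}x)=c_{i,j}(1+u)\sum_{k\ge0}(-1)^k(1-\beta_{i,j})^ku^k$ with $\beta_{i,j}=(1+\alpha_{i,j})^{-1}$ and $c_{i,j}=\gamma_{i,j}^2\beta_{i,j}$ (which equals the weight $\xi_{i,j}$ of $\rho_j$), one gets $g_j(x)=\sigma_j^2+(1+u)\sum_{k\ge0}(-1)^ku^kA_j(k)$ with $A_j(k)=\sum_ic_{i,j}(1-\beta_{i,j})^k$. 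The binomial expansion of $A_1(k)-A_0(k)$ involves only the moment gaps $\int x^l\,d\rho_1-\int x^l\,d\rho_0$, which equal $1$ for $l=0$ (the total-mass gap $3/2-1/2$) and $0$ for $1\le l\le2r-1$ by Lemma~\ref{lem:truncated_moment}; combined with $\sigma_1^2-\sigma_0^2=-1$ and the identity $(1+u)\sum_{k=0}^{2r-1}(-u)^k=1-u^{2r}$, this makes the coefficients of $u^0,\dots,u^{2r-1}$ in $g_1-g_0$ cancel, leaving $g_1(x)-g_0(x)=-u^{2r}+(1+u)\sum_{k\ge2r}(-1)^ku^k\big(A_1(k)-A_0(k)\big)$ with $|A_1(k)-A_0(k)|\le2$; hence $|g_1(x)-g_0(x)|\le7|x/p-1|^{2r}$ whenever $|x/p-1|\le1/2$.

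Finally I would control the spectrum of $\bX\bX^T$ under the marginal $\mathbf{P}^{\bX}_0$: with probability $1-e^{-c(r)\sqrt p}$, $\max_l|s_l^2/p-1|\le C(r)\sqrt{n/p}$. Conditionally on $(v_{i,0})$, $\bX$ has i.i.d.\ $\cN(0,\bSigma_0)$ rows, and — exactly as in the proof of Lemma~\ref{lem:concentration_eta} — with $\pi_{\alpha_0}$-probability $1-e^{-c\sqrt p}$ the $v_{i,0}$ are nearly orthonormal, so $\bSigma_0=\bI_p+E$ with $\rank(E)\le r$ and $\|E\|_{op}\le C(r)$. Writing $\bX\bX^T=\bZ\bZ^T+\bZ E\bZ^T$ for a standard Gaussian $\bZ$ and $E=\sum_{i\le r}\mu_iw_iw_i^T$ its spectral decomposition, $\|\bZ E\bZ^T\|_{op}\le\sum_i|\mu_i|\,\|\bZ w_i\|_2^2\le C(r)n$ with probability $1-e^{-cn}$ (each $\bZ w_i\sim\cN(0,\bI_n)$); Weyl's inequality then confines all eigenvalues of $\bX\bX^T$ to within $C(r)n$ of those of $\bZ\bZ^T$, which Davidson--Szarek (as in Lemma~\ref{lemma_concentration_spectre_A}) place in $[p-2\sqrt{np}-C(r)n,\ p+2\sqrt{np}+C(r)n]$; since $n=o(p)$ this gives the claim. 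On this good event $g_0\ge\sigma_0^2$ so the $\mathrm{KL}$/Pinsker bound above applies and $\sum_l(g_1(s_l^2)-g_0(s_l^2))^2\le49\,n\,(C(r)\sqrt{n/p})^{4r}$, while off it we bound the total variation by $1$, so
\[
\mathbf{E}^{\bX}_0\big[\|\mathbf{P}^{Y|\bX}_0-\mathbf{P}^{Y|\bX}_1\|_{TV}\big]\le C'(r)\sqrt n\,(n/p)^{r}+e^{-c(r)\sqrt p}=C'(r)\big(n^{1+1/(2r)}/p\big)^{r}+e^{-c(r)\sqrt p}\,,
\]
and for $n$ (hence $p$) large enough the exponential term is negligible against the polynomial one, which gives the announced bound. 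The conceptual heart is the exact Gaussianity of both conditional laws — a consequence of the $|\bSigma|^{n/2}$ normalisation in $\pi_\alpha$ — together with the moment-matching cancellation forcing $g_1-g_0$ to vanish to order $2r$ at $x=p$; the step I expect to be most delicate technically is the spectral concentration under the non-standard mixture $\mathbf{P}^{\bX}_0$, which is why I route it through conditioning on $(v_{i,0})$ and Weyl's inequality rather than attempting to bound $\mathbf{P}^{\bX}_0$ directly.
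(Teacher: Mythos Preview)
Your proof is correct and follows essentially the same route as the paper: identify the conditional laws $\mathbf{P}^{Y|\bX}_j$ as centered Gaussians diagonal in the eigenbasis of $\bX\bX^T$ (the paper states this as Lemma~\ref{lem:distribution_P0_P1}, working with the precision matrix rather than the covariance), use the moment matching of Lemma~\ref{lem:truncated_moment} to show the eigenvalue functions $g_0,g_1$ agree to order $2r$ at $\lambda=1$, and combine Pinsker with spectral concentration of $\bX\bX^T/p$. The only minor difference is in that last step: the paper observes that $\|\bSigma(v)\|_{op}=1$ and $\mathrm{tr}(\bSigma(v))\in[p-r,p]$ hold \emph{deterministically} for every $v$ (since $\bSigma^{-1}=\bI_p+$ a rank-$r$ positive perturbation), so Lemma~\ref{lemma_concentration_spectre_A} applies directly to $\bX$ conditionally on $v$, bypassing the near-orthonormality event and the Weyl decomposition you introduce.
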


Under assumption \eqref{eq:assumption_minimax_lower_unknown_variance}, the distance $\|\mathbf{P}_1-\mathbf{P}_0\|_{TV}$ goes to $0$, and the minimax risk $\overline{R}^*[p,M(r)]$ is therefore bounded away from zero:
\[\underline{\lim}\overline{R}^*[p,M(r)] \geq \frac{(\eta_1-\eta_0)^2}{32}\ .\]

\subsubsection{Proof of the truncated moment problem (Lemma \ref{lem:truncated_moment})}

Define $\overline{\rho}$ the uniform measure over the interval $[1/4,3/4]$. First, we want to construct $\rho_0$ an $r$-atomic measure whose support is in $[1/4,3/4]$ and whose moments up to order $2r-1$ coincide with those of $\overline{\rho}$. This truncated moment problem has received a lot of attention in the literature. For instance, Theorem 4.1 (equivalence between (i) and (ii)) in~\cite{curto_truncated} ensures the existence of $\rho_0$.  Define the Hankel matrix $\bA$ of order $r-1$ and the matrix $\bB$ by 
\[\bA_{i,j}:= m_{i+j},\quad \, \bB_{i,j}:= m_{i+j+1} ,\quad  \quad \forall 0\leq i,j\leq r-1\ .\]
(Here, $m_0= \int_{1/4}^{3/4}dx=1/2$).
The same theorem 4.1 in~\cite{curto_truncated} then ensures that the symmetric Hankel matrix  $\bA$ is positive semidefinite, $\bA\geq 0$, and $3/4 \bA \geq \bB\geq 1/4\bA$ where $\bB\geq 1/4\bA$ implies that $\bB-1/4\bA$ is positive semidefinite. Since the representation of the truncated moment problem $(m_0,\ldots, m_{2r-1})$ is not unique ($\overline{\rho}$ and $\rho_0$ are admissible) Theorem 3.8 in~\cite{curto_truncated} ensures that $\bA$ is non-singular. Hence, up to modifying the constants, we can obtain strict inequalities in the bounds
\beq\label{eq:condition}
\bA>0 \ ,\quad \quad  \frac{4}{5} \bA > \bB> \frac{1}{5}\bA\ .
\eeq
\medskip 
Given $\varepsilon>0$, define the modified  matrices $\bA^{\varepsilon}$ and $\bB^{\varepsilon}$ by
\[\bA^{\varepsilon}_{i,j}:=\left\{\begin{array}{ccc}
                                     m_{0} &\text{ if }& i=j=0\\
                                     m_{i+j-1}-\varepsilon^{i+j} &\text{else}
                                    \end{array}\right. \ , \quad \quad \bA^{\varepsilon}_{i,j}=m_{i+j}-\varepsilon^{i+j}\ .
\]
Since the set of positive matrices is open, there exists some $\varepsilon_0>0$ such that the Hankel matrix $\bA^{\varepsilon_0}$ is positive and $\tfrac{4}{5} \bA^{\varepsilon_0} > \bB^{\varepsilon_0}> \tfrac{1}{5}\bA^{\varepsilon_0}$.
As a consequence of Theorem 4.1 in~\cite{curto_truncated}, there exists an $r$-atomic measure $\rho^{\varepsilon_0}$ with support in $[1/5,4/5]$ whose $q$-th moment is $m_q-\varepsilon_0^{q}$ for $q=1,\ldots, 2r-1$ and $m_0$ for $q=0$.

Finally, the measure $\rho_1:=  \delta_{\varepsilon_0}+ \rho^{\varepsilon_0}$ satisfies all the desired moment conditions, that is $\int d\rho_1=m_0+1$ and $\int x^q d\rho_1=m_q$ for $q=1,\ldots, 2r-1$.

\subsubsection{Additional lemma}

\begin{lem}\label{lem:concentration_v}
There exists three positive constants $C_4(r)$, $C_5(r)$ and $C_6(r)$ such that the following holds. Assuming that $p\geq n C_4(r)$, we have, for both $j=0$ and $j=1$, 
\beq\label{eq:concentration_v}
\pi_{\alpha_j}\Big[\max_i|\|v_i\|_2^2-1| \geq C_5(r) \big(\frac{n}{p}\big)^{1/2} + p^{-1/4}\Big]\leq e^{-C_6 (r)p^{1/2}}\ .
\eeq
\end{lem}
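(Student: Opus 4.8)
The plan is to control $\|v_i\|_2^2$ under the tilted distribution $\pi_{\alpha_j}$ by comparing it to the standard Gaussian measure on $\mathbb{R}^{q\times p}$ (with $q=r$ or $q=r+1$), whose density is proportional to $e^{-\sum_i p\|v_i\|_2^2/2}$. Under the standard Gaussian (i.e.\ the $v_i$'s i.i.d.\ $\cN(0,p^{-1}\bI_p)$), each $p\|v_i\|_2^2$ follows a $\chi^2_p$ distribution, so by Lemma \ref{lem:chi_2} together with a union bound over the (constant number of) indices $i\le q$, one has $\max_i|\|v_i\|_2^2-1|\le C(r)\sqrt{n/p}+p^{-1/4}$ outside an event of probability at most $e^{-C(r)p^{1/2}}$ — actually even $e^{-C(r)\sqrt{p}}$ comes from the deviation at scale $p^{-1/4}$, which is what forces the $p^{-1/4}$ term in the statement (a pure $\chi^2$ bound would give $\sqrt{(\log p)/p}$ with probability $1-1/p$; we deliberately take the weaker-rate/stronger-probability tradeoff to match later lemmas). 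The first step is therefore to record this standard-Gaussian estimate.

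The second and main step is to transfer this bound from the standard Gaussian measure to $\pi_{\alpha_j}$. The Radon–Nikodym derivative is proportional to $|\bI_p+\sum_{i=1}^q \alpha_{i,j}v_iv_i^T|^{-n/2}$; I would bound this density ratio uniformly. Since all $\alpha_{i,j}>0$, the matrix $\bI_p+\sum_i\alpha_{i,j}v_iv_i^T\succeq \bI_p$, so its determinant is at least $1$, which means the un-normalized $\pi_{\alpha_j}$-density is pointwise $\le$ the un-normalized Gaussian density. It remains to lower bound the normalizing constant $Z_j:=\int |\bI_p+\sum_i\alpha_{i,j}v_iv_i^T|^{-n/2}\,d\gamma_p(v)$, where $\gamma_p$ is the standard Gaussian: on the event where all $\|v_i\|_2^2\le 2$ and the Gram matrix $(\langle v_i,v_{i'}\rangle)_{i,i'}$ is close to $\bI_q$ (which has $\gamma_p$-probability close to $1$ for $p\gg n$, again by Lemma \ref{lem:concentration_v}'s standard-Gaussian version plus concentration of the off-diagonal inner products, each of which is $O_P(p^{-1/2})$), the eigenvalues of $\bI_p+\sum_i\alpha_{i,j}v_iv_i^T$ lie in a bounded interval $[1,1+C(r)]$, so $|\bI_p+\sum_i\alpha_{i,j}v_iv_i^T|\le (1+C(r))^q$ and hence $|\cdot|^{-n/2}\ge (1+C(r))^{-qn/2}$ on that event. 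This gives $Z_j\ge c(r)^n$ for an explicit $c(r)=(1+C(r))^{-q/2}\in(0,1)$. Consequently, for any measurable set $A$, $\pi_{\alpha_j}(A)\le Z_j^{-1}\gamma_p(A)\le c(r)^{-n}\gamma_p(A)$.

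Applying this with $A$ the bad event $\{\max_i|\|v_i\|_2^2-1|\ge C_5(r)\sqrt{n/p}+p^{-1/4}\}$, I get $\pi_{\alpha_j}(A)\le c(r)^{-n}e^{-C(r)\sqrt{p}}$, and since $p\ge C_4(r)n$ with $C_4(r)$ chosen large enough (so that $n\log(1/c(r))\le \tfrac12 C(r)\sqrt{p}$, which holds once $\sqrt{p}\ge 2\log(1/c(r))C_4(r)/C(r)\cdot\sqrt{n}\cdot\sqrt{n}/\sqrt{p}$, i.e.\ for $p$ linear in $n$ with a big enough constant — actually one needs $n\lesssim \sqrt p$ here, so $C_4$ must be taken as a function forcing $p\ge C_4(r)n^2$; I would double-check and adjust the statement's $C_4(r)n$ accordingly, or absorb it since later lemmas already require $p\gg n^{1+1/(2r)}$), the exponential factor $c(r)^{-n}$ is dominated and $\pi_{\alpha_j}(A)\le e^{-C_6(r)\sqrt{p}}$. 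The main obstacle is the normalizing-constant lower bound: one must show $Z_j$ is not super-exponentially small in $n$, which is exactly why the argument needs the almost-orthonormality of the $v_i$'s on a high-probability set, and why the regime $p$ large compared to $n$ (indeed $n^2$) is essential — controlling the determinant $|\bI_p+\sum_i\alpha_{i,j}v_iv_i^T|$ requires the rank-$r$ perturbation to have bounded operator norm, i.e.\ $\max_i\|v_i\|_2^2$ bounded, which only concentrates when $n/p\to 0$ fast enough.
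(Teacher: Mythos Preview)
Your overall strategy---compare $\pi_{\alpha_j}$ to the standard Gaussian via the density ratio, lower bound the normalizing constant $Z_j$ by $c(r)^n$, and transfer the $\chi^2$ tail---is exactly the paper's approach. The paper carries it out in polar-type coordinates $(t_i,w_i)=(\|v_i\|_2^2,\,v_i/\|v_i\|_2)$, but the content is the same: the density ratio is bounded by $C(r)\,p^{r/2}\,c(r)^{-n}$ times a Gaussian-type factor $\exp[-Cp\sum_i(t_i-1)^2]$.

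There is, however, a concrete gap in your final step, and it is precisely the point where you start doubting yourself and suggest $p\ge C_4(r)n^2$. You extract from the Gaussian tail only $e^{-C(r)\sqrt p}$ (coming from the $p^{-1/4}$ part of the threshold) and then try to beat $c(r)^{-n}$ with it, which indeed would force $n\lesssim\sqrt p$. What you are missing is that the $C_5(r)(n/p)^{1/2}$ term in the threshold is there \emph{specifically} to produce the extra factor that kills $c(r)^{-n}$. Under the standard Gaussian, $p\|v_i\|_2^2\sim\chi^2_p$, so for $\delta\le 1/2$ one has $\gamma_p[\,|\|v_i\|_2^2-1|\ge\delta\,]\le 2e^{-Cp\delta^2}$. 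With $\delta=C_5(r)(n/p)^{1/2}+p^{-1/4}$ and $(a+b)^2\ge a^2+b^2$, this yields
\[
\gamma_p(A)\;\le\; C'(r)\,\exp\!\big[-C\big(C_5(r)^2\,n+\sqrt p\big)\big].
\]
Now choose $C_5(r)$ large enough that $C\,C_5(r)^2\ge\log(1/c(r))$; then $c(r)^{-n}\,\gamma_p(A)\le C'(r)\,e^{-C\sqrt p}$, and the polynomial prefactor $p^{r/2}$ is absorbed into the exponential for $p$ large. This is exactly how the paper closes the argument (see the line where the threshold is taken in $((\tfrac{nr}{2Cp}\log(1+\|\alpha\|_\infty r))^{1/2}+p^{-1/4},\,1/2)$: the first piece cancels $(1+\|\alpha\|_\infty r)^{nr/2}$, the second produces $e^{-C\sqrt p}$). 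So only $p\ge C_4(r)n$ is needed---just to ensure $\delta\le 1/2$---and your worry about $p\ge C_4(r)n^2$ disappears once you use both pieces of the threshold.
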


\begin{proof}
We only prove the lemma for $j=0$, since for $j=1$ the proof is similar.
To ease the notation, we simply write $v$ and $\alpha$ for $(v_{i,0})_{i}$ and $(\alpha_{i,0})_{i}$ and $\mu$ for $\mu_0$. Recall that the density of $v=(v_1,\ldots, v_r)$ is proportional to $e^{-p\sum_i\|v_i\|_2^2/2}|\bI_p+\sum_i\alpha_i v_iv_i^T|^{-n/2}$. Denote $\omega$ the uniform probability measure on the $p$-dimensional sphere. Let us first change the coordinate system. The density $g$ of $t=(\|v_i\|_2^2)$ and $w=(v_i/\|v_i\|_2 )$ satisfies $g(t,w)= \phi(t,w)[\smallint \phi(x,w')\prod dx_i d\omega(dw'_i)]^{-1}$, where
\[\phi(t,w):=  (\Pi_i t_i^{-1})e^{-\frac{p}{2}\sum_{i=1}^r(t_i-1-\log(t_i))} |\bI_p+ \sum_{i}\alpha_it_i w_iw_i^T|^{-n/2}\ .\]

In order to control the density $g$, we first provide a lower bound on the normalizing constant. 
For $t\in (1-(rp)^{-1/2},1)^r$, $|\bI_p+ \sum_{i}\alpha_it_i w_iw_i^T|\leq \|\bI_p+ \sum_{i}\alpha_it_i w_iw_i^T\|^r_{op}\leq (1+\|\alpha\|_{\infty} r )^r$.
As a consequence, 
\begin{eqnarray*}
\int \phi(x,w')\Pi_i dx_id\omega(w'_i)&\geq &\left[\int_{0}^{(rp)^{-1/2}}\frac{1}{1-u}^{-\frac{p}{2}(-u-\log (1-u))}du\right]^{r} (1+\|\alpha\|_{\infty}r)^{-nr/2}\\
&\geq&(rp)^{-r/2}e^{-r^{2}/4} (1+\|\alpha\|_{\infty}r)^{-nr/2}\ .
\end{eqnarray*}

Since the determinant $|\bI_p+ \sum_{i}\alpha_it_i w_iw_i^T|$ is always larger than one, we obtain for some constant $C(r)$ depending only on $r$ and some universal constant $C$
\begin{eqnarray}\nonumber
g(t,w) &\leq&\  (rp)^{r/2} e^{-r^{2}/4}(1+\|\alpha\|_{\infty}r)^{nr/2} (\Pi_i t_i^{-1})e^{-\frac{p}{2}\sum_{i=1}^r(t_i-1-\log(t_i))} \\
&\leq & C(r) p^{r/2} (1+\|\alpha\|_{\infty}r)^{nr/2}\exp\Big[- C p \sum_{i}\Big\{(t_i-1)^2\mathbf{1}_{t_i\in (1/2,3/2)}+ \mathbf{1}_{t_i\leq 1/2} +t\mathbf{1}_{t_i\geq 3/2} \Big\} \Big]. \nonumber\\ \label{eq:lower_density_g}
\end{eqnarray}
 If $\|t-1\|_{\infty}$ belongs to  $((\frac{nr}{2Cp}\log(1+\|\alpha\|_{\infty}r))^{1/2}+ p^{-1/4},1/2)$, then  for some constant $C'(r)$ depending only on $r$ 
\[g(t,w)\leq C'(r) p^{r/2} \exp(-C p^{1/2})\ .  \]
 If $\|t-1\|_{\infty}\geq 1/2$, then $g(t,w)\leq C'(r) p^{r/2} \exp(-C' p \|t-1\|_{\infty})$ for some universal constant $C'$. Integrating these bounds with respect to $w$ and $t$, we conclude that
 for some constant $C_{5}(r)$ depending only on $r$ and some universal constant $C''$
\[\pi\Big[\max_i|\|v_i\|_2^2-1| \geq \big(C_{5}(r)\frac{n}{p}\big)^{1/2} + p^{-1/4}\Big]\leq C'(r) p^{r/2}e^{-C''p^{1/2}}\ ,\]
which is smaller than $e^{-C_{6} (r) p^{1/2}}$ for some constant $C_{6}(r)$ for $p$ is large compared to $r$. 

\end{proof}

\subsubsection{Control of  $\|\mathbf{P}^{\bX}_1-\mathbf{P}^{\bX}_0\|_{TV}$ (Proof of Lemma \ref{lem:upper_total_variation_sigma})}

 Define the probability distribution $\overline{\mathbf{P}}^{\bX}$,  such that, under $\overline{\mathbf{P}}^{\bX}$, the entries of $\bX$ follow independent standard normal distributions. By the triangular inequality, we have
\beq\label{eq:decomposition_tv_PX}
\|\mathbf{P}^{\bX}_{0} - \mathbf{P}^{\bX}_{1}\|_{TV}\leq \|\overline{\mathbf{P}}^{\bX} - \mathbf{P}^{\bX}_{0}\|_{TV}+\|\overline{\mathbf{P}}^{\bX} - \mathbf{P}^{\bX}_{1}\|_{TV}\ ,
\eeq
We will prove that $\|\overline{\mathbf{P}}^{\bX} - \mathbf{P}^{\bX}_{0}\|_{TV}$ is small, the distance $\|\overline{\mathbf{P}}^{\bX} - \mathbf{P}^{\bX}_{1}\|_{TV}$ being handled similarly. In order to simplify the notation in the remainder of this proof, we drop the subscript $0$ in the vector $\alpha_{0}$ and $v_0$.
%
Let $\omega$ denote the Haar measure on dimension $p$ orthogonal matrices. We work out 
the marginal density $f_0(\bX)$  as follows
\beqn
f_0(\bX)&= &\int (2\pi)^{-(np)/2}\big|\bI_p + \sum_{i=1}^r \alpha_{i}v_{i}v_{i}^T\big|^{n/2} \exp\Big[-tr(\bX\bX^T)/2-\sum_{i=1}^r\alpha_{i} \|\bX v_{i}\|_2^2/2\Big] \pi_{\alpha}(dv) \\
&= & \int_{\mathbb{R}^+}    h_0(v,\bX)  \pi_{\alpha}(dv)\ ,\\ &&\quad \text{where} \quad h_0(v,\bX):=\big|\bI_p + \sum_{i=1}^r \alpha_{i}v_{i}v_{i}^T\big|^{n/2} e^{-tr(\bX\bX^T)/2}\int e^{- \sum_{i=1}^r\alpha_{i}  \|\bX \bO v_i \|_2^2/2} \omega(d\bO) \ , 
\eeqn 
since the values of $\big|\bI_p + \sum_{i=1}^r \alpha_{i}v_{i}v_{i}^T\big|$ and $\sum_{i=1}^{r}\|v_{i}\|^{2}$ are rotation invariant.

For fixed $v=(v_1,\ldots,v_r) $, $h_0(v,\bX)$ stands for the density of $\bX$ when the corresponding precision matrix of the rows of  $\bX$ is a rank $r$ perturbation of the identity matrix whose  directions  are  sampled uniformly on the unit sphere. We shall prove below that it is impossible to distinguish this distribution from $\overline{\mathbf{P}}^{\bX}$ (i.e.\ no perturbation) when $p$ is large compared to $n$. 
Denote $\overline{f}(\bX)$ the density of $\bX$ under $\overline{\mathbf{P}}^{\bX}$.
In the following equations, $ \|f_0(\bX) - \overline{f}(\bX)\|_{1}$ denotes the $l_1$ distance (in $\mathbb{R}^{n\times p}$)  between the densities. Using Fubini's Theorem, we obtain
\begin{eqnarray} 
2 \| \mathbf{P}^{\bX}_{0}- \overline{\mathbf{P}}^{\bX}\|_{TV} &= & \|f_0(\bX) - \overline{f}(\bX)\|_{1}  \nonumber \\
&\leq&  \int \  \|h_0(\bX,v) - \overline{f}(\bX)\|_1  \pi_{\alpha}(dv)\ , \label{eq:decompositiontv_PX}
 \end{eqnarray}
so that we will bound the $l_1$ distance $\|h_0(\bX,v) - f(\bX)\|_1 $ for all $v=(v_1,\ldots, v_r)$.  Denote $L_v(\bX)$ the likelihood ratio $h_0(v,\bX)/\overline{f}(\bX)$. 
\begin{eqnarray}
  \|h(\bX,v) - \overline{f}(\bX)\|_1&=& \overline{\mathbf{E}}^{\bX}\left[|L_v(\bX)-1|\right] \nonumber \\
 &\leq &\sqrt{\overline{\mathbf{E}}^{\bX}\left[(L_v(\bX)-1)^2\right]}= \sqrt{\overline{\mathbf{E}}^{\bX}[L^2_v(\bX)] -1} \ ,\label{eq:upper_TV_chi2}
\end{eqnarray}
so that we have to compute the second moment of the likelihood $L_v(\bX)$. As the proof of the following lemma is a bit tedious, it is postponed to the end of the subsection.

\begin{lem}\label{lem:control_second_moment_marginal_X}
There exist three positive constants $C_7(r)$, $C_8(r)$, $C_9(r)$, only depending on $r$ such that the following holds. Assuming $p\geq n C_7(r)$, we have 
\[
\overline{\mathbf{E}}^{\bX}\left[L^2_v(\bX)\right]\leq  1+  C_8(r) \frac{n}{p}+ 4r \exp(-C_9(r)p) \ ,
\]
simultaneously for all $v=(v_1,\ldots, v_r)$ satisfying $\max_i\|v_i\|_2 \leq 2$.
\end{lem}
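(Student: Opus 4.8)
\textbf{Proof plan for Lemma \ref{lem:control_second_moment_marginal_X}.}

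The plan is to compute $\overline{\mathbf{E}}^{\bX}[L_v^2(\bX)]$ exactly (or up to a controlled remainder) by exploiting the Gaussian structure of $h_0(v,\bX)$ and of $\overline{f}(\bX)$. Fix $v=(v_1,\ldots,v_r)$ with $\max_i\|v_i\|_2\le 2$. Recall that, conditionally on a Haar orthogonal matrix $\bO$, $h_0$ is (up to the rotation-averaging) the density of a Gaussian design whose precision matrix is $\bOmega_{\bO}:=\bI_p+\sum_i\alpha_i(\bO v_i)(\bO v_i)^T$, i.e.\ covariance $\bSigma_{\bO}:=\bOmega_{\bO}^{-1}$. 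Writing $h_0(v,\bX)=\int (2\pi)^{-np/2}|\bOmega_{\bO}|^{n/2}e^{-\tr(\bX\bSigma_{\bO}^{-1}\bX^T)/2}\,\omega(d\bO)$ and $\overline f(\bX)=(2\pi)^{-np/2}e^{-\tr(\bX\bX^T)/2}$, the likelihood ratio is $L_v(\bX)=\int |\bOmega_{\bO}|^{n/2}\exp\big[-\tfrac12\tr(\bX(\bSigma_{\bO}^{-1}-\bI_p)\bX^T)\big]\,\omega(d\bO)$. Squaring and taking $\overline{\mathbf{E}}^{\bX}$ (the rows of $\bX$ i.i.d.\ standard Gaussian), the expectation factorizes over the $n$ rows, and for a single standard Gaussian row $x\in\mathbb R^p$ one has the classical Gaussian integral $\overline{\mathbf{E}}[\exp(-\tfrac12 x^T A x)]=|\bI_p+A|^{-1/2}$ whenever $\bI_p+A\succ0$. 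Applying this with $A=(\bSigma_{\bO}^{-1}-\bI_p)+(\bSigma_{\bO'}^{-1}-\bI_p)$ for two independent Haar samples $\bO,\bO'$ yields the \emph{exact} formula
\[
\overline{\mathbf{E}}^{\bX}\left[L_v^2(\bX)\right]=\int\!\!\int \left(\frac{|\bOmega_{\bO}|\,|\bOmega_{\bO'}|}{\,|\,\bOmega_{\bO}+\bOmega_{\bO'}-\bI_p\,|\,}\right)^{n/2}\omega(d\bO)\,\omega(d\bO')\ .
\]

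Next I would reduce the ratio of determinants to a small ($\le 2r\times 2r$) determinant using the matrix determinant lemma, since $\bOmega_{\bO}-\bI_p$, $\bOmega_{\bO'}-\bI_p$ are each rank $\le r$. Setting $u_i=\bO v_i$, $u_i'=\bO' v_i$, $\bU=[u_1,\ldots,u_r]$, $\bU'=[u_1',\ldots,u_r']$, $D=\diag(\alpha_i)$, one gets $|\bOmega_{\bO}|=|\bI_r+D\bU^T\bU|$, and similarly for the sum, $|\bOmega_{\bO}+\bOmega_{\bO'}-\bI_p|=|\bI_{2r}+\widetilde D [\bU\,\bU']^T[\bU\,\bU']|$ with $\widetilde D=\diag(D,D)$. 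The key point is that $[\bU\,\bU']^T[\bU\,\bU']$ is a Gram matrix built from the $2r$ vectors $u_i,u_i'$; since the $u_i$ are obtained from $v_i$ by a uniformly random rotation, $\langle u_i,u_j'\rangle$ concentrates around $0$ and $\|u_i\|^2=\|v_i\|^2$ is fixed, so this Gram matrix is within $O(\sqrt{\log p/p})$ (in operator norm, off the block-diagonal) of $\diag(\|v_i\|^2)\oplus\diag(\|v_i\|^2)$. A standard concentration estimate for $\langle u_i,u_j'\rangle=\langle v_i,\bO^T\bO' v_j\rangle$ on the sphere (sub-Gaussian with parameter $\lesssim \|v_i\|\|v_j\|/\sqrt p\le 4/\sqrt p$) gives, on an event of $\omega\otimes\omega$-probability at least $1-4r e^{-C_9(r)p}$, that the cross terms are $O(p^{-1/2})$ uniformly; call this event $\mathcal E$.

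On $\mathcal E$, I would Taylor-expand $\log\big(|\bOmega_{\bO}||\bOmega_{\bO'}|/|\bOmega_{\bO}+\bOmega_{\bO'}-\bI_p|\big)$ around the "orthogonal" configuration. When the cross Gram entries vanish exactly and $\|u_i\|^2=\|v_i\|^2=:t_i$, the numerator equals $\prod_i(1+\alpha_i t_i)^2$ and the denominator equals $\prod_i(1+2\alpha_i t_i)$, giving a log-ratio $\sum_i\log\frac{(1+\alpha_i t_i)^2}{1+2\alpha_i t_i}\ge 0$ but equal to $\sum_i\log(1+\alpha_i^2t_i^2/(1+2\alpha_i t_i))$; crucially this is $O(1)$ in $r$ but \emph{does not vanish}, so a naive expansion is not enough — one must also use $\|v_i\|^2=t_i$ being close to $1$ only up to $O(\sqrt{n/p})+p^{-1/4}$ from Lemma \ref{lem:concentration_v}. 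The cleaner route, which I would take, is: first note the exact identity is an integral over $\omega\otimes\omega$, and then observe that by rotational invariance one may fix $\bO=\bI_p$ and only average over $\bO'$. With $\bO=\bI_p$ the numerator is a constant $\prod(1+\alpha_i\|v_i\|^2)$, and the ratio becomes $\big(\prod_i(1+\alpha_i\|v_i\|^2)\,|\,\bI_p+\bM\,|^{-1}\big)^{n/2}$ where $\bM:=\sum_i\alpha_i v_iv_i^T+\sum_i\alpha_i u_i'u_i'^T$ has rank $\le 2r$; its small-matrix determinant $|\bI_{2r}+\widetilde D \bG|$ with $\bG$ the $2r\times 2r$ Gram matrix is, on $\mathcal E$, equal to $\prod_i(1+\alpha_i\|v_i\|^2)^2 (1+O(p^{-1/2}))$ after cancelling the shared block structure — wait, more precisely the block-diagonal part contributes $\prod_i(1+\alpha_i\|v_i\|^2)\cdot\prod_i(1+\alpha_i\|u_i'\|^2)$ and the off-diagonal perturbation of size $O(p^{-1/2})$ changes the log-determinant by $O(r^2 p^{-1/2}\max_i\alpha_i^2)=O(p^{-1/2})$. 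Hence on $\mathcal E$ the integrand equals $\exp\big[\tfrac n2\big(\sum_i\log\frac{1+\alpha_i\|v_i\|^2}{1+\alpha_i\|u_i'\|^2}+O(p^{-1/2})\big)\big]$, and since $\|u_i'\|^2=\|v_i\|^2$ (rotations preserve norms) the first sum is exactly $0$, leaving $\exp(O(np^{-1/2}))=1+O(np^{-1/2})$. Combining with the complementary event $\mathcal E^c$ of probability $\le 4r e^{-C_9(r)p}$, on which the integrand is crudely bounded by $\big(\prod_i(1+\alpha_i\|v_i\|^2)\big)^{n/2}\le (1+4\|\alpha\|_\infty)^{nr/2}$ (using $\bI_p+\bM\succeq\bI_p$ and $\|v_i\|\le2$), one gets $\overline{\mathbf E}^{\bX}[L_v^2]\le 1+C_8(r)\tfrac np + 4r(1+4\|\alpha\|_\infty)^{nr/2}e^{-C_9(r)p}$; absorbing the polynomial-in-$n$ factor into the exponent (legitimate once $p\ge C_7(r)n$) gives the stated bound $1+C_8(r)\tfrac np+4r\exp(-C_9(r)p)$.

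\textbf{Main obstacle.} The delicate point is the perturbation analysis of the $2r\times 2r$ small determinant: one must show that the $O(p^{-1/2})$ off-block corrections to the Gram matrix $\bG$ produce only an $O(np^{-1/2})$ (not $O(n)$) change in the exponent, which requires that $\bI_{2r}+\widetilde D\bG$ stay uniformly well-conditioned on $\mathcal E$ — this is where $\alpha_i$ depending only on $r$ (hence bounded) and the sphere-concentration of the cross terms are both essential. The rest is bookkeeping: controlling $\mathcal E^c$ via the sub-Gaussian concentration of $\langle v_i,\bO^Tu_j'\rangle$ on the sphere, and the crude deterministic bound on the integrand there.
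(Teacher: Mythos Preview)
Your overall strategy—Gaussian integration to an exact determinant ratio, reduction to a $2r\times 2r$ problem via the matrix determinant lemma, perturbation analysis of the Gram matrix, and a crude bound on a small-probability complement—is the same as the paper's. But there are two quantitative gaps that prevent your argument from reaching the stated bound $1+C_8(r)\tfrac{n}{p}$ under the assumption $p\ge C_7(r)n$.

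\textbf{First gap: the perturbation is second order, not first order.} You write that an off-block perturbation of size $O(p^{-1/2})$ to the Gram matrix changes the log-determinant by $O(p^{-1/2})$. This is too pessimistic by a full factor $p^{-1/2}$. The key structural fact is that the diagonal blocks of $\bG$ are \emph{exactly} equal (both are $V^TV$, since rotations preserve inner products), so the perturbation $\bG-\bG_0$ is purely block-off-diagonal. Writing $M_0=\bI_{2r}+\widetilde D\bG_0$ and $E=\widetilde D(\bG-\bG_0)$, one has $\tr(M_0^{-1}E)=0$, hence
\[
\log\frac{|M_0+E|}{|M_0|}=\log|\bI+M_0^{-1}E|=-\tfrac12\tr\big((M_0^{-1}E)^2\big)+O(\|E\|_{op}^3)=O(\|E\|_{op}^2)=O(p^{-1})\ ,
\]
not $O(p^{-1/2})$. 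The paper obtains the same cancellation differently: it first diagonalizes $\sum_i\alpha_iv_iv_i^T=\sum_i\tilde\alpha_iw_iw_i^T$ with $(w_i)$ orthonormal, so that the ``ideal'' matrix is $\bI_{2r}+\bD$ (diagonal), and then shows explicitly that the \emph{diagonal} of the perturbation $\bV=\bA-(\bI_{2r}+\bD)$ is already $O(W_i^2)$ while only the off-diagonal is $O(W_i)$; hence $\tr(\bV_2)=O(W_i^2)$ and $\log|\bI+\bV_2|\ge \tr(\bV_2)-\|\bV_2\|_F^2\ge -C(r)\max_iW_i^2$. With your first-order bound the integrand is $\exp(O(np^{-1/2}))$, which is \emph{not} $1+O(np^{-1/2})$ under $p\ge C_7(r)n$ (it is exponentially large in $\sqrt n$), and even in the regime $p\gtrsim n^2$ it would give $n/\sqrt p$, not $n/p$.

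\textbf{Second gap: the event/integration structure.} Your event $\mathcal E$ cannot simultaneously have (i) cross-terms $\le Cp^{-1/2}$ and (ii) complement probability $\le 4re^{-C_9(r)p}$. Sphere concentration gives $\mathbb P(|\langle v_i,\bO' v_j\rangle|>t)\le 2e^{-c\,pt^2}$, so a cutoff $t\asymp p^{-1/2}$ yields only a constant failure probability, while a failure probability $e^{-cp}$ forces $t\asymp 1$. The paper resolves this by \emph{not} freezing a single cutoff: on the event $\mathcal A=\{\max_iW_i\le c\}$ (probability $\ge 1-4re^{-C(r)p}$) it keeps the \emph{random} bound $\exp\big(C(r)\,n\max_iW_i^2\big)$ on the integrand and then integrates it as a moment generating function,
\[
\mathbb E^\omega\Big[\exp\big(\tfrac{C(r)n}{p}\cdot p\max_iW_i^2\big)\Big]\le 1+C_8(r)\tfrac{n}{p}\ ,
\]
using that $p\max_iW_i^2$ has sub-exponential tails and that $C(r)n/p$ is small once $p\ge C_7(r)n$. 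A single high-probability event with a deterministic bound on it does not extract the $n/p$ rate.

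Two minor points: (a) your first computation with denominator $\prod_i(1+2\alpha_it_i)$ corresponds to $u_i=u_i'$, the opposite of the near-orthogonal regime you want; (b) in the ``cleaner route'' you repeatedly write $\prod_i(1+\alpha_i\|v_i\|^2)$ for $|\bI_r+DV^TV|$, which is only correct when the $v_i$ are orthogonal—this is harmless for the argument (both the numerator and the block-diagonal part of the denominator equal $|\bI_r+DV^TV|^2$ exactly), but it obscures the cancellation. The paper's preliminary diagonalization step is precisely what makes this bookkeeping clean.
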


This lemma, together with \eqref{eq:upper_TV_chi2}, gives us a uniform  bound of $\|h(\bX,v) - \overline{f}(\bX)\|_1$ over all $v$ satisfying the above condition
\[
\|h(\bX,v) - \overline{f}(\bX)\|_1^2 \leq  C_8(r) \frac{n }{p} + 4r \exp(-C_9(r)p)\leq  C_{10}(r)\frac{n}{p}\ .
\]
 Denote $\cV$ the collection of  $v$ such that $v=(v_1,\ldots, v_r)$ satisfying $\max_i\|v_i\|_2 \leq 2$. Coming back to the decomposition \eqref{eq:decompositiontv_PX}, we conclude that 
\begin{eqnarray}
 \|\mathbf{P}^{\bX}_{0} - \overline{\mathbf{P}}^{\bX}\|_{TV} &\leq & \int_{\cV} \|h(\bX,v) - \overline{f}(\bX)\|_1 \pi_{\alpha}(dv) + \int_{\cV^c}   \|h(\bX,v) - \overline{f}(\bX)\|_1 \pi_{\alpha}(dv) \nonumber \\
 &\leq &   C(r) \sqrt{\frac{n}{p}} + \mu_0\left[\max_{i=1}^r\|v_{i}\|_2 \geq 2 \max_{i}|\alpha_i|\right] 
 \leq  C'(r)\sqrt{\frac{n}{p}} \nonumber \ , 
\end{eqnarray}
 when $p$ is large compared to $n$ using Lemma \ref{lem:concentration_v}. Handling analogously the difference $\|\mathbf{P}^{\bX}_{1} - \overline{\mathbf{P}}^{\bX}\|_{TV}$, we conclude that 
\beq
\|\mathbf{P}^{\bX}_{1} - \overline{\mathbf{P}}^{\bX}\|_{TV}\leq  C'(r)\sqrt{\frac{n}{p}} \ . \label{eq:upper_tv_X}
\eeq

\begin{proof}[Proof of Lemma \ref{lem:control_second_moment_marginal_X}]

Relying on Fubini identity and the fact that $\bX$ follows a normal distribution, we have 
\begin{eqnarray}
\overline{\mathbf{E}}^{\bX}\left[L^2_v(\bX)\right]&=& \overline{\mathbf{E}}^{\bX}\left[\big|\bI_p + \sum_{i=1}^r \alpha_{i}v_{i}v_{i}^T\big|^{n} \int e^{-\sum_{i=1}^r\alpha_{i} [\|\bX \bO v_i \|_2^2+ \|\bX \bO' v_i \|_2^2]/2}  \omega(d\bO)\omega(d\bO')\right]\nonumber\\
& = &\int \left[\frac{\big|\bI_p + \sum_{i=1}^r \alpha_{i}v_{i}v_{i}^T\big|^{2}}{|\bI_p + \sum_{i=1}^r \alpha_{i}(\bO v_{i})(\bO v_{i})^T + \alpha_{i}(\bO' v_{i})(\bO' v_{i})^T\big|^{2}| }\right]^{n/2}\omega(d\bO)\omega(d\bO') \nonumber \\ \nonumber
& = &\int \left[\frac{\big|\bI_p + \sum_{i=1}^r \alpha_{i}v_{i}v_{i}^T\big|^{2}}{|\bI_p + \sum_{i=1}^r \alpha_{i} v_{i}v_{i}^T + \alpha_{i}(\bO v_{i})(\bO v_{i})^T\big|^{2}| }\right]^{n/2}\omega(d\bO)\ .
\end{eqnarray}
Diagonalizing the matrix $\bI_p + \sum_{i=1}^r \alpha_{i}v_{i}v_{i}^T$, we find $\tilde{\alpha}_1,\ldots, \tilde{\alpha_r}>0$ and an orthonormal family $w_1,\ldots, w_r$ such that $\sum_{i=1}^r \alpha_{i}v_{i}v_{i}^T= \sum_{i=1}^r \tilde{\alpha}_i w_iw_i^T$. Note that $\|\tilde{\alpha}\|_{\infty}\leq \|\alpha\|_{\infty}\sum_{i=1}^r\|v_i\|_2^2\leq 2r \|\alpha\|_{\infty}$.
We arrive at the following representation
\beq\label{eq:decomposition_ELt_X}
 \overline{\mathbf{E}}^{\bX}\left[L^2_v(\bX)\right] = \int \left[\frac{\big|\bI_p + \sum_{i=1}^r \tilde{\alpha}_{i}w_{i}w_{i}^T\big|^{2}}{|\bI_p + \sum_{i=1}^r \tilde{\alpha}_{i} w_{i}w_{i}^T + \tilde{\alpha}_{i}(\bO w_{i})(\bO w_{i})^T\big|^{2}| }\right]^{n/2}\omega(d\bO)\ .
\eeq
We see that $\overline{\mathbf{E}}^{\bX}[L^2_t(\bX)]$ expresses as the $n/2$-moment of a ratio of determinants. 
In order to ease the notation, we extend the vector $\tilde{\alpha}$ in a $2r$-dimensional vector by concatenating it with itself. Define the size diagonal matrix $\bD$ by $\bD_{i,i}= \tilde{\alpha_{i}}$ for $1\leq i\leq 2r$. Extend the orthonormal family $(w_1,\ldots, w_r)$ into $(w_1,\ldots, w_{2r})$ by a Gram-Schmidt orthonormalization of $(w_1,\ldots, w_r,\bO w_1,\ldots, \bO w_r)$.
Since the determinant of $\bI_p+ \sum_{i=1}^r  \tilde{\alpha}_iw_iw_i^T+ \sum_{i=1}^r  \tilde{\alpha}_i(\bO w_i)(\bO w_i)^T $ is only determined by its restriction of the basis $(w_1,\ldots, w_{2r})$, we introduce the matrix $\bA$ of this linear application into the space spanned by $(w_1,\ldots, w_{2r})$. We arrive at
\beq\label{eq:decomposition_ELt_X_2}
\overline{\mathbf{E}}^{\bX}\left[L^2_v(\bX)\right]= \int \left(\frac{|\bI_{2r}+\bD|}{|\bA|}\right)^{n/2}\omega(d\bO)\ .
\eeq

In order to prove that this quantity is close to one, we shall show that the matrix close $\bA$ is close (in entry-wise supremum norm) to $\bI_{2r}+\bD$. The difference matrix $\bV:= \bA - \bI_{2r} - \bD$ writes as
\[
\bV_{l,m}= \sum_{i=1}^{r}  \tilde{\alpha}_{i} \langle \bO w_i, w_l\rangle  \langle \bO w_i, w_m\rangle - \tilde{\alpha}_{l}\mathbf{1}_{l=m}\mathbf{1}_{l>r}
\]
Given $i=1,\ldots, r$, define the space $S_i= \mathrm{Vect}\{w_1,\ldots, w_r, \bO w_1,\ldots, \bO w_{i-1}\}$. 
By definition of $w_l$, observe that $\langle w_i, \nu_l\rangle=0$  for all $i<l$. As a consequence, for any $l<m$, 
 \[|\bV_{l,m}|\leq \sum_{i=m}^r \tilde{\alpha}_{i}  |\langle \bO w_i, w_l\rangle|\leq 
 \sum_{i=1}^{r}  \tilde{\alpha}_{i} \|\Pi_{S_i} \bO w_i\|_2\ , \] where $\Pi_S$ denote the orthogonal projection onto the vector space $S$. The diagonal terms of $\bV$ satisfy
\[
|\bV_{l,l}|\leq    \sum_{i=1}^{r}  \tilde{\alpha}_{i}  \|\Pi_{S_i}\bO w_i\|_2^2  \ .
\]
For $i=1,\ldots, r$, denote $W_i=  \|\Pi_{S_i}\bO w_i\|_2$. 
 Define the matrix  $\bV_2:=(\bI_{2r}+\bD)^{-1/2}\bV(\bI_{2r}+\bD)^{-1/2} $. As a consequence of the  previous inequalities, we obtain
\[
 |tr(\bV_2)|\leq 2r^2\|\tilde{\alpha}\|_{\infty}\max_{i}W_i^2\ , \quad\quad \| \bV_2\|_{\infty}\leq r\|\tilde{\alpha}\|_{\infty} \max_{i} W_i\ .
\]
Assume that $2r\|\bV_2\|_{\infty}\leq 1/2$ so that $\|\bV_{2}\|_{op}\leq 1/2$. Also denote $\lambda_{i}(\bV_2)$ the ordered eigenvalues of $\bV_2$.
\beqn 
\log\left[\frac{|\bA|}{|\bI_{2r}+\bD|}\right]& =& \log\left[|\bI_{2r}+ \bV_2|\right]= \sum_{i=1}^{2r}\log(1+\lambda_{i}(\bV_2))\\
&\geq & tr(\bV_2) - \sum_{i}\lambda^2_{i}(\bV_2) \quad \quad \text{(since $\log(1+x)\geq x-x^2$ for $x\in [-1/2,1/2]$  )}\\
&\geq & - 6r^4(\|\tilde{\alpha}\|^2_{\infty}\vee 1)\max_{i}W_i^2\ .
\eeqn 
Define the event $\cA:=\{4r^2\|\tilde{\alpha}\|_{\infty}\max_{i}W_i<1\}$, so that, under $\cA$, we have $2r\|\bV_2\|_{\infty}\leq 1/2$. Under $\cA$, we bound $\log[|\bA|]$ as above, whereas, under $|\cA|^c$, we simply use that $|\bA|>1$. We also write $\mathbb{E}^\omega$ for the expectation with respect to the Haar measure $\omega$.
\begin{eqnarray}
\overline{\mathbf{E}}^{\bX}\left[L^2_v(\bX)\right]&\leq& \mathbb{E}^\omega\left[\exp\left(4nr^4(\|\tilde{\alpha}\|^2_{\infty}\vee 1) \max_{i} W^2_i\right) \right]+ \omega[\cA^c]|\bI_{2r}+\bD|^{n}  \nonumber \\ 
&\leq& \mathbb{E}^\omega\left[\exp\left( \frac{n C(r)}{p}  (p \max_{i}W_i^2)\right)\right]+  \omega[\cA^c] \exp\left(n C'(r)\right)\ ,
\label{eq:upper_Lt_X}
\end{eqnarray}
where $C(r)$ and $C'(r)$ only depend on $r$. We used in the last line that $\|\tilde{\alpha}\|_{\infty}\leq  2r \|\alpha\|_{\infty}$ and that the choice of $\alpha$ only depends on $r$.

In order to work out this quantity, we need to control the deviations of $p\max_{i}W^2_i$. Recall that $(w_1,\ldots, w_r)$ form an orthonormal family. Hence,  conditionally to $(\bO w_1,\ldots, \bO w_{i-1})$,  $\bO w_i$ follows a uniform distribution on the unit sphere intersected with the orthogonal space of $\mathrm{Vect}(\bO w_1,\ldots, \bO w_{i-1})$.
As a consequence, $W^2_i$ follows the same distribution as $\sum_{i=1}^{r}Z_i^2/\|Z\|_2^2$ where $Z=(Z_1,\ldots, Z_{p-i+1})\sim \cN(0,\bI_{p-i+1})$ (since the Gaussian distribution is isotropic). Noting that $W_i^2$ is always smaller than one $1$, we consider for any $t\in (0,p)$
\beqn 
\omega\left[pW_i^2\geq t+2r\right]&= &\P\left[\frac{p\sum_{i=1}^{r}Z_i^2}{\|Z\|_2^2}\geq t+2r\right]\leq \P\left[\|Z\|_2^2\leq p/2\right]+ \P\left[\sum_{i=1}^{r}(Z_i^2-1) \geq \frac{t}{2}\right]\\
&\leq & e^{-Cp}+ e^{-\tfrac{t}{8}\wedge \tfrac{t^2}{64r} }\leq 2e^{-C(r)t} \ , 
\eeqn 
where we used Lemma \ref{lem:chi_2} and $r\leq p/8$ in the last line. Taking an union bound and integrating this deviation bound, we derive that 
\[
 \mathbb{E}^{\omega}\left[\exp\left( \frac{n C(r)}{p}  (p \max_{i}W_i^2)\right)\right]\leq 1+  C'(r) \frac{n}{p} \ ,
\]
where we used that $p$ is large compared to $n$. We also derive from the above deviation inequality that $\omega[\cA^c]\leq 4re^{-C(r)p}$. Together with \eqref{eq:upper_Lt_X}, this concludes the proof.
\end{proof}

\subsubsection{Control of $ \mathbf{E}^{\bX}_0[ \|\mathbf{P}^{Y|\bX}_{0} - \mathbf{P}^{Y|\bX}_{1}\|_{TV}]$ (Proof of Lemma \ref{lem:upper_total_variation_sigma_Y})}

Let us first characterize the conditional distributions  $\mathbf{P}^{Y|\bX}_{0}$ and $\mathbf{P}^{Y|\bX}_{1}$.

\begin{lem}[Distribution of  $Y$ conditionally to  $\bX$ under $\mathbf{P}_0$ and $\mathbf{P}_1$]\label{lem:distribution_P0_P1}
 
Define the matrices  
\beq \label{eq:def_bB}
\bB_0= \sum_{i=1}^r \gamma^2_{i,0}(p\bI_p + \alpha_{i,0}\bX^T\bX)^{-1} \ ,\quad \quad 
\bB_1= \sum_{i=0}^r \gamma^2_{i,1}(p\bI_p + \alpha_{i,1}\bX^T\bX)^{-1}\ .
\eeq 
and for $j=0,1$,
\beq
\bGamma_j:= \frac{1}{\sigma_j^2}\left[\bI_n- \frac{1}{\sigma_j^2}\bX (\bB_j^{-1}+ \bX^T\bX/\sigma_j^2)^{-1}\bX^T \right]
\eeq
 Under $\mathbf{P}_0$ (resp. $\mathbf{P}_1$),   $Y$ follows, conditionally to $\bX$, a centered normal distribution with precision matrix $\bGamma_0$ (resp. $\bGamma_1$).
\end{lem}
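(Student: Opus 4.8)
The plan is to compute the conditional law of $Y$ given $\bX$ by first identifying the posterior distribution of the latent vectors $v=(v_{i,j})_i$ given $\bX$, and this is exactly where the particular form of the prior $\pi_{\alpha_j}$ pays off. For a fixed value of $v$, the rows of $\bX$ are i.i.d.\ $\cN(0,\bSigma)$ with $\bSigma^{-1}=\bI_p+\sum_i\alpha_{i,j}v_iv_i^T$, so the conditional density of $\bX$ given $v$ equals $(2\pi)^{-np/2}\,\bigl|\bI_p+\sum_i\alpha_{i,j}v_iv_i^T\bigr|^{n/2}\exp[-\tfrac12\tr(\bX\bX^T)-\tfrac12\sum_i\alpha_{i,j}\|\bX v_i\|_2^2]$. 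Multiplying by the prior density of $v$, which is proportional to $\bigl|\bI_p+\sum_i\alpha_{i,j}v_iv_i^T\bigr|^{-n/2}\exp[-\tfrac p2\sum_i\|v_i\|_2^2]$, the two determinantal factors cancel, and the joint density of $(v,\bX)$ factorizes as a factor depending on $\bX$ only times $\prod_i\exp[-\tfrac12 v_i^T(p\bI_p+\alpha_{i,j}\bX^T\bX)v_i]$. Consequently, under $\mathbf{P}_j$ and conditionally on $\bX$, the vectors $v_{i,j}$ are \emph{independent} and centered Gaussian with covariance matrices $(p\bI_p+\alpha_{i,j}\bX^T\bX)^{-1}$.

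Given this, I would note that $Y=\bX\beta+\epsilon$ with $\beta=\sum_i\gamma_{i,j}v_{i,j}$ a linear function of $v$ and $\epsilon\sim\cN(0,\sigma_j^2\bI_n)$ independent of $(v,\bX)$; hence, conditionally on $\bX$, the variable $Y$ is the sum of a linear image of a Gaussian vector and of independent Gaussian noise, so it is itself centered Gaussian. Its conditional covariance follows from the law of total covariance:
\[
\Cov(Y\mid\bX)=\sigma_j^2\bI_n+\bX\Bigl(\sum_i\gamma_{i,j}^2(p\bI_p+\alpha_{i,j}\bX^T\bX)^{-1}\Bigr)\bX^T=\sigma_j^2\bI_n+\bX\bB_j\bX^T,
\]
with $\bB_j$ the matrix defined in \eqref{eq:def_bB} (the sum running over $i=1,\dots,r$ for $j=0$ and over $i=0,\dots,r$ for $j=1$).

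It then remains to invert this covariance matrix, which I would do with the Sherman--Morrison--Woodbury identity applied with $A=\sigma_j^2\bI_n$, $U=\bX$, $C=\bB_j$, $V=\bX^T$, yielding
\[
(\sigma_j^2\bI_n+\bX\bB_j\bX^T)^{-1}=\frac1{\sigma_j^2}\Bigl[\bI_n-\frac1{\sigma_j^2}\bX\bigl(\bB_j^{-1}+\bX^T\bX/\sigma_j^2\bigr)^{-1}\bX^T\Bigr]=\bGamma_j,
\]
which is precisely the announced precision matrix. There is no genuine difficulty in this argument: the only step that must be handled with care is the cancellation of the determinantal factors in the first step, which is exactly what the construction of $\pi_{\alpha_j}$ was engineered to produce, and after which everything reduces to a routine Gaussian computation.
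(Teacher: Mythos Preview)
Your proof is correct and follows essentially the same route as the paper's. Both arguments hinge on the cancellation of the determinantal factor $|\bI_p+\sum_i\alpha_{i,j}v_iv_i^T|^{\pm n/2}$ between the Gaussian likelihood of $\bX$ and the prior $\pi_{\alpha_j}$, after which the posterior of $v\mid\bX$ (equivalently, of $z=\sum_i\gamma_{i,j}v_i\mid\bX$) is centered Gaussian with covariance $\bB_j$, and the conditional law of $Y\mid\bX$ follows by a standard Gaussian integration; the paper performs this last step by directly completing the square in the joint density $f_j(Y,\bX,v)$, whereas you phrase it as computing $\Cov(Y\mid\bX)$ and then applying Woodbury, but these are the same computation.
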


The precision matrices $\bGamma_0$ and $\bGamma_1$ are both diagonalizable in the same basis that diagonalizes $\bX\bX^T$. Denoting $\lambda_i$, $i=1,\ldots,n$, the ordered eigenvalues of $\bX\bX^T/p$, we define 
\beq\label{eq:definition_h_j}
h_j(\lambda_i) = \left(\sigma_j^2 + \sum_{l} \frac{\gamma^2_{l,j}\lambda_i}{1+\alpha_{l,j}\lambda_i}\right)^{-1},\;j=0,1
\eeq
the corresponding eigenvalues of $\bGamma_0$ and $\bGamma_1$. 

Suppose that  $\lambda_i$ lies in $(1/2,3/2)$ (this occurs with high probability). Since, by \eqref{eq:relation_1}, $\sigma_j^2+\sum_{l} \frac{\gamma^2_{l,j}}{1+\alpha_{l,j}}=2$, we have 
\beq\label{eq:control_hj}
1/4\leq h_j(\lambda_i) \leq 4
\eeq

Let us develop the Taylor's expansion of $1/h_j(\lambda_i)$ with respect to $(\lambda_i -1)$:
\beq \label{eq:decomposition_eta_j}
\frac{1}{h_j(\lambda_i) }= \sigma_j^2 +  \sum_{l} \frac{\gamma^2_{l,j}}{1+\alpha_{l,j}}+ \sum_{m=1}^{+\infty}(\lambda_i-1)^m(-1)^{m-1}\sum_l \frac{\gamma^2_{l,j}\alpha^{m-1}_{l,j}}{(1+\alpha_{l,j})^{m+1}}\ . 
\eeq
By the definition \eqref{eq:choice_sigma} of $\sigma_j^2$ and the property \eqref{eq:relation_1}, the constant term $\sigma_j^2 +  \sum_{l} \frac{\gamma^2_{l,j}}{1+\alpha_{l,j}}$ equals $2$. Now consider any $m\in \{1,\ldots ,2r-1\}$.
The rational function $x^{m-1}/(1+x)^{m+1}$ decomposes as a linear combination of $\tfrac{1}{(1+x)^{2}},\ldots \tfrac{1}{(1+x)^{m+1}}$. As a consequence , the term or order $(\lambda_i-1)^m$ in \eqref{eq:decomposition_eta_j} is a linear combination of the moments $m_q$ (defined in \eqref{eq:relation_2}) for $q=1,\ldots, m-1$. Thus, the choice of the parameters $\gamma_{l,j}$ and $\alpha_{l,j}$ makes all the terms of the development of $1/h_0(\lambda_i)$ and $1/h_1(\lambda_i)$ coincide up to order $2r-1$.
\beqn 
\Big| \frac{1}{h_1(\lambda_i)}-\frac{1}{h_0(\lambda_i)}\Big|&=& \Big|\sum_{m=2r}^{\infty}(\lambda_i-1)^m(-1)^{m} \left[\sum_l \frac{\gamma^2_{l,0}\alpha^{m-1}_{l,0}}{(1+\alpha_{l,0})^{m+1}} - \frac{\gamma^2_{l,1}\alpha^{m-1}_{l,1}}{(1+\alpha_{l,1})^{m+1}}\right]\Big|\\
&\leq & \sum_{m=2r}^{\infty} \big|\lambda_{i}-1\big|^{m}\left[\sum_l \frac{\gamma^2_{l,0}}{(1+\alpha_{l,0})^{2}} + \sum_{i}\frac{\gamma^2_{l,1}}{(1+\alpha_{l,1})^{2}}\right]\\
&\leq & \big|\lambda_{i}-1\big|^{2r}\ ,
\eeqn 
where we used in the last line that $|\lambda_i-1|\leq 1/2$ and $m_1=1/4$   (with all $m_1$ defined in \eqref{eq:relation_2}).
Suppose that all the eigenvalues $\lambda_i$ lie in $(1/2,3/2)$ so that  $\tfrac{h_0(\lambda_i)}{h_1(\lambda_i)}$ belongs to $(1/16,16)$ by \eqref{eq:control_hj}.
\beqn 
2 \|\mathbf{P}^{Y|\bX}_{0} - \mathbf{P}^{Y|\bX}_{1}\|^2_{TV}&\leq& \bbK[\mathbf{P}^{Y|\bX}_{0};\mathbf{P}^{Y|\bX}_{1}] \quad \quad \text{(by Pinsker's inequality)}\\
&  & \quad    = \frac{1}{2} \sum_{i=1}^n \left( \frac{h_1(\lambda_i)}{h_0(\lambda_i)}-1 -  \log\left[\frac{h_1(\lambda_i)}{h_0(\lambda_i)}\right]\right)\\
&\leq & C\sum_{i=1}^n h^2_1(\lambda_1)\left(\frac{1}{h_0(\lambda_i)} - \frac{1}{h_1(\lambda_i)}\right)^{2}\\
&\leq&  C'\sum_{i=1}^n(\lambda_i-1)^{4r} \leq C' n \|\bX\bX^T/p -\bI_n\|_{op}^{4r} 
\eeqn 
since, for some fixed $C>0$,  $x-1-\log(x)\leq C(x-1)^2$ for $x\in (1/16,16) $. The total variation distance is always smaller than one, so that
\beqn 
\mathbf{E}^{\bX}_0\left[ \|\mathbf{P}^{Y|\bX}_{0} - \mathbf{P}^{Y|\bX}_{1}\|_{TV}\right]&\leq& \mathbf{P}_0^{\bX}\left[\|\bX \bX^T/p-\bI_n\|_{op}\geq 1/2\right]+2^{-1/2} \mathbf{E}^{\bX}_0\left[\sqrt{\bbK[\mathbf{P}^{Y|\bX}_{0};\mathbf{P}^{Y|\bX}_{1}]}\mathbf{1}_{\|\bX \bX^T/p-\bI_n\|_{op}\leq 1/2}\right]\\
&\leq& \mathbf{P}_0^{\bX}\left[\|\bX \bX^T/p-\bI_n\|_{op}\geq 1/2\right]+C n^{1/2} \mathbf{E}^{\bX}_0 \left[\|\bX\bX^T/p -\bI_n\|_{op}^{2r} \right] 
\eeqn 
Conditionally to $v=(v_1,\ldots, v_r)$, $\bX$ follows a Gaussian distribution with inverse covariance $\bSigma^{-1}(v)= \bI_p+ \sum_{i=1}^r \alpha_{i,0} v_iv_i^T$ with $\alpha_{i,0}>0$. As a consequence, $\|\bSigma(v)\|_{op}= 1$ (since $p>r$) and $tr(\bSigma(v))$ belongs to $[p-r,p]$. As a consequence, we may apply the deviation inequality for Wishart matrices with non-identity covariances (Lemma \ref{lemma_concentration_spectre_A}) to $\bX$ and reintegrate with respect to $v$.
\beq \label{eq:concentration_X_P0X}
\mathbf{P}_0^{\bX}\left[\|\bX\bX^T -tr(\bSigma(v))\bI_n\|_{op}\leq       2\sqrt{p}\left(\sqrt{n}  + 10+\sqrt{2t}\right)+ 3 \left[n + 100+ 2t\right]\right]\geq 1 - 2e^{-t}\ ,
\eeq
for any $t>0$. Since $n\leq p$, this simplifies in
\[
 \mathbf{P}_0^{\bX}\left[\|\bX\bX^T/p -\bI_n\|_{op}\leq \frac{r}{p}+ C\left\{\sqrt{\frac{n}{p}}+ \sqrt{\frac{t}{p}}+ \frac{t}{p}\right\}\right]\geq 1 - 2e^{-t} \ .
\]
Integrating this deviation bound, we obtain  $\mathbf{E}^{\bX}_0 \left[\|\bX\bX^T/p -\bI_n\|^{2r} \right]\leq C(r) (n/p)^{r}$ where the constant $C(r)$ only depends on the integer $r$. Also, from \eqref{eq:concentration_X_P0X}, we derive that the probability that $\|\bX \bX^T/p-\bI_n\|\geq 1/2$ is smaller than $2 e^{-C'p}$ for some $C'>0$. We have proved that 
\[
\mathbf{E}^{\bX}_0\left[ \|\mathbf{P}^{Y|\bX}_{0} - \mathbf{P}^{Y|\bX}_{1}\|_{TV}\right]\leq 2e^{-C'p}+ C(r) \left(\frac{n^{1+ 1/(2r)}}{p} \right)^{r}\ ,
\]
where the constant $C(r)$ only depends on $r>0$.

\begin{proof}[Proof of Lemma \ref{lem:distribution_P0_P1}]
 We only prove the result for $\mathbf{P}_1$, the result for $\mathbf{P}_0$ being handled similarly. Since  $\mathbf{P}_1$ is a mixture distribution, we introduce  $f_1(Y,\bX,v_0,\ldots, v_r)$  the total density of $(Y,\bX,v)$ where $v=(v_0,\ldots,v_r)$.
\[
f_1(Y,\bX,v) \propto \exp\left[-\frac{\|Y- \bX\sum_{i=0}^r\gamma_{i,1}  v_{i}\|^2}{2\sigma_1^2} - \frac{tr(\bX \bX^T)}{2} - \sum_{i=0}^r \frac{\alpha_{i,1}}{2} \|\bX v_i\|_2^2 - \sum_{i=0}^r \frac{p}{2}\|v_i\|_2^2\right]
\]
Denote $z=\sum_{i=0}^r\gamma_{i,1} v_{i}$. If the density of $v$ is proportional to $\prod_i  \exp\left[-\sum_{i=0}^r \frac{\alpha_{i,1}}{2} \|\bX v_i\|_2^2 - \sum_{i=0}^r \frac{p}{2}\|v_i\|_2^2\right]$, then $z$ follows a centered normal distribution with covariance matrix  $\bB_1= \sum_{i=0}^r \gamma^2_{i,1}(p\bI_p + \alpha_{i,1}\bX^T\bX)^{-1}\ .$ As a consequence, integrating $f_1(Y,\bX,v)$ with respect to $v$ leads to the marginal density
\beqn 
 f_1(Y,\bX)&\propto& \int\tfrac{1}{(2\pi)^{n/2}|\bB_1|^{1/2}} e^{-\|Y- \bX z\|^2/(2\sigma_1^2) - z^T \bB_1^{-1}z/2}e^{- tr(\bX \bX^T)/2}dz\\
 &\propto& e^{-\frac{1}{2\sigma_1^2}Y^T \left(\bI_n- \frac{1}{\sigma_1^2}\bX (B_1^{-1}+ \bX^T\bX/\sigma_1^2)^{-1}\bX^T\right)Y}e^{- tr(\bX \bX^T)/2}\big[|\bB_1||\bB_1^{-1}+\tfrac{\bX^T\bX}{\sigma_1^2}|\big]^{-1/2}
\eeqn 
Hence, conditionally to $\bX$, $Y$ follows a centered normal distribution with precision matrix $\bGamma_1$.

\end{proof}

\subsubsection{Proof of Lemma \ref{lem:concentration_eta}}

We prove the result for $\mu_0$, the proof for $\mu_1$  being handled similarly. In order to ease the notation we respectively write $\alpha_{i}$, $\gamma_i$, $v_i$ and $\sigma$ for  $\alpha_{i,0}$, $\gamma_{i,0}$, $v_{i,0}$ and $\sigma_0$. From \eqref{eq:relation_1} and the definition \eqref{eq:definition_eta}, we have the following decomposition
\[\eta_0=\frac{\sum_{i=1}^{r}\tfrac{\gamma_{i}^2}{1+\alpha_{i}}  }{\sigma^2 + \sum_{i=1}^{r}\tfrac{\gamma_{i}^2}{1+\alpha_{i}} }.\]
Define $s_0 = \sum_{i=1}^{r}\gamma_{i}^2/(1+ \alpha_{i})$ so that $\eta_0= s_0/(\sigma^2+s_0)$. Since $\sigma^2$ is fixed to $3/2$, it suffices to prove that $ \beta^T \bSigma \beta$ is concentrated around 
$s_0$ to obtain a concentration bound for $\eta(\beta,\sigma)$ around $\eta_0$.

\medskip 
We use a similar approach to that of the proof of Lemma \ref{lem:control_second_moment_marginal_X}.
Conditionally to $(v_{1},\ldots, v_{r})$, $\bSigma^{-1}= \bI_p+ \sum_{i=1}^r \alpha_{i} v_{i} v_{i}^T$. Define $w_i:=v_i/\|v_i\|_2$ the standardized version of $v_i$ and $t_i:= \|v_i\|_2^2$. Also define the Gram-Schmidt orthonormalization basis $(\nu_1,\ldots,\nu_r)$ obtained from $(w_1,\ldots, w_r)$.  Define the $r\times r$ matrix $\widetilde{\bSigma}$ which represents the restriction of $\bSigma$ in the orthonormal basis $(\nu_1,\ldots,\nu_r)$. Similarly, define the $r$-dimensional vector $\tilde{\beta}$ so that $s_0= \tilde{\beta}^T\widetilde{\bSigma}\tilde{\beta}$. Define the diagonal  matrix $\overline{\bSigma}$ by $\overline{\bSigma}_{l,l}= 1/(1+\alpha_lt_l)$ for all $l=1,\ldots,r$. and the vector $\overline{\beta}$ by $\overline{\beta}_l=\gamma_l$ for $l=1,\ldots, r$. Then, $\beta^T\bSigma\beta - s_0$ decomposes as
\begin{eqnarray}
\beta^T\bSigma\beta - s_0&=& \left(\overline{\beta}^T \overline{\bSigma}\ \overline{\beta} - s_0\right)+ \overline{\beta}^T\left[ \overline{\bSigma}- \widetilde{\bSigma}\right] \overline{\beta} +  \left( tr\left[\widetilde{\bSigma}  \left\{\widetilde{\beta}\widetilde{\beta}^T- \overline{\beta}\ \overline{\beta}^T\right\}\right]\right) \nonumber\\
&=& (I) + (II) + (III)\ .\label{eq:decomposition_theta_0}
\end{eqnarray}
We shall prove that  each of these three terms is small in absolute value. Recall that the  $\alpha_i$ and $\gamma_i$ are positive constants only depending on $r$.
\beqn 
|(I)|&= &\Big|\sum_{i=1}^r \frac{\gamma_i^2}{1+\alpha_it_i}- \frac{\gamma_i^2}{1+\alpha_i}\Big|\leq r\|\gamma\|_\infty^2\|\alpha\|_{\infty} \max_i |t_i -1|\\
|(II)|&\leq &\|\overline{\beta}\|_2^2\|\overline{\bSigma}- \widetilde{\bSigma}\|_{op}\leq \|\gamma\|_2^2\|\overline{\bSigma}- \widetilde{\bSigma}\|_{op} \\
|(III)|&\leq & \|\widetilde{\bSigma}\|_F \|\widetilde{\beta}\widetilde{\beta}^T- \overline{\beta}\ \overline{\beta}^T\|_F\leq \|\widetilde{\bSigma}\|_F(\|\widetilde{\beta}\|_2+ \|\overline{\beta}\|_2)  \|\widetilde{\beta}- \overline{\beta}\|_2\leq C(r) \|\widetilde{\beta}- \overline{\beta}\|_2\ ,
\eeqn 
where we used in the last line that the eigenvalues of $\widetilde{\bSigma}$ are all smaller than one. Coming back to \eqref{eq:decomposition_theta_0}, we have proved that 
\beq\label{eq:decomposition_theta_02}
|\beta^T\bSigma\beta- s_0|\leq C(r) \left[\max_i |t_i -1| + \|\overline{\bSigma}- \widetilde{\bSigma}\|_{op} + \|\widetilde{\beta}- \overline{\beta}\|_2\right]\ .
\eeq
Let us bound $\|\overline{\bSigma}- \widetilde{\bSigma}\|_{op}$ and  $\|\widetilde{\beta}- \overline{\beta}\|_2$ in terms of $t$ and $w$. By definition of the basis $(\nu_1,\ldots ,\nu_r)$,
\beqn 
\big|[ \widetilde{\bSigma}^{-1}-\overline{\bSigma}^{-1}]_{l,m}\big|&=& \Big|\sum_{i=1}^r  \alpha_it_i  \langle  w_i, \nu_l\rangle \langle  w_i, \nu_m\rangle - \alpha_l t_l \mathbf{1}_{l=m}\Big|\\
&\leq & r\|\alpha\|_{\infty}\|t\|_{\infty} \max_{i=1,\ldots,r}\max_{l\neq i}|\langle w_i, \nu_l\rangle| \\
&\leq & r\|\alpha\|_{\infty}\|t\|_{\infty} \max_{i=1,\ldots,r}W_i\ ,
\eeqn 
where $W_i:= \|\Pi_{\mathrm{Vect}(w_1,\ldots,w_{i-1},w_{i+1},\ldots w_r)}w_i\|_2$ and $\Pi_S$ the orthogonal projection onto the space $S$. Here we used that for all $i,l$, $|\langle w_i, \nu_l\rangle|\leq 1$ and that $1-\langle w_l, \nu_l\rangle^{2}=\sum_{i\neq l} |\langle w_i, \nu_l\rangle|^{2}$.\\
Assuming that $2r\| \widetilde{\bSigma}^{-1}-\overline{\bSigma}^{-1}\|_{\infty}\leq 1$, we have $\| \widetilde{\bSigma}^{-1}-\overline{\bSigma}^{-1}\|_{op}\leq 1/2$,  so that also 
$\|\overline{\bSigma}^{1/2}(\widetilde{\bSigma}^{-1} - \overline{\bSigma}^{-1})\overline{\bSigma}^{1/2}\|_{op}\leq 1/2$ since all the eigenvalues of $\overline{\bSigma}$ are smaller than one. 
Then
\beqn 
\|\widetilde{\bSigma} - \overline{\bSigma}\|_{op}& = &\big\|\overline{\bSigma}^{1/2}\big[\big(\bI_{r}+ \overline{\bSigma}^{1/2}(\widetilde{\bSigma}^{-1} - \overline{\bSigma}^{-1})\overline{\bSigma}^{1/2}\big)^{-1} - \bI_{r}\big]\overline{\bSigma}^{1/2}\big\|_{op}\\
&\leq & \big\|\big(\bI_{r}+ \overline{\bSigma}^{1/2}(\widetilde{\bSigma}^{-1} - \overline{\bSigma}^{-1})\overline{\bSigma}^{1/2}\big)^{-1} - \bI_{r}\big\|_{op}\\
&\leq & \frac{\|\overline{\bSigma}^{1/2}(\widetilde{\bSigma}^{-1} - \overline{\bSigma}^{-1})\overline{\bSigma}^{1/2}\|_{op}}{1-\|\overline{\bSigma}^{1/2}(\widetilde{\bSigma}^{-1} - \overline{\bSigma}^{-1})\overline{\bSigma}^{1/2}\|_{op}}\\
&\leq & 2 \| \widetilde{\bSigma}^{-1}-\overline{\bSigma}^{-1}\|_{op}\\
&\leq & 2r \| \widetilde{\bSigma}^{-1}-\overline{\bSigma}^{-1}\|_{\infty}\ , 
\eeqn
where we used that all the eigenvalues of $\overline{\bSigma}$ are smaller than one. Turning to the difference $\widetilde{\beta}-\overline{\beta}$, we have, for any $l=1,\ldots, r$,
\[
 \big|(\widetilde{\beta}-\overline{\beta})_{l}\big|= \big|\sum_{i=1}^r \gamma_i \langle w_i,\nu_l\rangle - \gamma_l\big|\leq r \|\gamma\|_{\infty}\max_i W_i\ .
\]
Thus, we obtain $\|\widetilde{\beta}-\overline{\beta}\|_2\leq C(r) \max_i W_i$. Together with \eqref{eq:decomposition_theta_02}, this gives us 
\beq\label{eq:decomposition_theta_03}
|\beta^T\bSigma\beta- s_0|\leq C(r) \left[\max_i |t_i -1| + \|t\|_{\infty}\max_{i}W_i\right]\ , 
\eeq
as soon as $\|t\|_{\infty}\max_{i}W_i\leq 1/2r^{2}\|\alpha\|_{\infty}$. 
The deviations of $\max_i |t_i -1|$ are given by Lemma  \ref{lem:concentration_v} so that it only remains to control the deviations of $W_i$.
Let $\cA$ be any event on $w=(w_1,\ldots,w_r)$.
From \eqref{eq:lower_density_g}, we derive that 
\beqn 
 \mu_{0}(\cA)=\int \mathbf{1}_{\cA}g(t,w)\Pi_idt_i d\omega(w_i)\leq C'(r) p^{r/2}   (1+\|\alpha\|_{\infty}r)^{nr/2}\int\mathbf{1}_{\cA}\Pi_i d\omega(w_i)\ . 
\eeqn 
As a consequence, the probability $\mu_{0}(\cA)$ is always smaller than $C'(r) p^{r/2}   (1+\|\alpha\|_{\infty}r)^{nr/2}$ than the probability of $\cA$, when $w_1,\ldots, w_r$ are independently and uniformly distributed on the unit sphere.
When $w_1,\ldots, w_r$ are independently and uniformly distributed on the unit sphere,
 $W^2_i$ follows the same distribution as $\sum_{i=1}^{r-1}Z_i^2/\|Z\|_2^2$ where $Z=(Z_1,\ldots, Z_p)\sim \cN(0,\bI_p)$ (since the Gaussian distribution is isotropic). Arguing as in the proof of Lemma \ref{lem:control_second_moment_marginal_X}, we derive that  for any $t\in (0,p)$
\beqn 
\omega\left[pW_i^2\geq t+2r\right]&= &\P\left[\frac{p\sum_{i=1}^{r-1}Z_i^2}{\|Z\|_2^2}\geq t+2r\right]\leq \P\left[\|Z\|_2^2\leq p/2\right]+ \P\left[\sum_{i=1}^{r-1}(Z_i^2-1) \geq \frac{t}{2}\right]\\
&\leq & e^{-p/16}+ e^{-\tfrac{t}{8}\wedge \tfrac{t^2}{64r} }\leq 2e^{-C(r)t} \ , 
\eeqn 
where we used Lemma \ref{lem:chi_2} in the last line. Taking an union bound,  we derive that 
\[
\mu_{0}\left[p\max_{i}W_i^2\geq t+2r\right] \leq C'(r) p^{1+r/2}   (1+\|\alpha\|_{\infty}r)^{nr/2}e^{-C(r)t}\ .
\]
Thus, 
\begin{equation}
\label{eq:WWi}
\mu_{0}\left[\|t\|_{\infty}\max_{i}W_i \geq p^{-1/4}\right] \leq \mu_{0}\left[\|t\|_{\infty} \geq 2 \right] + C'(r) p^{1+r/2}   (1+\|\alpha\|_{\infty}r)^{nr/2}e^{-C''(r)p^{1/2}}
\end{equation}
since $p$ is large compared to $r$. Together with \eqref{eq:decomposition_theta_03} and Lemma  \ref{lem:concentration_v}, this gives us 
\[
 \mu_0\left[|\beta^T\bSigma\beta- s_0|\geq \tilde{C}(r)\big(p^{-1/4}+ (\frac{n}{p})^{1/2}\big)\right]\leq e^{-\tilde{C'}(r)p^{1/2}}\ . 
\]
for $p$ large enough compared to $n$  and  $r$. This last deviation inequality easily transfers to that of $\eta_0$.
\\
Let us now turn to the spectrum of $\bSigma$. By definition of $\widetilde{\bSigma}$,
\[
\lambda_{\min}(\bSigma)=   \lambda_{\min}(\widetilde{\bSigma})\geq \lambda_{\min}(\overline{\bSigma}) - \|\widetilde{\bSigma} - \overline{\bSigma}\|_{op}
\geq \min \frac{1}{1+\alpha_i} -  \|\widetilde{\bSigma} - \overline{\bSigma}\|_{op}
\]
so that $\lambda_{\min}(\bSigma) \geq \frac{1}{2}\min \frac{1}{1+\alpha_i}$ as soon as $2r^{2}\|\alpha\|_{\infty}\|t\|_{\infty} \max_{i=1,\ldots,r}W_i \leq \frac{1}{2}\min \frac{1}{1+\alpha_i}$ and the end of the lemma follows easily from (\ref{eq:WWi})  and Lemma  \ref{lem:concentration_v}.

\subsection{Proof of Proposition \ref{prp:lower_fixed_design}}

As in the previous minimax lower bounds, we use Le Cam's approach and build two mixture measures. 
Denote $\underline{\P}_0$ the distribution of $Y$ when $\beta^*=0$ and $\sigma=1$. Under $\underline{\P}_0$, $Y$ follows a standard normal distribution and $\eta[0,1,\bX]=0$. Given $\mu$ a continuous prior measure on $\mathbb
{R}^p$, we take 
\[\underline{\mathbf{P}}_{\mu}= \int_{\cB}\mathbb{P}_{\beta,0}\, \mu(d\beta)\]
Note that $\mu$-almost surely, $\eta[\beta,0,\bX]=1$. Recall that $\lambda_i^{1/2}$, $i=1,\ldots, n$ denote the singular values of $\bX$ and $v_i$, $i=1,\ldots, n$ its right eigenvectors. Let us choose $\mu$ such that, under $\mu$, $(\beta^Tv_i )\lambda_i^{1/2}$ follow independent standard normal distributions. Obviously, under $\underline{\mathbf{P}}_{\mu}$, $Y$ also follows a standard normal distribution, that is $\underline{\mathbf{P}}_{\mu}=\underline{\P}_0$.

\medskip
Consider any estimator $\widehat{\eta}$. Then, 
\beqn 
\sup_{\beta \in \mathbb{R}^p,\, \sigma\geq 0 }\underline{\E}_{\beta^*,\sigma}[(\widehat{\eta}-\eta[\beta^*,\sigma, \bX])^2]
&\geq &
 \underline{\E}_{0,1}\left[\widehat{\eta}^2\right]\bigvee \vee_{\beta\in \mathbb{R}^p}\underline{\E}_{\beta,0}\left[\left(\widehat{\eta}- 1\right]\right]\\
 &\geq & \underline{\E}_{0}\left[\widehat{\eta}^2\right]\bigvee \underline{\mathbf{E}}_{\mu}\left[(\widehat{\eta}-1)^2\right]\\
 &\geq& \underline{\E}_{0}\left[\widehat{\eta}^2\right]\bigvee \underline{\E}_{0}\left[\widehat{\eta}^2-1\right]\quad\quad \text{(since $\underline{\P}_0= \underline{\mathbf{P}}_{\mu}$)}\\
 &\geq& \frac{1}{2}\left[\underline{\E}_{0}\left[\widehat{\eta}^2\right]+ \underline{\E}_{0}\left[(\widehat{\eta}-1)^2\right]\right]\\
 &\geq& \frac{1}{2}+ \underline{\E}_{0}\left[\widehat{\eta}\right]^2 -   \underline{\E}_{0}\left[\widehat{\eta}\right] \quad \quad \text{(by Cauchy-Schwarz inequality)}\\
 &\geq & 1/4.
 \eeqn

 \appendix
 
\section{Auxiliary lemmas}

\begin{lem}[$\chi^2$ distributions~\cite{Laurent00}]\label{lem:chi_2}
Let $Z$ stands for a standard Gaussian vector of size $k$ and let $\bA$ be a symmetric matrix of size $k$. For any $t>0$, 
\[
 \P\left[Z^T \bA Z \geq tr(\bA)+ 2\|\bA\|_F\sqrt{t}+2 \|\bA\|_{op}t\right]\leq e^{-t}\ .
\]
When $\bA$ is the identity matrix, the above bound simplifies as 
\[
 \P\left[\chi^2(k)\geq k+2\sqrt{kt}+2t\right]\leq e^{-t}\ ,
\]
where $\chi^2(k)$ stand for a $\chi^2$-distributed random variable with $k$ degrees of freedom. We also have 
\[
 \P\left[\chi^2(k)\leq k-2\sqrt{kt}\right]\leq e^{-t}\ ,
\]
for any $t>0$.
\end{lem}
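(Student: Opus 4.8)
The plan is to reduce the statement to the classical deviation bound of Laurent and Massart for weighted sums of independent $\chi^2_1$ variables. First I would diagonalize the symmetric matrix $\bA = \bO\,\diag(\lambda_1,\ldots,\lambda_k)\,\bO^T$; since $\bO^T Z$ has the same distribution as $Z$ by rotational invariance of the standard Gaussian vector, one gets the distributional identity $Z^T\bA Z \stackrel{d}{=} \sum_{i=1}^k \lambda_i Z_i^2 = \tr(\bA) + \sum_{i=1}^k \lambda_i(Z_i^2-1)$, using $\sum_i\lambda_i=\tr(\bA)$. It then suffices to control the upper and lower tails of $U:=\sum_{i=1}^k\lambda_i(Z_i^2-1)$, noting that $\sum_i\lambda_i^2=\|\bA\|_F^2$ and $\max_i|\lambda_i|=\|\bA\|_{op}$.

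For $U$ I would run a Chernoff argument. For $0<s<(2\|\bA\|_{op})^{-1}$ the moment generating function factorizes, $\E[e^{sU}]=\prod_{i=1}^k e^{-s\lambda_i}(1-2s\lambda_i)^{-1/2}$, hence $\log\E[e^{sU}]=\sum_{i=1}^k\psi(s\lambda_i)$ with $\psi(v):=-v-\tfrac12\log(1-2v)$. The elementary inequalities $\psi(v)\le v^2/(1-2v)$ for $v\in[0,1/2)$ and $\psi(v)\le v^2$ for $v\le 0$ — both obtained by a short convexity/power-series comparison (so that, in particular, negative eigenvalues only help on the upper tail) — give $\log\E[e^{sU}]\le s^2\|\bA\|_F^2/(1-2s\|\bA\|_{op})$. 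Inserting this into $\P[U\ge x]\le\exp(-sx+s^2\|\bA\|_F^2/(1-2s\|\bA\|_{op}))$ and substituting $x=2\|\bA\|_F\sqrt t+2\|\bA\|_{op}t$ together with the corresponding near-optimal choice of $s$ yields $\P[U\ge 2\|\bA\|_F\sqrt t+2\|\bA\|_{op}t]\le e^{-t}$; the symmetric computation, using $\phi(v):=v-\tfrac12\log(1+2v)\le v^2$, handles $-U$ and gives $\P[U\le -2\|\bA\|_F\sqrt t]\le e^{-t}$ (no linear-in-$t$ term survives on the lower tail because $\phi$ is controlled without the $(1-2sv)$ denominator). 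Adding $\tr(\bA)$ back gives the first displayed inequality.

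The remaining claims are immediate specializations: taking $\bA=\bI_k$ gives $\lambda_i\equiv 1$, $\|\bA\|_F=\sqrt k$, $\|\bA\|_{op}=1$, $\tr(\bA)=k$, so the general bound becomes $\P[\chi^2(k)\ge k+2\sqrt{kt}+2t]\le e^{-t}$, while the lower-tail estimate for $U$ with $\bA=\bI_k$ gives $\P[\chi^2(k)\le k-2\sqrt{kt}]\le e^{-t}$. The only genuinely technical ingredient is the pair of scalar bounds on $\psi$ and $\phi$, which is exactly what forces the clean constant $2$; everything else is bookkeeping. Since this is precisely Lemma~1 of Laurent--Massart, one may alternatively simply invoke that reference.
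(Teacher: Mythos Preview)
Your proposal is correct and matches exactly what the paper does: the paper gives no proof of its own but simply cites Laurent--Massart and remarks that ``their argument straightforwardly extends to general symmetric matrices $\bA$.'' Your diagonalization plus Chernoff computation, together with the observation that $\psi(v)\le v^2$ for $v\le 0$ so that negative eigenvalues only help on the upper tail, is precisely that straightforward extension.
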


Laurent and Massart~\cite{Laurent00} have only stated a specific version of Lemma \ref{lem:chi_2} for positive matrices $\bA$, but their argument straightforwardly extend to general symmetric matrices $\bA$.

\begin{lem}[Wishart distributions~\cite{Davidson2001}]\label{lem:concentration_vp_wishart}
Let $\bZ$ be a $n\times d$  matrix whose entries follow independent standard normal distributions. For any positive number $x$, 
\beqn 
\mathbb{P}\left[\lambda_{\max}\left[\bZ^T\bZ\right] \geq n\left(1+\sqrt{d/n}+\sqrt{2x/n}\right)^2 \right] &\leq& \exp(-x)\\
\mathbb{P}\left[\lambda_{\min}\left[\bZ^T\bZ\right] \leq n\left(1-\sqrt{d/n}-\sqrt{2x/n}\right)^2 \right] &\leq& \exp(-x)
\eeqn
\end{lem}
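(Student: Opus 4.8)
The statement is the Davidson--Szarek deviation bound for the extreme singular values of a Gaussian matrix, so the plan is essentially to reduce it to that classical estimate and rescale. First I would pass from eigenvalues of $\bZ^T\bZ$ to singular values of $\bZ$: writing $s_1(\bZ)\geq\cdots\geq s_{d\wedge n}(\bZ)$ for the singular values, one has $\lambda_{\max}(\bZ^T\bZ)=s_1(\bZ)^2$ and (for $d\leq n$) $\lambda_{\min}(\bZ^T\bZ)=s_d(\bZ)^2$, so it suffices to prove $\P[s_1(\bZ)\geq \sqrt n+\sqrt d+\sqrt{2x}]\leq e^{-x}$ and $\P[s_d(\bZ)\leq \sqrt n-\sqrt d-\sqrt{2x}]\leq e^{-x}$, since squaring is monotone on $[0,\infty)$ and $(\sqrt n\pm(\sqrt d+\sqrt{2x}))^2=n(1\pm\sqrt{d/n}\pm\sqrt{2x/n})^2$.

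Second, I would obtain the two expectation bounds $\E[s_1(\bZ)]\leq \sqrt n+\sqrt d$ and $\E[s_d(\bZ)]\geq \sqrt n-\sqrt d$. Using the min--max characterizations $s_1(\bZ)=\sup_{u\in S^{n-1},\,v\in S^{d-1}}u^T\bZ v$ and $s_d(\bZ)=\inf_{v\in S^{d-1}}\sup_{u\in S^{n-1}}u^T\bZ v$, both are extrema of the canonical Gaussian process $(u,v)\mapsto u^T\bZ v$, whose increments are dominated, in the sense of Slepian/Gordon, by those of the decoupled process $(u,v)\mapsto g^Tv+h^Tu$ with $g,h$ independent standard Gaussians in $\bbR^{d}$ and $\bbR^{n}$. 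Gordon's comparison inequality then yields $\E[s_1(\bZ)]\leq \E\|g\|_2+\E\|h\|_2\leq \sqrt d+\sqrt n$ and, in the reversed direction appropriate to the $\inf\sup$, $\E[s_d(\bZ)]\geq \sqrt n-\sqrt d$.

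Third, I would add Gaussian concentration. Both maps $\bZ\mapsto s_1(\bZ)=\|\bZ\|_{op}$ and $\bZ\mapsto s_d(\bZ)$ are $1$-Lipschitz with respect to the Frobenius norm on $\bbR^{n\times d}\cong\bbR^{nd}$ (the difference of operator norms, respectively of $\min_{v}\|\bZ v\|_2$, is controlled by $\|\bZ-\bZ'\|_{op}\leq\|\bZ-\bZ'\|_F$), so the Borell--Tsirelson--Ibragimov--Sudakov inequality applied to the $nd$ i.i.d.\ standard normal entries gives $\P[s_1(\bZ)\geq \E s_1(\bZ)+t]\leq e^{-t^2/2}$ and $\P[s_d(\bZ)\leq \E s_d(\bZ)-t]\leq e^{-t^2/2}$. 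Combining with the expectation bounds and taking $t=\sqrt{2x}$ gives exactly the two displayed inequalities. There is no genuine obstacle here, since this is the cited result of Davidson and Szarek~\cite{Davidson2001}; the only points requiring care are the Gordon comparison step that produces the sharp constants $\sqrt n\pm\sqrt d$ (a cruder $\varepsilon$-net argument would lose them), and the harmless convention that the $\lambda_{\min}$ bound is to be read as vacuous in the regime $\sqrt d+\sqrt{2x}>\sqrt n$.
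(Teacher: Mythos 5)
Your proof is correct and is exactly the classical Davidson--Szarek argument; the paper does not prove this lemma (it is stated as a cited auxiliary result in Appendix A), but the same strategy --- Gordon--Slepian comparison for the expected extreme singular values plus Gaussian isoperimetric concentration of the $1$-Lipschitz maps $s_1$ and $s_d$ --- is precisely what the paper deploys in Appendix B to prove the weighted generalization (Lemma \ref{lemma_concentration_spectre_A}). The only step you state a bit too quickly is $\E[s_d(\bZ)]\geq \sqrt{n}-\sqrt{d}$: Gordon's inequality gives $\E[s_d(\bZ)]\geq \E\|h\|_2-\E\|g\|_2$ with $h\in\bbR^n$, $g\in\bbR^d$, and since both expected norms are \emph{below} $\sqrt{n}$ and $\sqrt{d}$ respectively one additionally needs the known monotonicity of $m\mapsto \sqrt{m}-\E\|h_m\|_2$ to conclude; this is the point at which the paper's weighted version settles for the weaker additive constant $10$ via a second-moment argument instead.
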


\section{Proof of Lemma \ref{lemma_concentration_spectre_A}}
\label{appendixB}

Recall that $\bX\bX^T$ a weighted sum of Wishart matrices with parameters $(1,n)$
\[
\bX\bX^T= \bZ\bGamma\bZ^T=  \sum_{j=1}^p \bGamma_{jj} \left(\bZ_{\bullet j}\bZ_{\bullet j}^T\right)
\]
Define the matrix $\bU$ by 
\beq\label{eq:definition_U}
\bU := \bGamma^{1/2}\bZ^T
\eeq
The singular values of $\bU$ are the same as those of $\bX^T$. Denote $s_1(\bU)\geq s_2(\bU)\geq  \ldotsÂ \geq  s_n(\bU)$ the ordered singular values of $\bU$. From the previous remark, the following decomposition holds 
\[\lambda_{\max}(\bA)= s^2_1(\bU)-tr(\bSigma)\quad \text{ and }\quad \lambda_{\min}(\bA)= s^2_n(\bU)-tr(\bSigma)\ .\]
Hence,  it will suffice to derive deviation inequalities for both $s_1(\bU)$ and $s_n(\bU)$ to get the result \eqref{eq:concentration_spectre_A}.

\bigskip 
Denote $\mathbb{S}^{p-1}$ the $p$ dimensional unit sphere. Since $s_1(\bU)=\sup_{x\in \mathbb{S}^{p-1 }}\|\bU x\|_2$  and 
$s_n(\bU)=\inf_{x\in \mathbb{S}^{p-1 }}\|\bU x\|_2$, both $s_1(\bU)$ and $s_n(\bU)$ are Lipschitz (with respect to the Frobenius norm) functions  with constant 1 of the entries of $U$. 
As a consequence, $s_1(\bU)$ and $s_n(\bU)$ are Lipschitz functions with constant $\max_i(\bGamma^{1/2}_{i,i})=\|\bSigma\|_{op}^{1/2}$ of the entries of $\bZ$. Applying the Gaussian isoperimetric inequality \cite{book_concentration}, it follows that 
\begin{eqnarray}\label{eq:deviation_s1_sn}
\P\left[s_1(\bU)\geq \mathbb{E}\left[s_1(\bU)\right]+ \|\bSigma\|_{op}^{1/2}\sqrt{2t} \right]&\leq& \exp\left[-t\right]\\
\P\left[s_n(\bU)\leq \mathbb{E}\left[s_n(\bU)\right]- \|\bSigma\|_{op}^{1/2}\sqrt{2t} \right]&\leq& \exp\left[-t\right] \ . \nonumber
\end{eqnarray}

In order to control $\mathbb{E}\left[s_1(\bU)\right]$ and $\mathbb{E}\left[s_n(\bU)\right]$, we apply Gordon-Slepian lemma following the approach of Davidson and Szarek  \cite[Appendix IIc]{Davidson2001}. First, recall Gordon's extension of Slepian lemma.

\begin{lem}[Gordon-Slepian lemma]\label{lem:gordon} Let $(X_t)_{t\in T}$ and $(Y_t)_{t\in T}$ be two finite families of jointly Gaussian mean zero random variables such that $\var{X_t-X_t'}\leq \var{Y_t-Y_t'}$ for all $t,t'\in T$. Then $\E[\max_{t\in T}X_t]\leq \E[\max_{t\in T}Y_t]$. Similarly, if $T= \cup_{s\in S}T_s$ and 
\begin{eqnarray}
 \var{X_t-X_t'}\leq \var{Y_t-Y_t'} \text{ if }t\in T_s,\ t'\in T_{s'}\text{ with }s\neq s'\label{eq:gordon_cond1}\\
 \var{X_t-X_t'}\geq \var{Y_t-Y_t'} \text{ if }t,t'\in T_{s}\text{ for some }s
\label{eq:gordon_cond2}
\end{eqnarray}
one has $\E[\max_{s\in S}\min_{t\in T_s}X_t]\leq \E[\max_{s\in S}\min_{t\in T_s}Y_t]$ 
\end{lem}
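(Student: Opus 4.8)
The statement is the classical Gordon--Slepian Gaussian comparison inequality; the plan is to prove it by the Gaussian interpolation (``smart path'') method, handling both assertions at once since the first is the special case of the second in which every block $T_s$ is a singleton. First I would, without loss of generality, realize $(X_t)_{t\in T}$ and $(Y_t)_{t\in T}$ on a common product probability space so that they are independent, and introduce the interpolating Gaussian family
\[
Z_t(\lambda) := \sqrt{\lambda}\,X_t + \sqrt{1-\lambda}\,Y_t\ ,\qquad \lambda\in[0,1]\ ,
\]
which has the law of $(Y_t)$ at $\lambda=0$ and the law of $(X_t)$ at $\lambda=1$.

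Next I would replace the non-smooth functional $\Phi(x):=\max_{s\in S}\min_{t\in T_s}x_t$ by the smooth iterated soft-min/soft-max surrogate
\[
\Phi_\beta(x):=\frac1\beta\log\sum_{s\in S}\Bigl(\sum_{t\in T_s}e^{-\beta x_t}\Bigr)^{-1}\ ,
\]
which converges to $\Phi$ uniformly on compact sets as $\beta\to\infty$ and has bounded derivatives of every order. Two structural properties of $\Phi_\beta$ are needed: it is invariant under translation along the all-ones vector, so that $\sum_j\partial^2_{ij}\Phi_\beta\equiv 0$; and its mixed second derivatives have a definite sign pattern, namely $\partial^2_{ij}\Phi_\beta(x)\ge 0$ whenever $i\ne j$ lie in the same block $T_s$, and $\partial^2_{ij}\Phi_\beta(x)\le 0$ whenever $i$ and $j$ lie in distinct blocks. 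Both facts follow from a direct computation with the derivatives of the $\log$-sum-exp functions (the ``same block'' dependence factors through a single soft-min, whose mixed second derivatives are non-negative; the ``different block'' dependence factors through two independent soft-mins fed into a soft-max, whose mixed second derivative across distinct arguments is non-positive).

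Then I would set $\varphi(\lambda):=\E[\Phi_\beta(Z(\lambda))]$, differentiate in $\lambda$, and apply Gaussian integration by parts (Stein's identity); the cross terms cancel by independence of $X$ and $Y$, giving
\[
\varphi'(\lambda)=\frac12\sum_{i,j}\bigl(\E[X_iX_j]-\E[Y_iY_j]\bigr)\,\E\bigl[\partial^2_{ij}\Phi_\beta(Z(\lambda))\bigr]\ .
\]
Writing $\E[X_iX_j]-\E[Y_iY_j]=\tfrac12\bigl[(\E X_i^2-\E Y_i^2)+(\E X_j^2-\E Y_j^2)-(\var{X_i-X_j}-\var{Y_i-Y_j})\bigr]$ and using $\sum_j\partial^2_{ij}\Phi_\beta\equiv 0$ to kill the two diagonal-variance contributions yields
\[
\varphi'(\lambda)=-\frac14\sum_{i\ne j}\bigl(\var{X_i-X_j}-\var{Y_i-Y_j}\bigr)\,\E\bigl[\partial^2_{ij}\Phi_\beta(Z(\lambda))\bigr]\ .
\]
By \eqref{eq:gordon_cond1}--\eqref{eq:gordon_cond2} combined with the sign pattern above, every summand is non-negative --- same-block pairs contribute $(\ge 0)\cdot(\ge 0)$, different-block pairs contribute $(\le 0)\cdot(\le 0)$ --- so $\varphi'(\lambda)\le 0$ on $[0,1]$, whence $\E[\Phi_\beta(X)]=\varphi(1)\le\varphi(0)=\E[\Phi_\beta(Y)]$. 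Letting $\beta\to\infty$, with a routine truncation / uniform-integrability argument using the Gaussian tails of maxima and minima of finitely many Gaussians, gives $\E[\Phi(X)]\le\E[\Phi(Y)]$, which is the second assertion; the first is recovered by taking all $T_s$ to be singletons, in which case only the ``different block'' sign is invoked and $\Phi=\max$. The main obstacle is the careful verification of the sign of the mixed second derivatives $\partial^2_{ij}\Phi_\beta$ for the two-level (soft-min inside soft-max) construction --- in particular that same-block and different-block pairs produce opposite signs, matching precisely the two hypotheses \eqref{eq:gordon_cond1} and \eqref{eq:gordon_cond2}; the differentiation of $\varphi$, the Gaussian integration by parts, and the limit $\beta\to\infty$ are all standard.
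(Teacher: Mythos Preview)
The paper does not prove this lemma at all: it is stated as a classical tool, with a reference to Davidson and Szarek~\cite[Appendix IIc]{Davidson2001}, and then applied to the comparison processes $P_{(u,v)}$ and $Q_{(u,v)}$. So there is no ``paper's proof'' to compare against.

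Your interpolation argument is correct and is the standard modern proof. A couple of points worth making precise if you write it up. First, the sign pattern of the mixed second derivatives of $\Phi_\beta$ does come out as you claim: with $u_s=\sum_{t\in T_s}e^{-\beta x_t}$, $W=\sum_s u_s^{-1}$ and $p_i=\partial_i\Phi_\beta=u_{s(i)}^{-2}e^{-\beta x_i}/W$, one finds for $i\neq j$ in different blocks $\partial^2_{ij}\Phi_\beta=-\beta p_ip_j\le 0$, and for $i\neq j$ in the same block $s$, $\partial^2_{ij}\Phi_\beta=\beta p_ip_j(2u_sW-1)\ge 0$ since $u_sW\ge 1$. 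These are also uniformly bounded (by $3\beta$, say), so Gaussian integration by parts is justified. Second, the limit $\beta\to\infty$ is in fact trivial here: the soft-min/soft-max bounds give $|\Phi_\beta(x)-\Phi(x)|\le (\log|S|+\max_s\log|T_s|)/\beta$ uniformly in $x$, so no truncation or uniform-integrability argument is needed --- you can pass to the limit directly.
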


Define the Gaussian process $P_{(u, v)}$ indexed by $(u,v)\in \mathbb{S}^{p-1}\times \mathbb{S}^{n-1}$,
\[
P_{(u, v)}:= \langle  u, \bU v\rangle = tr(\bZ(\bGamma^{1/2} u v^T))
\]
For any $(u,v)$ and $(u',v')$, this process satisfies 
\beqn 
 \var{P_{(u, v)}- P_{(u', v')}}&=& \|(\bGamma^{1/2} u)^T v- (\bGamma^{1/2} u')^T v' \|_F^2\\
 &= &\|\bGamma^{1/2}(u-u')\|_2^2 + u^T \bGamma u' \|v-v'\|_2^2 \\
 &\leq& \|\bGamma^{1/2}(u-u')\|_2^2 + \|\bSigma\|_{op} \|v-v'\|_2^2.
\eeqn 
Let $Z_1$ and $Z_2$ be two independent standard Gaussian vectors of respective size $p$ and $n$. For any $u\in \mathbb{R}^{p}$ and any $v\in \mathbb{R}^n$, define 
\[
Q_{(u,v)}:= (\bGamma^{1/2} u)^T Z_1 + \|\bSigma\|_{op}^{1/2} v^T Z_2
\]
Hence,
\beqn 
\var{Q_{(u,v)}- Q_{(u',v')}}&=&\|\bGamma^{1/2}(u-u')\|_2^2 + \|\bSigma\|_{op} \|v-v'\|_2^2\ . 
\eeqn 
We are therefore in position to apply Slepian lemma to the processes $P_{u,v}$ and $Q_{u,v}$ (although the set $\mathbb{S}^{p-1}\times \mathbb{S}^{n-1}$ is not finite, the result is still true). Observe that 
$\max_{(u,v)\in \mathbb{S}^{p-1}\times \mathbb{S}^{n-1}} P_{u,v}= s_1(\bU)$. It follows that
\begin{eqnarray}
\mathbb{E}\left[s_1(\bU)\right]&\leq& \mathbb{E}\left[\max_{(u,v)\in \mathbb{S}^{p-1}\times \mathbb{S}^{n-1}} Q_{(u,v)} \right]= \mathbb{E}\left[\|\bGamma^{1/2} Z_1\|_{2} \right]+ \|\bSigma\|_{op}^{1/2}\mathbb{E}\left[\|Z_2\|_{2}\right] \nonumber \\
&\leq & \sqrt{tr(\bGamma)}+ \|\bSigma\|_{op}^{1/2}\sqrt{n} \ ,  \label{eq:upper_E_s1}
\end{eqnarray}
by Cauchy-Schwarz inequality.
\medskip

For any $v\in \mathbb{S}^{n-1}$, define $T_{v}:= \{(u,v),\  u\in \mathbb{S}^{p-1}\}$. Hypothesis \eqref{eq:gordon_cond1}  is still satisfied for $P_{(u,v)}$ and $Q_{(u,v)}$. For $(u,v)$ and $(u',v)\in T_{v}$, 
\[ \var{P_{(u, v)}(\bZ)- P_{(u', v)}(\bZ)}= \|(\bGamma^{1/2}(u-u')\|_2^2=  \var{Q_{(u, v)}(\bZ)- Q_{(u', v)}(\bZ))}\ ,\]
and Hypothesis  \eqref{eq:gordon_cond2} is also satisfied. Applying Gordon-Slepian lemma, we obtain 
\begin{eqnarray}
- \mathbb{E}\left[s_n(\bU)\right]&=&\mathbb{E}\left[\max_{v\in \mathbb{S}^{n-1}}\min_{u\in T_v} P_{(u, v)} \right]  \nonumber\\
&\leq & \mathbb{E}\left[\max_{v\in \mathbb{S}^{n-1}}\min_{u\in T_v} Q_{(u, v)} \right]\nonumber \\
&\leq & \mathbb{E}\left[\|\bSigma\|_{op}^{1/2}\|Z_2\|_{2}-\|\bGamma^{1/2}Z_1\|_{2}\right] \label{eq:upper_E_sn}
\end{eqnarray}
By Cauchy-Schwarz inequality $\mathbb{E}\left[\|Z_2\|_{2}\right]\leq \sqrt{n}$. It remains to lower bound $\E[\|\bGamma^{1/2}Z_1\|_{2}]$. Denote $V= \|\bGamma^{1/2}Z_1\|_{2}$. By isoperimetric Gaussian inequality $\P\left[V\geq\E[V]+ \|\bSigma\|_{op}^{1/2}\sqrt{2t}\right]\leq e^{-t}$ for any $t>0$. Squaring the above inequality, it follows that for any $t>0$,
\[\P\left[\tfrac{V^2-\E^2[V]}{2\|\bSigma\|_{op}}\geq t\right]\leq \exp\left[- \frac{\|\bSigma\|_{op}t^2}{2\E^2[V]}\wedge t\right]
.\]
Integrating this bound with respect to $t>0$, we obtain 
\[\frac{\E[V^2]- \E^2[V]}{2\|\bSigma\|_{op}}\leq \int_{0}^{\infty}e^{-\tfrac{\|\bSigma\|_{op}t^2}{4\E^2[V]}}dt+ \int_{0}^{\infty}e^{-t}dt\leq   \sqrt{\frac{\pi\E^2[V]}{2\|\bSigma\|_{op}}}+  1
\leq \sqrt{\frac{\pi\E[V^2]}{2\|\bSigma\|_{op}}}+  1 \]
by Cauchy-Schwarz inequality, which implies 
\beqn 
\E[V]&\geq& \sqrt{\left(\E[V^2]- \sqrt{8\pi\|\bSigma\|_{op}\E[V^2]} - 4\|\bSigma\|_{op}\right)_+ }\\
&\geq & \sqrt{\E[V^2]}\left[ 1-\sqrt{\tfrac{8\pi \|\bSigma\|_{op}}{\E[V^2]}}- 4\tfrac{\|\bSigma\|_{op}}{\E[V^2]}\right]\\
&\geq & \sqrt{\E[V^2]} -\sqrt{8\pi \|\bSigma\|_{op}} -4 \|\bSigma\|_{op}/ \sqrt{\E[V^2]}\ , 
\eeqn 
where we used $\sqrt{1-x}\geq 1-x$ for all $x\in (0,1)$ in the second line. Since $\E[V^2]= tr(\bSigma)$, we conclude that 
\[
 \E[V]\geq \sqrt{tr(\bSigma)}- \|\bSigma\|_{op}^{1/2}\pi^{1/2}8^{1/2}- 4\frac{\|\bSigma\|_{op}}{\sqrt{tr(\bSigma)}}\geq\sqrt{tr(\bSigma)}- \|\bSigma\|_{op}^{1/2}(\sqrt{8\pi}+4)  \ . 
\]
Gathering this bound together with \eqref{eq:deviation_s1_sn}, \eqref{eq:upper_E_s1}, and \eqref{eq:upper_E_sn}, we obtain
\beqn 
\P\left[s_1(\bU)\geq \sqrt{tr(\bSigma)}+\|\bSigma\|_{op}^{1/2}\sqrt{n}+ \|\bSigma\|_{op}^{1/2}\sqrt{2t}\right]&\leq &e^{-t}\\
\P\left[s_n(\bU)\leq \sqrt{tr(\bSigma)}- \|\bSigma\|_{op}^{1/2}\sqrt{n}-  \|\bSigma\|_{op}^{1/2}(10+\sqrt{2t})\right]&\leq& e^{-t}
\eeqn 
Recalling that $s_1^2(\bU)=\lambda_{\max}(\bX \bX^T)$ and $s_n^2(\bU)=\lambda_{\min}(\bX \bX^T)$ concludes the proof.

\bibliography{biblio}
\bibliographystyle{plain}

\end{document}